\newif\ifjsl
\newif\ifarxiv
\makeatletter\@namedef{ver@amsmath.sty}{}\makeatother \usepackage[utf8]{inputenc}
\newcommand\jslversion
\knowledge{\omega}{notion}
\knowledge{\omegaop}{notion}
\knowledge{\zeta}{notion}
\knowledge{\eta}{notion}
\knowledge{\suborder}[induced subordering|induced suborderings|induced suborder|induced suborders|
                      subordering|suborderings|suborder|suborders]{notion}
\knowledge{\dom}[domain|domains]{notion}
\knowledge{\countable}{notion}
\knowledge{\scountable}{notion}
\knowledge{\pis}{notion}
\knowledge{\subtree}[generalized subtree|generalized subtrees]{notion}
\knowledge{\additive}{notion}
\knowledge{\eval}[\eval_a]{notion}
\knowledge{\level}[level]{notion}
\knowledge{\consistency}{notion}
\knowledge{\varphival}{notion}
\knowledge{\cutlabel}{notion}
\knowledge{\hattype}{notion}
\knowledge{\word}{notion}
\knowledge{\invyield}{notion}
\knowledge{\placeholder}{notion}
\knowledge{\congT}{notion}
\knowledge{\lang}[recognized by an automaton]{notion}
\begin{document}

\ifjsl
\title[An algebraic approach to MSO-definability on countable orders]      {\bfseries An algebraic approach to MSO-definability\\ on countable linear orderings}
\author{Olivier Carton~$^1$ \and Thomas Colcombet~$^2$ \and Gabriele Puppis~$^3$}
\address{\newpage $^1$~IRIF, Paris, France}
\email{olivier.carton@irif.fr}
\address{$^2$~CNRS / IRIF, Paris, France}
\email{thomas.colcombet@irif.fr}
\address{$^3$~CNRS / LaBRI, Bordeaux, France}
\email{gabriele.puppis@labri.fr}
\fi

\ifarxiv
\title{\bfseries An algebraic approach to MSO-definability\\ on countable linear orderings}
\author{Olivier Carton\\ IRIF, Université Paris Diderot \and
        Thomas Colcombet\\ CNRS -- IRIF, Université Paris Diderot \and
        Gabriele Puppis\\ CNRS -- LaBRI, Université Bordeaux}
\date{}
\fi

\maketitle

\begin{abstract}
We develop an algebraic notion of recognizability for languages of words indexed 
by countable linear orderings. We prove that this notion is effectively equivalent 
to definability in monadic second-order (MSO) logic.
We also provide three logical applications. First, we establish the first 
known collapse result for the quantifier alternation of MSO logic over countable 
linear orderings. Second, we solve an open problem posed by Gurevich and Rabinovich,
concerning the MSO-definability of sets of rational numbers using the reals in the
background. Third, we establish the MSO-definability of the set of yields induced by
an MSO-definable set of trees, confirming a conjecture posed by Bruy\`ere, Carton,
and S\'enizergues.
 \end{abstract}

\dottedcontents{section}[2.3em]{}{2.3em}{5pt}
\dottedcontents{subsection}[5.5em]{}{3.2em}{5pt}
\tableofcontents

\sloppy

\let\oldomega\omega
\let\oldzeta\zeta
\let\oldeta\eta
\def\rev{{"*@\rev"}}
\def\omega{{"\oldomega@\omega"}}
\def\zeta{{"\oldzeta@\zeta"}}
\def\eta{{"\oldeta@\eta"}}
\def\omegaop{{"\oldomega\mspace{-1mu}^\rev\mspace{-1mu}@\omegaop"}}

\newcommand\suborder[1]{{\mathop{"|_{#1}@\suborder"}}}
\newcommand\quotient[1]{{\mathop{"/_{#1}@\quotient"}}}

\let\oldemptystr\emptystr
\def\emptystr{{"\oldemptystr@\emptystr"}}

\newcommand{\dom}{"\mathsf{dom}@\dom"}
\newcommand{\countable}{{"\ostar@\countable"}}
\newcommand{\scountable}{{"\oplus@\scountable"}}

\newcommand{\omegapow}{"^\omega@\omegapow"}
\newcommand{\omegaoppow}{"^\omegaop@\omegaoppow"}
\newcommand{\etapow}{"^\eta@\etapow"}

\let\oldcdot\cdot
\let\oldtau\tau
\let\oldkappa\kappa
\def\cdot{\mathbin{"\oldcdot@\cdot"}}
\def\tau{{"\oldtau@\tau"}}
\def\tauop{{"\oldtau\mspace{-1mu}^\rev\mspace{-1mu}@\tauop"}}
\def\kappa{{"\oldkappa@\kappa"}}

\newcommand{\pis}{"\pi_0@\pis"}
\newcommand{\epis}{"\pi_0@\epis"}
\newcommand{\pit}{"\gamma@\pit"}

\newcommand{\range}{"\mathsf{range}"}

\let\oldsum\sum
\let\oldprod\prod
\def\sum{\mathop{"\oldsum@\sum"}}
\def\prod{\mathop{"\oldprod@\prod"}}
\def\eprod{\mathop{"\reallywidehat\prod@\eprod"}}

\newcommand{\additive}[1]{{"\sigma_{#1}@\additive"}}

\newcommand{\children}{"\mathsf{children}@\children"}
\renewcommand{\min}{{\mathsf{min}}}
\renewcommand{\max}{{\mathsf{max}}}
\newcommand{\rank}{{"\mathsf{rank}@\rank"}}
\newcommand{\level}{{"\mathsf{level}@\level"}}
\newcommand{\varphival}[1]{{"\varphi^{\mathsf{value}}_{#1}@\varphival"}}
\newcommand{\hatvarphival}[1]{{"\hat\varphi^{\mathsf{value}}_{#1}@\varphival"}}
\newcommand{\eval}{{"\mathsf{eval}@\eval"}}
\newcommand{\consistency}{{"\mathsf{consistency}@\consistency"}}
\newcommand{\cover}{{"\mathsf{cover}@\cover"}}
\newcommand{\type}[1]{{"\mathsf{type}_{#1}@\type"}}
\newcommand{\atype}[1]{{"\mathsf{type}_{#1}@\atype"}}
\newcommand{\hattype}[1]{{"\mathsf{type}^{\wedge}_{#1}@\hattype"}}
\newcommand{\word}{{"\mathsf{word}@\word"}}
\newcommand{\yield}{{"\mathsf{yield}@\yield"}}
\newcommand{\invyield}{{"\mathsf{yield}^{-1}@\invyield"}}

\newcommand\subtree[1]{{\mathop{"|_{#1}@\subtree"}}}

\def\cutlabel{{"c@\cutlabel"}}
\def\placeholder{{"c@\placeholder"}}

\newcommand\completion[1]{{"\hat{#1}@\completion"}}
\newcommand\wcompletion[1]{{"\hat{#1}@\wcompletion"}}

\newcommand\lang{{"\sL@\lang"}}
\newcommand\congT{\mathbin{"\cong_T@\congT"}}

\newcommand{\overdashed}[1]{\mathchoice                            {\overset{\smash{-~-~-}}{\raisebox{0pt}[\height-1pt][0pt]{                               \ensuremath{\displaystyle#1}}}}                            {\overset{\smash{-~-~-}}{\raisebox{0pt}[\height-1pt][0pt]{                               \ensuremath{#1}}}}                            {\overset{\smash{-\;-\;-}}{\raisebox{0pt}[\height-0.75pt][0pt]{                               \ensuremath{\scriptstyle#1}}}}                            {\overset{\smash{-\:-\:-}}{\raisebox{0pt}[\height-0.5pt][0pt]{                               \ensuremath{\scriptscriptstyle#1}}}}}

\newcommand{\reallywidehat}[1]{"\stackrel{\raisebox{-2pt}{\resizebox{\widthof{\ensuremath{#1}}}                                                                    {\heightof{\ensuremath{\wedge}}}                                                                    {\ensuremath{\wedge}}}}                                         {\ensuremath{#1}}@word completion"}
 \ifjsl
\renewcommand{\paragraph}[1]{\par\medskip\noindent{\bfseries #1.} }

\newtheorem{theorem}{Theorem}[section]
\newtheorem{lemma}[theorem]{Lemma}
\newtheorem{corollary}[theorem]{Corollary}
\newtheorem{definition}[theorem]{Definition}
\newtheorem{proposition}[theorem]{Proposition}
\newtheorem{example}[theorem]{Example}
\newtheorem{remark}[theorem]{Remark}
\newtheorem*{claim}{Claim}

 \fi

\section{Introduction}\label{sec:introduction}

\AP
The paper continues a long line of research aiming at understanding the
notions of regularity for languages of infinite objects, e.g., infinite
words and trees. 
The central objects in this paper are \emph{"words"} indexed by countable 
"linear orderings", i.e., total orders over finite or "countable" sets 
paired with functions mapping elements to letters in some finite alphabet.
Accordingly, "languages" here are just sets of "countable" "words".
We use \emph{"monadic second-order" ("MSO") logic} as a 
formalism for describing such "languages". In particular, an "MSO" 
formula may involve quantifications over positions of a word, as
well as quantifications over sets of positions. A sentence naturally 
defines the "language" of all "words" that make the sentence true.

\AP
This paper provides a fine comprehension of the expressive power
of "MSO" logic over "countable" "linear orderings" by proving
a correspondence between definability in "MSO" and "recognizability"
by suitable algebraic structures.
More precisely, we introduce a generalization of the classical notion of  
finite monoid (i.e., a finite set equipped with an associative product), 
that we call "$\countable$-monoid", and we extend accordingly the notion of
"recognizability" by monoid "morphism" to capture a large class of 
"languages" of "countable" "words". Differently from the classical 
setting, "$\countable$-monoids" are not finite objects, as the product 
mapping is defined over "countable" sequences of elements
and a priori it is not clear how to represent this mapping by a 
finite table.
To obtain finite presentations of the "recognized@recognizability" 
"languages", we follow an approach similar to \cite{algebraic_theory_for_regular_languages},
namely, we associate with each "$\countable$-monoid" a finite number of
operators with finite domain. We prove that, under natural conditions,
the associated algebraic structure, called "$\countable$-algebra", 
uniquely determines a "$\countable$-monoid".
The correspondence between "$\countable$-monoids" and "$\countable$-algebras",
together with the proposed notion of "recognizability", gives a natural 
framework where "languages" of "countable" "words" can be represented and 
manipulated algorithmically.
Our main contribution consists in proving that "recognizability" by 
"$\countable$-monoids"/"algebras@$\countable$-algebras" corresponds 
effectively to definability in "MSO" logic, exactly as it happens
for regular languages of finite words and $\omega$-words:

\begin{quote}
\AP
\em 
The "languages" "recognized@recognizability" by $\countable$-monoids
are the same as the "languages" definable in "MSO" logic.
\end{quote}

\AP
Prior results (see related work below) also focused on "MSO" logic
over "countable" "linear orderings" and similar correspondences with algebraic structures, 
but mostly from the point of view of decidability of the logical theory. 
Our study gives a deeper insight on the expressive power of "MSO" logic
on these structures.  
For example, as a by-product of our results we obtain that the quantifier
hierarchy of "MSO" logic collapses to its second level:

\begin{quote}
\AP
\em 
Every "language" of "countable" "words" defined in "MSO" logic 
can be equally defined in the "$\exists\forall$-fragment".
\end{quote}

\noindent
The above result is reminiscent of the collapse of "MSO" 
to its existential fragment when interpreted over $\omega$, 
as shown by B\"uchi in \cite{s1s}. We also show that our collapse
result is optimal, in the sense that the first level of the quantifier
hierarchy does not capture the full expressive power of "MSO" logic
on "countable" "linear orderings". 
This situation is also very similar to the setting of regular languages 
of infinite trees, where a collapse of "MSO" at the second level holds \cite{s2s}.
Despite this similarity and the fact that "recognizable" "languages"
of "countable" "words" are "MSO"-interpretable from regular languages 
of infinite trees, our collapse result does not follow immediately from 
Rabin's result. Indeed, an "MSO"-interpretation may exploit second-order 
quantifications to define "linear orderings" inside infinite trees. 

\AP
Our investigation on "recognizability" by "$\countable$-monoids" 
provides also new insights on the type of properties that can be 
expressed in "MSO" logic over {\sl "uncountable@countable"} linear orderings. 
For example, we consider the following question that was raised and left open 
by Gurevich and Rabinovich in~\cite{definability_with_reals_in_the_background}: 

\begin{quote}
\AP
\em Given a property for sets of rational numbers that is "MSO"-definable in the real line, 
is it possible to define it directly in the rational line? In other words, is it true 
that the presence of reals `at the background' does not increase the expressive power 
of "MSO" logic?
\end{quote}

\noindent
We answer positively the above question by building up on the correspondence 
between "MSO"-definability and "recognizability" by "$\countable$-monoids". 
The latter expressiveness result is inherently non-effective since the "MSO" 
theory of the real line is undecidable \cite{composition_method_shelah}, while 
that of the rational line is decidable.

\AP
Finally, we establish an interesting correspondence between "MSO"-definability 
of languages of (possibly infinite) trees and "MSO"-definability of their "yields":

\begin{quote}
\AP
\em Define the "yield" of a "tree" as the set of "leaves" ordered by the infix relation.
Consider an "MSO"-definable "tree" language $L$ that is "yield-invariant", namely, such 
that for all "trees" $t,t'$ with the same "yield", $t\in L$ iff $t'\in L$. 
The set of "yields" of "trees" in $L$ is effectively "MSO"-definable.
\end{quote}

\noindent
In \cite{yields_of_regular_tree_languages} a similar result was shown
in the restricted setting of {\sl finite} "trees".

\paragraph{Related work}
\AP
B\"uchi initiated the study of "MSO" logic using the tools of language theory. 
He established that every language of $\omega$-words (i.e., the particular 
case of words indexed by the ordinal~$"\omega"$) definable in "MSO" logic
is effectively recognized by a suitable form of automaton \cite{s1s}.
A major advance was obtained by Rabin, who extended this result to 
infinite trees \cite{s2s}. One consequence of Rabin's result is that 
"MSO" logic is decidable over the class of all "countable" "linear orderings". 
Indeed, every "linear ordering" can be seen as a set of nodes of the infinite 
tree, with the order corresponding to the infix ordering on nodes.
Another proof of the decidability of the "MSO" theory of "countable" "linear orderings" 
has been given by Shelah using the composition method \cite{composition_method_shelah}. 
This automaton-free approach to logic is based on syntactic operations 
on formulas and is inspired from Feferman and Vaught \cite{feferman_vaught_theorem}. 
The same paper of Shelah is also important for another result it contains: 
the undecidability of the "MSO" theory of the real line (the reals with order).
However, for infinite words as for infinite trees, the theory is much
richer than simply the decidability of "MSO" logic. In particular, "MSO" logic
is known to be equivalent to a number of different formalisms, such as automata,
some forms of algebras, and, in the $\omega$-word case, regular expressions.
"MSO" logic is also known to collapse to its existential fragment when
interpreted on the linear order $"\omega"$, that is, every formula is 
equivalent to a formula consisting of a block of existential quantifiers 
followed by a first-order formula.

\AP
Another branch of research has been pursued to raise the equivalence
between logic, automata, and algebra to infinite words beyond
$\omega$-words.  In \cite{Buchi64}, B\"uchi introduced $\omega_1$-automata
on transfinite words to prove the decidability of "MSO" logic for ordinals 
less than $\omega_1$.  Besides the usual transitions, $\omega_1$-automata 
are equipped with limit transitions of the form $P\rightarrow q$, with 
$P$ set of states, which are used in a Muller-like way to process words 
indexed over ordinals.  
B\"uchi proved that these automata have the same expressive power as 
"MSO" logic over ordinals less than $\omega_1$. The key ingredient 
is the closure under complementation of $\omega_1$-automata.
In \cite{CartonBruyere07}, $\omega_1$-automata have been extended to
$\diamond$-automata by introducing limit transitions of the form $q \rightarrow P$
to process words over "linear orderings".  In \cite{CartonRispal05}, $\diamond$-automata 
are proven to be closed under complementation with respect to "countable" and 
"scattered" "linear orderings" (a "linear ordering" is "scattered" if it is 
nowhere "dense", namely, if none of its "suborders" is isomorphic to the rational line).
More precisely, $\diamond$-automata have the same expressive power as "MSO" logic
over "countable" and "scattered" "linear orderings" \cite{BedonBesCartonRispal10}.  
However, it was already noticed in \cite{BedonBesCartonRispal10} that $\diamond$-automata 
are strictly weaker than "MSO" logic over "countable" (possibly non-"scattered") 
"linear orderings": indeed, the closure under complementation fails as there is 
an automaton that accepts all words with non-"scattered" domains, whereas there 
is none for "scattered" words.

\AP
Some of the results presented here appeared in preliminary form in 
the conference papers \cite{mso_over_countable_orderings} and 
\cite{MSO_with_cuts_in_the_background}.

\paragraph{Structure of the paper}
\AP After the preliminaries in Section~\ref{sec:preliminaries}, we introduce 
in Section~\ref{sec:semigroups} the notions of "$\countable$-monoids" 
and "$\countable$-algebras", and present the corresponding tools and results. 
In Section~\ref{sec:logic-to-algebra} we translate "MSO" formulas to "$\countable$-algebras"
and in Section~\ref{sec:algebra-to-logic} we establish the converse. 
In Section~\ref{sec:applications} we exploit the developed algebraic framework
to solve three open problems that we discussed earlier, namely: 
(i) the collapse of the quantifier hierarchy of "MSO" logic, 
(ii) the correspondence between classical "MSO"-definability and 
definability with the reals `at the background', and 
(iii) the "MSO"-definability of the set of yields induced 
by a regular yield-invariant tree language.

\paragraph{Acknowledgements}
\AP
We are grateful to Achim Blumensath for his numerous comments on this work
and to Alexander Rabinovich for the discussions on the subject and for 
introducing us to the question of definability with the reals at the background.
 
\smallskip
\section{Preliminaries}\label{sec:preliminaries}

In this section we recall some definitions for 
"linear orderings", "condensations", "words", and "languages".

\medskip
\subsection{Linear orderings}\label{subsec:linear-orderings}

A ""linear ordering"" $\alpha=(X,<)$ is a set $X$ equipped with 
a total order $<$. 
By a slight abuse of terminology, we call a "linear ordering" 
""countable"" when its domain is finite or countable.
We write $\alpha^{""\rev""}$ to denote the reverse linear ordering $(X,>)$. 
Two "linear orderings" have same ""order type"" if there is an "order"-preserving 
bijection between their domains. 
We denote by $""\omega""$, $""\omegaop""$, $""\zeta""$, $""\eta""$
the "order types" of $(\bbN,<)$, $(-\bbN,<)$, $(\bbZ,<)$, $(\bbQ,<)$, respectively. 
Unless strictly necessary, we do not distinguish between a "linear ordering" and 
its "order type". 

\AP
Given a subset $I$ of a "linear ordering" $\alpha$, we denote by $\alpha\suborder{I}$ 
the ""induced subordering"". Given two subsets $I,J$ of $\alpha$, we 
write $I<J$ iff $x<y$ for all $x\in I$ and all $y\in J$. A subset $I$ of 
$\alpha$ is said to be ""convex"" if for all $x,y\in I$ and all $z\in\alpha$, 
$x<z<y$ implies $z\in I$. 

\AP
The ""sum"" $\alpha_1 + \alpha_2$ of two "linear orderings"
$\alpha_1=(X_1,<_1)$ and $\alpha_2=(X_2,<_2)$ (up to renaming, assume that 
$X_1$ and $X_2$ are disjoint) is the "linear ordering" $(X_1\uplus X_2,<)$, 
where $<$ coincides with $<_1$ on $X_1$, with $<_2$ on $X_2$, and, furthermore, it 
satisfies $X_1<X_2$. More generally, given a "linear ordering" $\alpha=(X,<)$ and, 
for each $i\in X$, a "linear ordering" $\beta_i=(Y_i,<_i)$ (assume that the sets $Y_i$ 
are pairwise disjoint), we define the "sum" $\sum_{i\in\alpha}\beta_i$ to be the 
"linear ordering" $(Y,<')$, where $Y=\biguplus_{i\in X}Y_i$ and, for every $i,j\in X$, every 
$x\in Y_i$, and every $y\in Y_j$, $x<'y$ iff either $i=j$ and $x<_i y$ hold or $i<j$ holds.

\AP
A subset $I$ of a "linear ordering" $\alpha$ is ""dense in"" $\alpha$
if for every $x<y\in\alpha$, there exists $z\in I$ such that $x<z<y$. 
For example, $(\bbQ,<)$ is "dense in" $(\bbR,<)$ and $(\bbR,<)$ is "dense in itself". 
If a "linear ordering" $\alpha$ is "dense in itself", then 
we simply say that $\alpha$ is ""dense"". 
A "linear ordering" $\alpha$ is ""scattered"" 
if all its "dense" "suborderings" are empty or singletons. 
For example, $(\bbN,<)$, $(\bbZ,<)$, and all the ordinals are "scattered". 
Being "scattered" is preserved under taking a "subordering". 
A "scattered" "sum" of "scattered" "linear orderings" 
also yields a "scattered" "linear ordering". 

\AP
Additional material on "linear orderings" can be found in \cite{linear_orderings}.
 
\medskip
\subsection{Condensations}\label{subsec:condensations}

A standard way to prove properties of "linear orderings" is to decompose them into basic objects
(e.g., finite sequences, $\omega$-sequences, $\omegaop$-sequences, and $\eta$-orderings).
This can be done by exploiting the notion of "condensation". 

\AP
Precisely, a ""condensation"" of a "linear ordering" $\alpha$ is an equivalence relation $\sim$ 
over $\alpha$ such that for all $x<y<z$, $x\sim z$ implies $x\sim y\sim z$. Equivalently, a 
"condensation" of $\alpha$ can be seen as a partition of $\alpha$ into "convex" subsets. 

\AP
The order on $\alpha$ induces a corresponding "order" on the 
quotient $\alpha\quotient{\sim}$, which is called the ""condensed ordering"". 
This "condensed ordering" $\alpha\quotient{\sim}$ inherits some properties 
from $\alpha$: if $\alpha$ is "countable" (resp., "scattered"), then 
$\alpha\quotient{\sim}$ is "countable" (resp., "scattered").

\medskip
\subsection{Words and languages}\label{subsec:words}

We use a generalized notion of "word", which coincides 
with the notion of labelled "linear ordering".
Given a "linear ordering" $\alpha$ and a finite alphabet $A$, 
a ""word over $A$ with domain $\alpha$@word"" 
is a mapping of the form $w:\alpha\then A$. 
The ""domain"" of a word $w$ is denoted $\dom(w)$.
Unless specifically required, we shall always consider "words"
of "countable" "domain", and up to isomorphism. 
The set of all "words" (of "countable" "domain") over an alphabet~$A$ 
is denoted $A^{""\countable""}$.
The set of all "words" of non-empty ("countable") "domain" over an alphabet $A$ 
is denoted $A^{""\scountable""}$.
Given a "word" $w$ and a subset $I$ of $\dom(w)$, 
we denote by $w\suborder{I}$ the ""subword"" resulting 
from restricting the domain of $w$ to $I$. 
If in addition $I$ is "convex", then $w\suborder{I}$ is said to be a ""factor"" of~$w$.

\AP
Certain "words" will play a crucial role in the sequel,
so we introduce specific notation for them.
For example, we denote the ""empty word"" by $\emptystr$.
A "word" $w$ is said to be 
an ""$\eta$-shuffle@$\eta$-shuffle of letters"" of set $A$ of letters 
if (i) the domain $\dom(w)$ has "order type" $\eta$ and 
(ii) for every symbol $a\in A$, the set 
$w^{-1}(a)=\{x\in\dom(w) \mid w(x)=a\}$ is "dense in" $\dom(w)$.
Recall that $"\eta"$ is -- up to isomorphism -- the unique "countable" 
"dense" linear ordering with no end-points.
Likewise, for every finite set $A$, there is a unique, up to isomorphism, 
"$\eta$-shuffle@$\eta$-shuffle of letters" of $A$.

\AP
Given two "words" $u,v$, we denote by $u v$ the ""concatenation"" 
of $u$ and $v$, namely, the "word" with domain $\dom(u) + \dom(v)$, 
where each position $x\in\dom(u)$ (resp., $x\in\dom(v)$) 
is labelled by $u(x)$ (resp., $v(x)$). 
This is readily generalized to infinite "concatenations" of the form 
$\prod_{i\in\alpha}w_i$, for any "linear ordering" $\alpha$ 
and any sequences of words $(w_i)_{i\in\alpha}$, the resulting word having 
"domain" $\sum_{i\in\alpha} \dom(w_i)$.
The ""$\omega$-power"" of a word $w$ is defined as $w\omegapow=\prod_{i\in\omega}w$. 
Similarly, we define the ""$\omegaop$-power"" $w\omegaoppow=\prod_{i\in\omegaop}w$. 
By a slight abuse of terminology, we also define the ""$\eta$-shuffle"" of a tuple 
of words $w_1,\ldots,w_k$ as the word
$$
  \{w_1,\ldots,w_k\}\etapow ~\eqdef~ \prod_{i\in\eta}w_{f(i)}
$$
where $f$ is the unique "$\eta$-shuffle@$\eta$-shuffle of letters" 
of the set of letters $I=\{1,\ldots,k\}$. 

\AP
A ""$\countable$-language"" (resp., ""$\scountable$-language"") is any 
set of "words" (resp., non-empty words) over a fixed finite alphabet. 
The operations of "concatenation", "$\omega$-power", "$\omegaop$-power", 
"$\eta$-shuffle", etc.~are extended to "languages" in the obvious way.
  
\smallskip
\section{Algebras for "countable" "words"}\label{sec:semigroups}

\AP
In this section we present the algebraic objects that are suited for 
deriving a notion of "recognizability" for "languages" of "countable" "words".
As it was already the case for "words" with domain $\omega$,
\cite{infinite_words,algebraic_theory_for_regular_languages}, 
our definitions come in two flavors, "$\countable$-monoids" (corresponding to $\omega$-monoids)
and "$\countable$-algebras" (corresponding to Wilke's algebras). 
We prove the equivalence of the two notions when the supports are finite.

\medskip
\subsection{"Countable" products}\label{subsec:countableproducts}

We introduce below a notion of product indexed by "countable"
"linear orderings" that satisfies a "generalized associativity" 
property. 

\begin{definition}\label{def:semigroup}
A ""(generalized) product"" over a set~$S$ is a function~$\pi$ from~$S^\scountable$
to~$S$ such that, for every $a\in S$, $\pi(a)=a$ and, for every family of 
words~$(u_i)_{i\in\alpha}\in (S^\scountable)^\scountable$,
\ifjsl
\phantomintro{generalized associativity}
\begin{align*}
  \tag{generalized associativity}
  \mspace{300mu}
  \pi\Big( \prod\nolimits_{i\in\alpha}\pi(u_i) \Big)
  ~&=~
  \pi\Big( \prod\nolimits_{i\in\alpha}u_i \Big)\ . 
  \mspace{-140mu}
\end{align*}
\fi
\ifarxiv
\phantomintro{generalized associativity}
\begin{align*}
  \pi\Big( \prod\nolimits_{i\in\alpha}\pi(u_i) \Big)
  ~&=~
  \pi\Big( \prod\nolimits_{i\in\alpha}u_i \Big)\ . 
  \tag{generalized associativity}
\end{align*}
\fi
The pair $(S,\pi)$ is called a ""$\scountable$-semigroup"".

\noindent
If the same definition holds, with $\pi$ function from $S^\countable$ to $S$
and $(u_i)_{i\in\alpha}\in (S^\countable)^\countable$, 
then $(S,\pi)$ is called a ""$\countable$-monoid"".
\end{definition}

\AP
As an example, the function~$\prod$ that maps any "countable" sequence 
of "non-empty words" to their "concatenation" is a "generalized product" 
over $A^\scountable$.
Hence, $(A^\scountable,\prod)$ is a "$\scountable$-semigroup"; 
it is indeed the ""free"" $\scountable$-semigroup generated by $A$.
Similarly, $(A^\countable,"\prod")$ is the "free" $\countable$-monoid generated by $A$.

\AP
Given a "$\scountable$-semigroup" $(S,\pi)$, we call 
""neutral element"" an element $1\in S$ such that, for every word $w\in S^\scountable$, 
if $w|_{\neq1}$ is the "subword" of $w$ obtained by removing every occurrence of the 
element $1$ and $w|_{\neq1}$ is non-empty, then $\pi(w)=\pi(w|_{\neq1})$. Note that the
"neutral element", if exists, is unique: given two "neutral elements" $1,1'\in S$, we 
have $1 = \pi(1) = \pi(11'|_{\neq1'}) = \pi(11') = \pi(11'|_{\neq 1}) = \pi(1') = 1'$. 

\AP
At some places in the proofs it will be necessary to use "$\scountable$-semigroups" 
rather than "$\countable$-monoids". The two notions are however very close. On the
one hand, any "$\countable$-monoid" $(S,\pi)$ can be seen as a "$\scountable$-semigroup"
by simply restricting its "generalized product" $\pi$ to $S^\scountable$. On the other
hand, any "$\scountable$-semigroup" $(S,\pi)$ can be extended to a "$\countable$-monoid"
either by letting $\pi(\emptystr)=1$, where $\emptystr$ is the "empty word" and 
$1$ is the (unique) "neutral element" of $(S,\pi)$, or, if $(S,\pi)$ has no "neutral element", 
by introducing a fresh element $1\nin S$ and by letting $\pi(\emptystr)=1$ and 
$\pi(w)=\pi(w|_{\neq1})$ for all words $w$ over $S\uplus\{1\}$.

\medskip
\AP
A ""morphism"" from a $\scountable$-semigroup $(S,\pi)$ to 
another $\scountable$-semigroup $(S',\pi')$ is a mapping $h:S\then S'$ such that, 
for every word $(w_i)_{i\in\alpha} \in S^\scountable$,
\begin{align*}
  h(\pi(w))=\pi'(\bar h(w))\ ,
\end{align*}
where $\bar h$ is the pointwise extension of~$h$ to "words".
A \emph{morphism of $\countable$-monoids} is defined similarly,
this time with $(w_i)_{i\in\alpha} \in S^\countable$.

\AP
A "$\scountable$-language" $L\subseteq A^\scountable$ is 
""recognizable by a $\scountable$-semigroup"" if there 
exists a "morphism" $h$ from $"(A^\scountable,\prod)"$
to some finite "$\scountable$-semigroup"~$(S,\pi)$ 
(here finite means that~$S$ is finite) 
such that~$L=h^{-1}(F)$ for some~$F\subseteq S$
(equivalently, $h^{-1}(h(L))=L$).
Similarly, a "$\countable$-language" $L\subseteq A^\countable$
is ""recognizable by a $\countable$-monoid""
if there exists a "morphism" $h$ from $"(A^\countable,\prod)"$
to some finite "$\countable$-monoid"~$(M,\pi)$ 
(here finite means that~$M$ is finite)
such that~$L=h^{-1}(F)$ for some~$F\subseteq M$
(equivalently, $h^{-1}(h(L))=L$).

\AP
We are mainly interested in languages "recognizable by finite $\countable$-monoids".
However, it is worth noticing that, with respect to membership of non-empty "words", 
this notion is the same as "recognizability by finite $\scountable$-semigroups": 
indeed, a language $L$ is 
"recognizable by finite $\countable$-monoids" iff 
$L\setminus\{\emptystr\}$ is "recognizable by finite $\scountable$-semigroups".
 
\medskip
\subsection{From "countable" products to algebras}\label{subsec:algebras-from-semigroups}

The notion of "recognizability" for $\countable$-languages makes use of a "product" 
function $\pi$ that needs to be represented, a priori, by an infinite table. 
This is a not usable as it stands for finite presentations of languages,
nor for decision procedures. 
That is why, given a finite "$\scountable$-semigroup"
$(S,\pi)$, we define the following (finitely presentable) algebraic operators:
\begin{itemize}
  \item the ""binary product"" 
        $\cdot: S^2\then S$, mapping any pair of elements $a,b\in S$ to the element $\pi(ab)$,
  \item the ""$\tau$-iteration"" $\tau: S\then S$, mapping any element $a\in S$ to the element $\pi(a\omegapow)$
        (thus, $\tau$ is the analogous of the "$\omega$-power" inside $S$),
  \item the ""$\tauop$-iteration"" $\tauop: S\then S$, mapping any element $a\in S$ to the element $\pi(a\omegaoppow)$
        (thus, $\tauop$ is the analogous of the "$\omegaop$-power" inside $S$),
  \item the ""$\kappa$-iteration"" $\kappa: \sP(S)\setminus\{\emptyset\}\then S$, 
        mapping any non-empty subset $\{a_1,\ldots,a_k\}$ of $S$
        to the element $\pi\big(\{a_1,\ldots,a_k\}\etapow\big)$
        ($\kappa$ is the analogous of the "$\eta$-shuffle" inside $S$).
\end{itemize}
Furthermore, if $(S,\pi)$ is a "$\countable$-monoid", we see the "neutral element" 
as a nullary operator induced by $\pi$, namely, as $1=\pi(\emptystr)$. 
One says that $\cdot$, $\tau$, $\tauop$, $\kappa$ (and possibly $1$) 
are ""induced by $\pi$"".  
From now on, we shall use the operator $\cdot$ with infix notation (e.g., 
$a\cdot b$) and the operators $\tau$, $\tauop$, and $\kappa$ with 
superscript notation (e.g., $a^\tau$, $\{a_1,\ldots,a_k\}^\kappa$).  
As shown below, the resulting structures $(S,\cdot,\tau,\tauop,\kappa)$ 
$(S,1,\cdot,\tau,\tauop,\kappa)$ have the property of being, respectively,
a "$\scountable$-algebra" and a "$\countable$-algebra".

\begin{definition}\label{def:o-algebra}
A structure $(S,\cdot,\tau,\tauop,\kappa)$, with
$\cdot:S^2\then S$, $\tau,\tauop:S\then S$, and $\kappa:\sP(S)\setminus\{\emptyset\}\then S$,
is called a ""$\scountable$-algebra"" if:
\begin{axiomlist}
  \item $(S,\cdot)$ is a semigroup, namely, for every 
        $a,b,c\in S$, $a\cdot(b\cdot c)=(a\cdot b)\cdot c$,
        \label{axiom:concatenation}
  \item $\tau$ is ""compatible to the right"", namely, for every $a,b\in S$ and every $n>0$, 
        $(a\cdot b)^\tau=a\cdot(b\cdot a)^\tau$ and $(a^n)^\tau=a^\tau$,
        \label{axiom:omega}
  \item $\tauop$ is ""compatible to the left"", namely, for every $a,b\in S$ and every $n>0$, 
        $(b\cdot a)^\tauop=(a\cdot b)^\tauop\cdot a$ and $(a^n)^\tauop=a^\tauop$,
        \label{axiom:omegaop}
  \item $\kappa$ is ""compatible with shuffles"", namely, for every non-empty subset $P$ of $S$,
        every element $c$ in $P$, every subset $P'$ of $P$, and every non-empty subset $P''$ of 
        $\{P^\kappa,\, a\cdot P^\kappa,\, P^\kappa\cdot b,\, a\cdot P^\kappa\cdot b ~\mid~a,b\in P\}$,
        we have 
        $$
        \begin{array}{rl}
          P^\kappa &=~ P^\kappa\cdot P^\kappa ~=~ P^\kappa\cdot c\cdot P^\kappa   \\[1ex]
                   &=~ (P^\kappa)^\tau        ~=~ (P^\kappa\cdot c)^\tau          \\[1ex]
                   &=~ (P^\kappa)^\tauop      ~=~ (c\cdot P^\kappa)^\tauop        \\[1ex]
                   &=~ (P'\cup P'')^\kappa\ .
        \end{array}
        $$
        \label{axiom:shuffle}
\end{axiomlist}
A ""$\countable$-algebra"" $(M,1,\cdot,\tau,\tauop,\kappa)$ 
is a "$\scountable$-algebra" $(M,\cdot,\tau,\tauop,\kappa)$
with a distinguished element $1\in M$ such that 
\begin{axiomlist}   \addtocounter{enumi}{4}
  \item $x\cdot 1=1\cdot x=x$, \ 
        $1^\tau=1^\tauop=\{1\}^\kappa=1$, \ and \ 
        $P^\kappa=(P\cup\{1\})^\kappa$, \ 
        for all $x\in M$ and all non-empty $P\subseteq M$.
        \label{axiom:identity}
\end{axiomlist}
\end{definition}

\AP
The typical "$\scountable$-algebras" and "$\countable$-algebras" are:

\begin{lemma}\label{lemma:word-algebra}
For every alphabet~$A$, $(A^\scountable,\cdot,\omega,\omegaop,\eta)$ is a
"$\scountable$-algebra" and $(A^\countable,\varepsilon,\cdot,\omega,\omegaop,\eta)$ is a 
"$\countable$-algebra"\footnote{Similarly to what happens for Wilke's algebras
                                \cite{algebraic_theory_for_regular_languages}, 
                                $(A^\scountable,\cdot,\omega,\omegaop,\eta)$
                                is not the "free" $\scountable$-algebra generated by~$A$, 
                                as the "free" algebra generated by a finite set is by 
                                definition countable, while $A^\scountable$ has the 
                                cardinality of the continuum.}.
\end{lemma}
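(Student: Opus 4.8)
The plan is to verify that the operations $\cdot$, $\omega$, $\omegaop$, $\eta$ on words satisfy the five axiom groups of Definition~\ref{def:o-algebra}, in each case by exhibiting an explicit order-isomorphism between the domains of the two words being compared and checking that labels are preserved. Since a word is a labelled linear ordering, two words are equal (up to isomorphism, which is how we treat words) precisely when there is an order-preserving bijection of their domains commuting with the labelling; all the identities to be checked reduce to such isomorphisms between sums of copies of the basic order types $\omega$, $\omegaop$, $\zeta$, $\eta$, together with finite orderings. I would treat the $\scountable$-algebra claim first and then observe that the $\countable$-algebra claim follows by checking axiom~\ref{axiom:identity} for $\varepsilon$, which is immediate from the definition of concatenation (prepending or appending the empty word does not change the domain, and $\varepsilon^\omega=\varepsilon^\omegaop=\{\varepsilon\}^\eta=\varepsilon$ since a sum of empty orderings is empty, while $L^\eta$ and $(L\cup\{\varepsilon\})^\eta$ agree because inserting empty words into a shuffle changes nothing).

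For the individual axioms: axiom~\ref{axiom:concatenation} is just the associativity of ordinal/linear-order addition on domains, $\dom(u)+(\dom(v)+\dom(w)) \cong (\dom(u)+\dom(v))+\dom(w)$. For axiom~\ref{axiom:omega}, $(uv)^\omega$ has domain $(\dom(u)+\dom(v))\cdot\omega$, while $u(vu)^\omega$ has domain $\dom(u)+(\dom(v)+\dom(u))\cdot\omega$, and these are isomorphic by the standard shift that pairs the $k$-th copy's $v$-then-$u$ block with the $u$ of copy $k$ and the $v$ of copy $k+1$; likewise $(u^n)^\omega$ has domain $(\dom(u)\cdot n)\cdot\omega \cong \dom(u)\cdot\omega = u^\omega$'s domain, since $n\cdot\omega=\omega$. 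Axiom~\ref{axiom:omegaop} is the mirror image, using $\omega^{\mathrm{rev}}=\omegaop$ and the symmetric shift, where the extra trailing $u$ in $(ab)^\omegaop a$ matches the last-block phenomenon on the reversed side. Axiom~\ref{axiom:shuffle} is the substantial one: here $P^\eta = \{a_1,\dots,a_k\}^\eta$ is the shuffle word, and I would use the characterization that a countable word $w$ is (isomorphic to) this shuffle iff its domain is dense without endpoints and each letter of $P$ occurs densely; then each of $P^\eta\cdot P^\eta$, $P^\eta\cdot c\cdot P^\eta$, $(P^\eta)^\omega$, $(P^\eta\cdot c)^\omega$, $(P^\eta)^\omegaop$, $(c\cdot P^\eta)^\omegaop$, and $(P'\cup P'')^\eta$ is readily seen to again have a dense-without-endpoints domain in which every letter of $P$ (and no other letter) occurs densely — for the last one using that each element of $P''$ is a concatenation built from $P^\eta$ and hence itself contains every letter of $P$ densely, and each element of $P'$ is in $P$ — so by uniqueness of the shuffle all these words coincide with $P^\eta$.

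The main obstacle is bookkeeping rather than conceptual: getting the isomorphisms in axioms~\ref{axiom:omega} and~\ref{axiom:omegaop} precisely right (the off-by-one shift between the $n$-fold blocks and the single boundary copy, and the handedness of $\omegaop$), and, for axiom~\ref{axiom:shuffle}, being careful that "every letter of $P$ occurs densely and no other letter occurs at all" is exactly the invariant preserved by all seven constructions — in particular that in $(P'\cup P'')^\eta$ the newly shuffled blocks from $P''$ do not introduce letters outside $P$, which is where one uses that the $a,b$ in the definition of $P''$ range over $P$. I expect the whole proof to be a sequence of short explicit constructions with no single hard step, and I would state it as such, perhaps spelling out one representative isomorphism in each family and leaving the mirror cases to the reader.
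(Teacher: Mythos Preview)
Your approach is essentially the paper's: the paper's entire proof is ``By a systematic analysis of Axioms~\refaxiom{axiom:concatenation}--\refaxiom{axiom:identity}'', and you are supplying exactly that analysis.

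One caution on Axiom~\refaxiom{axiom:shuffle}: your characterization (``domain dense without endpoints and each letter of $P$ occurs densely'') is phrased for the final word over $A$, but that is not correct when the elements of $P$ are words of length greater than one --- for instance if $P=\{ab\}$ then $P^\eta$ has domain $2\cdot\eta$, which is not dense. The argument should instead be carried out at the level of the \emph{indexing}: write each side as $\prod_{i\in\beta} w_{h(i)}$ with every $w_{h(i)}\in P$, and show that the labelled order $(\beta,h)$ over the finite alphabet $\{1,\dots,k\}$ is an $\eta$-shuffle of letters (countable, dense without endpoints, each index dense); uniqueness of the $\eta$-shuffle then gives an isomorphism of the indexed families, hence of the concatenations. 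With that adjustment --- which is surely what you intend, given your remark about $(P'\cup P'')^\eta$ --- your plan goes through.
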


\begin{proof}
By a systematic analysis of Axioms~\refaxiom{axiom:concatenation}-\refaxiom{axiom:identity}. 
\end{proof}

\smallskip
\AP
Furthermore, as we mentioned above, every "$\scountable$-semigroup" induces a 
"$\scountable$-algebra" and every "$\countable$-monoid" induces a "$\countable$-algebra":

\begin{lemma}\label{lemma:semigroup-to-algebra}
For every "$\scountable$-semigroup" $(S,\pi)$, $(S,\cdot,\tau,\tauop,\kappa)$ is a 
"$\scountable$-algebra", where the operators $\cdot$, $\tau$, $\tauop$, and $\kappa$ 
are those "induced by $\pi$". Similarly every "$\countable$-monoid" $(S,\pi)$, 
$(S,1,\cdot,\tau,\tauop,\kappa)$ is a "$\countable$-algebra", where the operators 
$\cdot$, $\tau$, $\tauop$, $\kappa$, and $1$ are those "induced by $\pi$".
\end{lemma}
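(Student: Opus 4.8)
The plan is to verify the five axioms of Definition~\ref{def:o-algebra} one by one, each time rewriting the relevant operators back in terms of $\pi$ via the definitions of the "induced by $\pi$" operators, and then appealing to the "generalized associativity" property of $\pi$ (Definition~\ref{def:semigroup}). The recurring mechanism is this: any composite built from $\cdot$, $\tau$, $\tauop$, $\kappa$ applied to elements $a_1,\dots,a_k\in S$ can be written as $\pi(w)$ for a suitable word $w\in S^\scountable$ obtained by concatenating the words $a_i$, $a_i{}^\omega$, $a_i{}^{\omegaop}$, and $\eta$-shuffles $\{a_{i_1},\dots\}^\eta$; and whenever two such composites have underlying words $w,w'$ that are equal as elements of the "free" $\scountable$-semigroup $(S^\scountable,\prod)$, generalized associativity (possibly applied several times, inserting or removing inner $\pi$'s) yields $\pi(w)=\pi(w')$.

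First I would dispatch Axiom~\refaxiom{axiom:concatenation}: $a\cdot(b\cdot c)=\pi(a\,\pi(bc))=\pi(a\,(bc))=\pi(abc)$ by generalized associativity applied to the family $(a,bc)$ with the first entry left unchanged — and symmetrically $(a\cdot b)\cdot c=\pi(abc)$ — so $(S,\cdot)$ is a semigroup. Next, for Axiom~\refaxiom{axiom:omega}, I compute $(a\cdot b)^\tau=\pi\big((\pi(ab))^\omega\big)=\pi\big((ab)^\omega\big)$ by generalized associativity, and likewise $a\cdot(b\cdot a)^\tau=\pi\big(a\,(ba)^\omega\big)$; the underlying words $(ab)^\omega$ and $a(ba)^\omega$ have the same "order type" and labelling (both are the word $ababab\cdots$), hence are equal in $(S^\scountable,\prod)$, giving the identity. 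The identity $(a^n)^\tau=a^\tau$ is the same argument with the words $(a^n)^\omega$ and $a^\omega$, which again coincide. Axiom~\refaxiom{axiom:omegaop} is the mirror image, using $\omegaop$-powers and the fact that $(ba)^{\omegaop}$ equals $(ab)^{\omegaop}a$ as words.

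The main obstacle will be Axiom~\refaxiom{axiom:shuffle}, which packs several equalities into one display. For each of them the strategy is identical — reduce both sides to $\pi$ applied to a word, then check the two words are isomorphic labelled "linear orderings" — but the word-level verifications require the combinatorial properties of $\eta$: namely that for any non-empty finite $P$, the $\eta$-shuffle $P^\eta$ is, up to isomorphism, unchanged by (i) concatenating two copies, or inserting a single letter $c\in P$ between them, (ii) taking its $\omega$-power or $\omegaop$-power (with or without a leading/trailing $c\in P$), and (iii) enlarging the letter set from $P$ to $P'\cup P''$ provided every original letter of $P$ still occurs densely. Point (iii) is the delicate one: one must check that in the shuffle $(P'\cup P'')^\eta$ — where $P''$ consists of composite elements like $a\cdot P^\kappa\cdot b$ — every element of $P$ still appears densely, which uses that each $P''$-block itself contains a dense copy of $P^\eta$, so shuffling $P'\cup P''$ and then "flattening" the $P''$-blocks via generalized associativity yields a word in which all of $P$ occurs densely; uniqueness of the countable dense shuffle then finishes it. Finally, for the "$\countable$-monoid" case, Axiom~\refaxiom{axiom:identity} follows from the defining property of the "neutral element" $1=\pi(\emptystr)$: removing occurrences of $1$ from any of the words above does not change the value of $\pi$, which immediately gives $x\cdot 1=1\cdot x=x$, $1^\tau=1^{\omegaop}=\{1\}^\kappa=1$, and $P^\kappa=(P\cup\{1\})^\kappa$. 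I would close by remarking that Lemma~\ref{lemma:word-algebra} is the special case $S=A^\scountable$, $\pi=\prod$, so it need not be proved separately.
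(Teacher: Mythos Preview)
Your approach is correct, but the paper organizes the same content differently. The paper first establishes Lemma~\ref{lemma:word-algebra} --- that $(S^\scountable,\cdot,\omega,\omegaop,\eta)$ satisfies the axioms --- by the very word-level isomorphism checks you carry out (what you call ``reduce both sides to $\pi$ applied to a word, then check the two words are isomorphic''). It then observes that $\pi:(S^\scountable,\prod)\to(S,\pi)$ is a surjective morphism sending concatenation to $\cdot$, $\omega$-power to $\tau$, etc., so every equational identity valid in $(S^\scountable,\cdot,\omega,\omegaop,\eta)$ transfers automatically to $(S,\cdot,\tau,\tauop,\kappa)$. Your repeated use of generalized associativity to ``push $\pi$ inside and outside'' is exactly this morphism transfer, done by hand axiom by axiom.

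So the two proofs have identical content; the paper's factoring isolates the combinatorics (Lemma~\ref{lemma:word-algebra}) from the algebra (the morphism principle), making the proof of Lemma~\ref{lemma:semigroup-to-algebra} itself a two-line remark. Your version is more self-contained and, as you note at the end, reverses the logical dependency: you prove Lemma~\ref{lemma:semigroup-to-algebra} directly and recover Lemma~\ref{lemma:word-algebra} as the special case $\pi=\prod$. Both orderings are legitimate; the paper's is shorter on the page because it names the transfer principle once instead of reapplying it for each axiom. One small slip: you wrote $1^{\omegaop}$ where you meant $1^{\tauop}$.
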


\begin{proof}
The results are simply inherited from Lemma~\ref{lemma:word-algebra} by morphism.
Let~$(S,\pi)$ be a "$\scountable$-semigroup" "inducing" the 
operators~$\cdot,\tau,\tauop,\kappa$.
The structure~$"(S^\scountable,\prod)"$ is also a "$\scountable$-semigroup", 
which induces the operations of "concatenation", "$\omega$-power", "$\omegaop$-power", 
and "$\eta$-shuffle".
Furthermore, the product~$\pi$ can be seen as a surjective morphism from $"(S^\scountable,\prod)"$
to~$(S,\pi)$ (just a morphism of abstract algebras, not of "$\scountable$-algebras").
By definition of~$\cdot,\tau,\tauop,\kappa$, this morphism maps 
"concatenation" to $\cdot$, "$\omega$-power" to "$\tau$-iteration", 
"$\omegaop$-power" to "$\tauop$-iteration", and "$\eta$-shuffle" to "$\kappa$-iteration". 
It follows that any equality involving concatenation, $\omega$-power, $\omegaop$-power, and~$\eta$-shuffle 
is also satisfied by the analogous operations $\cdot$, $\tau$, $\tauop$, and $\kappa$.
In particular, the "axioms" that, thanks to Lemma~\ref{lemma:word-algebra}, are satisfied 
by the "$\scountable$-algebra" $(S^\scountable,\cdot,\omega,\omegaop,\eta)$ are directly
transferred to $(S,\cdot,\tau,\tauop,\kappa)$. The case of a "$\countable$-monoid" is similar.
\end{proof}
 
\medskip
\subsection{From algebras to "countable" products}\label{subsec:semigroups-from-algebras}

\AP
Here, we aim at proving a converse to Lemma~\ref{lemma:semigroup-to-algebra}, namely,
that every {\sl finite} "$\scountable$-algebra" $(S,\cdot,\tau,\tauop,\kappa)$ 
can be uniquely extended to a "$\scountable$-semigroup" $(S,\pi)$, and similarly for
"$\countable$-algebras" and "$\countable$-monoids" (Theorem~\ref{th:algebra-to-semigroup}
and Corollary~\ref{cor:algebra-to-monoid}).

\AP
Let us fix a finite "$\scountable$-algebra" $(S,\cdot,\tau,\tauop,\kappa)$.
In this section, we assume that all "words" are over the alphabet $S$. The objective 
of the construction is to attach to each "word" $u$ (over the alphabet $S$) a `value' 
in $S$. Furthermore, this value needs to be shown unique.  

\AP
The key ingredient for associating a unique value in $S$ to each "word" 
$u\in S^\scountable$ is the notion of \emph{"evaluation tree"}. Intuitively, 
this is an infinite tree describing a strategy for evaluating larger and larger 
factors of the "word" $u$. 
To define these objects, we need to first introduce the concept of 
\emph{"condensation tree"}, which is a convenient representation of 
nested "condensations" of a "linear ordering". This will provide the
underlying structure of an "evaluation tree".
The nodes of a "condensation tree" are "convex" subsets of the "linear ordering" 
and the descendant relation is given by inclusion.  
The set of children of each node defines a "condensation". 
Furthermore, in order to provide an induction parameter, 
we require that the branches of a "condensation tree" are finite (but their
length may not be uniformly bounded).

\begin{definition}\label{def:condensation-tree}
A ""condensation tree"" over a "linear ordering" $\alpha$ is a set $T$ of non-empty 
"convex" subsets of $\alpha$ such that:
\begin{itemize}
  \item $\alpha\in T$,
  \item for all $I,J$ in $T$, either $I\subseteq J$ or $J\subseteq I$ or $I\cap J=\emptyset$,
  \item for all $I\in T$, the union of all $J\in T$ such that $J\subsetneq I$ 
        is either $I$ or $\emptyset$,
  \item every subset of $T$ totally ordered by inclusion is finite.
\end{itemize}
\end{definition}

\AP
Elements in $T$ are called ""nodes"". The "node" $\alpha$ is 
called the ""root"" of the tree. Nodes minimal for $\subseteq$ 
are called ""leaves""; the other "nodes", including the "root", 
are called ""internal nodes"".
A "node" $I\in T$ is a ""descendant"" of a "node" $J\in T$
(and accordingly $J$ is an ""ancestor"" of $I$) if $I\subseteq J$.
If in addition we have $I\neq J$, then we say that $I$ is a
""proper descendant"" of $J$.
Similarly, $I$ is a ""child"" of a "node" $J$ 
(and accordingly $J$ is the ""parent"" of $I$) if $I\subsetneq J$ 
and, for all $K\in T$, $I\subsetneq K$ implies $J\subseteq K$.
According to the definition, if $I$ is an "internal node" of a "condensation tree" 
$T$ over $\alpha$, then it has a set of "children" that forms a partition 
of $I$ into "convex" subsets. 
We denote this partition by $\children_T(I)$, and we observe 
that it naturally corresponds to a "condensation" of $\alpha\suborder{I}$. 
When the tree $T$ is clear from the context, we will denote by 
$\children(I)$ the set of all children of $I$ in $T$ and, by extension, 
the corresponding "condensation" and the corresponding "condensed ordering".
Finally, we define the ""subtree"" of $T$ rooted at some of node $I$ of it 
as the "condensation tree" obtained by restricting $T$ to the "descendants" 
of $I$ (including $I$ itself).

\smallskip
\AP
We now introduce "evaluation trees". Intuitively, these are "condensation trees" 
where each "internal node" has an associated value in $S$ that can be `easily computed' 
from the values of its "children". Here it comes natural to consider a "word" $u$ 
`easy to compute' if it is isomorphic to either $a b$, $a\omegapow$, $a\omegaoppow$, 
or $P\etapow$, for some elements $a,b\in S$ and some non-empty set $P\subseteq S$.
Indeed, in each of these cases, the value of $u$ can be computed by a single application 
of the operations of the "$\scountable$-algebra" $(S,\cdot,\tau,\tauop,\kappa)$. 
Formally, the "words" that are easily computable are precisely those that belong
to the domain of the partial function $\pis$, defined just below:

\newcounter{definitionref}
\setcounter{definitionref}{\value{theorem}} \begin{definition}\label{def:base-mapping}
Let $""\pis""$ be the partial function from $S^\scountable$ to $S$ such that:
\begin{itemize}
  \item $\pis(a b)=a\cdot b$ for all $a,b\in S$,
  \item $\pis(e\omegapow)=e^\tau$ for all ""idempotents"" $e\in S$ 
        (i.e., all $e\in S$ such that $e\cdot e=e$),
  \item $\pis(e\omegaoppow)=e^{\tauop}$ for all "idempotents" $e\in S$,
  \item $\pis(P\etapow)= P^\kappa$ for all non-empty sets $P\subseteq S$,
  \item in all remaining cases, $\pis$ is undefined.
\end{itemize}
\end{definition}

\begin{definition}\label{def:evaluation-tree}
An ""evaluation tree"" over a "word" $u$ is a pair $\cT=(T,\pit)$, where $T$ is a 
"condensation tree" over the domain of $u$ and $""\pit"" $ is a 
function from $T$ to $S$ such that:
\begin{itemize}
  \item every "leaf" of $T$ is a singleton of the form $\{x\}$ and $\pit(\{x\})=u(x)$,
  \item for every "internal node" $I$ of $T$, 
        the partial function $\pis$ is defined on the "word" $\pit(\children(I))$
        that has domain $\children(I)$ and labels each 
        position $J \in \children(I)$ with $\pit(J)$;
        in addition, we have $\pit(I)=\pis(\pit(\children(I))$.
\end{itemize}
The \emph{value} of $(T,\pit)$ is defined to be $\pit(\alpha)$, 
i.e., the value of the "root".
\end{definition}

\AP
Let us turn back to the problem of associating a unique value in $S$ 
to each "word" $u\in S^\countable$. Based on the previous definitions, 
we can solve this problem in two steps.
First, we show that every "word" $u$ has an "evaluation tree", 
and thus a possible "value" that can be associated with it.
Then, we show that the associated "value" in fact does not depend 
on the choice of the "evaluation tree" over $u$, namely, that
"evaluation trees" over the same "word" induce the same "value".
The next two propositions formalize precisely these two steps.

\begin{proposition}\label{prop:existence-evaluation}
For every "word" $u$, there exists an "evaluation tree" over $u$.
\end{proposition}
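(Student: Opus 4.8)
The plan is to build an "evaluation tree" by transfinite recursion on the rank of a "condensation tree" structure, using the fact that any countable "linear ordering" admits a finite-branch condensation hierarchy whose smallest blocks are of the four "easy" shapes $ab$, $a\omegapow$, $a\omegaoppow$, $P\etapow$. The central combinatorial ingredient is the classical Hausdorff-style analysis of countable orders: every "countable" "linear ordering" $\alpha$ can be exhibited as the "root" of a "condensation tree" in which $\children(\alpha)$ is a "condensation" whose "condensed ordering" is isomorphic to a finite sum, an $\omega$-sequence, an $\omegaop$-sequence, or an $\eta$-shuffle (or a singleton), and where each child, being a strictly smaller "convex" subset, recursively admits such a tree. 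The well-foundedness requirement (every chain of "nodes" under inclusion is finite) is ensured by always making genuine progress — decomposing a nonsingleton block into at least two strictly smaller blocks, and iterating the condensation until the next layer is one of the designated shapes.

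First I would set up the recursion on a suitable ordinal rank of countable orders — for instance via the derivative that repeatedly condenses by the "finite-distance" equivalence, giving the Hausdorff rank; or, more self-containedly, I would argue directly by a "minimal counterexample" / Zorn-type argument on the family of "words" admitting no "evaluation tree", using that "convex" subwords of a counterexample of least rank all admit "evaluation trees". The base case is a singleton domain $\{x\}$: the tree $\{\{x\}\}$ with $\pit(\{x\})=u(x)$ works since there are no "internal nodes" to check. For the recursive step, given a nonsingleton "word" $u$ with domain $\alpha$, I would choose a "condensation" $\sim$ of $\alpha$ whose "quotient" $\alpha\quotient{\sim}$ has one of the four "easy" "order types" (or is a singleton, in which case pass to a finer nontrivial condensation), obtain for each $\sim$-class $J$ — a strictly smaller "convex" subset, hence a "word" of smaller rank — an "evaluation tree" $(T_J,\pit_J)$ by the induction hypothesis, form $T=\{\alpha\}\cup\bigcup_J T_J$, and define $\pit$ to agree with each $\pit_J$ on $T_J$ and to set $\pit(\alpha)=\pis\big(\pit(\children(\alpha))\big)$. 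The word $\pit(\children(\alpha))$ has domain $\alpha\quotient{\sim}$, which is $ab$-, $\omegapow$-, $\omegaoppow$-, or $\etapow$-shaped, so it lies in the domain of $\pis$ — except that in the $\omega$- and $\omegaop$-cases $\pis$ is only defined on powers of an \emph{"idempotent"}, so I must first refine: replace an $\omega$-indexed block of children by grouping consecutive children so that the grouped values become a constant "idempotent" of $(S,\cdot)$ (possible by finiteness of $S$: some power $e=a^{n}$ is "idempotent", and an $\omega$-sequence can be regrouped as $a^{k}(e\,e\,e\cdots)$ after absorbing a finite prefix, with the finite prefix handled by the $ab$-rule). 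One checks $T$ is a "condensation tree" (the four defining clauses: $\alpha\in T$; inclusion-comparability or disjointness, inherited from each $T_J$ plus the "convex" partition property of $\sim$; the union-of-proper-descendants clause; and finiteness of inclusion-chains, which holds because a chain through $\alpha$ drops into a single $T_J$ after one step) and that $(T,\pit)$ satisfies the two clauses of an "evaluation tree".

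The main obstacle I anticipate is twofold. First, getting the $\omega$/$\omegaop$ layers into the exact form demanded by $\pis$ — i.e., a power of an "idempotent" rather than an arbitrary element — which forces a regrouping argument and a careful treatment of finite prefixes/suffixes via the "binary product"; this is where "compatible to the right/left" of $\tau$/$\tauop$ in Axioms~\refaxiom{axiom:omega}–\refaxiom{axiom:omegaop} would ultimately be needed for \emph{uniqueness} (the next proposition), though for mere \emph{existence} only the regrouping is required. Second, the well-foundedness bookkeeping: I must be sure that the recursion terminates, i.e., that the "rank"/Hausdorff-type parameter strictly decreases when passing from $\alpha$ to each child $J$, even when $\alpha\quotient{\sim}$ is itself infinite — this is exactly the point of the finite-branch condition in Definition~\ref{def:condensation-tree} and is guaranteed by choosing the condensation at each stage to strip off exactly one Hausdorff level. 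Apart from these two points the construction is a routine assembly, and I would present it as an induction on the rank of the "linear ordering" underlying $u$, invoking the standard structure theorem for "countable" "linear orderings" to supply, at each step, a "condensation" with "condensed ordering" of the required "order type".
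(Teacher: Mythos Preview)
Your top-down recursion has a real gap in the dense case. You assume you can choose a condensation of $\alpha$ whose quotient is $\eta$ \emph{and} whose word of child values lies in the domain of $\pis$, i.e., is an $\eta$-shuffle $P\etapow$. But even when $\alpha\quotient{\sim}\cong\eta$, the word $\prod_{J}\pit(J)$ over $\eta$ need not be a shuffle: the value could be $a$ on one open half and $b$ on the other. Shelah's lemma (Lemma~\ref{lemma:densetoperfectshuffle} in the paper) only guarantees that some \emph{factor} of a dense word is a shuffle, not the whole word, so you cannot simply place $\alpha$ at the root with an $\eta$-indexed family of children and apply $\pis$. The paper avoids this by a different organisation altogether: instead of a top-down recursion on an order-theoretic rank, it argues bottom-up via Zorn's lemma on the poset of condensations all of whose classes are \emph{strongly definable} (every convex subset admits an evaluation tree). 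Chain-closure of this poset is proved using Ramsey; a maximal element that is not the one-class condensation is then contradicted either by merging two consecutive classes, or --- in the dense case --- by using Shelah's lemma to locate a shuffle factor in the quotient and absorbing the corresponding union of classes into a single strongly definable class.

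Two smaller points. Your $\omega$-regrouping is stated only for constant sequences (``some power $e=a^{n}$ is idempotent''): after recursing, the child values $v_1,v_2,\ldots$ are arbitrary elements of $S$, and to regroup them into a prefix followed by an $\omega$-power of an idempotent you need Ramsey applied to the additive labelling $(i,j)\mapsto v_i\cdots v_{j-1}$ (this is exactly Lemma~\ref{lemma:ramsey}). And your proposed induction parameter, the Hausdorff rank, is defined only for scattered orders; for general countable orders there is no obvious ordinal that strictly decreases on every proper convex subword, which is again why the paper's Zorn argument (with the strengthened invariant ``strongly definable'' rather than merely ``definable'') is the right framework here.
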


\begin{proposition}\label{prop:equivalence-evaluation}
"Evaluation trees" over the same "word" have the same "value".
\end{proposition}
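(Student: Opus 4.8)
\textbf{Proof plan for Proposition~\ref{prop:equivalence-evaluation}.}
The plan is to prove the stronger local statement that serves as an induction hypothesis: whenever $\cT=(T,\pit)$ and $\cT'=(T',\pit')$ are "evaluation trees" over the same "word" $u$, then $\pit(\alpha)=\pit'(\alpha)$. The natural induction parameter is the "rank" (ordinal height) of the "condensation tree", which is well-defined because every branch of a "condensation tree" is finite (last clause of Definition~\ref{def:condensation-tree}), so $T$ is well-founded under the proper-descendant relation. First I would set up the induction on the pair of ranks of $\cT$ and $\cT'$, say ordered lexicographically or by their maximum; the base case is when both trees are a single "leaf" $\{x\}$, where both values equal $u(x)$ by definition.

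The core of the argument is a \emph{mutual refinement} step. Given the two "evaluation trees" $\cT$ and $\cT'$ over $u$, I would build a common coarsening or, more usefully, reason about how the "condensations" $\children_T(\alpha)$ and $\children_{T'}(\alpha)$ interact at the "root". The key combinatorial fact is that two "condensations" of the same "linear ordering" always admit a common refinement that is again a "condensation", and conversely one can take the "condensation" generated by the equivalence "$x\sim y$ iff $x,y$ lie in the same block of $\children_T(\alpha)$ and the same block of $\children_{T'}(\alpha)$". Using "generalized associativity" (Definition~\ref{def:semigroup}) — which we may invoke because by Lemma~\ref{lemma:semigroup-to-algebra} and the axioms of "$\scountable$-algebras" the operations $\cdot,\tau,\tauop,\kappa$ behave coherently — the value computed along either tree can be re-expressed as the value computed along this common refinement. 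Concretely: each block $I$ of $\children_T(\alpha)$ carries the "subtree" $T\suborder{I}$, an "evaluation tree" over $u\suborder{I}$ of strictly smaller "rank"; intersecting with the $\children_{T'}$ partition subdivides $I$ further, and by the induction hypothesis the value $\pit(I)$ equals the value obtained by first evaluating those sub-blocks (via $\cT'$-subtrees, again of smaller "rank") and then combining them. One then has to check that "combining them" is a legitimate single $\pis$-step or, more realistically, a short composition of $\pis$-steps whose result is forced by axioms \refaxiom{axiom:concatenation}--\refaxiom{axiom:shuffle}. This is where the four shapes in the domain of $\pis$ (products $ab$, $\omega$-powers, $\omegaop$-powers, $\eta$-shuffles) each need a dedicated sub-argument, since the way a block of an $\omega$-power, say, splits under another "condensation" is governed by axiom~\refaxiom{axiom:omega} (compatibility to the right), and the $\eta$-shuffle case is governed by axiom~\refaxiom{axiom:shuffle}, which was specifically designed to absorb such refinements.

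The main obstacle I anticipate is precisely this case analysis at the top "condensation", in particular the $\eta$-shuffle case: when $\children_T(\alpha)$ has "order type" $\eta$ and is refined by $\children_{T'}(\alpha)$, the refined partition is some scattered-or-dense rearrangement of sub-blocks, and one must show its $\cT'$-value collapses back to $P^\kappa$ for the appropriate $P$. The heavy clause of axiom~\refaxiom{axiom:shuffle} — stating that $P^\kappa = P^\kappa\cdot c\cdot P^\kappa = (P^\kappa)^\tau = (P'\cup P'')^\kappa$ for suitable $P',P''$ — is exactly what is needed, but matching the combinatorics of an arbitrary refinement of an $\eta$-ordering to the finitely many patterns covered by the axiom requires care; one likely needs an auxiliary observation that any "condensation" of an $\eta$-shuffle, after collapsing each block to its value, yields a "word" whose own "evaluation" (by induction, of smaller "rank") is again realized as an $\eta$-shuffle of a set $P''$ of values each of which is of one of the forms $P^\kappa$, $a\cdot P^\kappa$, $P^\kappa\cdot b$, $a\cdot P^\kappa\cdot b$, interleaved with a scattered remainder absorbed into $P'$. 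The $\omega$- and $\omegaop$-power cases are analogous but easier, relying on the fact that a "condensation" of an $\omega$-ordering is either finite-then-$\omega$ or $\omega$ of finite blocks, handled by $(a^n)^\tau=a^\tau$ and $(a\cdot b)^\tau = a\cdot(b\cdot a)^\tau$. Once every top-level case yields "the $\cT$-value equals the $\cT'$-value," the induction closes.
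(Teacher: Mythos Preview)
Your overall instincts---induction on rank, case analysis on the four shapes in the domain of $\pis$, heavy use of Axioms~\refaxiom{axiom:concatenation}--\refaxiom{axiom:shuffle}---are correct, but the specific architecture you propose has a genuine gap at the ``combining'' step.

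You propose to take the \emph{common refinement} of $\children_T(\alpha)$ and $\children_{T'}(\alpha)$, evaluate each refined block using the smaller-rank subtrees, and then ``combine'' the block-values back up to both $\pit(\alpha)$ and $\pit'(\alpha)$. The problem is that this combining step is itself an evaluation of a word over $S$, and you have no control over its shape. Concretely: fix $I\in\children_T(\alpha)$ and look at the condensation of $I$ induced by the blocks $I\cap J$ for $J\in\children_{T'}(\alpha)$. If $\children_{T'}(\alpha)$ has order type $\eta$, then this induced condensation of $I$ has order type a sub-interval of $\eta$, but the word $\big(\pit'(I\cap J)\big)_J$ need not be an $\eta$-shuffle of any set---the values can vary arbitrarily. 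So $\pis$ is not defined on it, and ``a short composition of $\pis$-steps'' is literally an evaluation tree, whose value you cannot assume is well-defined without the very proposition you are proving. Your induction hypothesis does not give you this, because the combining tree is a \emph{new} evaluation tree, not a subtree of either $\cT$ or $\cT'$.

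The paper's proof avoids this circularity by an asymmetric formulation. Rather than comparing two evaluation trees directly, it proves (Corollary~\ref{cor:equivalence-evaluation}): given \emph{one} evaluation tree $\cT=(T,\pit)$ over $u$, an arbitrary condensation $C$ of $\dom(u)$, and an evaluation tree $\cT'$ over the condensed word $\pit(C)$, we have $\pit(\alpha)=\pit'(C)$. The induction is on $\cT'$ alone; at each step the top-level condensation of $\cT'$ has one of the simple shapes, and one invokes a dedicated lemma (Lemmas~\ref{lemma:equivalence-evaluation-finite}, \ref{lemma:equivalence-evaluation-omega}, \ref{lemma:equivalence-evaluation-shuffle}) proved by induction on $\cT$. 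Two further ingredients you are missing: (i) the shuffle lemma uses the \emph{coarsest} common coarsening of $C$ and $\children_T(\alpha)$, not the finest refinement, together with a delicate argument that the quotient is scattered; (ii) the proof repeatedly invokes Proposition~\ref{prop:existence-evaluation} to manufacture auxiliary evaluation trees on intermediate words---so existence and uniqueness are genuinely entangled. Finally, the paper first extends $\pis$ (Definition~\ref{def:base-mapping-bis}) and develops a ``generalized subtree'' operation (Lemma~\ref{lemma:restriction-evaluation}) allowing restriction of an evaluation tree to an arbitrary convex; without this you cannot even speak of $\cT'|_{I\cap J}$ when $I\cap J\notin T'$.
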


\noindent
The proofs of the two propositions are quite technical and 
deferred to Sections \ref{subsec:existence-evaluation} and
\ref{subsec:equivalence-evaluation}, respectively.
Before seeing those proofs in detail, we discuss the basic 
ingredients here. We then conclude the section by mentioning
a few important consequences of the developed framework.

\AP
The proof of Proposition \ref{prop:existence-evaluation} resembles the construction 
used by Shelah in his proof of decidability of the "monadic second-order" theory of 
"countable" "linear orderings" \cite{composition_method_shelah}. 
In particular, it uses a theorem of Ramsey \cite{ramsey} and a lemma stating 
that every non-trivial "word" indexed by a "countable" "dense" "linear ordering" 
has an "$\eta $-shuffle" as a "factor". Note that this latter lemma, and hence 
also Proposition \ref{prop:existence-evaluation}, relies on the fact that the 
"domain" of the "word" is "countable".
The proof of the proposition also makes use of 
Zorn's Lemma (or equally, the Axiom of Choice), 
so it is a proof in ZFC.
On the other hand, we observe that it does not make any use of 
Axioms~\refaxiom{axiom:concatenation}-\refaxiom{axiom:shuffle}.

\AP
Proposition \ref{prop:equivalence-evaluation} can be regarded as 
the core contribution of the paper, and its proof technique is quite 
original. For example. as opposed to Proposition~\ref{prop:existence-evaluation}, 
one cannot find any ingredient of the proof of Proposition~\ref{prop:equivalence-evaluation} 
in \cite{composition_method_shelah}.
The proof heavily relies on the use of Axioms~\refaxiom{axiom:concatenation}-\refaxiom{axiom:shuffle}.
As a matter of fact, each axiom can be seen as an instance of 
Proposition~\ref{prop:equivalence-evaluation} in some special cases 
of "evaluation trees" of small height. 
The proof also depends on Proposition~\ref{prop:existence-evaluation}, 
in the sense that is exploits in several places the existence of 
"evaluation trees" over arbitrary ("countable") "words".

\AP 
Another key ingredient for the proof of Proposition \ref{prop:equivalence-evaluation},
which is also reused in other proofs, is the formalization of a suitable 
induction principle on "condensation@condensation trees" and "evaluation trees". 
More precisely, by exploiting the fact that all branches of a "condensation tree" 
are finite, one can associate with any "condensation tree" $T$ a "countable" ordinal 
$\rank(T)$, called the ""rank"" of $T$.
Intuitively, this is the smallest ordinal $\beta$ that enables a labelling of 
the "nodes" of $T$ by ordinals less than or equal to $\beta$ in such a way that 
the label of each "node" is strictly greater than the labels of its "children". 

\begin{lemma}\label{lem:rank}
It is possible to associate with each "condensation tree" $T$ a "countable" ordinal 
$\rank(T)$ in such a way that $\rank(T') < \rank(T)$ for all "subtrees" $T'$ of $T$
rooted at "proper descendants" of the "root".
\end{lemma}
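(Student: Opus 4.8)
The plan is to define $\rank(T)$ directly by a well-founded recursion on the descendant relation of $T$, using the hypothesis that every branch of a condensation tree is finite. Concretely, for a leaf $I$ set $\rank_T(I) = 0$, and for an internal node $I$ set
\[
  \rank_T(I) \ =\ \sup\bigl\{\,\rank_T(J)+1 \ \mid\ J \text{ a child of } I\,\bigr\}.
\]
Finally put $\rank(T) = \rank_T(\alpha)$, the label of the root. I would first argue that this recursion is well-defined, i.e.\ that $\rank_T(I)$ is a genuine (set-sized) ordinal for every node $I$; then that it is countable; and finally that it strictly drops when passing to a subtree rooted at a proper descendant.

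For well-definedness, the key point is that the recursion terminates along the descendant relation because every subset of $T$ totally ordered by inclusion is finite — so there is no infinite strictly descending chain of nodes, and the descendant relation restricted to $T$ is well-founded. Hence the recursion above assigns to each node an ordinal, which can be seen by transfinite/well-founded induction: if every child $J$ of $I$ already has an ordinal $\rank_T(J)$, then the set $\{\rank_T(J)+1 \mid J \in \children(I)\}$ is a set of ordinals and its supremum exists. To see that $\rank_T(I)$ is always countable, I would argue by well-founded induction again: a countable linear ordering has only countably many convex subsets, so $T$ itself is countable; thus each node has at most countably many children, and a countable supremum of countable ordinals (each of the form $\rank_T(J)+1$) is again countable. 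Therefore $\rank(T)=\rank_T(\alpha)$ is a countable ordinal. The ``smallest $\beta$'' characterisation mentioned in the statement is then immediate: the labelling $I \mapsto \rank_T(I)$ witnesses that $\beta = \rank(T)$ works, and any labelling in which each node's label strictly exceeds those of its children must dominate $\rank_T$ pointwise by induction, so no smaller bound is possible.

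For the strict-decrease claim, let $T'$ be the subtree of $T$ rooted at a node $I$ which is a proper descendant of the root $\alpha$. The definition of the subtree (restricting $T$ to descendants of $I$) makes the descendant relation of $T'$ coincide with that of $T$ below $I$, so a straightforward induction gives $\rank_{T'}(J) = \rank_T(J)$ for every node $J$ of $T'$; in particular $\rank(T') = \rank_T(I)$. Now $I$ is a proper descendant of $\alpha$, so $I \subsetneq \alpha$, and since $\alpha$ is internal, $I$ lies below some child $J_0$ of $\alpha$ (i.e.\ $I \subseteq J_0 \subsetneq \alpha$); hence $\rank_T(I) \le \rank_T(J_0) < \rank_T(J_0)+1 \le \rank_T(\alpha) = \rank(T)$. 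Thus $\rank(T') < \rank(T)$, as required.

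The only genuine subtlety — the ``main obstacle'' — is the bookkeeping needed to guarantee that each intermediate $\rank_T(I)$ stays a countable ordinal rather than an arbitrary one; this rests on the elementary but essential observation that a countable linear ordering has only countably many convex subsets, so that $T$ and all its branching are countable, and that a countable sup of countable ordinals is countable. Everything else is routine well-founded induction along the finite-branch condition in Definition~\ref{def:condensation-tree}.
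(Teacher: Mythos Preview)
Your approach is essentially the same as the paper's: define $\rank_T(I)$ by well-founded recursion as $0$ on leaves and $\sup\{\rank_T(J)+1\mid J\in\children(I)\}$ on internal nodes, then set $\rank(T)=\rank_T(\alpha)$. The well-definedness and strict-decrease arguments are fine.

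However, your countability argument contains a genuine error. You claim that ``a countable linear ordering has only countably many convex subsets''; this is false. For instance, $(\bbQ,<)$ has continuum-many convex subsets: the downward-closed sets $\{q\in\bbQ\mid q<r\}$ are pairwise distinct as $r$ ranges over the reals. So you cannot conclude directly that $T$ is countable from this.

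The fix is short. Observe that the leaves of $T$ form a partition of $\alpha$ (every $x\in\alpha$ lies in a unique $\subseteq$-minimal node, by the finite-chain condition), so there are at most countably many leaves. Every node $I\in T$ is an ancestor of some leaf (pick $x\in I$ and take the leaf through $x$), and each leaf has only finitely many ancestors (again by the finite-chain condition). Hence $T$ is a countable union of finite sets, so $T$ is countable. Now your sup-of-countably-many-countable-ordinals argument goes through.
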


\begin{proof}
We associate with each "node" $I\in T$ a "countable" ordinal $\beta_I$ as follows. 
For every "leaf" $I$ of $T$, let $\beta_I=0$. Then, given an "internal node" $I$ of $T$, 
we assume that $\beta_J$ is defined for every "child" $J$ of $I$, and we define 
$\beta_I$ as the ordinal $\sup\{\beta_J+1 \:\mid\: J\in\children(I)\}$ 
(note that this is either a successor ordinal or a limit ordinal, depending 
on whether the set $\{\beta_J+1 \:\mid\: J\in\children(I)\}$ has a maximum 
element or not). 
Since $T$ has no infinite branch, it follows that $\beta_I$ is defined for every 
"node" of $T$. We thus let $\rank(T)=\beta_I$, where $I$ is the "root" of $T$. 
By construction, the function $\rank$ that maps any "condensation tree" 
$T$ to its rank $\rank(T)$ satisfies the properties stated in the lemma.
\end{proof}

\medskip
\AP
Now, assuming that Propositions \ref{prop:existence-evaluation} and 
\ref{prop:equivalence-evaluation} hold, we can prove the desired
correspondence between "$\scountable$-semigroups" and "$\scountable$-algebras":

\begin{theorem}\label{th:algebra-to-semigroup}
Every finite "$\scountable$-algebra" $({S,\cdot,\tau,\tauop,\kappa})$
is "induced" by a unique "product" $\pi$ from $S^\scountable$ to $S$.
\end{theorem}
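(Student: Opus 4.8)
The plan is to take Propositions~\ref{prop:existence-evaluation} and~\ref{prop:equivalence-evaluation} for granted and to define $\pi$ by reading off the "value" of "evaluation trees". Concretely, since every "word" $u\in S^\scountable$ has an "evaluation tree" and all "evaluation trees" over $u$ carry the same "value", we may \emph{define} $\pi(u)$ to be this common "value". It then remains to check three things: that $\pi$ is a "generalized product"; that the operators it "induces" are exactly $\cdot,\tau,\tauop,\kappa$; and that $\pi$ is the only "generalized product" with the latter property.

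That $\pi(a)=a$ is immediate, since the "condensation tree" reduced to a single "leaf" is an "evaluation tree" of "value" $a$ over the one-letter "word" $a$. For "generalized associativity", given $(u_i)_{i\in\alpha}\in(S^\scountable)^\scountable$, I would set $w=\prod_{i\in\alpha}\pi(u_i)$ and $v=\prod_{i\in\alpha}u_i$, fix an "evaluation tree" $(R,\rho)$ over $w$ and, for each $i$, an "evaluation tree" $(T_i,\gamma_i)$ over $u_i$ (so $\gamma_i(\dom(u_i))=\pi(u_i)=w(i)$), and then \emph{glue} them into an "evaluation tree" over $v$. Since every "leaf" of $R$ is a singleton $\{i\}$ with $i\in\alpha$, the gluing blows up each position $i$ of $\alpha$ into the "convex" block $\dom(u_i)\subseteq\dom(v)$ — turning each "node" $I$ of $R$ into $\hat I=\bigcup_{i\in I}\dom(u_i)$ — and grafts $T_i$ below $\hat{\{i\}}=\dom(u_i)$, identifying the "leaf" $\{i\}$ of $R$ with the "root" of $T_i$. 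One checks that the resulting family of "convex" subsets of $\dom(v)$ is a "condensation tree" (nesting is inherited; every $\subseteq$-chain is a finite chain of $R$ prolonged by a finite chain of some $T_i$, hence finite), and that labelling it by $\rho$ above the grafting level and by the $\gamma_i$ inside each $T_i$ — the two prescriptions agree on $\dom(u_i)$, both giving $\pi(u_i)$ — turns it into an "evaluation tree" over $v$: the defining clause of Definition~\ref{def:evaluation-tree} holds at "nodes" of the $T_i$ by hypothesis, and at a "node" $\hat I$ with $I$ an "internal node" of $R$ the "word" on $\children(\hat I)$ is isomorphic to the "word" $\rho(\children_R(I))$, on which $\pis$ is defined with "value" $\rho(I)$. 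Reading off the "value" of this "evaluation tree" at its "root" gives $\pi(v)=\rho(\alpha)=\pi(w)$, which is "generalized associativity". I expect this gluing to be the only genuinely delicate step, the main points being to ensure that the grafted tree has only finite branches and that the two labellings are consistent on the shared "nodes" $\dom(u_i)$.

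To determine the "induced" operators it suffices to exhibit, for each basic shape, one "evaluation tree" with the expected "value". The two-level "condensation tree" (a "root" whose "children" are exactly its "leaves") handles $ab$ and $P\etapow$ directly, giving $\pi(ab)=\pis(ab)=a\cdot b$ and $\pi(P\etapow)=\pis(P\etapow)=P^\kappa$. For $a\omegapow$, I would pick $n>0$ such that $e:=a^n$ is "idempotent" (possible since $S$ is finite) and use the "condensation tree" whose "root" has as "children" the consecutive length-$n$ blocks of $a\omegapow$, each carrying a finite "evaluation tree" of "value" $a^n=e$ (a left comb, say); the "word" on the "children" of the "root" is then $e\omegapow$, so the tree's "value" is $\pis(e\omegapow)=e^\tau=(a^n)^\tau=a^\tau$ by Axiom~\refaxiom{axiom:omega}. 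The $a\omegaoppow$ case is symmetric, using Axiom~\refaxiom{axiom:omegaop} (only Axioms~\refaxiom{axiom:omega} and~\refaxiom{axiom:omegaop} are invoked in this step).

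For uniqueness, let $\pi'$ be any "generalized product" whose "induced" operators are again $\cdot,\tau,\tauop,\kappa$. I claim $\pi'(u)$ equals the "value" of every "evaluation tree" $(T,\pit)$ over $u$ — hence $\pi'=\pi$ — and I would prove this by induction on $\rank(T)$ (Lemma~\ref{lem:rank}). If $\rank(T)=0$ then $u$ is a single letter $a$ and $\pi'(u)=a=\pit(\alpha)$. Otherwise the "root" $\alpha$ is "internal" with "children" $C=\children(\alpha)$, so $u=\prod_{J\in C}u\suborder{J}$, and each "subtree" $T\subtree{J}$ is an "evaluation tree" over $u\suborder{J}$ of strictly smaller "rank"; by the induction hypothesis $\pi'(u\suborder{J})=\pit(J)$, so "generalized associativity" of $\pi'$ gives $\pi'(u)=\pi'\big(\prod_{J\in C}\pi'(u\suborder{J})\big)=\pi'\big(\pit(C)\big)$, where $\pit(C)$ denotes the "word" on $C$ sending each $J$ to $\pit(J)$. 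Since $(T,\pit)$ is an "evaluation tree", $\pit(C)$ lies in the "domain" of $\pis$ — it is isomorphic to one of $ab$, $e\omegapow$ with $e$ "idempotent", $e\omegaoppow$, $P\etapow$ — and in each case the fact that $\pi'$ "induces" $\cdot,\tau,\tauop,\kappa$ forces $\pi'(\pit(C))=\pis(\pit(C))=\pit(\alpha)$, closing the induction.
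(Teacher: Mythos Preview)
Your proof is correct and follows essentially the same approach as the paper: define $\pi$ via evaluation-tree values, establish generalized associativity by grafting the evaluation trees of the factors beneath a lifted evaluation tree of the condensed word, and verify the induced operators by exhibiting specific evaluation trees. You are in fact more explicit than the paper in two places---the paper dismisses the verification of the induced operators as a ``straightforward case analysis'' (your idempotent-power trick for $a\omegapow$ is exactly what is needed, since $\pis$ is only defined on $e\omegapow$ for $e$ idempotent), and it does not spell out the uniqueness argument at all, whereas your rank induction makes it fully explicit.
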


\begin{proof}
Given a "word" $w$ with domain $\alpha$, one defines $\pi(w)$ to be the "value" of some 
"evaluation tree" over $w$ (the "evaluation tree" exists by Proposition~\ref{prop:existence-evaluation} 
and the "value" $\pi(w)$ is unique by Proposition~\ref{prop:equivalence-evaluation}). 

\AP
We prove that $\pi$ satisfies the "generalized associativity" property. 
Let $\sim$ be a "condensation" of the domain $\alpha$. 
For all classes $I\in\alpha\quotient{\sim}$, let $\cT_I$ be some "evaluation tree"
over $w\suborder{I}$. Let also $\cT'$ be some "evaluation tree" over the 
"word" $w'=\prod_{I\in\alpha\quotient{\sim}}\pi(w\suborder{I})$.
One constructs an "evaluation tree" $\cT$ over $w$ by 
first lifting $\cT'$ from the "linear ordering" $\alpha\quotient{\sim}$ to $\alpha$
(this is done by replacing each node $J$ in $\cT'$ by $\bigcup J$)
and then substituting each "leaf" of $\cT'$ corresponding to 
some class $I\in\alpha\quotient{\sim}$ with the "evaluation tree" $\cT_I$. 
The last step is possible (namely, respects the definition of "evaluation tree") 
because the "value" of each "evaluation tree" $\cT_I$ is $\pi(w\suborder{I})$, which coincides with 
the "value" $w'(I)$ at the "leaf" $I$ of $\cT'$. By Proposition \ref{prop:equivalence-evaluation}, 
the resulting "evaluation tree" $\cT$ has the same "value" as $\cT'$ and this proves that 
$\pi(w) = \pi\left(\prod_{I\in\alpha\quotient{\sim}} \pi(w\suborder{I})\right)$.

\AP
It remains to prove that the above choice of $\pi$ indeed "induces" the operators 
$\cdot,\tau,\tauop,\kappa$. This is done by a straightforward case analysis.
\end{proof}

\smallskip
\AP
The result that we just proved immediately implies an analogous 
correspondence between "$\countable$-monoids" and "$\countable$-algebras":

\begin{corollary}\label{cor:algebra-to-monoid}
Every finite "$\countable$-algebra" $(M,1,\cdot,\tau,\tauop,\kappa)$
is "induced" by a unique "product" $\pi$ from $M^\countable$ to $M$.
\end{corollary}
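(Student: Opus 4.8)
The plan is to derive Corollary~\ref{cor:algebra-to-monoid} directly from Theorem~\ref{th:algebra-to-semigroup} together with the discussion (already given in Section~\ref{subsec:countableproducts}) relating "$\scountable$-semigroups" and "$\countable$-monoids", and the discussion in Section~\ref{subsec:algebras-from-semigroups} relating the operators "induced by $\pi$" with the extra nullary operator $1=\pi(\emptystr)$. So this should be a short bookkeeping argument, not a new construction.

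\medskip\noindent
\emph{Existence.} Let $(M,1,\cdot,\tau,\tauop,\kappa)$ be a finite "$\countable$-algebra". By definition it is in particular a finite "$\scountable$-algebra" $(M,\cdot,\tau,\tauop,\kappa)$, so by Theorem~\ref{th:algebra-to-semigroup} there is a unique "product" $\pi_0$ from $M^\scountable$ to $M$ that "induces" $\cdot,\tau,\tauop,\kappa$. I would then extend $\pi_0$ to a function $\pi$ on $M^\countable$ by setting $\pi(\emptystr)=1$ and $\pi(w)=\pi_0(w)$ for non-empty $w$. It remains to check that $\pi$ is a "generalized product" on $M^\countable$, i.e.\ that it satisfies "generalized associativity" for arbitrary "countable" families $(u_i)_{i\in\alpha}$ (now possibly with empty factors $u_i$ and possibly empty $\alpha$). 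The key observation is that Axiom~\refaxiom{axiom:identity}, via the fact that $1$ is a "neutral element" of $(M,\pi_0)$ in the sense of Section~\ref{subsec:countableproducts} (this neutrality is exactly what Axiom~\refaxiom{axiom:identity} encodes: $x\cdot 1=1\cdot x=x$, $1^\tau=1^\tauop=\{1\}^\kappa=1$, $P^\kappa=(P\cup\{1\})^\kappa$, transferred through evaluation trees to arbitrary words), lets me delete all occurrences of $1$ from any word without changing its $\pi$-value. Concretely: given $(u_i)_{i\in\alpha}$ in $(M^\countable)^\countable$, remove the empty factors and remove all letters equal to $1$ from the remaining ones; if everything disappears, both sides of the associativity equation equal $1$ by construction; otherwise both sides reduce to an instance of "generalized associativity" for $\pi_0$ after these deletions, which holds by Theorem~\ref{th:algebra-to-semigroup}. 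One also checks $\pi(a)=a$ for singletons, which holds for $a\neq 1$ by the corresponding property of $\pi_0$ and for $a=1$ by construction. Finally, $\pi$ "induces" the operators $\cdot,\tau,\tauop,\kappa$ since $\pi$ agrees with $\pi_0$ on non-empty words, and it "induces" the nullary operator $1$ since $\pi(\emptystr)=1$.

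\medskip\noindent
\emph{Uniqueness.} Suppose $\pi'$ is any "product" on $M^\countable$ "inducing" $(1,\cdot,\tau,\tauop,\kappa)$. Then $\pi'(\emptystr)=1$ (this is forced: $1$ is "induced" as $\pi'(\emptystr)$), and the restriction of $\pi'$ to $M^\scountable$ is a "product" on $M^\scountable$ "inducing" $\cdot,\tau,\tauop,\kappa$, hence equals $\pi_0$ by the uniqueness part of Theorem~\ref{th:algebra-to-semigroup}. Since every element of $M^\countable$ is either $\emptystr$ or non-empty, $\pi'=\pi$, giving uniqueness.

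\medskip\noindent
The only genuinely non-trivial point is verifying that $1=\pi(\emptystr)$ is a "neutral element" of the "$\scountable$-semigroup" $(M,\pi_0)$ — equivalently, that $\pi_0$ is insensitive to deletion of letters equal to $1$ from arbitrary "countable" "words", not merely to the algebraic identities in Axiom~\refaxiom{axiom:identity} on words of bounded shape. I expect this to follow by the same "evaluation tree" machinery used for Theorem~\ref{th:algebra-to-semigroup}: given a "word" $w$ over $M$ with $w|_{\neq 1}$ non-empty, take an "evaluation tree" over $w|_{\neq 1}$ and graft onto it, as extra "leaves" labelled $1$, the deleted positions; the resulting "condensation tree" over $w$ can be turned into an "evaluation tree" because at each affected "internal node" the child word now has some positions labelled $1$, and Axiom~\refaxiom{axiom:identity} (together with Axioms~\refaxiom{axiom:concatenation}--\refaxiom{axiom:shuffle} used to normalise the local "words" $ab$, $a\omegapow$, $a\omegaoppow$, $P\etapow$ into the same shape after reinserting $1$'s) guarantees $\pis$ takes the same value there. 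By Proposition~\ref{prop:equivalence-evaluation} the value is unchanged, so $\pi_0(w)=\pi_0(w|_{\neq 1})$, as required. This is the step I would write out carefully; the rest is routine.
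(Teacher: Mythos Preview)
Your approach is exactly the paper's: the paper simply states the corollary with no proof, treating it as immediate from Theorem~\ref{th:algebra-to-semigroup}. You have gone further and correctly isolated the one point that actually needs verification --- that the element $1$ supplied by Axiom~\refaxiom{axiom:identity} is a \emph{neutral element} of the "$\scountable$-semigroup" $(M,\pi_0)$ produced by Theorem~\ref{th:algebra-to-semigroup}. Your existence and uniqueness reductions are clean and correct.

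There is, however, a genuine subtlety in your grafting sketch that you should not underestimate. When you reinsert the deleted $1$-positions below an internal node whose children form~$P\etapow$, the new child word is \emph{not} in general $(P\cup\{1\})\etapow$: the inserted $1$'s need not be dense (a single $1$ inserted at one cut of $\eta$ already fails), so Axiom~\refaxiom{axiom:identity}'s identity $P^\kappa=(P\cup\{1\})^\kappa$ does not apply directly, and the resulting child word lies outside the domain of $\pis$ (even the extended one). The finite, $\omega$, and $\omegaop$ cases are fine --- you can absorb each inserted $1$-block into the adjacent child --- but in the dense case there may be no adjacent child to absorb into. A clean way around this is to run the argument in the other direction: fix an evaluation tree over $w$, show by induction on its rank that each node $I$ has $\pit(I)=1$ when $(w\suborder{I})|_{\neq1}$ is empty and $\pit(I)=\pi_0\big((w\suborder{I})|_{\neq1}\big)$ otherwise, using the "generalized associativity" of $\pi_0$ (already granted by Theorem~\ref{th:algebra-to-semigroup}) together with Axiom~\refaxiom{axiom:identity}; the shuffle case then reduces to showing $\pi_0$ of a word whose non-$1$ letters form $(P\setminus\{1\})\etapow$ equals $P^\kappa$, which still needs a careful treatment of the gap positions but avoids the shape-mismatch of the grafting approach. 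Either way this step deserves the full write-up you promised.
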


\medskip
\AP
Finally, we discuss the algorithmic implications of the above results.
In the same way as we talked of "languages" "recognized" by finite 
$\countable$-monoids, we can equally talk of "languages" 
""recognized by finite $\countable$-algebras"".
Moreover, because finite "$\countable$-algebras" are finite objects,
this enables the possibility of manipulating and reasoning on  
"recognized@recognized by finite $\countable$-algebras" "languages"
by means of algorithms. 
An example of such a possibility is given just below,
in a theorem that shows the decidability of the emptiness problem 
for "languages" "recognized by finite $\countable$-algebras".
The theorem also gives effective witnesses of non-empty "languages",
in the same spirit as some results of La\"uchli and 
Leonard for models of first-order logic and weak "monadic second-order" 
logic \cite{lauchli1966,lauchli1968}.
Other examples of algorithmic manipulation of "languages"
can be found in Section \ref{sec:logic-to-algebra}, where we will
prove some closure properties of "languages" "recognized by 
finite $\countable$-algebras".

\begin{theorem}\label{thm:emptiness}
The problem of testing whether $L\neq\emptyset$ for any "language" 
$L\subseteq A^\countable$ 
"recognized@recognized by finite $\countable$-algebras" 
by a given finite "$\countable$-algebra" is decidable.
Moreover, if $L\neq\emptyset$, 
then a finite expression can be effectively constructed 
that represents some "word" in $L$ and is generated by 
the following grammar:
$$
  \mathtt{w} ~::=~~ \emptystr ~|~~ a ~|~~ \mathtt{w}\cdot\mathtt{w} ~|~~ 
                         \mathtt{w}\omegapow ~|~~ \mathtt{w}\omegaoppow ~|~~ 
                         \{\mathtt{w},\ldots,\mathtt{w}\}\etapow
  \qquad\qquad \text{for $a\in A$.}
$$
\end{theorem}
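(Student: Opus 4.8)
The plan is to reduce the emptiness question to a reachability computation inside the finite $\countable$-algebra. First I would define, for a given finite $\countable$-algebra $\cA=(S,1,\cdot,\tau,\tauop,\kappa)$ with distinguished subset $F\subseteq S$ of accepting values (so that $L=\pi^{-1}(F)$ via the unique $\pi$ of Corollary~\ref{cor:algebra-to-monoid}, where $\pi$ is extended along the morphism from $A^\countable$ that sends each letter to its value in $S$), the set $R\subseteq S$ of \emph{reachable values}, i.e.\ those $s\in S$ such that $s=\pi(w)$ for some $w\in A^\countable$. Concretely, $R$ is the smallest subset of $S$ containing $1$ and the image of $A$ under the letter map, and closed under $\cdot$, $\tau$, $\tauop$, and $\kappa$ (the last applied to any non-empty subset of $R$). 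Since $S$ is finite, $R$ is computable by a straightforward fixpoint iteration, and $L\neq\emptyset$ iff $R\cap F\neq\emptyset$. Alongside the fixpoint I would maintain, for each newly added value $s$, a finite expression $\mathtt{w}_s$ generated by the stated grammar together with a proof obligation that $\pi$ of the word denoted by $\mathtt{w}_s$ equals $s$; when $s$ is added via a rule (e.g.\ $s=a^\tau$ for a previously obtained $a$ witnessed by $\mathtt{w}_a$), set $\mathtt{w}_s=\mathtt{w}_a\omegapow$, and similarly for the other operations, with $\mathtt{w}_1=\emptystr$ and $\mathtt{w}_{h(a)}=a$ for letters.

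The correctness of this procedure has two directions. The easy direction is soundness: every expression $\mathtt{w}$ produced by the grammar denotes a genuine $\countable$-word, and by construction (using that $\cdot,\tau,\tauop,\kappa$ are the operators \emph{induced by} $\pi$, per Lemma~\ref{lemma:semigroup-to-algebra} and the definitions preceding it) one checks by induction on $\mathtt{w}$ that $\pi$ of its denotation is the value $s$ it was tagged with; hence $R\cap F\neq\emptyset$ implies $L\neq\emptyset$ with an effective witness. The harder direction is completeness: if $L\neq\emptyset$, pick $w\in L$, so $\pi(w)\in F$, and I must show $\pi(w)\in R$. For this I would take an "evaluation tree" $\cT=(T,\pit)$ over $w$, which exists by Proposition~\ref{prop:existence-evaluation}, and prove by induction on $\rank(T)$ (using Lemma~\ref{lem:rank}) that for every node $I\in T$ the value $\pit(I)$ lies in $R$. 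At a leaf $\{x\}$ we have $\pit(\{x\})=u(x)$, a value of a letter, hence in $R$; at an internal node $I$, the word $\pit(\children(I))$ lies in the domain of $\pis$, meaning it is isomorphic to $ab$, $e\omegapow$, $e\omegaoppow$, or $P\etapow$, and in each case $\pit(I)$ is obtained by applying the corresponding operator $\cdot$, $\tau$, $\tauop$, or $\kappa$ to values $\pit(J)$ of children $J$, which are in $R$ by the induction hypothesis (the children being roots of proper subtrees of strictly smaller rank); closure of $R$ then gives $\pit(I)\in R$. Applying this to the root yields $\pi(w)=\pit(\alpha)\in R\cap F$.

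The main obstacle I anticipate is bookkeeping the witness expressions consistently with the rank induction: the induction in the completeness argument walks \emph{down} the evaluation tree while the reachability set is built \emph{up}, so I must be careful that, when I invoke the induction hypothesis to conclude $\pit(J)\in R$, the finite expression $\mathtt{w}_{\pit(J)}$ attached to that value in $R$ is not necessarily the sub-evaluation-tree over the node $J$ but merely \emph{some} expression with the same $\pi$-value — which is exactly what the grammar construction provides, and is all that is needed. A minor technical point is the treatment of the empty word and the neutral element: one must record that $\emptystr$ has value $1$ and is available, and handle the degenerate cases of $\pis$ (e.g.\ that $\pi$ of a single-letter word is that letter) so the leaf case of the induction goes through. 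Once these are in place the theorem follows.
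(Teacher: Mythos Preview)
Your proposal is correct and follows essentially the same approach as the paper: compute the closure $\langle\{1\}\cup h(A)\rangle$ of the letter-images under $\cdot,\tau,\tauop,\kappa$ by a finite saturation, check intersection with $F$, and for completeness invoke Proposition~\ref{prop:existence-evaluation} together with a rank induction (Lemma~\ref{lem:rank}) on an evaluation tree to show every $\pi(w)$ lies in the closure. The paper's proof is slightly terser about the witness expressions (it just observes the saturation ``implicitly associates'' them), but your explicit bookkeeping is exactly what is meant.
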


\begin{proof}
Recall that a language $L\subseteq A^\countable$ is "recognized by a $\countable$-algebra"
$(M,1,\cdot,\tau,\tauop,\kappa)$ if there is $F\subseteq M$ and a "morphism"
$h: "(A^\countable,\prod)" \rightarrow (M,\pi)$ such that $L=h^{-1}(F)$, 
where $(M,\pi)$ is the "$\countable$-monoid" "induced" by $(M,1,\cdot,\tau,\tauop,\kappa)$ 
(Corollary~\ref{cor:algebra-to-monoid}). 
To decide the emptiness problem, it is sufficient to describe an algorithm that, 
given $(M,1,\cdot,\tau,\tauop,\kappa)$ and $h:A\rightarrow M$ (which uniquely 
extends to a function from $"(A^\countable,\prod)"$ to $(M,\pi)$), computes the set
$$
  h(A^\countable) =~ \big\{1\big\} \cup~ \big\{h(u) ~\big\mid~ u\in A^\countable\big\}
$$
(note that $L=h^{-1}(F)\neq\emptyset$ iff $h(A^\countable) \cap F \neq \emptyset$).

\AP
To compute the set $h(A^\countable)$, one can simply saturate the subset 
$\{1\}\cup h(A)$ of $M$ under the operations $\cdot,\tau,\tau^*,\kappa$. 
Formally, given $S\subseteq M$, we define the set ""generated"" by $S$ 
in $(M,1,\cdot,\tau,\tauop,\kappa)$ as the least set $\langle S\rangle$ 
that contains $S$ and satisfies the following closure properties:
\begin{itemize}
  \item if $a,b\in \langle S\rangle$, then $a\cdot b\in \langle S\rangle$,
  \item if $a\in \langle S\rangle$, then $a^\tau\in \langle S\rangle$,
  \item if $a\in \langle S\rangle$, then $a^\tauop\in \langle S\rangle$,
  \item if $\emptyset\neq P\subseteq \langle S\rangle$, then $P^\kappa\in \langle S\rangle$.
\end{itemize}
Clearly, the set $\langle S\rangle$ can be easily computed from $S$.

\AP
Below we prove that the set "generated" by $\{1\}\cup h(A)$,
denoted $\big\langle\{1\}\cup h(A)\big\rangle$, coincide with
$h(A^\countable)$.
First, it is easy to see 
that $\big\langle\{1\}\cup h(A)\big\rangle \subseteq h(A^\countable)$, 
since $\{1\}\cup h(A)\subseteq h(A^\countable)$ and containments in 
$h(A^\countable)$ are preserved under all operations of the saturation. 
The opposite containment $h(A^\countable) \subseteq \big\langle\{1\}\cup h(A)\big\rangle$ 
follows by Proposition~\ref{prop:existence-evaluation} and some inductive argument.
More precisely, one first observes that the value $h(w)$ of any "word"
$w\in A^\countable$ is the same as witnessed by some "evaluation tree" $\cT_w$.
Then, one exploits a simple induction on $\cT_w$ -- in fact, on the 
"rank" of the underlying "condensation tree" -- to verify that the set
$\big\langle\{1\}\cup h(A)\big\rangle$ contains the "value" of $\cT_w$. 

\AP
The above arguments show that the set 
$h(A^\countable) = \big\langle\{1\}\cup h(A)\big\rangle$
can be effectively constructed by a saturation procedure.
To conclude, we observe that this procedure implicitly associates with each 
element of $\big\langle\{1\}\cup h(A)\big\rangle$ a corresponding finite expression,
as generated by the grammar of the claim.
\end{proof}
 
\medskip
\subsection{Existence of "evaluation trees"}\label{subsec:existence-evaluation}

We introduce a few additional ingredients for the proof of 
Proposition \ref{prop:existence-evaluation}, namely, for 
showing the existence of "evaluation trees" over any "word". 
We begin with a variant of Ramsey's theorem for "additive labellings". 
Recall that $(S,\cdot,\tau,\tauop,\kappa)$ is a finite 
"$\scountable$-algebra" and, in particular, $(S,\cdot)$ is 
a finite semigroup. 

\begin{definition}\label{def:additive-labelling}
Let $(S,\cdot)$ be a semigroup. An ""additive labelling"" 
is a function $f$ that maps any two of points $x<y$ in a 
"linear ordering" $\alpha$ to an element $f(x,y)$ in $S$ in 
such a way that, for all $x<y<z$, $f(x,y)\cdot f(y,z) = f(x,z)$.
\end{definition}

\begin{lemma}[Ramsey \cite{ramsey}]\label{lemma:ramsey}
Given a "linear ordering" $\alpha$ with a minimum element $\bot$ and no maximum 
element, and given an "additive labelling" $f:\alpha\times\alpha\then(S,\cdot)$, there exist 
an $\omega$-sequence $\bot<x_1<x_2<\ldots$ of points in $\alpha$ and two elements $a,e\in S$ 
such that:
\begin{itemize}
  \item for all $y\in\alpha$, there is $x_i>y$,
  \item for all $i>0$, $f(\bot,x_i)=a$,
  \item for all $j>i>0$, $f(x_i,x_j)=e$.
\end{itemize}
\end{lemma}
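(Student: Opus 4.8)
The plan is to reduce to the classical finite Ramsey theorem by exhausting the ordering with a countable increasing sequence. Since $\alpha$ has a minimum $\bot$ but no maximum, and since in the intended application $\alpha$ is countable, I would first fix a countable increasing sequence $\bot = y_0 < y_1 < y_2 < \dots$ that is cofinal in $\alpha$, i.e. such that for every $y \in \alpha$ there is some $y_n > y$. (If $\alpha$ is uncountable one needs $\mathrm{cf}(\alpha) = \omega$ for the statement to hold; in the countable case such a sequence always exists, by enumerating $\alpha$ and taking running maxima with the previous term.) Then I would consider the colouring of pairs $m < n$ of natural numbers by the element $f(y_m, y_n) \in S$. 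This is a colouring of $[\omega]^2$ by the finite set $S$, so by the infinite Ramsey theorem there is an infinite set $N \subseteq \omega$ and a fixed $e \in S$ with $f(y_m, y_n) = e$ for all $m < n$ in $N$. Enumerate $N$ increasingly as $n_1 < n_2 < \dots$, discarding $n_1$ if necessary so that $\bot$ is not among the chosen points, and set $x_i = y_{n_{i+1}}$.

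With this choice the third bullet is immediate: for $j > i > 0$, $f(x_i, x_j) = f(y_{n_{i+1}}, y_{n_{j+1}}) = e$ by the choice of $N$. Cofinality of $(y_n)$ in $\alpha$ gives the first bullet, since the $x_i$ form a cofinal subsequence. For the second bullet, set $a = f(\bot, x_1)$. Using additivity, for any $i > 1$ we have $f(\bot, x_i) = f(\bot, x_1) \cdot f(x_1, x_i) = a \cdot e$. Now $f(x_1, x_i) = e$ and $f(x_1, x_2) \cdot f(x_2, x_i) = e \cdot e$, so $e$ is idempotent ($e \cdot e = e$, taking $i = 3$); hence $a \cdot e = f(\bot, x_1) \cdot f(x_1, x_2) = f(\bot, x_2)$, and more directly $a \cdot e = f(\bot, x_2) = f(\bot, x_1) \cdot f(x_1,x_2) = a \cdot e$, while also $f(\bot, x_i) = a \cdot e = f(\bot, x_2)$ for all $i \geq 2$. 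Thus replacing $a$ by $f(\bot, x_2)$ and reindexing the sequence to start from $x_2$ makes $f(\bot, x_i)$ constant for all $i > 0$, and $e$ remains idempotent and unchanged. (Cleaner: just pass to the sequence $x_2 < x_3 < \dots$ from the outset and rename.)

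The only real subtlety is the reindexing in the second bullet — Ramsey's theorem alone gives the pairs $f(x_i,x_j)$ constant, but to make $f(\bot, x_i)$ constant one must absorb one extra factor of $e$, which works precisely because $e$ is idempotent (a consequence of associativity applied to three of the chosen points). The rest is bookkeeping: checking that the discarded initial terms do not injure cofinality, and that the minimum $\bot$ is genuinely below all the $x_i$. I do not expect any obstacle beyond this; the statement is essentially the standard "Ramsey factorization" argument, and the finiteness of $S$ together with countability (or at least $\omega$-cofinality) of $\alpha$ is exactly what makes it go through.
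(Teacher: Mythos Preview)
Your proof is correct. The paper does not actually prove this lemma---it is stated with a citation to Ramsey and used as a black box---so there is nothing to compare against. Your argument is the standard one: extract a cofinal $\omega$-sequence (you rightly note this needs countability or $\omega$-cofinality of $\alpha$, which the paper assumes ambiently), apply the infinite Ramsey theorem to the induced colouring of pairs, and then repair the second bullet by reindexing once, using that $e$ is idempotent.

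One small stylistic remark: the reindexing step, while correct, is slightly tangled in your write-up. A cleaner alternative is to first thin the cofinal sequence by pigeonhole so that $f(\bot,y_n)$ is already constant, and only then apply Ramsey to the pairs; this yields all three bullets directly without any post-hoc shift. But your order of operations is perfectly valid.
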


\AP
Note that the conditions in the above lemma imply that $e$ is an "idempotent": 
indeed, we have $e\cdot e=f(x_i,x_{i+1})\cdot f(x_{i+1},x_{i+2})=f(x_i,x_{i+2})=e$.

\AP
In the same spirit of Lemma~\ref{lemma:ramsey}, the following lemma shows that every "countable" 
"dense" "word" contains an "$\eta$-shuffle" as a "factor". Even though this result appears 
already in \cite{composition_method_shelah}, we give a proof of it for the sake of self-containment.

\begin{lemma}[Shelah \cite{composition_method_shelah}]\label{lemma:densetoperfectshuffle}
Every "word" indexed by a non-empty non-singleton "countable" "dense" "linear ordering" 
contains a "factor" that is an "$\eta$-shuffle".
\end{lemma}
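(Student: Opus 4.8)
The plan is to build the desired $\eta$-shuffle factor by a two-stage argument: first isolate a convex subword in which \emph{some} fixed set $P$ of letter-values occurs densely and nothing outside that pattern interferes, then verify that this subword is exactly (isomorphic to) an $\eta$-shuffle over $P$. Let $w$ be the given word with domain $\alpha$, a non-empty non-singleton countable dense linear ordering. Since $\alpha$ is dense and countable, every non-singleton convex subset is itself a countable dense order; and up to isomorphism $\eta$ is the unique such order without endpoints, so after passing to an open interval we may assume $\alpha\cong\eta$. For a convex $I\subseteq\alpha$ let $P(I)=\{w(x)\mid x\in I\}$ be the finite set of values occurring in $I$. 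The first step is to choose a non-singleton convex $I$ that \emph{minimizes} $|P(I)|$; call the resulting value set $P$. Such a minimal $I$ exists because $|P(I)|$ ranges over a finite set of naturals.

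Next I would show that inside this minimal $I$ every letter of $P$ occurs densely. Fix $a\in P$ and $x<y$ in $I$; I must find $z$ with $x<z<y$ and $w(z)=a$. The open interval $(x,y)_I$ is a non-singleton convex subset of $I$, hence of $\alpha$, so by minimality of $|P(I)|$ we have $P((x,y)_I)=P$; in particular $a$ occurs in $(x,y)_I$, which is exactly density of $a$. (Here one uses that a countable dense order has no isolated points, so $(x,y)_I$ is genuinely non-singleton whenever $x<y$.) Thus $w\suborder{I}$ has domain of order type $\eta$ and each value $a\in P$ has $(w\suborder{I})^{-1}(a)$ dense in $I$ — but this is precisely the definition of $w\suborder{I}$ being an $\eta$-shuffle of the set of letters $P$, i.e.\ an $\eta$-shuffle in the sense fixed in the preliminaries (recalling the uniqueness, up to isomorphism, of the $\eta$-shuffle over a given finite set). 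Since $I$ is convex, $w\suborder{I}$ is a factor of $w$, which is what we want.

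The main obstacle — and the place to be careful — is the passage from "$\alpha$ is a countable dense linear ordering" to "we may work with an order type $\eta$ subinterval", i.e.\ ensuring the endpoints are handled correctly. If $\alpha$ has endpoints one simply removes them by restricting to an open interval, which is still convex in $\alpha$ and still countable dense; if $\alpha$ already has no endpoints it \emph{is} $\eta$. A secondary subtlety is making sure the minimizing convex $I$ is non-singleton: this is why the hypothesis excludes singleton domains, and why density (no isolated points) is used repeatedly to guarantee that the open intervals appearing in the argument are themselves non-singleton, so that the minimality assertion applies to them. Everything else is routine bookkeeping with the definitions of convex, factor, and $\eta$-shuffle of letters.
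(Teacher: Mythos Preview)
Your argument is correct and follows essentially the paper's route, just packaged more directly. The paper iterates through the alphabet $a_1,\dots,a_n$: at each step, if $a_i$ is not dense in the current factor it restricts to an open non-empty convex avoiding $a_i$, otherwise it retains $a_i$; after $n$ steps the surviving factor is an $\eta$-shuffle of the retained letters. Your one-shot minimization of $|P(I)|$ encodes the same idea --- a minimizing $I$ is precisely one from which no letter can be eliminated by passing to a subinterval, which forces every letter of $P$ to be dense. One small point to tighten: you minimize over all non-singleton convexes and then assert the minimizer has order type $\eta$, but a convex subset of $\eta$ can have endpoints (e.g.\ a closed interval $[x,y]$ has type $1+\eta+1$). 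The fix is immediate --- minimize instead over open intervals, or pass to the interior of your $I$, which by minimality still realizes the same set $P$ --- but as written the claim that $w\suborder{I}$ has domain of order type $\eta$ needs this extra line.
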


\begin{proof}
Let $\alpha$ be a non-empty non-singleton "countable" "dense" "linear ordering", 
let $A=\{a_1,\ldots,a_n\}$ be a generic alphabet, and let $w$ be a "word" over 
$A$ with domain $\alpha$. For the sake of brevity, given a symbol $a\in A$, 
we denote by $w^{-1}(a)$ the set of all points $x\in\alpha$ 
such that $w(x)=a$. We then define $w_0=w$ and $A_0=\emptyset$,
and we recursively apply the following construction for each index $1\le i\le n$:
$$
\begin{array}{rcl}
  A_i &=& \begin{cases}
            A_{i-1}\cup\{a_i\} & \text{if $w_i^{-1}(a_i)$ is "dense in" $\dom(w_i)$}, \\
            A_{i-1}            & \text{otherwise},
          \end{cases}   
  \\[4ex]
  w_i &=& \begin{cases}
            w_{i-1} \phantom{\cup\{s_i\}}\;\, 
                               & \text{if $w_i^{-1}(a_i)$ is "dense in" $\dom(w_i)$}, \\[2ex]
            w_{i-1}\suborder{I}& \text{otherwise, where $I$ is any open non-empty} \\ 
                               & \text{"convex" subset of $\alpha$ 
                                       such that $w^{-1}(a_i)\cap I=\emptyset$}.
          \end{cases}
\end{array}
$$
By construction, the domain of the "factor" $w_n$ is non-empty, non-singleton,
"countable", and "dense". Moreover, for all symbols $a\in A$, either
$w_n^{-1}(a_j)$ is "dense in" $\dom(w_n)$ or empty, depending on whether
$a\in A_n$ or not.This shows that $w_n$ is an "$\eta$-shuffle" of the set 
$A_n$.
\end{proof}

\AP
We are now ready to prove Proposition \ref{prop:existence-evaluation}:

\begin{proof}[Proof of Proposition \ref{prop:existence-evaluation}]
Let $u$ be a "word" with "countable" domain $\alpha$. 
We say that a "convex" subset $I$ of $\alpha$ 
is ""definable"" if there is an "evaluation tree" over the "factor" $u\suborder{I}$. 
Similarly, we say that $I$ is ""strongly definable"" if every non-empty "convex" subset 
$J$ of $I$ is "definable". We first establish the following claim: 

\begin{claim}
For every ascending chain $I_0\subseteq I_1\subseteq\ldots$ of "strongly definable convex subsets" 
of $\alpha$, the limit $I=\bigcup_{i\in\bbN}I_i$ is "strongly definable".
\end{claim}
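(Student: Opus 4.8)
The plan is to reduce the claim to the special case of a single ascending chain, split the resulting union into three "convex" pieces, and build an "evaluation tree" over each piece; the one genuinely new ingredient will be an appeal to Ramsey's Lemma~\ref{lemma:ramsey} to reorganise an $\omega$-indexed (or $\omegaop$-indexed) list of "convex" blocks so that it becomes a power of an "idempotent".

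First I would note that it suffices to prove the following: if $J_0\subseteq J_1\subseteq\cdots$ is an ascending chain of "strongly definable" "convex" subsets of $\alpha$, then $J:=\bigcup_i J_i$ is "definable". Indeed, given an arbitrary non-empty "convex" subset $J$ of $I=\bigcup_i I_i$, the sets $J\cap I_i$ are "convex" and, being "convex" subsets of the "strongly definable" sets $I_i$, are themselves "strongly definable"; they form an ascending chain with union $J$, and, since $J\neq\emptyset$, only a finite prefix of them can be empty, so $J$ is exactly such a union. Now write $J=J^-+J_0+J^+$, where $J^-$ (resp.\ $J^+$) collects the points of $J$ lying below (resp.\ above) $J_0$. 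One checks that $J^+=\sum_{i\in\omega}R_i$, where $R_i$ is the "convex" set of points of $J_{i+1}$ lying above $J_i$: these blocks satisfy $R_0<R_1<\cdots$, their union is $J^+$, and each $R_i$, being a "convex" subset of the "strongly definable" set $I_{i+1}$, is "strongly definable". Symmetrically, $J^-$ is an $\omegaop$-indexed "sum" of "strongly definable" "convex" blocks $L_i$. Since $u\suborder{J_0}$ carries an "evaluation tree" ($J_0$ is "strongly definable"), and since a finite "concatenation" of "words" carrying "evaluation trees" again carries one — glue the trees below a spine of binary "internal nodes", the labelling at each such node being a two-letter "word", handled by the $ab$-case of $\pis$, i.e.\ by $\cdot$ — it remains only to build "evaluation trees" over $u\suborder{J^+}$ and over $u\suborder{J^-}$.

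Consider $u\suborder{J^+}$ with $J^+=\sum_{i\in\omega}R_i$ (discard the empty $R_i$; if only finitely many remain, $u\suborder{J^+}$ is a finite "concatenation" and we are done, so assume $\omega$-many, reindexed as $R_0<R_1<\cdots$). The naive move — give the "root" $J^+$ the "children" $R_0,R_1,\dots$ — fails, because the "word" that labels each $R_i$ by the "value" $s_i\in S$ of some "evaluation tree" $\cT_i$ over $u\suborder{R_i}$ need not be of the form $e\omegapow$ with $e$ an "idempotent", so $\pis$ is undefined there; and one cannot instead peel off the $R_i$'s one at a time by nested binary nodes, since that produces an infinite inclusion-chain of "nodes". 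The fix is to set $f(i,j)=s_i\cdots s_{j-1}$ for $i<j$, which is an "additive labelling" on $(\omega,<)$, and invoke Lemma~\ref{lemma:ramsey}: one gets $0<x_1<x_2<\cdots$ cofinal in $\omega$ and elements $a,e\in S$ with $e$ an "idempotent", $f(0,x_1)=a$, and $f(x_k,x_{k+1})=e$ for all $k\ge1$. Regroup the blocks as $B_0=R_0\cup\cdots\cup R_{x_1-1}$ and $B_k=R_{x_k}\cup\cdots\cup R_{x_{k+1}-1}$ for $k\ge1$: each $B_k$ is "convex", their union is $J^+$ (by cofinality), and $u\suborder{B_k}$ carries an "evaluation tree" — a binary spine over the finitely many $\cT_i$'s it contains — of "value" $f$ of the corresponding block, namely $a$ if $k=0$ and $e$ if $k\ge1$. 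Now take the "condensation tree" whose "root" $\bigcup_{k\ge1}B_k$ has "children" $\{B_1,B_2,\dots\}$: the induced labelling of these "children" is $e\omegapow$ with $e$ an "idempotent", so $\pis$ is defined and the "root" receives "value" $e^\tau$; graft $\cT_{B_k}$ below each "child"; finally prepend $B_0$ under one further binary "root". Every maximal inclusion-chain of "nodes" is a finite binary spine followed by a chain inside some $\cT_i$, hence finite, so this is a legitimate "evaluation tree" over $u\suborder{J^+}$.

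The "evaluation tree" over $u\suborder{J^-}$ is obtained by the mirror-image construction, applying Lemma~\ref{lemma:ramsey} after reversing the "linear ordering" and using $\omegaop$-many "children" together with the identity $\pis(e\omegaoppow)=e^{\tauop}$ for the "idempotent" $e$ produced; the construction degenerates appropriately when $J^-$, $J^+$, or the tail of the chain is empty. Gluing the "evaluation trees" over $u\suborder{J^-}$, $u\suborder{J_0}$, $u\suborder{J^+}$ below a short binary spine yields an "evaluation tree" over $u\suborder{J}$, so $J$ is "definable"; as $J$ ranged over all non-empty "convex" subsets of $I$, the set $I$ is "strongly definable". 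The step I expect to be the main obstacle is precisely the Ramsey regrouping together with the grafting that follows it: one must check that relabelling the blocks to a common "idempotent" "value" is consistent with the way the "value" of each $u\suborder{B_k}$ is actually computed (along the chosen binary spine over the $\cT_i$'s), and that the object assembled from the block-trees and the $e\omegapow$-skeleton genuinely satisfies the "evaluation tree" conditions at the newly created "internal nodes" — where $\pis$ is applied to a two-letter or to an $e\omegapow$-shaped labelling of the "children" — while keeping all inclusion-chains of "nodes" finite. The remaining items (the reduction to a single chain, the $J^-+J_0+J^+$ split, the mirror argument for $J^-$, and the empty-block cases) are routine bookkeeping.
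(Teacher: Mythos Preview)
Your proposal is correct and follows essentially the same approach as the paper: reduce to an arbitrary non-empty convex $J\subseteq I$ via the chain $J\cap I_i$, decompose into a fixed strongly definable core plus pieces growing to the right and to the left, and apply Ramsey (Lemma~\ref{lemma:ramsey}) to regroup each $\omega$-indexed sequence of blocks into an $e^\omega$-shaped condensation with $e$ idempotent, then graft the block-subtrees under it. The differences are purely organisational --- you take the three-piece split $J^-+J_0+J^+$ directly where the paper does a case analysis on whether the $J_i$'s coincide on the left or on the right, and you are more explicit than the paper about fixing values $s_i$ first so that $f(i,j)=s_i\cdots s_{j-1}$ is genuinely an additive labelling before invoking Lemma~\ref{lemma:ramsey}.
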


\begin{proof}[Proof of claim]
Let $J$ be a non-empty "convex" subset of $I$ and let $J_i=I_i\cap J$ for all $i\in\bbN$. 
We prove that $J = \bigcup_{i\in\bbN}J_i$ is "definable", namely, we show how to construct 
an "evaluation tree" over the "factor" $u\suborder{J}$. Without loss of generality we assume that 
the $J_i$'s are non-empty. Note that all the $J_i$'s are "strongly definable". 
Of course, if the sequence of the $J_i$'s is ultimately constant, then $J=J_i$ for a sufficiently 
large $i\in\bbN$ and the existence of an "evaluation tree" over $u\suborder{J}$ follows trivially from 
the fact that $J_i$ is "strongly definable".
We now consider the case when all the $J_i$'s coincide on the left. We can 
partition $J$ into a sequence of "convex" subsets $K_0<K_1<\ldots$, where $K_0=J_0$ 
and $K_{i+1}=J_{i+1}\setminus J_i$ for all $i\ge1$. The "convex" subsets $K_i$ form a 
"condensation" of $J$ such that $J_i=K_0\cup\ldots\cup K_i$ for all $i\in\bbN$. 
For every $i<j$ in $\bbN$, we define $K_{i,j}=K_i\cup\ldots\cup K_{j-1}$. We recall that 
every "convex" $J_j$, as well as every "convex" subset $K_{i,j}$ of it, is 
"strongly definable". We can thus associate with each $K_{i,j}$ an "evaluation tree" 
$\cT_{i,j}$ over $u\suborder{K_{i,j}}$. We denote by $c_{i,j}$ the "value" of $\cT_{i,j}$. 
Using Lemma \ref{lemma:ramsey} (i.e., Ramsey's Theorem), one can extract a sequence 
$0<i_1<i_2<\ldots$ in $\omega$ such that $c_{i_1,i_2}=c_{i_2,i_3}=\ldots$ (and moreover, 
this element is an "idempotent"). We can then construct an "evaluation tree" over $u\suborder{J}$ 
that has root $J$ and the "convex" subsets $K_{0,i_1}$, $K_{i_1,i_2}$, \dots for "children", 
with the associated "evaluation@evaluation tree" "subtrees" $\cT_{0,i_1}$, $\cT_{i_1,i_2}$, \dots. 
This allows us to conclude that $J$ is a "definable convex" when the $J_i$'s coincide 
on the left.
The case where the $J_i$'s coincide on the right is symmetric. 
Finally, in the general case, we can partition each set $J_i$ into two 
subsets $J'_i$ and $J''_i$ such that (i) $J'_i<J''_i$, (ii) the sequence 
of the $J'_i$'s coincide on the right, and (iii) the sequence of the $J''_i$'s 
coincide on the left. Let $J'=\bigcup_{i\in\bbN}J'_i$ and $J''=\bigcup_{i\in\bbN}I''_i$. 
One knows by the cases above that there exist "evaluation trees" over $u\suborder{J'}$ and 
over $u\suborder{J''}$. Finally, one can easily construct an "evaluation tree"
over $u\suborder{J} = u\suborder{J'\cup J''}$ out of the "evaluation trees" for $J'$ and $J''$. 
This proves that $J$ is "definable" and hence $I$ is "strongly definable".
\end{proof}

\AP
Turning back to the main proof, let us now consider the set $\cC$ of all 
"condensations" $C$ of $\alpha$ such that every class is "strongly definable". 
"Condensations" in $\cC$ are naturally ordered by the `finer than' relation. 
Let us consider a chain $(C_i)_{i\in\beta}$ of "condensations" in $\cC$ ordered 
by the finer than relation, i.e., for all $j<i$ in $\beta$, $C_j$ is finer than $C_j$. 
Since $\alpha$ is "countable", one can assume that $\beta$ is "countable", 
or even better that $\beta=\omega$. Let us consider the ""limit condensation"" $C$, 
i.e., the finest "condensation" that is coarser than every $C_i$. 
Each class $I\in C$ is the union of a sequence of convex subsets $I_i$, with $I_i\in C_i$ 
for all $i\in\bbN$. From the assumption that every "condensation" $C_i$ belongs 
to $\cC$, we get that $I_i$ is "strongly definable" and from the claim above, we 
conclude that $I$ is "strongly definable" as well. This shows that the 
"limit condensation" $C$ belongs to $\cC$ and hence every chain of $\cC$ 
has an upper bound in $\cC$.

\AP
It follows that we can apply Zorn's Lemma and deduce that $\cC$ contains a maximal 
element, say $C$. If $C$ is a "condensation" with single class, this means that 
there exists an "evaluation tree" over $u$ and the proposition is established. 
Otherwise, we shall head toward a contradiction. Consider the "condensed ordering" 
induced by $C$ (by a slight abuse of notation, we denote it also by $C$). Two cases 
can happen: either $C$ contains two consecutive classes or $C$ is a "dense" "linear order". 

\AP
In the former case, we fix two consecutive classes $I,I'\in C$, with $I<I'$.
We observe that each class of $C$ is a limit of "strongly definable convexes" and hence,
by the previous claim, it is also "strongly definable". It is then easy to see that
the union $I\cup I'$ of the two consecutive "strongly definable convexes" $I$ and $I'$
is also "strongly definable", which contradicts the definition of $C$. 

\AP
In the second case we have that the "linear ordering" $C$ is "dense in itself". As before, we 
recall that each class of $C$ is "strongly definable" and we prove that there exist non-trivial 
unions of classes of $C$ that are "strongly definable" (a contradiction). 
We begin by associating with each "convex" subset $J$ of a class $I$ of $C$ an 
"evaluation tree" $\cT_J$ over $u\suborder{J}$ and we denote by $c_J$ the "value" induced by it. 
We then consider the "word" $v = \prod_{I\in C} c_I$. 
We know from Lemma~\ref{lemma:densetoperfectshuffle} that that $v$ contains a "factor"
that is an "$\eta$-shuffle", say, $v' = v\suborder{C'}$ for some "convex" $C'\subseteq C$. 
Let $J=\bigcup_{I\in C'}I$. To prove that $J$ is "strongly definable" we consider a 
"convex" $K\subseteq J$ and we construct an "evaluation tree" $\cT_K$ over $u\suborder{K}$ as follows. 
First we observe that $K$ is the union of all non-empty "convexes" of the form $I\cap K$, 
for $I\in C'$, and that each set $I\cap K$ is contained in a class of $C$, hence it is 
"definable" and has "value" $c_{I\cap K}$. 
Now, one needs to distinguish some cases depending on whether $C'$ contains minimal/maximal 
"convexes" $I$ intersecting $K$. For the sake of simplicity, we only consider the case 
where $C'$ contains a minimal "convex" $I_0$ such that $I_0\cap K\neq\emptyset$, but no 
maximal convex $I$ such that $I\cap K\neq\emptyset$.
In this case, we recall that $v'$ is an "$\eta$-shuffle" and that its restriction to 
the non-empty "convexes" $I\cap K$, with $I\in C'$, is the juxtaposition of the singleton 
$c_{I_0\cap K}$ and the "$\eta$-shuffle" $\prod_{I\in C''} (c_{I\cap K})$, 
where $C''=\{I\in C' \:\mid\: I\cap K\neq\emptyset,~I\neq I_0\}$.
An "evaluation tree" $\cT_K$ over $u\suborder{K}$ can be constructed by appending to the
"root" $K$ two subtrees: the "evaluation tree" $\cT_{I_0\cap K}$ associated with 
the "definable convex" $I_0\cap K$, and the "evaluation tree" $\cT_{K\setminus I_0}$ 
that consists of the node $K\setminus I_0$ and the direct subtrees $\cT_{I\cap K}$, 
for all $I\in C''$. This shows that there is a non-trivial union $J$ of 
classes of $C$ that is "strongly definable", which contradicts the definition of $C$.
\end{proof}
 
\smallskip
\subsection{Equivalence of "evaluation trees"}\label{subsec:equivalence-evaluation}

We now turn towards proving Proposition \ref{prop:equivalence-evaluation},
namely, the equivalence of "evaluation trees" with respect to the "induced values". 
As we already mentioned, the proof is rather long and requires a series of technical 
lemmas.

\AP
For reasons that will be clear in the sequel, it is convenient to extend 
slightly the domain of the partial function $\pis$ that computes "values" of 
`simple "words"' (cf.~Definition \ref{def:base-mapping}). 
Intuitively, such an extension adds prefixes and suffixes 
of finite length to the elements of the original domain of $\pis$. 

\newcounter{definitionback}
\setcounter{definitionback}{\value{theorem}}   \setcounter{theorem}{\value{definitionref}}    \let\thedefinitionback\thetheorem              \def\thetheorem{\thedefinitionback bis}        \begin{definition}\label{def:base-mapping-bis}
We \emph{""extend the partial function $\pis$""} in such a way that:
\begin{itemize}
  \item $\epis(a_1\ldots a_n) = a_1\cdot\ldots\cdot a_n$ for all $n\ge1$ and all $a_1,\ldots,a_n\in S$,
  \item $\epis(a\:b\omegapow) = a\cdot b^\tau$ for all $a,b\in S$,
  \item $\epis(a\omegaoppow\:b) = a^\tauop\cdot b$ for all $a,b\in S$,
  \item $\epis(a\: P\etapow\: b) = a \cdot P^\kappa\cdot b$ for all $a,b\in S\uplus\{\emptystr\}$ 
        \par\noindent
        (by a slight abuse of notation, we let $\emptystr\cdot s = s\cdot \emptystr = s$
         for all $s\in S$),
  \item in all remaining cases, $\epis$ remains undefined.
\end{itemize}
\end{definition}
\let\thetheorem\thedefinitionback              \setcounter{theorem}{\value{definitionback}}   
\noindent
The new definition of $\epis$ results in a more general notion of "evaluation tree".
Note that the definition of "rank" of an "evaluation tree" still applies to this 
generalized notion, since the "rank" was in fact defined on "condensation trees"
independently of $\pis$.
The generalized notion of "evaluation tree", together with the associated "rank",
will give a strong enough invariant for having a proof by induction of the 
equivalence of "evaluation trees".\footnote{The extended definition of $\epis$ could have been introduced straight 
          at the beginning, in place of Definition \ref{def:base-mapping}. 
          Of course, all the results in the paper would still hold, 
          but some proofs would become slightly more involved (in particular,
          those that show the correspondence between "recognizability" and
          "MSO" definability). This explains why we prefer to adopt
          a more restrictive definition of "evaluation tree", and use
          the extended version only here for convenient.}

\smallskip
\AP
The lemma below basically shows that if the (extended) partial mapping $\epis$ is defined 
over a "word", then it is also defined over all its "factors". It is convenient here to 
allow also some change of letters at the extremities of the "word" and make some case 
distinctions for dealing with the empty "word" $\emptystr$. This makes the statement 
of the following lemma a bit more technical. 

\begin{lemma}\label{lemma:restriction-pibase}
If $\epis$ is defined over a non-empty "word" of the form $u\:c\:v$, with 
$u,v\in S^\scountable\uplus\{\emptystr\}$ and $c\in S\uplus\{\emptystr\}$, 
then it is also defined over the "words" $u\:a$ and $b\:v$, for all 
$a,b\in S\uplus\{\emptystr\}$ such that $a=b=\emptystr$ implies $c=\emptystr$.
In addition, if $a=b=c=\emptystr$ or $a\cdot b=c$, then 
$\epis(u\:c\:v)=\epis(u\:a)\cdot\epis(b\:v)$.
\end{lemma}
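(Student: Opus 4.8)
The statement is a case analysis over the shape of the word $u\:c\:v$ for which $\epis$ is defined, combined with the definitions of $\epis$ and of the $\scountable$-algebra axioms \refaxiom{axiom:concatenation}--\refaxiom{axiom:shuffle}. The plan is to go through the five clauses of Definition~\ref{def:base-mapping-bis} that can make $\epis$ defined on $u\:c\:v$, and in each case verify (i) that $\epis$ is defined on any prefix $u\:a$ and any suffix $b\:v$ with the stated side condition, and (ii) the identity $\epis(u\:c\:v)=\epis(u\:a)\cdot\epis(b\:v)$ whenever $a\cdot b=c$ or $a=b=c=\emptystr$. The side condition ``$a=b=\emptystr$ implies $c=\emptystr$'' is exactly what is needed to keep $u\:a$ and $b\:v$ from collapsing in a way that falls outside the domain of $\epis$ (e.g.\ if $u=v=\emptystr$ and $a=b=\emptystr$ but $c\ne\emptystr$, then $u\:a=b\:v=\emptystr$, which is not in the domain).

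\textbf{Case analysis.} First, if $u\:c\:v$ is a finite word $a_1\cdots a_n$, then $u$, $v$ and $c$ are all finite words over $S\uplus\{\emptystr\}$, so $u\:a$ and $b\:v$ are finite (and non-empty unless the corresponding factor is $\emptystr$), hence in the domain of $\epis$; and the identity $\epis(u\:c\:v)=\epis(u\:a)\cdot\epis(b\:v)$ when $a\cdot b=c$ (resp.\ $a=b=c=\emptystr$) is just associativity of $\cdot$, i.e.\ Axiom~\refaxiom{axiom:concatenation}, together with the convention $\emptystr\cdot s=s\cdot\emptystr=s$. Second, if $u\:c\:v$ has the form $a'\:b'\omegapow$, then $b'\omegapow$ is a concatenation of infinitely many copies of $b'$; a prefix cuts off finitely many of these copies (so $u\:a$ is finite and handled by Axiom~\refaxiom{axiom:concatenation}), while a suffix is again of the form $b''\:b'\omegapow$ and its value $(b')^\tau$ prefixed by $b''$ is computed via the ``$(a\cdot b)^\tau=a\cdot(b\cdot a)^\tau$'' and ``$(a^n)^\tau=a^\tau$'' parts of Axiom~\refaxiom{axiom:omega}; the splitting identity then follows by a short computation combining Axioms~\refaxiom{axiom:concatenation} and~\refaxiom{axiom:omega}. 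The case $a\omegaoppow\:b$ is symmetric, using Axiom~\refaxiom{axiom:omegaop}.

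\textbf{The shuffle case.} The genuinely delicate case is $u\:c\:v = a'\:P\etapow\:b'$, where the middle part $P\etapow$ is the $\eta$-shuffle of the non-empty set $P\subseteq S$. Here I would use that the domain has order type $\eta$ and that every letter of $P$ occurs densely; consequently, any convex subset of $P\etapow$ is again (isomorphic to) $P'\etapow$ for some non-empty $P'\subseteq P$, possibly with a single extra letter of $P$ prepended and/or appended at an endpoint, depending on whether the convex subset has a minimum and/or maximum. Thus a factor of $a'\:P\etapow\:b'$ is either finite, or of one of the forms covered by the clauses of Definition~\ref{def:base-mapping-bis}: $a''\:P'\etapow\:b''$ with $a'',b''\in S\uplus\{\emptystr\}$ (so $\epis$ is defined on it), or possibly a mixed form like $a''\:(P')\etapow$ which again matches a clause. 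For a prefix $u\:a$, the interesting subcase is when the prefix extends strictly into the shuffle, giving $a'\:P'\etapow\:a''$ for some non-empty $P'\subseteq P$ and some $a''\in S\uplus\{\emptystr\}$; its $\epis$-value is $a'\cdot (P')^\kappa\cdot a''$. The splitting identity $a'\cdot P^\kappa\cdot b' = \epis(u\:a)\cdot\epis(b\:v)$, when $a\cdot b=c$, reduces — after accounting for which endpoints of the two halves carry extra $P$-letters — to instances of the compatibility axiom for $\kappa$, Axiom~\refaxiom{axiom:shuffle}: specifically the identities $P^\kappa = P^\kappa\cdot c\cdot P^\kappa = P^\kappa\cdot P^\kappa$ for $c\in P$, together with $(P'\cup P'')^\kappa = P^\kappa$ to reassemble $P$ from the letter sets appearing in the two halves, and Axiom~\refaxiom{axiom:concatenation} to move the prefix/suffix elements $a'$, $b'$ around. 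I expect the bookkeeping of the various endpoint subcases (does the left half's shuffle part have a maximum? does the right half's have a minimum? is one of $a,b$ equal to $\emptystr$?) to be the main obstacle — not mathematically deep, but the place where the side conditions in the lemma statement are doing real work and must be tracked carefully — while each individual identity is a one-line application of Axiom~\refaxiom{axiom:shuffle}.
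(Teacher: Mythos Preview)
Your approach is correct and matches the paper's: the paper simply states that the proof is ``straightforward by a case distinction'' and omits it, which is exactly the case analysis over the clauses of Definition~\ref{def:base-mapping-bis} that you outline.

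One small imprecision in your shuffle case: a non-degenerate convex factor of $P\etapow$ is always $P\etapow$ again (possibly with one extra $P$-letter at an endpoint), not $(P')\etapow$ for a proper subset $P'\subsetneq P$ --- every letter of $P$ is dense, hence appears in every open interval. So you never need the $(P'\cup P'')^\kappa=P^\kappa$ clause of Axiom~\refaxiom{axiom:shuffle} to ``reassemble'' $P$; the splitting identity reduces directly to $P^\kappa\cdot P^\kappa=P^\kappa$ and $P^\kappa\cdot p\cdot P^\kappa=P^\kappa$ for $p\in P$, plus associativity.
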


\noindent
The proof of the lemma is straightforward by a case distinction, 
and thus omitted.

\smallskip
\AP
The next step consists in showing how to restrict a "condensation tree" to an
arbitrary "convex" subset (further along, we will lift this operation to 
the generalized notion of "evaluation tree"):

\begin{definition}\label{def:generalized-subtree}
Given a "condensation tree" $T$ over a "linear ordering" $\alpha$ and a 
"convex" subset $I$ of $\alpha$ (not necessarily an element of $T$), 
define the ""generalized subtree"" of $T$ rooted at $I$ as follows:
$$
  T\subtree{I} \eqdef~ \{I\cap J \:\mid\: J\in T,~I\cap J\neq\emptyset\}.
$$
\end{definition}

\AP
The above operation can be seen as a further generalization of the notion 
of "subtree" that was given just after Definition \ref{def:condensation-tree}.
Below we prove that, not only $T\subtree{I}$ is a valid "condensation tree",
but also that this operation does not increase the "rank".

\begin{lemma}\label{lemma:restriction-condensation}
If $T$ is a "condensation tree" over $\alpha$ and $I$ is a "convex" subset of
$\alpha$, then $T\subtree{I}$ is a "condensation tree" over $\alpha\cap I$. 
Furthermore, we have $\rank(T\subtree{I}) \le \rank(T)$ and
$(T\subtree{I})\subtree{J}=T\subtree{J}$
for all "convex" subsets $J$ of $I$. 
\end{lemma}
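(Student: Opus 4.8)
The plan is to verify directly that $T\subtree{I}$ satisfies the four clauses of Definition \ref{def:condensation-tree}, and then to handle the rank inequality and the iteration identity separately. First I would observe that every element of $T\subtree{I}$ is of the form $I\cap J$ with $J\in T$, and since $I$ and $J$ are both "convex" subsets of $\alpha$, so is their intersection; non-emptiness is built into the definition. For the first clause, $\alpha\cap I = I$ is obtained by taking $J=\alpha\in T$, so $\alpha\cap I \in T\subtree{I}$, which is exactly the "root" requirement for a "condensation tree" over $\alpha\cap I$. For the nesting clause, given $I\cap J$ and $I\cap J'$ with $J,J'\in T$, one splits on the three cases $J\subseteq J'$, $J'\subseteq J$, $J\cap J'=\emptyset$ guaranteed by $T$; these descend to $I\cap J\subseteq I\cap J'$, the reverse inclusion, or $I\cap J\cap J' = \emptyset$, respectively, which is the required trichotomy. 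For the "children"-cover clause, fix $I\cap J\in T\subtree{I}$ and consider the union $U$ of all $I\cap J'\in T\subtree{I}$ with $I\cap J'\subsetneq I\cap J$; the claim is that $U$ is either $I\cap J$ or $\emptyset$. Here I would use that in $T$ the union of all $K\subsetneq J$ is either $J$ or $\emptyset$: if it is $\emptyset$ then $J$ is a "leaf" and one checks $I\cap J$ is a "leaf" of $T\subtree{I}$ (so $U=\emptyset$); if it is $J$, then intersecting with $I$ gives that the union of the $I\cap K$ over $K\subsetneq J$ equals $I\cap J$, and the minor point to check is that each nonempty $I\cap K$ with $K\subsetneq J$ is in fact a \emph{proper} subset of $I\cap J$ (which may fail only if $I\cap K = I\cap J$, in which case that particular $K$ contributes nothing new but some other $K'$ still covers, or $I\cap J$ is itself realized by a smaller node — a small bookkeeping argument). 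The fourth clause, finiteness of chains ordered by inclusion, follows because a chain in $T\subtree{I}$ is $\{I\cap J \mid J\in \cJ\}$ for some subfamily, and pulling back to a chain in $T$ (picking one $J$ per distinct value) gives a finite chain by the corresponding clause for $T$, hence the image chain is finite too.

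For the rank inequality $\rank(T\subtree{I})\le\rank(T)$, the plan is to reuse the ordinal labelling $\beta_{(\cdot)}$ from the proof of Lemma \ref{lem:rank}: I would show that assigning to each "node" $I\cap J$ of $T\subtree{I}$ the ordinal $\beta_J$ (computed in $T$) — or more carefully, the infimum of $\beta_J$ over all $J\in T$ with $I\cap J$ equal to that node — yields a valid strictly-decreasing labelling on $T\subtree{I}$, because the "children" of $I\cap J$ in $T\subtree{I}$ are (among) the sets $I\cap K$ for "children" $K$ of some witnessing $J$, and $\beta_K < \beta_J$. Since $\rank$ is defined as the least such labelling ordinal and the root of $T\subtree{I}$ receives $\beta_\alpha = \rank(T)$ (or less), we get $\rank(T\subtree{I})\le\rank(T)$. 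This is essentially an order-theoretic monotonicity argument and should go through cleanly once the "children" structure of $T\subtree{I}$ is understood.

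Finally, for $(T\subtree{I})\subtree{J}=T\subtree{J}$ when $J\subseteq I$, I would just unfold both definitions: the left side is $\{J\cap(I\cap K') \mid K'\in T\subtree{I},\ J\cap(I\cap K')\neq\emptyset\}$ with $K' = I\cap K$, $K\in T$, which simplifies using $J\subseteq I$ to $\{J\cap K \mid K\in T,\ J\cap K\neq\emptyset\}$, and that is exactly $T\subtree{J}$; the only care needed is that restricting $K'$ to nonempty sets does not lose anything, since $J\cap I\cap K\neq\emptyset$ already forces $I\cap K\neq\emptyset$. I expect the main obstacle to be the third clause (the "children" cover), specifically the edge cases where several distinct nodes $J$ of $T$ collapse to the same intersection $I\cap J$, or where a "leaf" of $T\subtree{I}$ comes from an "internal node" of $T$ all of whose children lie outside $I$; these require a careful case analysis but no deep idea. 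Everything else is routine unfolding of the definitions plus the monotone ordinal-labelling argument inherited from Lemma \ref{lem:rank}.
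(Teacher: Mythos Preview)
Your proposal is correct and follows essentially the same approach as the paper. In fact you are more thorough: the paper proves only that $T\subtree{I}$ is a condensation tree (with a terser argument for the third clause than yours), and dismisses the rank inequality and the iteration identity with ``follow easily from our definitions'', whereas you sketch actual arguments for both. One small remark: the edge case you flag at the end, where a leaf of $T\subtree{I}$ would come from an internal node of $T$ whose children all miss $I$, cannot occur --- if every child of $J$ misses $I$ then so does $J$, since the children cover $J$ --- so that case is vacuous. Also, for the fourth clause your ``pick one $J$ per distinct value'' works directly (any two such $J$'s have nonempty intersection, hence are comparable in $T$); the paper instead picks the \emph{minimal} representative $\min(T_{\supseteq J})$, which is the same idea made canonical.
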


\begin{proof}
We only prove that $T\subtree{I}$ is a "condensation tree".
The remaining claims follow easily from our definitions.
The property stated in the first item of Definition \ref{def:condensation-tree}
follows from the fact that $\alpha\in T$ and $\alpha\cap I = I \in T\subtree{I}$.
To prove the property in the second item, consider two "convexes" $J,K$ in $T$. 
We have that either $J\subseteq K$ or $K\subseteq J$ or $J\cap K=\emptyset$.
As a consequence, either $J\cap I\subseteq K\cap I$ or $K\cap I\subset J\cap I$
or $(J\cap I)\cap (K\cap I)=\emptyset$.
Now, for the third item, consider two "convexes" $J,K$ in $T$ such that 
$K\cap I \in T\subtree{I}$ (or, equally, $K\cap I\neq\emptyset$) and 
$(K\cap I)\subsetneq(J\cap I)$. Since $K\cap I$ is non-empty, this means 
that $J \cap K$ is non-empty too. Thus, either $K\subseteq J$ or $J\subseteq K$.
If $J\subseteq K$ held, then we would have $(J\cap I)\subseteq (K\cap I)$,
which would contradict $(K\cap I)\subsetneq(J\cap I)$. We thus conclude that 
$K\subseteq J$. 
It remains to verify the property in the fourth item, namely, the fact that any 
subset of $T\subtree{I}$ that is totally ordered by inclusion is finite. Consider such a 
subset $C$. For each $J\in C$, define $T_{\supseteq J}=\{K\in T \:\mid\: K\cap I\supseteq J\}$. 
By construction, $T_{\supseteq J}$ is a subset 
of $T$ that is totally ordered by inclusion. In particular, $T_{\supseteq J}$ is 
finite and has a minimal element, denoted $\min(T_{\supseteq J})$. We define $C'$ 
as the set of all "convexes" of the form $\min(T_{\supseteq J})$, with $J\in C$. 
Since $J\subseteq J'$ implies $\min(T_{\supseteq J}) \subseteq \min(T_{\supseteq J'})$,
we have that $C'$ is a subset of $T$ totally ordered by inclusion, and hence $C'$
is finite. Moreover, since each "convex" $J\in C$ can be written as $\min(T_{\supseteq J})\cap I$,
we have that, for all $J,J'\in C$, $J\neq J'$ iff 
$\min(T_{\supseteq J}) \neq \min(T_{\supseteq J'})$.
Since $C'$ is finite, we conclude that $C$ is finite too.
\end{proof}

\smallskip
\AP
Putting together all the previous definitions and lemmas, we can show that 
an "evaluation tree" $\cT=(T,\pit)$ over a "word" $u$ provides not only a 
"value" for $u$, but also, via restrictions to "generalized subtrees", 
"values" for all the "factors" of $u$. Intuitively, this means that the 
mapping $\pit$ of $\cT$ can be extended to all "convex" subsets $I$ of $\alpha$: 

\begin{lemma}\label{lemma:restriction-evaluation}
For every "evaluation tree" $\cT=(T,\pit)$ over a "word" $u$ with domain $\alpha$ and 
every "convex" subset $I$ of $\alpha$, there is an "evaluation tree" 
$\cT\subtree{I}=(T\subtree{I},\pit_I)$ 
such that $\pit_I$ and $\pit$ coincide over 
$(T\subtree{I})\cap T = \{J\in T \:\mid\: J\subseteq I\}$. 
\par
In particular, by a slight abuse of notation, one can denote 
by $\pit(I)$ the "value" associated 
with the "convex" $I$ in the "evaluation tree" $\cT\subtree{I}$ 
(this notation is consistent with the 
"value" associated with $I$ in the "evaluation tree" $\cT$, when $I\in T$). 
\end{lemma}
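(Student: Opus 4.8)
The plan is to argue by induction on $\rank(T)$, which is a legitimate induction parameter by Lemma~\ref{lem:rank}, with the statement quantified over all non-empty "convex" subsets $I$ of $\alpha$ (nothing is to be proved when $I=\emptyset$). By Lemma~\ref{lemma:restriction-condensation} we already know that $T\subtree{I}$ is a "condensation tree" over $\alpha\cap I$, so it only remains to construct a labelling $\pit_I$ of $T\subtree{I}$ by elements of $S$ making $(T\subtree{I},\pit_I)$ an "evaluation tree" and agreeing with $\pit$ on the "nodes" $\{K\in T\mid K\subseteq I\}$. Two situations are settled at once: if $I=\alpha$ we take $\pit_I=\pit$; if $I$ is a singleton $\{x\}$ then $T\subtree{I}=\{\{x\}\}$ and we set $\pit_I(\{x\})=u(x)$ (recall that the "leaves" of $T\subtree{I}$ are always singletons already belonging to $T$, so these choices respect both the "leaf" constraint and the agreement with $\pit$). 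From now on $\alpha$ is an "internal node", $I\subsetneq\alpha$ and $|I|\ge 2$.

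Write $D=\children(\alpha)$, and let $w_D$ be the "word" with domain $D$ labelling $J\in D$ by $\pit(J)$; by the "evaluation tree" condition at $\alpha$, $w_D$ lies in the domain of $\epis$ and $\pit(\alpha)=\epis(w_D)$. Let $D'=\{J\in D\mid I\cap J\neq\emptyset\}$; since $I$ is "convex", $D'$ is a non-empty "convex" subset of $D$, so $w_D\suborder{D'}$ is a "factor" of $w_D$. If $D'$ is a single element $J_0$, then $I\subseteq J_0$, and by Lemma~\ref{lemma:restriction-condensation} one has $T\subtree{I}=(T\subtree{J_0})\subtree{I}$; here it suffices to invoke the induction hypothesis for the "evaluation tree" $\cT\subtree{J_0}$ obtained by restricting $\cT$ to the "descendants" of the "child" $J_0$ — whose "rank" is smaller than $\rank(T)$ by Lemma~\ref{lem:rank} — and the "convex" $I\subseteq J_0$, which returns directly the desired $\cT\subtree{I}$.

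The main case is $|D'|\ge 2$. I would first pin down the combinatorics: using that every "proper descendant" of $\alpha$ in $T$ is contained in a unique "child" of $\alpha$, the "children" of $I$ in $T\subtree{I}$ are precisely the sets $I\cap J$ for $J\in D'$, and for each such $J$ the portion of $T\subtree{I}$ below $I\cap J$ equals $(T\subtree{J})\subtree{I\cap J}$. For every $J\in D'$ the induction hypothesis applies to $\cT\subtree{J}$ (of "rank" smaller than $\rank(T)$) and to the "convex" subset $I\cap J$ of $J$, yielding an "evaluation tree" $(\cT\subtree{J})\subtree{I\cap J}$ whose labelling agrees with $\pit$ on $\{K\in T\mid K\subseteq I\cap J\}$; I then define $\pit_I$ on the non-root "nodes" of $T\subtree{I}$ by gluing these labellings, which is consistent because distinct "children" of $\alpha$ are disjoint. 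A short argument using that $I$ is "convex" shows that $I\cap J=J$ for every $J\in D'$ except, possibly, the least and greatest elements of $D'$ (when these exist); hence the "word" of domain $\children(I)$ carried by $\pit_I$ is $w_D\suborder{D'}$ with at most its first and last letters replaced by the "values" of the two extremal restricted "evaluation trees", which are elements of $S$. I set $\pit_I(I)$ to be the $\epis$-value of this "word" — the essential point being that $\epis$ is indeed defined on it, which follows from a few applications of Lemma~\ref{lemma:restriction-pibase} (peeling off the parts of $w_D$ outside $D'$ and readjusting the two extreme letters). With $\pit_I$ so defined, the "evaluation tree" conditions hold at the "leaves" by construction, at the "internal nodes" strictly below $I$ by the induction hypothesis, and at the "root" $I$ by the choice of $\pit_I(I)$; and $\pit_I$ agrees with $\pit$ on $\{K\in T\mid K\subseteq I\}$ because each such $K$ lies in a unique "child" $J\in D'$ with $K\subseteq I\cap J$, where the glued labelling equals $\pit$. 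Finally, when $I\in T$ the tree $T\subtree{I}$ is the ordinary "subtree" of $T$ rooted at $I$, and the agreement just shown gives $\pit_I=\pit$ on it, so that the notation $\pit(I)$ for the "value" of $I$ is unambiguous.

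The hard part is the bookkeeping at the two ends of $D'$: identifying the "children" of $I$ in $T\subtree{I}$, showing that only the extremal "children" can cross the boundary of $I$, and recognizing the "word" they span as a "factor" of $w_D$ with modified extremities so that Lemma~\ref{lemma:restriction-pibase} is applicable. Everything else — verifying the "evaluation tree" axioms and the agreement with $\pit$ — is a routine, if somewhat lengthy, check carried by the induction on "rank" and by the already-available fact that $T\subtree{I}$ is a "condensation tree".
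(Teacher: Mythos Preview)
Your proof is correct and rests on the same induction on $\rank(T)$ and the same two supporting lemmas (Lemma~\ref{lemma:restriction-condensation} and Lemma~\ref{lemma:restriction-pibase}) as the paper. The organization differs: the paper first reduces to the case where $I$ is an initial segment of $\alpha$ (handling a single boundary), treats the final-segment case symmetrically, and then obtains the general case as a composition $T\subtree{I}=(T\subtree{I_1})\subtree{I_2}$; you instead keep $I$ arbitrary and deal with both boundaries of $D'$ in one pass. Your route is slightly more direct, at the cost of a heavier case split when invoking Lemma~\ref{lemma:restriction-pibase} (you must peel and readjust at both ends, possibly with $c=\emptystr$ when $D'$ has no extremum on one side), whereas the paper's decomposition keeps each application of that lemma to a single cut. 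Both approaches are equally valid.
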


\begin{proof}
Let us first assume that $I$ is an initial segment of $\alpha$, namely, for every $y\in I$ 
and every $x\leq y$, $x\in I$. The proof is by induction on $T$, namely, on the "rank" of 
the underlying "condensation tree". Let $C=\children(\alpha)$ be the top-level 
"condensation" of $T$. We distinguish between two subcases. 

\AP
If the "condensation" $\{I,~\alpha\setminus I\}$ is coarser than $C$, 
then for all $K\in T\subtree{I}$, with $K\neq I$, we have $K\in T$. Hence 
it makes sense to define $\pit_I(K)=\pit(K)$. We complete the definition by 
letting $\pit_I(I)=\epis(\pit(C\suborder{I}))$, where $\pit(C\suborder{I})$ is the "word" 
with domain $C\suborder{I} = \{K\in C \:\mid\: K\subseteq I\}$ and with each position $J$ 
labelled by the "value" $\pit(J)$ 
(thanks to Lemma~\ref{lemma:restriction-pibase} the function $\epis$ is 
defined on the "word" $\pit(C\suborder{I})$). It is easy to check that the $(T\subtree{I},\pit_I)$ 
thus defined is an "evaluation tree" over the "factor" $u\suborder{I}$ and 
that $\pit_I$ and $\pit$ coincide over $(T\subtree{I})\cap T = \{K\in T \:\mid\: K\subseteq I\}$. 

\AP
Otherwise, if the condensation $\{I,~\alpha\setminus I\}$ is not coarser than $C$, 
then there exist three convex subsets $J_1<J_2<J_3$ of $\alpha$ such that 
(i) $\{J_1,J_2,J_3\}$ forms a partition coarser than $C$, 
(ii) $J_1\subseteq I$, 
(iii) $J_3\subseteq\alpha\setminus I$, and 
(iv) $J_2\in C$ with $J_2\cap I\neq\emptyset$ and $J_2\setminus I\neq\emptyset$. 
In particular, we have that the "convex" $J_2\cap I$ is included in a proper 
descendant of the "root" of $T$, and hence by
Lemmas \ref{lem:rank} and \ref{lemma:restriction-condensation}
$\rank(T\subtree{J_2\cap I}) < \rank(T)$.
We can thus apply the induction hypothesis to construct the 
"evaluation tree" $\cT\subtree{J_2\cap I} = (T\subtree{J_2\cap I},\pit_{J_2\cap I})$. 
Note that for every $K\in T\subtree{J_1}$ with $K\neq J_1$, we have $K\in T$. 
Hence it makes sense to define $\pit_I(K)=\pit(K)$. For every $K\in T\subtree{J_2}$, 
we define $\pit_I(K)=\pit_{J_2\cap I}(K)$. Finally, we let
$\pit_I(I) = \epis\big(\pit_I(C\suborder{J_1})~\pit_{J_2\cap I}(J_2\cap I)\big)$
(again this is well defined thanks to Lemma~\ref{lemma:restriction-pibase}). 
It is easy to check that the $(T\subtree{I},\pit_I)$ thus defined is an 
"evaluation tree" over $u\suborder{I}$ and that $\pit_I$ and $\pit$ 
coincide over $(T\subtree{I})\cap T = \{K\in T \:\mid\: K\subseteq I\}$.

\AP
The proof for the symmetric case, where $I$ is a final segment of $\alpha$, is analogous.

\AP
Finally, we consider the case where $I$ is not an initial segment, nor a final 
segment of $\alpha$. In this case it is possible to write $I$ as $I_1\cap I_2$, 
where $I_1$ is an initial segment and $I_2$ is a final segment of $\alpha$. 
By Lemma \ref{lemma:restriction-condensation} we have
$T\subtree{I} = (T\subtree{I_1})\subtree{I_2}$, and hence it suffices 
to apply twice the cases for the initial/final segment discussed above.
\end{proof}

\medskip
\AP
Now that we have set up the basic tools for reasoning on "evaluation trees"
and their restrictions, we begin to exploit the "axioms of $\scountable$-algebras" 
to prove a series of equivalence results.
The first of these results can be seen as a form of "associativity" rule for 
the function $\epis$, but for which equality is required to hold only when 
every expression is defined:

\begin{lemma}\label{lemma:pibase-substitution}
For every "word" $u$ of the form $\prod_{i\in\alpha}u_i$, with $\alpha$ 
"countable" "linear ordering" and $u_i\in S^\scountable$ for all $i\in\alpha$, 
if both $\epis(u)$ and $\epis\big(\prod_{i\in \alpha}\epis(u_i)\big)$ 
are defined, then the two values are equal.
\end{lemma}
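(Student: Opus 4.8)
The plan is to analyze the possible shapes of the word $u$ according to the case distinction in Definition~\ref{def:base-mapping-bis}, since $\epis(u)$ is assumed defined. Thus $u$ is, up to a bounded prefix/suffix, either a finite product $a_1\cdots a_n$, or of the form $a\,b\omegapow$, or $a\omegaoppow\,b$, or $a\,P\etapow\,b$. In each case I would use the decomposition $u=\prod_{i\in\alpha}u_i$ to see how the factors $u_i$ must be distributed over the corresponding basic order type, compute $\epis(u)$ directly from the definition, and compare it with $\epis\big(\prod_{i\in\alpha}\epis(u_i)\big)$ using the relevant axiom of $\scountable$-algebras.

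First I would dispose of the finite case: if $\dom(u)$ is finite, then $\alpha$ is finite and each $u_i$ is a nonempty finite word, $\epis(u_i)$ is the $\cdot$-product of its letters, and $\epis(u)$ is the $\cdot$-product of all letters; equality is then exactly associativity of $\cdot$ (Axiom~\refaxiom{axiom:concatenation}). Next, the $\omegapow$ case: here $\alpha$ itself, after removing a finite prefix, must have order type $\omega$ (or be finite, reducing to a degenerate subcase), and each corresponding $u_i$ is of the form handled by $\epis$; the key identity is $(a\cdot b)^\tau = a\cdot(b\cdot a)^\tau$ together with $(b^n)^\tau = b^\tau$ from Axiom~\refaxiom{axiom:omega}, which lets one recombine the per-block values $\epis(u_i)$ into the single value $a\cdot b^\tau$. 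The $\omegaoppow$ case is symmetric, using Axiom~\refaxiom{axiom:omegaop}. The shuffle case uses Axiom~\refaxiom{axiom:shuffle}: when $u$ has a factor $P\etapow$, the outer index set $\alpha$ either refines the shuffle into finitely many blocks with a shuffle in the middle, or is itself a shuffle over blocks whose $\epis$-values all lie in $\{P^\kappa,\,a\cdot P^\kappa,\,P^\kappa\cdot b,\,a\cdot P^\kappa\cdot b\mid a,b\in P\}$, and the long chain of equalities in Axiom~\refaxiom{axiom:shuffle} collapses $(P'\cup P'')^\kappa$, $P^\kappa\cdot P^\kappa$, $(P^\kappa)^\tau$, etc., back to $P^\kappa$.

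The main obstacle I anticipate is the bookkeeping in the $\etapow$ case: the decomposition $u=\prod_{i\in\alpha}u_i$ can split the dense domain in a genuinely complicated way, with $\alpha$ possibly dense itself, possibly containing endpoints or finite stretches, and with individual $u_i$'s that are themselves finite words, $\omega$-powers, $\omegaop$-powers, or shuffles inside the big shuffle. One must check that every $u_i$ appearing is in the domain of $\epis$ (using that $\epis(u)$ is defined restricts which subwords can occur — here Lemma~\ref{lemma:restriction-pibase} is the essential tool, guaranteeing $\epis$ is defined on the relevant factors and that it is compatible with splitting off finite pieces), and then that the word $\prod_{i\in\alpha}\epis(u_i)$ over alphabet $S$ is again, up to a finite prefix/suffix, a shuffle of a subset $P'\cup P''$ of the type appearing in Axiom~\refaxiom{axiom:shuffle}. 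Once that structural claim is pinned down, the verification that the two values agree is a direct reading of the relevant axiom.

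I would therefore structure the write-up as: (1) observe $u$ has one of the four shapes and handle the finite-product shape via Axiom~\refaxiom{axiom:concatenation}; (2) handle the $\omegapow$ shape, splitting according to whether $\alpha$ is finite or has a final $\omega$-tail, using Axiom~\refaxiom{axiom:omega} and Lemma~\ref{lemma:restriction-pibase}; (3) the $\omegaoppow$ shape symmetrically via Axiom~\refaxiom{axiom:omegaop}; (4) the $\etapow$ shape, first using density to argue that $\prod_{i\in\alpha}\epis(u_i)$ is (up to finite prefix/suffix) a shuffle of an appropriate set, then concluding by Axiom~\refaxiom{axiom:shuffle}. Throughout, Lemma~\ref{lemma:restriction-pibase} supplies the factor-wise definedness of $\epis$ and the absorption of finite end-pieces, which keeps the case analysis finite and routine once the structural observations are made.
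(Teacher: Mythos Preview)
Your plan is essentially the paper's proof: a case analysis on the shape of $u$ (finite, $a\,b\omegapow$, $a\omegaoppow\,b$, $a\,P\etapow\,b$), and inside each case a sub-analysis on the shape of $v=\prod_{i\in\alpha}\epis(u_i)$, with the relevant axiom doing the work in each subcase.

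One point to tighten in step~(4): in the $\etapow$ case you write that $v$ is ``(up to finite prefix/suffix) a shuffle of an appropriate set'', but this is not always so. Since $\epis(v)$ is assumed defined, $v$ may equally be a finite word, of the form $c\,e\omegapow$, or of the form $e\omegaoppow\,c$; the paper treats all four subcases explicitly. Your parenthetical mention of $(P^\kappa)^\tau$ shows you are aware of this, but your written plan should make the four-way split explicit rather than presenting it as a dichotomy. Also, Lemma~\ref{lemma:restriction-pibase} is not actually needed here: the hypothesis already gives that each $\epis(u_i)$ is defined, and the per-case computations of what $\epis(u_i)$ must equal (e.g., $a\cdot P^\kappa$, $P^\kappa\cdot p$, etc.) follow directly from the structure of $u$ and the definition of $\epis$, without appealing to that lemma.
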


\begin{proof}
We prove the lemma by a case analysis, namely, by distinguishing the "order type" of $u$ 
(recall that, since $\epis(u)$ is defined, the "order type" of $u$ must be either finite, 
$\omega$, $\omegaop$, $\eta$, $1+\eta$, $\eta+1$, or $1+\eta+1$). 
For the sake of brevity, we let $v=\prod_{i\in\alpha}\epis(u_i)$.

\AP
If $u=a_1\ldots a_n$ for some $a_1,\ldots,a_n\in S$, then $v$ has to be of the form $b_1\ldots b_m$, 
for some $m\ge 1$ and some $b_1,\ldots,b_n\in S$. Since $\cdot$ is associative 
(see Axiom \refaxiom{axiom:concatenation}), 
we obtain $\epis(u)=a_1\cdot\ldots\cdot a_n=b_1\cdot\ldots\cdot b_m=\epis(v)$. 

\AP
If $u=a\:e\omegapow$ for some $a,e\in S$, with $e$ "idempotent", then $v$ can be 
either of the form $c_1\ldots c_m$, for some $m\ge 1$ and some $c_1,\ldots,c_m\in S$, 
or of the form $b\:f\omegapow$, for some $b,f\in S$, with $f$ "idempotent". 
If $v=c_1\ldots c_m$, say with $m\geq 2$ (the case $m=1$ is trivial), 
then we necessarily have $c_1=a\cdot e^{n_1}=a\cdot e$ for some $n_1\ge0$, 
$c_i=e^{n_i}=e$ for all $2\le i<m-1$ and some $n_2,\ldots,n_{m-1}\ge 1$,
and $c_m=e^\tau$. Axioms \refaxiom{axiom:concatenation} and \refaxiom{axiom:omega} 
together imply $e\cdot e^\tau=e\cdot(e\cdot e)^\tau=(e\cdot e)^\tau=e^\tau$. 
We thus have $\epis(u) = a\cdot e^\tau = c_1\cdot\ldots\cdot c_m = \epis(v)$. 
Otherwise, if $v=b\:d\omegapow$, then, as above, we get $b=a\cdot e^{n_1}=a\cdot e$,
for some $n_1\ge 0$, and $f=e^{n_2}=e^{n_3}=\ldots=e$, for some $n_2,n_3,\ldots\ge 1$. 
Using Axioms \refaxiom{axiom:concatenation} and \refaxiom{axiom:omega}
we finally derive $\epis(u) = a\cdot e^\tau = b\cdot f^\tau = \epis(v)$. 

\AP
The case $u=e\omegaoppow\: a$ is just symmetric to the previous case and 
uses Axiom \refaxiom{axiom:omegaop} instead of Axiom \refaxiom{axiom:omega}. 

\AP
Finally, the most interesting case is when $u=a\: P\etapow \:b$ for some 
non-empty set $P\subseteq S$ and some empty or singleton "words" $a,b\in S\uplus\{\emptystr\}$. 
We further distinguish some cases depending on the form of $v$:
\begin{itemize}
  \item If $v=c_1\ldots c_m$, then the proof goes by induction on $m$. 
        The interesting base case is $m=2$ (for $m=1$ the claim holds trivially).
        We further distinguish between five subcases. 
        If the first "factor" $u_1$ has no last letter and 
        the last "factor" $u_2$ has no first letter, then we 
        have $c_1=\epis(u_1)=a\cdot P^\kappa$ and $c_2=\epis(u_2)=P^\kappa\cdot b$. 
        Using Axiom~\refaxiom{axiom:shuffle}, we get 
        $\epis(u)=a\cdot P^\kappa\cdot b=(a\cdot P^\kappa)\cdot(P^\kappa\cdot b)=c_1\cdot c_2=\epis(v)$.
        If $u_1$ consists of a single letter, then this letter must be $a\neq\emptystr$. 
        Moreover, $u_2$ cannot have a first letter and hence, as above, 
        we have $\epis(u_2)=P^\kappa\cdot b$. We thus derive
        $\epis(u)=a\cdot P^\kappa\cdot b=c_1\cdot c_2=\epis(v)$.
        If $u_1$ has a last letter, say $p$, but length greater than $1$,
        then $p$ must belong to $P$ and $u_2$ has no first letter. 
        We thus have 
        $\epis(u)=a\cdot P^\kappa\cdot b=(a\cdot P^\kappa\cdot p)\cdot (P^\kappa\cdot b)=\epis(v)$.
        The cases where $u_2$ has length $1$ and where $u_2$ has a first letter and length greater
        than $1$ are symmetric. Finally, the induction for $m>2$ is straightforward.
  \item If $v=c\:e\omegapow$, then, by distinguishing some subcases as 
        above, one verifies that $c=\epis(u_1)$ is either $a$ or $a\cdot P^\kappa\cdot p$, 
        for some $p\in P\uplus\{\emptystr\}$, and that $e=\epis(u_2)=\epis(u_3)=\ldots$ 
        is either $P^\kappa\cdot q$ or $q\cdot P^\kappa$, for some $q\in P\uplus\{\emptystr\}$, 
        depending on whether $u_1$ has a first letter or not. 
        Depending on the various subcases, and using Axiom~\refaxiom{axiom:shuffle}, 
        we derive either $\epis(u)=a\cdot P^\kappa=a\cdot (P^\kappa\cdot q)^\tau=\epis(v)$, 
        or $\epis(u)=a\cdot P^\kappa=(a\cdot P^\kappa\cdot p)\cdot (P^\kappa\cdot q)^\tau=\epis(v)$,
        or $\epis(u)=a\cdot P^\kappa=(a\cdot P^\kappa)\cdot (q\cdot P^\kappa)=\epis(v)$.
  \item If $v=e\omegaoppow\:c$, then the claim holds by symmetry with the previous case.
  \item If $v=c\:R\etapow\:d$ for some non-empty set $R\subseteq S$ and
        some empty or singleton "words" $c,d\in S\uplus\{\emptystr\}$,
        then we prove that $R$ is included in 
        $P \;\cup\; (P\uplus\{\emptystr\})\cdot P^\kappa\cdot(P\uplus\{\emptystr\})$.
        Let us treat first the case $c=d=\emptystr$. Since $v$ has no first nor 
        final letter, this implies $a=b=\emptystr$. Let us consider an element $r\in R$ 
        and a corresponding factor $u_i$ of $u$, with $i\in\alpha$, such that $\epis(u_i)=r$.
        If $u_i$ consists of the single letter $r$, then we clearly have $r\in P$. 
        Otherwise $u_i$ has more than one letter and, depending on the existence of 
        a first/last letter in $u_i$, we get one of the four possibilities 
        $r=P^\kappa$, $r=p\cdot P^\kappa$, $r=P^\kappa\cdot q$ and $r=p\cdot P^\kappa\cdot q$,
        for suitable $p,q\in P$. 
        This proves that $R$ is included in 
        $P \;\cup\; (P\uplus\{\emptystr\})\cdot P^\kappa\cdot(P\uplus\{\emptystr\})$.
        Using Axiom~\refaxiom{axiom:shuffle} we immediately obtain 
        $\epis(u)=P^\kappa=R^\kappa=\epis(v)$.
        The general case where $c,d\in S\uplus\{\emptystr\}$, 
        can be dealt with by using similar arguments plus 
        Axiom~\refaxiom{axiom:concatenation}.
\end{itemize}
\vspace{-2ex}
\hfill\ 
\end{proof}

\begin{corollary}\label{cor:evaluation-pibase}
Let $u$ be a "word" with domain $\alpha$ such that $\epis(u)$ is defined and 
let $\cT=({T,\pit})$ be an "evaluation tree" over $u$. Then $\epis(u)=\pit(\alpha)$.
\end{corollary}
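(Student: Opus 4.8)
The plan is to prove the corollary by induction on the "rank" of the "condensation tree" $T$ underlying $\cT=(T,\pit)$. The base case is when the "root" $\alpha$ is a "leaf": then by Definition~\ref{def:evaluation-tree} $\alpha$ is a singleton $\{x\}$ and $u$ is the single letter $a=u(x)$, so that $\pit(\alpha)=a$, while $\epis(u)=\epis(a)=a$ by the first clause of Definition~\ref{def:base-mapping-bis} with $n=1$; there is nothing more to check.

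For the inductive step I would take the top-level "condensation" $C=\children(\alpha)$ and, for each "child" $J\in C$, pass to the "subtree" of $\cT$ rooted at $J$: this is an "evaluation tree" over the "factor" $u\suborder{J}$ whose "value" is $\pit(J)$, and by Lemma~\ref{lem:rank} its underlying "condensation tree" has "rank" strictly smaller than $\rank(T)$. Before invoking the induction hypothesis on it I must know that $\epis(u\suborder{J})$ is defined; this I would get from the observation that, since $\epis(u)$ is defined, $u$ has one of the "order types" enumerated at the start of the proof of Lemma~\ref{lemma:pibase-substitution}, and every "factor" of such a "word" again lies in the domain of $\epis$ — which is immediate from that short list, or can be deduced by applying Lemma~\ref{lemma:restriction-pibase} first to a prefix and then to a suffix. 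The induction hypothesis then gives $\pit(J)=\epis(u\suborder{J})$ for every $J\in C$.

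Finally I would assemble the pieces. By Definition~\ref{def:evaluation-tree} we have $\pit(\alpha)=\epis\big(\pit(C)\big)$, where $\pit(C)$ is the "word" with domain $C$ sending each position $J$ to $\pit(J)=\epis(u\suborder{J})$; in particular $\epis\big(\prod_{J\in C}\epis(u\suborder{J})\big)$ is defined. (If one prefers the original, $\pis$-based notion of "evaluation tree" to the generalized one, one needs in addition that $\pis$ and $\epis$ agree wherever $\pis$ is defined, which follows from Axioms~\refaxiom{axiom:omega} and \refaxiom{axiom:omegaop}.) Since $u=\prod_{J\in C}u\suborder{J}$ and $\epis(u)$ is defined by hypothesis, Lemma~\ref{lemma:pibase-substitution} applies and yields
$$
  \epis(u) ~=~ \epis\Big(\prod\nolimits_{J\in C}\epis(u\suborder{J})\Big) ~=~ \epis\big(\pit(C)\big) ~=~ \pit(\alpha),
$$
which is the assertion.

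The one place I expect to need care is the verification that $\epis$ stays defined on every "factor" $u\suborder{J}$: this is exactly what makes the induction hypothesis usable, and it is the only input that is not already packaged in the earlier lemmas. Once it is in hand, the whole argument reduces to a single application of the "associativity"-style Lemma~\ref{lemma:pibase-substitution}, with the "child"-subtrees supplying the smaller-"rank" instances of the statement.
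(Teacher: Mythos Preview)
Your proof is correct and follows essentially the same route as the paper's: induction on the rank of $T$, with the base case handled by the singleton-leaf condition, and the inductive step obtained by applying Lemma~\ref{lemma:restriction-pibase} to see that $\epis(u\suborder{J})$ is defined for each child $J\in C$, then the induction hypothesis, then Lemma~\ref{lemma:pibase-substitution} to conclude. The paper's write-up is terser (it invokes Lemma~\ref{lemma:restriction-pibase} in one breath rather than spelling out the prefix-then-suffix application), but the argument is the same.
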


\begin{proof}
We prove the claim by induction on $\cT$. If $\cT$ consists of a single "node", then
this "node" must be a "leaf" and $\alpha$ must be a singleton "leaf", and hence
the claim follows immediately by definition of "evaluation tree". Otherwise, 
let $C=\children(\alpha)$ be the top-level "condensation". By Lemma~\ref{lemma:restriction-pibase}, 
we know that $\epis(u\suborder{I})$ is defined for all $I\in C$. We can then use the induction 
hypothesis on the "evaluation tree" $\cT\subtree{I}$ and obtain $\epis(u\suborder{I})=\pit(I)$. 
Finally, using Lemma~\ref{lemma:pibase-substitution}, we get 
$\epis(u)=\epis(\pit(C))=\pit(\alpha)$.
\end{proof}

\AP
The following series of lemmas prove equalities between the "value" at the "root"
of an "evaluation tree" and the "values" induced by $\epis$ under different 
"condensations" of the "root". We first consider finite condensations, then 
$\omega$-"condensations" 
(and, by symmetry, $\omegaop$-"condensations"), and finally $\eta$-"condensations". 
The gathering of those results will naturally entail that two "evaluation trees" over 
the same "word" have the same "value" (see Corollary~\ref{cor:equivalence-evaluation}).

\begin{lemma}\label{lemma:equivalence-evaluation-finite}
Given a "word" $u$ with domain $\alpha$, an "evaluation tree" $\cT=({T,\pit})$ over $u$, 
and a finite "condensation" $I_1<\ldots<I_n$ of $\alpha$, we 
have $\pit(\alpha)=\pit(I_1)\cdot\ldots\cdot\pit(I_n)$.
\end{lemma}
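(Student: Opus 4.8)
The plan is to induct on the rank $\rank(T)$ of the underlying condensation tree, working throughout with the values $\pit(I)$ that Lemma~\ref{lemma:restriction-evaluation} attaches to arbitrary convex subsets $I\subseteq\alpha$, and using that restriction of evaluation trees is associative in the sense that $(\cT\subtree{K})\subtree{L}=\cT\subtree{L}$ for $L\subseteq K$ (the evaluation-tree counterpart of the identity $(T\subtree{K})\subtree{L}=T\subtree{L}$ from Lemma~\ref{lemma:restriction-condensation}, obtained by the same induction as Lemma~\ref{lemma:restriction-evaluation}). If $\rank(T)=0$ then $\alpha$ is a singleton, $n=1$, and there is nothing to prove.

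Assume $\rank(T)\geq1$ and let $C=\children(\alpha)$, so that $\pit(\alpha)=\pis(\pit(C))=\epis(\pit(C))$, where $\pit(C)$ is the word over the order $C$ labelled by $\pit$. Reading off the construction of $\cT\subtree{I}$ in Lemma~\ref{lemma:restriction-evaluation} (its first subcase, applied once or twice) one first records that $\pit(I)=\epis(\pit(C\suborder{I}))$ for every non-empty convex $I\subsetneq\alpha$ that is a union of classes of $C$. This settles at once the \emph{aligned case}, where $\mathcal{I}=\{I_1<\dots<I_n\}$ is coarser than $C$: then $\pit(C)=\pit(C\suborder{I_1})\cdots\pit(C\suborder{I_n})$ as a word, and Lemma~\ref{lemma:pibase-substitution} applied with $u=\pit(C)$ and factors $u_k=\pit(C\suborder{I_k})$ gives $\pit(\alpha)=\epis\big(\pit(I_1)\cdots\pit(I_n)\big)=\pit(I_1)\cdot\ldots\cdot\pit(I_n)$, the last step because $\epis$ of a finite word is the iterated $\cdot$-product. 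No induction hypothesis is needed here.

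In the general case, only finitely many classes of $C$, say $K_1<\dots<K_r$, straddle a boundary of $\mathcal{I}$; each such $K_j$ meets a consecutive block $I_{a_j},\dots,I_{b_j}$ (with $b_j>a_j$) and contains $I_{a_j+1},\dots,I_{b_j-1}$ entirely, and $\alpha$ decomposes in order as $G_0\,K_1\,G_1\cdots K_r\,G_r$ with the $G_i$ (possibly empty) unions of non-straddling $C$-classes. The condensation $\mathcal{G}$ consisting of the non-empty $G_i$ and the $K_j$ is finite and coarser than $C$, so the aligned case gives $\pit(\alpha)=\pit(D_1)\cdot\ldots\cdot\pit(D_m)$ for its classes $D_1<\dots<D_m$ in order. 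One then splits each $\pit(D)$ along $\mathcal{I}$: for a gap $G_i$ this again only involves unions of $C$-classes, so Lemma~\ref{lemma:pibase-substitution} applies verbatim; for a straddling class $K_j$, which is a node of $T$ whose subtree $\cT\subtree{K_j}$ has strictly smaller rank by Lemma~\ref{lem:rank}, the induction hypothesis yields $\pit(K_j)=\pit(K_j\cap I_{a_j})\cdot\pit(I_{a_j+1})\cdot\ldots\cdot\pit(I_{b_j-1})\cdot\pit(K_j\cap I_{b_j})$. Concatenating all the resulting pieces in order and using associativity of $\cdot$ (Axiom~\refaxiom{axiom:concatenation}), the pieces regroup into the $\mathcal{I}$-classes: each $I_k$ either coincides with a single whole piece, or meets the boundary of a straddling class and receives two or three consecutive pieces of the shapes $\pit(X\cap I_k)$, $\pit(G_i)$, $\pit(Y\cap I_k)$, and that their $\cdot$-product equals $\pit(I_k)$ follows once more from the construction of $\cT\subtree{I_k}$ together with the local associativity of $\epis$ (Lemma~\ref{lemma:restriction-pibase}, used in the forms $\epis(c\,w)=c\cdot\epis(w)$, $\epis(w\,c)=\epis(w)\cdot c$, $\epis(c\,w\,c')=c\cdot\epis(w)\cdot c'$). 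Assembling these identities gives $\pit(\alpha)=\pit(I_1)\cdot\ldots\cdot\pit(I_n)$.

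I expect the main difficulty to be organizational: verifying that the decomposition $G_0\,K_1\,G_1\cdots K_r\,G_r$, refined along $\mathcal{I}$, produces exactly the ordered list of pieces of $I_1,\dots,I_n$, and that each boundary class recombines correctly via Lemma~\ref{lemma:restriction-pibase}; a subsidiary point that must be in place is the compatibility of nested restrictions of evaluation trees, which is what legitimises applying the induction hypothesis to the subtrees $\cT\subtree{K_j}$.
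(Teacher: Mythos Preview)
Your proof is correct and uses the same ingredients as the paper (induction on the rank, the restriction Lemmas~\ref{lemma:restriction-pibase}, \ref{lemma:restriction-condensation}, \ref{lemma:restriction-evaluation}, and the associativity Lemma~\ref{lemma:pibase-substitution}), but it is organised differently and is more laborious than necessary.

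The paper first reduces to the case $n=2$ by a trivial outer induction on~$n$. For $n=2$ there is at most \emph{one} straddling class $J\in C=\children(\alpha)$, so the decomposition is simply $C_1\,J\,C_2$ with $C_i$ the $C$-classes contained in $I_i$. One applies the inner induction only to $\cT\subtree{J}$ (getting $\pit(J)=\pit(J\cap I_1)\cdot\pit(J\cap I_2)$), then Lemma~\ref{lemma:restriction-pibase} splits $\epis(u_1\,(a_1\cdot a_2)\,u_2)$ as $\epis(u_1\,a_1)\cdot\epis(a_2\,u_2)$, and the two factors are \emph{exactly} what the construction in Lemma~\ref{lemma:restriction-evaluation} assigns to $\pit(I_1)$ and $\pit(I_2)$, since $I_1$ is an initial segment and $I_2$ a final segment of~$\alpha$. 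No recombination step is needed.

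By contrast, your direct attack on general $n$ forces you to handle several straddling classes $K_1<\dots<K_r$ at once and then to reassemble each $I_k$ from up to three pieces. That recombination \emph{is} correct---it follows from Lemma~\ref{lemma:pibase-substitution} applied to the top-level word of $\cT\subtree{I_k}$, together with the nested-restriction identity you flag---but it requires tracing through the two-step construction of $\cT\subtree{I_k}$ (initial then final segment) when $I_k$ is straddled on both sides, which is precisely the ``organisational difficulty'' you anticipate. The paper's $n=2$ reduction sidesteps all of this: each $I_i$ is straddled on at most one side, so its value is read off directly from a single application of Lemma~\ref{lemma:restriction-evaluation}.
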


\begin{proof}
The proof is by induction on $\cT$. If $\cT$ consists of a single "leaf", then 
$\alpha$ must be a singleton and hence $n=1$ and the claim follows trivially. 
Let us now consider the case where $\cT$ has more than one "node". 
We only prove the claim for $n=2$ (for $n=1$ it is obvious and for $n>2$ it follows 
from a simple induction). Let $C=\children(\alpha)$ be the top-level "condensation" 
and let $J$ be the unique "convex" subset in $C$ that intersects both $I_1$ and $I_2$ (if $C$ 
does not contain such an element, then we let $J=\emptyset$). For the sake of brevity, we define, 
for both $i=1$ and $i=2$, $C_i = \{K\in C \:\mid\: K\subseteq I_i\}$, $u_i=\prod_{K\in C_i}\pit(K)$, 
and $a_i=\pit(J\cap I_i)$ (with the convention that $\pit(J\cap I_i)=\emptystr$ if $J=\emptyset$). 
Note that $C=C_1\cup\{J\}\cup C_2$ if $J\neq\emptyset$ (resp., $C=C_1\cup C_2$ if $J=\emptyset$)
and hence $\pit(\alpha)=\epis\big(u_1\:\pit(J)\:u_2\big)$ (we assume that $\pit(J)=\emptystr$ if 
$J=\emptyset$). 
Let us consider the case where $J$ is not empty (the case $J=\emptyset$ is similar). 
Since $J\in C$ and $C=\children(\alpha)$, we have $\rank(\cT\subtree{J}) < \rank(\cT)$ 
and hence we can apply the induction hypothesis to the "evaluation tree" $\cT\subtree{J}$ and 
the "condensation" $\{J\cap I_1,J\cap I_2\}$ of $\alpha\suborder{J}$. 
We thus obtain $\pit(J)=\pit(J\cap I_1)\cdot\pit(J\cap I_2)=a_1\cdot a_2$ 
and hence $\pit(\alpha)=\epis\big(u_1\:(a_1\cdot a_2)\:u_2\big)$. 
Lemma~\ref{lemma:restriction-pibase} then implies 
$\epis\big(u_1\:(a_1\cdot a_2)\:u_2\big)=\epis(u_1\:a_1)\cdot\epis(u_2\:a_2)$.
Similarly, Lemma~\ref{lemma:pibase-substitution} implies $\epis(u_1\:a_1)=\pit(I_1)$ 
and $\epis(u_2\:a_2)=\pit(I_2)$. Overall, we get $\epis(\alpha)=\pit(I_1)\cdot\pit(I_2)$.
\end{proof}

\smallskip
\begin{lemma}\label{lemma:equivalence-evaluation-omega}
Given a "word" $u$ with domain $\alpha$, an "evaluation tree" $\cT=(T,\pit)$ over $u$, and an 
$\omega$-"condensation" $I_0<I_1<I_2<\ldots$ of $\alpha$ such that $\pit(I_1)=\pit(I_2)=\ldots$
is an "idempotent", we have $\pit(\alpha)=\pit(I_0)\cdot\pit(I_1)^\tau$.
\end{lemma}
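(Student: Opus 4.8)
The plan is to argue by induction on $\rank(\cT)$, in close analogy with the proof of Lemma~\ref{lemma:equivalence-evaluation-finite}. The base case ($\cT$ a single leaf, $\alpha$ a singleton) is vacuous, since a singleton has no $\omega$-condensation into non-empty classes. For the inductive step I would write $C=\children(\alpha)$ for the top-level condensation of $\cT$; since $\cT$ is an evaluation tree, $\epis$ is defined on the value word $\pit(C)$, so $C$ has an order type on which $\epis$ is defined (finite, $\omega$, $\omegaop$, $\eta$, or one of $1+\eta,\ \eta+1,\ 1+\eta+1$). The argument then splits according to how $C$ relates to the given $\omega$-condensation $\mathcal I=\{I_0<I_1<\cdots\}$.

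\emph{Case A: $\mathcal I$ is coarser than $C$} (every class of $C$ lies inside some $I_j$). Here I would use Lemma~\ref{lemma:restriction-evaluation}: $\cT\subtree{I_j}$ has top-level condensation $C\suborder{I_j}$ (the $C$-classes inside $I_j$) and $\pit(I_j)=\epis(\pit(C\suborder{I_j}))$. As $\pit(I_1)=\pit(I_2)=\cdots=e$, the word $\prod_j\pit(I_j)$ is literally $\pit(I_0)\,e\,e\,e\cdots$, so $\epis(\prod_j\pit(I_j))=\pit(I_0)\cdot e^\tau$. Since $\pit(C)=\prod_j\pit(C\suborder{I_j})$ and both of the relevant values are defined, Lemma~\ref{lemma:pibase-substitution} yields $\pit(\alpha)=\epis(\pit(C))=\epis(\prod_j\pit(I_j))=\pit(I_0)\cdot e^\tau$. (Idempotency of $e$ is not even used here.)

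\emph{Case B: some class $J\in C$ meets infinitely many of the $I_j$'s.} Convexity of $J$ forces it to contain a tail $I_{M+1}\cup I_{M+2}\cup\cdots$ and to be itself a final segment of $\alpha$, while $\alpha\setminus J$ is an initial segment contained in $I_0\cup\cdots\cup I_M$. I would then split $\alpha=(\alpha\setminus J)\cup J$ and apply Lemma~\ref{lemma:equivalence-evaluation-finite} to $\alpha\setminus J$ (which carries the finite condensation $I_0,\dots,I_{M-1},I_M\setminus J$) and the induction hypothesis to $\cT\subtree{J}$ — legitimate because $J\in\children(\alpha)$, so $\rank(\cT\subtree{J})<\rank(\cT)$ by Lemmas~\ref{lem:rank} and~\ref{lemma:restriction-condensation}. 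Recombining with one more use of Lemma~\ref{lemma:equivalence-evaluation-finite} ($\pit(I_M\setminus J)\cdot\pit(I_M\cap J)=\pit(I_M)$) and simplifying via the identities $e\cdot e^\tau=e^\tau$ and $e^k\cdot e^\tau=e^\tau$ (consequences of Axioms~\refaxiom{axiom:concatenation}--\refaxiom{axiom:omega}, using $\pit(I_j)=e$ for $1\le j\le M$) gives $\pit(\alpha)=\pit(I_0)\cdot e^\tau$. Degenerate configurations ($I_M\setminus J$ or $I_M\cap J$ empty) are dealt with using the $\emptystr\cdot s=s$ convention of Lemma~\ref{lemma:restriction-pibase}.

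\emph{Case C: neither of the above holds.} A short analysis of order types shows that then $C$ must have order type $\omega$, say $C=\{K_0,K_1,\dots\}$, with each $K_i$ meeting only finitely many $I_j$'s; and $\epis$ being defined on the $\omega$-word $\pit(C)$ forces $\pit(K_i)=b$ for all $i\ge 1$, whence $\pit(\alpha)=\pit(K_0)\cdot b^\tau$. I would pass to the coarsest common coarsening $\mathcal D=C\vee\mathcal I$, which has order type $\omega$ or is finite. If $\mathcal D$ has order type $\omega$ it is coarser than $C$ and each of its blocks is a \emph{finite} union of consecutive $I_j$'s (an infinite one would be cofinal, hence a last block — impossible), so the Case~A argument applied to $\mathcal D$, together with idempotency of $e$, gives $\pit(\alpha)=\pit(I_0)\cdot e^\tau$. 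If $\mathcal D$ is finite, Lemma~\ref{lemma:equivalence-evaluation-finite} reduces the claim to showing $\pit(D_m)=e^\tau$ for the last block $D_m$, which is simultaneously a tail of $C$ and a tail of $\mathcal I$. \textbf{This is the step I expect to be the main obstacle}: the subtree $\cT\subtree{D_m}$, whose top-level condensation is the tail of $C$ inside $D_m$ (again of order type $\omega$), need not have smaller rank than $\cT$, so the induction hypothesis does not apply directly. I would handle it by re-running the Case~C dichotomy inside $D_m$ and noting that the ``$\mathcal D$ finite'' alternative can only recur indefinitely for an ``irreducible'' interleaving of $C$ and $\mathcal I$; for that residual situation I would extract, from Lemma~\ref{lemma:pibase-substitution} and Axioms~\refaxiom{axiom:concatenation}--\refaxiom{axiom:omega}, a shift identity relating $e$, $b$ and the value of a single overlap $K_i\cap I_j$, from which $\pit(\alpha)=\pit(K_0)\cdot b^\tau=\pit(I_0)\cdot e^\tau$ follows by associativity — alternatively, a secondary well-founded induction (e.g.\ on the ranks of the children of the current root) should close this last case.
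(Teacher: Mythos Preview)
Your Cases A and B are essentially correct and correspond to parts of the paper's argument (your Case B is exactly the paper's ``$C$ has a maximal element'' case). The gap is in Case C.

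\textbf{The order-type claim is false.} You assert that in Case C the top-level condensation $C$ must have order type $\omega$. But ``not Case B'' is equivalent to ``$C$ has no maximal element'' (a maximal class would be cofinal in $\alpha$ and hence meet infinitely many $I_j$'s, and conversely). Among the order types on which the extended $\epis$ is defined, those without maximum are $\omega$, $\eta$, and $1{+}\eta$. Nothing rules out the dense cases: take for instance $\alpha=\bbQ\cap(0,\infty)$ with $I_j=\bbQ\cap(j,j{+}1]$, and let $C$ consist of singletons except for one class $[0.9,1.1]$ straddling the $I_0/I_1$ boundary; then $C$ has order type $\eta$, and with a suitable word $u$ (so that $\pit(C)$ is an $\eta$-shuffle) this is a valid evaluation tree falling into your Case C. Your analysis simply does not cover this.

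\textbf{Even under $C\cong\omega$, the residual ``irreducible'' sub-case is not closed.} When $C\vee\mathcal I=\{\alpha\}$ you are left trying to prove $\pit(K_0)\cdot b^\tau=\pit(I_0)\cdot e^\tau$ directly from the axioms, knowing only that $k_i j_i=b$ and $j_i k_{i+1}=e$ for varying $k_i=\pit(K_i\cap I_{?})$, $j_i=\pit(\cdot)$. Axiom~\refaxiom{axiom:omega} gives $(k_1 j_1)^\tau=k_1(j_1 k_1)^\tau$, but you know $j_1 k_2=e$, not $j_1 k_1$; without further control on the individual $k_i,j_i$ there is no purely equational route to the identity. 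Your suggested ``shift identity'' or secondary induction does not supply this control.

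\textbf{What the paper does instead.} The paper splits only on whether $C$ has a maximal element. When it does not (covering your Cases A and C simultaneously, and in particular the $\eta$ case), it invokes \emph{Ramsey's theorem} to build an interleaved $\omega$-condensation $J_0<K_1<J_1<K_2<\cdots$ of $\alpha$ with the boundaries between $J_i$ and $K_{i+1}$ lying on $C$-boundaries, the boundaries between $K_i$ and $J_i$ lying on $\mathcal I$-boundaries, and---crucially---$\pit(K_i\cup J_i)$ equal to a single idempotent $f$ for all $i$. A further pigeonhole/Ramsey refinement makes the relevant ``overlap'' values constant, after which one application of $(kj)^\tau=k(jk)^\tau$ together with $j_ik_{i+1}=e$ yields $a\cdot f^\tau=c\cdot e^\tau$ in one line. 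The Ramsey step is precisely the missing idea that tames both the dense order types and the irreducible interleaving you flagged as the obstacle.
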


\begin{proof}
The proof is again by induction on $\cT$. Note that the case of $\cT$ consisting of a 
single "leaf" cannot happen. Let $C=\children(\alpha)$ be the top-level "condensation". 
We distinguish two cases depending on whether $C$ has a maximal element or not.

\AP
Suppose that $C$ has a maximal element, say $J_\max$, and $C\neq\{J_\max\}$
(the case where $C=\{J_\max\}$ can be considered as a degenerate case, which can
be dealt with by similar arguments). We can find a "condensation" $K_1<K_2$ of $\alpha$ 
that is coarser than $I_0<I_1<I_2<\ldots$ and such that $K_2\subseteq J_\max$. 
By Lemma~\ref{lemma:equivalence-evaluation-finite}, we have $\pit(\alpha)=\pit(K_1)\cdot\pit(K_2)$.
Moreover, since $K_1$ is the union of a finite sequence of "convex" subsets $I_0,I_1,\ldots,I_k$, 
by repeatedly applying Lemma~\ref{lemma:pibase-substitution}, we obtain 
$\pit(K_1) = \pit(I_0)\cdot\pit(I_1)\cdot\ldots\cdot\pit(I_k)
           = \pit(I_0)\cdot\pit(I_1)$ 
(the last equality follows from the fact that $\pit(I_1)=\pit(I_2)=\ldots$ 
is an "idempotent"). 
Finally, from the induction hypothesis (note that $\rank(\cT\subtree{K_2}) < \rank(\cT)$), 
we get $\pit(K_2)=\pit(I_1)^\tau$. We thus conclude that 
$\pit(\alpha) = \big(\pit(I_0)\cdot\pit(I_1)\big)\cdot\big(\pit(I_1)^\tau\big)
              = \pit(I_0)\cdot\pit(I_1)^\tau$.

\AP
If $C$ has no maximal element, then, using standard techniques and 
Ramsey's Theorem (Lemma~\ref{lemma:ramsey}), one can construct an 
$\omega$-"condensation" $J_0<K_1<J_1<K_2<J_2<\ldots$ of $\alpha$ such that:
\begin{itemize}
  \item $\{J_0\cup K_1,J_1\cup K_2,\ldots\}$ is coarser than $\{I_0,I_1,I_2,\ldots\}$,
  \item $\{J_0,K_1\cup J_1,K_2\cup J_2,\ldots\}$ is coarser than $C$,
  \item $\pit(K_1\cup J_1)=\pit(K_2\cup J_2)=\ldots$ is an "idempotent".
\end{itemize}
Let $\pit(C)$ be the "word" with domain $C$ where each position $H\in C$ is labelled 
by the "value" $\pit(H)$. By construction, we have $\pit(\alpha)=\epis(\pit(C))$. 
Moreover, since the "condensation" $\{J_0,K_1\cup J_1,K_2\cup J_2,\ldots\}$ is 
coarser than $C$, by repeatedly applying Lemma~\ref{lemma:pibase-substitution}, 
we obtain 
$\epis(\pit(C)) = \epis\big(\pit(J_0)~\pit(K_1\cup J_1)~\pit(K_2\cup J_2)~\ldots\big) 
               = \pit(J_0)\cdot\pit(K_1\cup J_1)^\tau$.
Similarly, since $\{J_0\cup K_1,J_1\cup K_2,\ldots\}$ is coarser than $\{I_0,I_1,I_2,\ldots\}$ 
and $\pit(I_1)=\pit(I_2)=\ldots$ is an idempotent, we have 
$\pit(J_0\cup K_1) = \pit(I_0)\cdot\pit(I_1)$ and 
$\pit(J_1\cup K_2) = \pit(J_2\cup K_3) = \ldots = \pit(I_1)$.
Thus, by Axioms \refaxiom{axiom:concatenation} and \refaxiom{axiom:omega}, we 
obtain $\pit(J_0)\cdot\pit(K_1\cup J_1)^\tau = \pit(I_0)\cdot\pit(I_1)^\tau$.
\end{proof}

\smallskip
\AP
We can gather all the results seen so far and prove the following corollary
(recall that an ordering is "scattered" if all "dense" "suborderings" of 
it are empty or singletons):

\begin{corollary}\label{cor:equivalence-evaluation-scattered}
Given a "word" $u$ with domain $\alpha$, an "evaluation tree" $\cT=(T,\pit)$ over $u$, 
a "scattered" "condensation" $C$ of $\alpha$, and an "evaluation tree" $\cT'=(T',\pit')$
over the "word" $\pit(C)=\prod_{I\in C}\pit(I)$ with domain $C$, we have $\pit(\alpha)=\pit'(C)$.
\end{corollary}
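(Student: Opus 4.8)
The plan is to prove Corollary~\ref{cor:equivalence-evaluation-scattered} by induction on the rank of a \emph{condensation tree} built on top of the scattered \emph{condensation} $C$. Since $C$ is scattered, Hausdorff's characterization lets us view $C$ itself as being obtained by iterated finite, $\omega$-, and $\omegaop$-\emph{condensations}; equivalently, one can attach to the \emph{linear ordering} $C$ a \emph{condensation tree} $T_C$ all of whose \emph{internal nodes} have a set of \emph{children} that forms a finite, an $\omega$-, or an $\omegaop$-\emph{condensation}. The \emph{rank} of $T_C$ (Lemma~\ref{lem:rank}) will be the induction parameter. The base case, where $C$ is a singleton or finite, is immediate (or handled by Lemma~\ref{lemma:equivalence-evaluation-finite} applied to both $\cT$ and the trivial \emph{evaluation tree} $\cT'$).

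For the inductive step, I would first reduce to the case where the top-level \emph{condensation} of $T_C$ is either finite, $\omega$, or $\omegaop$. Write $C = \sum_{k\in\gamma} C_k$ according to this top-level split, where $\gamma$ is finite, $\omega$, or $\omegaop$, and each $C_k$ is a \emph{convex} subset of $C$ that is scattered with a \emph{condensation tree} of strictly smaller \emph{rank}. Let $J_k = \bigcup_{I\in C_k} I \subseteq \alpha$, so that $\{J_k\}_{k\in\gamma}$ is a \emph{condensation} of $\alpha$ coarser than $C$; by Lemma~\ref{lemma:restriction-evaluation} the \emph{value} $\pit(J_k)$ is well defined, and by the induction hypothesis applied to $\cT\subtree{J_k}$, the \emph{condensation} $C_k$ of $J_k$, and a suitable restriction of $\cT'$, we get that $\pit(J_k)$ equals the \emph{value} of $\cT'$ restricted to $C_k$, call it $\pit'(C_k)$. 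It remains to combine these: on the $\alpha$ side, I would use Lemma~\ref{lemma:equivalence-evaluation-finite} (if $\gamma$ finite) or Lemma~\ref{lemma:equivalence-evaluation-omega} and its $\omegaop$-symmetric version (if $\gamma$ is $\omega$ or $\omegaop$, after extracting via Ramsey, Lemma~\ref{lemma:ramsey}, an \emph{idempotent} subsequence) to conclude $\pit(\alpha) = \pit(J_0)\cdot\pit(J_1)^\tau$ or the analogous product; on the $C$ side, the very same lemmas applied inside $\cT'$ give $\pit'(C) = \pit'(C_0)\cdot \pit'(C_1)^\tau$ with matching \emph{idempotent}. Since the factors agree, the two products agree.

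There is a subtlety in matching the $\omega$-\emph{condensation} subsequences on the two sides: the \emph{idempotent} power of a subsequence extracted by Ramsey on the $\alpha$-side need not a priori be grouped the same way as the one extracted on the $C$-side. I would handle this by first extracting, via Lemma~\ref{lemma:ramsey} applied to the \emph{additive labelling} $(k,k')\mapsto \pit(\bigcup_{k\le i<k'} J_i)$, a common $\omega$-subsequence $0 < k_1 < k_2 < \cdots$ of $\gamma$ with $\pit(\bigcup_{k_1\le i<k_2} J_i) = \pit(\bigcup_{k_2\le i<k_3} J_i) = \cdots$ an \emph{idempotent} $e$; then re-grouping the blocks $C_{[k_j,k_{j+1})} = \sum_{k_j\le i<k_{j+1}} C_i$ as a coarser \emph{condensation} of $C$, the induction hypothesis already tells us that the \emph{value} of $\cT'$ on each such block is the same element $e$, and similarly Lemma~\ref{lemma:pibase-substitution} lets us collapse finitely many initial blocks. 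After this re-grouping both $\cT$ and $\cT'$ see the same $\omega$-split with the same \emph{idempotent}, so Lemma~\ref{lemma:equivalence-evaluation-omega} applies verbatim on each side and the two \emph{values} coincide. The $\omegaop$ case is symmetric using the left-compatible operator $\tauop$, and the only genuine work is bookkeeping the convention $\pit(\emptystr)=\emptystr$ when blocks are trivial, exactly as in the proof of Lemma~\ref{lemma:equivalence-evaluation-finite}. The main obstacle is thus not any single computation but organizing the induction so that the scattered \emph{condensation} $C$ is peeled off one Hausdorff layer at a time in lockstep on both the $\alpha$-side and the $C$-side; once the right \emph{condensation tree} on $C$ is fixed, every step is an instance of a lemma already proved.
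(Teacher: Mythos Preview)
Your argument is essentially correct, but it takes a noticeably longer route than the paper's. You build an auxiliary Hausdorff-style condensation tree $T_C$ on the scattered ordering $C$, induct on its rank, and then invoke Ramsey (Lemma~\ref{lemma:ramsey}) at each $\omega$/$\omegaop$ layer to force an idempotent pattern before applying Lemmas~\ref{lemma:equivalence-evaluation-finite} and~\ref{lemma:equivalence-evaluation-omega} in parallel on the $\alpha$-side and the $C$-side. This works, and your handling of the block-regrouping via Lemma~\ref{lemma:equivalence-evaluation-finite} on both sides (so that the induction hypothesis on the individual $C_k$ suffices and no stronger rank bound is needed for finite unions) is the right patch for the subtlety you flagged.

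The paper's proof is simpler because it inducts directly on $\cT'$ rather than on an external Hausdorff tree. The point you overlooked is that $\cT'$ \emph{already} decomposes $C$: at the root of $\cT'$ the set of children $D$ is, by definition of an evaluation tree, a condensation of $C$ whose associated word $\pit'(D)$ lies in the domain of $\pis$. Since $C$ is scattered, the $\eta$-shuffle case is excluded, so $D$ is automatically finite, $\omega$, or $\omegaop$, and in the $\omega$/$\omegaop$ cases the children values are already constant and idempotent by the very shape of $\dom(\pis)$. Hence no Ramsey extraction is needed, and a single application of Lemma~\ref{lemma:equivalence-evaluation-finite} or Lemma~\ref{lemma:equivalence-evaluation-omega} to the condensation $\bigcup D$ of $\alpha$ finishes the inductive step. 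Your approach buys generality you don't need (it would work even if $\cT'$ were not given), at the cost of reconstructing by hand a decomposition that $\cT'$ hands you for free.
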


\begin{proof}
As a preliminary remark, note that since the "condensation" $C$ is "scattered", 
we have that, for every node $J$ in the "evaluation tree" $\cT'=(T',\pit')$, the 
"condensation" of $J$ induced by $\cT'$ is "scattered" as well. The proof is by 
induction on $\cT'$. 
If $\cT'$ consists of a single "node", then $\pit(C)$ is a singleton "word" of "value"
$\pit(\alpha)$ and hence the statement boils down to $\pit(\alpha)=\pit(\alpha)$.
Otherwise, let $D$ be the "childhood" of the "root" $C$ of $\cT'$.
From the induction hypothesis, we know that for every $J\in D$, 
$\pit'(J)=\pit\big(\bigcup J\big)$, where $\bigcup J$ denotes the union 
of all "convex" subsets of $J$ (recall that $J\subseteq C$). Moreover, 
if we denote by $\bigcup D$ the "condensation" of $\alpha$ obtained from 
the substitution of each element $J\in D$ by $\bigcup J$, we have
$$
  \pit'(C) =~ \epis\left(\prod_{J\in D}\pit'(J)\right) 
           =~ \epis\left(\prod_{J\in D}\pit\big(\bigcup J\big)\right) 
           =~ \epis\Big(\pit\big(\bigcup D\big)\Big).
$$ 
Note that the "condensation" $\bigcup D$ of $\alpha$ has the same "order type"
of the "condensation" $D$ of $C$, namely, it is either a finite "condensation",
an $\omega$-"condensation", or an $\omegaop$-"condensation". Therefore, 
using either Lemma~\ref{lemma:equivalence-evaluation-finite} or 
Lemma~\ref{lemma:equivalence-evaluation-omega} (or its symmetric variant), 
we obtain $\epis\big(\pit(\bigcup D)\big)=\pit(\alpha)$.
\end{proof}

\smallskip
\AP
It remains to consider the case of "dense" "condensations", which give rise to $\eta$-shuffles:

\begin{lemma}\label{lemma:equivalence-evaluation-shuffle}
Given a "word" $u$ with domain $\alpha$, an "evaluation tree" $\cT=(T,\pit)$ over $u$, and a 
"dense" "condensation" $C$ of $\alpha$ such that $\pit(C)=\prod_{I\in C}\pit(I)$ 
is isomorphic to a "word" of the form $a\:P\etapow\:b$, for some elements $a,b\in S\uplus\{\emptystr\}$ 
and some non-empty set $P\subseteq S$, we have $\pit(\alpha)=a\cdot P^\kappa\cdot b$.
\end{lemma}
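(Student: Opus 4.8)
First I would isolate and prove, by induction on the rank of the "evaluation tree", the following reduced statement: \emph{if $\cT=(T,\pit)$ is an "evaluation tree" over a "word" $u$ with "domain" $\alpha$, and $C$ is a "condensation" of $\alpha$ whose quotient has "order type" $\eta$ and with $\pit(C)\cong P\etapow$ for some non-empty $P\subseteq S$, then $\pit(\alpha)=P^\kappa$.} Granting this, the lemma follows by peeling off the extremal classes of $C$: if $\pit(C)\cong a\:P\etapow\:b$ with $a\neq\emptystr$, then $C$ has a least class $I_{\min}$ with $\pit(I_{\min})=a$, the pair $\{I_{\min},\,\alpha\setminus I_{\min}\}$ is a finite "condensation" of $\alpha$, and Lemma~\ref{lemma:equivalence-evaluation-finite} gives $\pit(\alpha)=a\cdot\pit(\alpha\setminus I_{\min})$; cutting off the greatest class symmetrically, applying the reduced statement to the "evaluation tree" $\cT\subtree{\alpha'}$ produced by Lemma~\ref{lemma:restriction-evaluation} on the "convex" part $\alpha'$ left after removing both extremal classes, and using the associativity Axiom~\refaxiom{axiom:concatenation}, one obtains $\pit(\alpha)=a\cdot P^\kappa\cdot b$.

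For the reduced statement I would let $D=\children(\alpha)$ be the top-level "condensation" of $\cT$, so that $\pit(\alpha)=\epis(\pit(D))$ by the very definition of "evaluation tree", and the whole problem becomes that of computing $\epis(\pit(D))$. For each "child" $J\in D$ — which is a "node" of $T$, hence satisfies $\rank(T\subtree{J})<\rank(T)$ by Lemma~\ref{lem:rank} — I would restrict $C$ to $J$ (cf.\ Lemma~\ref{lemma:restriction-condensation}): its quotient is a "convex" subset of the quotient of $C$, hence of "order type" $1$, $\eta$, $1+\eta$, $\eta+1$ or $1+\eta+1$, and all its classes are full $C$-classes except possibly one at each end, where $J$ cuts through a $C$-class $I$ and contributes the value $\pit(I\cap J)$ (well defined by Lemma~\ref{lemma:restriction-evaluation}). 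So $\pit(C\suborder{J})$ is either a single letter of $S$ (when $J$ lies inside one $C$-class) or isomorphic to $a_J\:P\etapow\:b_J$ with $a_J,b_J\in S\uplus\{\emptystr\}$, and in the latter case the induction hypothesis, taken in the general peeled form, yields $\pit(J)=a_J\cdot P^\kappa\cdot b_J$.

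The remaining step, assembling $\epis(\pit(D))$, is the heart of the matter. The point I would exploit is that the stray elements $a_J,b_J$, and more generally the values of "children" lying inside a single $C$-class, are confined to the places where $D$ cuts across a $C$-class, and there they recombine: the pieces that the "children" of the "root" carve out of a fixed $C$-class $I$ form a "condensation" of $I$, so their "product" equals $\pit(I)\in P$ — by Lemmas~\ref{lemma:restriction-pibase}, \ref{lemma:equivalence-evaluation-finite} and \ref{lemma:equivalence-evaluation-omega} (and its symmetric variant) when only finitely many or $\omega$-many pieces are involved, and directly by the definition of "evaluation tree" applied to $\cT\subtree{I}$ when $I$ is shattered into $\eta$-many pieces. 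Regrouping $\pit(D)$ along the $C$-classes in this way (using Lemma~\ref{lemma:pibase-substitution} to preserve the "value") rewrites $\epis(\pit(D))$ as $\epis$ of a "word" over $S$ whose every letter lies in $P\cup\{P^\kappa,\ p\cdot P^\kappa,\ P^\kappa\cdot q,\ p\cdot P^\kappa\cdot q\ :\ p,q\in P\}$ and which — because the quotient of $C$ has no endpoints — carries no bare letter of $P$ at either extremity. A computation with Axioms~\refaxiom{axiom:concatenation}-\refaxiom{axiom:shuffle} then closes the argument: the identities $P^\kappa=P^\kappa\cdot P^\kappa=P^\kappa\cdot c\cdot P^\kappa=(P^\kappa)^\tau=(P^\kappa\cdot c)^\tau=(c\cdot P^\kappa)^\tauop=(P'\cup P'')^\kappa$ of Axiom~\refaxiom{axiom:shuffle} make every finite "product", "$\omega$-power", "$\omegaop$-power" and "$\eta$-shuffle" of such letters collapse to $P^\kappa$, so $\epis(\pit(D))=P^\kappa$.

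I expect the main obstacle to be precisely this last assembly in the sub-case where $D$ is itself "dense", so that $\pit(D)\cong a'\:R\etapow\:b'$ and therefore $\pit(\alpha)=a'\cdot R^\kappa\cdot b'$: one has to check carefully that, after the regrouping, $R\subseteq P\cup\{P^\kappa,\ p\cdot P^\kappa,\ P^\kappa\cdot q,\ p\cdot P^\kappa\cdot q\ :\ p,q\in P\}$ and that $a'=b'=\emptystr$, so that the clause $P^\kappa=(P'\cup P'')^\kappa$ of Axiom~\refaxiom{axiom:shuffle} becomes applicable. The subtle bookkeeping is in tracking the piece-values produced by $C$-classes that $D$ breaks into infinitely many parts and verifying they are always absorbed, all the while keeping the induction well-founded by never restricting $\cT$ below a "child" of its "root".
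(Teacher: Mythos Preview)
Your reduction to the endpoint-free case via Lemma~\ref{lemma:equivalence-evaluation-finite} is fine, and the induction on $\rank(T)$ with the child-by-child computation $\pit(J)=a_J\cdot P^\kappa\cdot b_J$ is correctly set up. The gap is in the ``regrouping'' step.

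The boundary values $a_J$ and $b_J$ are values $\pit(I\cap J)$ of \emph{pieces} of $C$-classes, hence arbitrary elements of $S$, not elements of $P$. Your plan is to recombine them across adjacent $D$-classes so that $b_J\cdot a_{J'}$ collapses to some $\pit(I)\in P$. But this recombination does not fit into the hypothesis of Lemma~\ref{lemma:pibase-substitution}: that lemma relates $\epis(u)$ and $\epis\big(\prod_i\epis(u_i)\big)$ only when \emph{both} values are defined, whereas your regrouped word (for instance $P^\kappa\,p_0\,P^\kappa\,p_1\,\cdots$ in the $\omega$-case, with the $p_i\in P$ varying) is generally outside the domain of $\epis$. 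A direct axiom computation does not rescue the argument either: in the $\omega$-case one must handle $e^\tau$ with $e=a_1\cdot P^\kappa\cdot b_1$, and Axiom~\refaxiom{axiom:omega} only gives $e^\tau=a_1\cdot(P^\kappa\cdot b_1\cdot a_1)^\tau$, where $b_1\cdot a_1$ is a product of pieces from two \emph{different} $C$-classes and has no reason to lie in $P$. In the dense case the obstruction is sharper still: there are no adjacent $D$-classes, so the recombination you describe has no well-defined meaning, and your hoped-for containment $R\subseteq P\cup\{P^\kappa,\ p\cdot P^\kappa,\ldots\}$ simply fails, since the letters of $R$ are the raw values $a_J\cdot P^\kappa\cdot b_J$ with $a_J,b_J\in S$ arbitrary.

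The paper's proof bypasses this difficulty by a different construction. Instead of rewriting $\pit(D)$, it introduces two auxiliary condensations: $C'$, which merges those $C$-classes that lie inside a common $D$-class, and $E$, the finest condensation coarser than both $C$ and $D$. The crucial combinatorial lemma (Property~3 in the proof) is that inside each $E$-class $K$ the condensation $C'\suborder{K}$ is \emph{locally finite}: between any two of its classes lie only finitely many. This makes $C'\suborder{K}$ scattered, so Corollary~\ref{cor:equivalence-evaluation-scattered}---the already-established scattered case---computes $\pit(K)$ from an evaluation tree over $\pit(C'\suborder{K})$, and Corollary~\ref{cor:evaluation-pibase} then identifies the result with $\epis(\pit(C\suborder{K}))$. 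A final evaluation tree over $\pit(E)$ assembles the values $\pit(K)$; since $E$ is coarser than $D$, this tree also computes $\pit(\alpha)$. The idea you are missing is precisely this passage through $E$ together with the local-finiteness of $C'\suborder{K}$: it is what converts the intractable dense interaction between $C$ and $D$ into the scattered situation already under control.
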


\begin{proof}
We remark here that the proof works for any "condensation" $C$, independently of the 
form of the "word" $\pit(C)$. However, the use of the following technical arguments 
does only make sense when $C$ is a "dense" "condensation". We prove the lemma by 
induction on $\cT$. As in the proof of Lemma~\ref{lemma:equivalence-evaluation-omega}, 
the case of $\cT$ consisting of a single "node" cannot happen. Let $D=\children(\alpha)$ 
be the top-level "condensation" and let $E$ be the finest "condensation" that is coarser 
than or equal to both $C$ and $D$ (note that $E$ exists since "condensations" form a 
lattice structure with respect to the `coarser than' relation). Moreover, let $\sim$ 
be the "condensation" over the "condensed ordering" $C$ such that, for every $I,I'\in C$, 
$I\sim I'$ holds iff either $I=I'$ or there is $J\in D$ with $I\subseteq J$ and 
$I'\subseteq J$. This can naturally be seen as a "condensation" $C'$ over $\alpha$ 
which is at least as coarse as $C$: the classes of $C'$ are either the single 
classes of $C$ that are not contained in any class of $D$, or the unions of the 
classes of $C$ that are contained in the same class of $D$. Furthermore, it is 
easy to see that $E$ is at least as coarse as $C'$. Below, we disclose further
properties of the "condensations" $C$, $D$, $E$, and $C'$.

\AP
Let us consider a class $I\in C'$. Two cases can happen: either $I$ is included 
in some $J\in D$, and in this case $\pit(I)=\epis\big(\pit(C\suborder{I})\big)$ holds thanks
to the induction hypothesis, or $I$ belongs to $C$, and hence $\pit(I)=\epis\big(\pit(C\suborder{I})\big)$
follows trivially. We have just proved that
\begin{equation}\label{eq:shuffle-property1}
  \forall I\in C' \quad \pit(I) \;=\; \epis\big(\pit(C\suborder{I})\big).
\end{equation}

\AP
Now, let $I,I'$ be two distinct classes in $C'$. We claim that there exist $x\in I$ and $x'\in I'$ 
that are not equivalent for $D$, namely,
\begin{equation}\label{eq:shuffle-property2}
  \exists x\in I  \exists x'\in I'  \forall J\in D \quad x\nin J \vel x'\nin J.
\end{equation}
The proof of this property is by case distinction. If $I$ is contained in some $J\in D$ and $I'$ 
is contained in some $J'\in D$, then we necessarily have $J\neq J'$ (otherwise, we would 
have $I=I'$ by definition of $C'$) and hence Property~\refeq{eq:shuffle-property2} holds.
Otherwise, either $I$ is not contained in any class $J\in D$, or $I'$ is not contained
in any class $J\in D$. Without loss of generality, we assume that $I$ is not contained 
in any class $J\in D$. This means that there exists $J\in D$ such that $I\cap J\neq\emptyset$ 
and $I\setminus J\neq\emptyset$. Let us pick some $x'\in I'$. Clearly, $x'$ belongs to 
some $J'\in D$. Then either $J\cap J'=\emptyset$ or $J=J'$. In the first case, one 
chooses $x\in I\cap J$, while in the second case one chooses $x\in I\setminus J$.
This completes the proof of Property~\refeq{eq:shuffle-property2}.

\AP
From the above property, we can deduce the following:
\begin{equation}\label{eq:shuffle-property3}
\begin{array}{l}
  \text{\emph{If $I,I'\in C'$, $I<I'$, and $I,I'\subseteq K$ for some $K\in E$, then}} \\
  \text{\emph{there are only finitely many classes $I''\in C'$ between $I$ and $I'$.}}
\end{array}
\end{equation}
Indeed, suppose that the above property does not hold, namely, that there are
infinitely many classes $I''\in C'$ between $I$ and $I'$. In particular, we can 
find an $\omega$-sequence of classes $I_1,I_2,\ldots$ such that $I=I_1<I_2<\ldots<I'$ 
or $I<\ldots<I_2<I_1=I'$. We only consider the first case (the second case is symmetric). 
By applying Property~\refeq{eq:shuffle-property2} to the classes $I_1,I_2,\ldots$,
we can find some points $x_1\in I_1$, $x'_1\in I_2$, $x_2\in I_3$, $x'_2\in I_4$, \dots
such that, for all $i\ge1$, $x_i$ and $x'_i$ are not equivalent for $D$
(i.e., for all $J\in D$, $x_i\nin J$ or $x'_i\nin J$). Let $X$ be the set of
all points $x\in\alpha$, with $x<I_i$ for some $i\ge1$, and let $X'$ be 
the set of all points $x'\in\alpha$, with $x'>I_j$ for all $j\ge1$.
Since $D$ is a "condensation", we have that for all $x\in X$ and all $x'\in X'$,
$x$ and $x'$ are not equivalent for $D$. Moreover, by construction, all such 
points $x$ and $x'$ are not equivalent for $C'$, and hence neither for $C$
(recall that $C$ is finer than $C'$). Since $E$ is the defined as the finest 
"condensation" that is coarser than or equal to both $C$ and $D$ and since
$X\cup X'=\alpha$, it follows that there is no class $K\in E$ that intersects 
both $X$ and $X'$. In particular, since $I\subseteq X$ and $I'\subseteq X'$, 
it follows that there is no class $K\in E$ such that $I\subseteq K$ and 
$I'\subseteq K$, which is a contradiction. This completes the proof of 
Property~\refeq{eq:shuffle-property3}.

\AP
We prove the following last property:
\begin{equation}\label{eq:shuffle-property4}
  \forall K\in E \quad \pit(K) \:=\: \epis\big(\pit(C\suborder{K})\big).
\end{equation}
Let $K\in E$ and let $\cT'=({T',\pit'})$ be an "evaluation tree" over the "word" $\pit(C'\suborder{K})$ 
(such a tree exists according to Proposition~\ref{prop:existence-evaluation}). 
From Property~\refeq{eq:shuffle-property3} we know that the "condensation" of $C'\suborder{K}$ 
induced by the "evaluation tree" $\cT'$ is "scattered".
We can thus apply Corollary~\ref{cor:equivalence-evaluation-scattered} and obtain $\pit(K)=\pit'(C'\suborder{K})$.
Moreover, the value $\epis\big(\pit(C'\suborder{K})\big)$ is defined and hence, by 
Corollary~\ref{cor:evaluation-pibase}, $\pit'(C'\suborder{K})=\epis\big(\pit(C'\suborder{K})\big)$. 
By Property~\refeq{eq:shuffle-property1}, we obtain
$\pit(C'\suborder{K}) = \prod_{I\in C'\suborder{K}}\pit(I) 
             = \prod_{I\in C'\suborder{K}}\epis\big(\pit(C\suborder{I})\big)$.
Finally, from the properties of "condensation trees", we derive
$\epis\big(\pit(C'\suborder{K})\big) = \epis\big(\prod_{I\in C'\suborder{K}}\epis(\pit(C\suborder{I}))\big)
                                 = \pit(C\suborder{K})$.
This completes the proof of Property~\refeq{eq:shuffle-property4}.

\AP
Towards a conclusion, we consider an "evaluation tree" $\cT''=({T'',\pit''})$ over the 
"word" $\pit(E)$ (such a tree exists thanks to Proposition~\ref{prop:existence-evaluation}).
From Property~\refeq{eq:shuffle-property4} we know 
that $\pit(E) = \prod_{K\in E}\pit(K) = \prod_{K\in E}\epis\big(\pit(C\suborder{K})\big)$. 
Moreover, By Corollary~\ref{cor:evaluation-pibase}, we know that $\epis\big(\pit(C\suborder{K})\big)=\pit''(K)$ 
and hence $\prod_{K\in E}\epis\big(\pit(C\suborder{K})\big) = \pit''(E)$. 
Similarly, since $E$ is at least as coarse as $D$, Corollary~\ref{cor:evaluation-pibase} 
implies $\pit''(E)=\epis\big(\pit(D)\big)=\pit(\alpha)$. This completes the proof of the lemma.
\end{proof}

\smallskip
\begin{corollary}\label{cor:equivalence-evaluation}
Given a "word" $u$ with domain $\alpha$, an "evaluation tree" $\cT=({T,\pit})$ over $u$, 
a "condensation" $C$ of $\alpha$, and an "evaluation tree" $\cT'=({T',\pit'})$ over 
the "word" $\pit(C)=\prod_{I\in C}\pit(I)$ with domain $C$, we have $\pit(\alpha)=\pit'(C)$.
\end{corollary}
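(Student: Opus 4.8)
The plan is to prove the equality by induction on the "rank" of the "condensation tree" underlying $\cT'$, essentially reproducing the proof of Corollary~\ref{cor:equivalence-evaluation-scattered} but now allowing the top-level "condensation" of $\cT'$ to be "dense"; the extra ingredient needed for this case is Lemma~\ref{lemma:equivalence-evaluation-shuffle}. In the base case $\cT'$ is a single "node", $C$ is the trivial "condensation" $\{\alpha\}$, the "word" $\pit(C)$ is a single letter carrying the "value" $\pit(\alpha)$, and $\pit'(C)=\pit(\alpha)$ holds by definition of "evaluation tree".

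For the inductive step, let $D=\children_{\cT'}(C)$ be the top-level "condensation" of $\cT'$, viewed as a "condensation" of the domain $C$ of $\pit(C)$. For each class $J\in D$ write $\bigcup J=\bigcup_{I\in J}I$ for the corresponding "convex" subset of $\alpha$, and let $C\suborder{J}$ be the "condensation" of $\bigcup J$ whose classes are exactly the $I\in J$. By Lemma~\ref{lemma:restriction-evaluation} there is an "evaluation tree" $\cT\subtree{\bigcup J}$ over the "factor" $u\suborder{\bigcup J}$ whose "values" agree with those of $\cT$, so that $\pit(C)\suborder{J}=\prod_{I\in J}\pit(I)$ is exactly the "word" $\pit(C\suborder{J})$; moreover $\cT'\subtree{J}$ is an "evaluation tree" over this "word", of "rank" strictly below $\rank(\cT')$ by Lemmas~\ref{lem:rank} and~\ref{lemma:restriction-condensation}. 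Applying the induction hypothesis to $u\suborder{\bigcup J}$, $\cT\subtree{\bigcup J}$, the "condensation" $C\suborder{J}$ and $\cT'\subtree{J}$ then gives $\pit(\bigcup J)=\pit'(J)$ for every $J\in D$. Writing $\bigcup D$ for the "condensation" of $\alpha$ obtained by replacing each $J\in D$ with $\bigcup J$, this yields
\[
  \pit'(C)=\epis\Big(\prod_{J\in D}\pit'(J)\Big)=\epis\Big(\prod_{J\in D}\pit(\bigcup J)\Big)=\epis\big(\pit(\bigcup D)\big),
\]
so it remains to show $\epis(\pit(\bigcup D))=\pit(\alpha)$. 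Since $D$ is the top-level "condensation" of an "evaluation tree", the "word" $\prod_{J\in D}\pit'(J)$, that is $\pit(\bigcup D)$, lies in the domain of $\epis$, so $\bigcup D$ has "order type" finite, $\omega$, $\omegaop$, or one of $\eta,1+\eta,\eta+1,1+\eta+1$, with $\pit(\bigcup D)$ correspondingly of the form $a_1\ldots a_n$, $a\,b\omegapow$, $a\omegaoppow\,b$, or $a\,P\etapow\,b$. In the first three cases $\bigcup D$ is a "scattered" "condensation" of $\alpha$: choosing an "evaluation tree" over $\pit(\bigcup D)$ — it exists by Proposition~\ref{prop:existence-evaluation} and has "value" $\epis(\pit(\bigcup D))$ by Corollary~\ref{cor:evaluation-pibase} — and invoking Corollary~\ref{cor:equivalence-evaluation-scattered} gives $\pit(\alpha)=\epis(\pit(\bigcup D))$ (equivalently, one applies Lemma~\ref{lemma:equivalence-evaluation-finite} directly, or Lemma~\ref{lemma:equivalence-evaluation-omega} after grouping consecutive classes so that the repeated "value" becomes "idempotent"). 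In the last case, Lemma~\ref{lemma:equivalence-evaluation-shuffle} applied to $\cT$ and the "condensation" $\bigcup D$ gives $\pit(\alpha)=a\cdot P^\kappa\cdot b=\epis(\pit(\bigcup D))$, which closes the induction.

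The induction itself is short; the substance lies in the preparatory machinery it invokes, which is why the statement is only a corollary: the restriction of an "evaluation tree" to a "generalized subtree" with preservation of "values" (Lemma~\ref{lemma:restriction-evaluation}), the well-founded "rank" making the induction legitimate (Lemmas~\ref{lem:rank} and~\ref{lemma:restriction-condensation}), and, above all, the "dense" case Lemma~\ref{lemma:equivalence-evaluation-shuffle}, whose proof is where Axiom~\refaxiom{axiom:shuffle} is genuinely used. The only local point needing care is the bookkeeping identifying $\pit(\bigcup D)$ with $\prod_{J\in D}\pit'(J)$ and certifying its "order type" and shape, which is immediate since $D$ is the top-level "condensation" of the "evaluation tree" $\cT'$.
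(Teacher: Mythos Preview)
Your proof is correct and follows essentially the same approach as the paper's: an induction on $\cT'$ that reproduces the argument of Corollary~\ref{cor:equivalence-evaluation-scattered}, adding Lemma~\ref{lemma:equivalence-evaluation-shuffle} to handle the case where the top-level condensation $D$ is dense. The only cosmetic difference is that for the scattered order types of $D$ you invoke Corollary~\ref{cor:equivalence-evaluation-scattered} (via an auxiliary evaluation tree and Corollary~\ref{cor:evaluation-pibase}) rather than appealing directly to Lemmas~\ref{lemma:equivalence-evaluation-finite} and~\ref{lemma:equivalence-evaluation-omega}, but the content is identical.
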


\begin{proof}
The proof is exactly the same as for Corollary~\ref{cor:equivalence-evaluation-scattered}, 
with the only difference that we do not use the assumption that the "condensation" $C$ 
is "scattered" and we use Lemma~\ref{lemma:equivalence-evaluation-shuffle} for treating 
the "nodes" $I'$ of $\cT'$ for which the "condensation" $\children(I')$ is "dense".
\end{proof}

\medskip
\AP
Finally, Proposition \ref{prop:equivalence-evaluation} follows easily from the previous corollary.

\begin{proof}[Proof of Proposition \ref{prop:equivalence-evaluation}]
Let $\cT=(T,\pit)$ and $\cT'=(T',\pit')$ be two "evaluation trees" over the same "word" $u$ 
with domain $\alpha$ and let $C$ be the finest "condensation" of $\alpha$, whose classes
are the singleton sets. Clearly, the "evaluation tree" $\cT'$ is isomorphic to an 
"evaluation tree" $\cT''=(T'',\pit'')$ over the "word" $\pit(C)=\prod_{I\in C}\pit(I)$ 
with domain $C$. Using Corollary \ref{cor:equivalence-evaluation} we immediately obtain that 
$\pit(\alpha)=\pit''(C)=\pit'(\alpha)$.
\end{proof}
  
\smallskip
\section{From monadic second-order logic 
         to \texorpdfstring{$\ostar$-algebras}{algebras}}\label{sec:logic-to-algebra}

Let us recall that ""monadic second-order"" (MSO) logic is the extension of first-order 
logic with set quantifiers. We assume the reader to have some familiarity with this 
logic, as well as with the technique used by B\"uchi to translate "MSO" formulas into 
equivalent automata. A good survey can be found in~\cite{languages_automata_logic}. 

\AP
Here, we show a relatively direct consequence of the results obtained in the previous
section, namely, that "MSO" formulas can be effectively translated to "$\countable$-algebras":

\begin{theorem}\label{theorem:mso-to-algebra}
The "MSO" definable $\countable$-languages are effectively "recognizable".
\end{theorem}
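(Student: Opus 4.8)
The plan is to argue by structural induction on MSO formulas, in the spirit of Büchi's translation of logic to automata. I would encode free variables into the alphabet: for a formula $\varphi$ with free set variables $X_1,\dots,X_m$ over an alphabet $A$ (first-order variables being handled as set variables subject to a ``singleton'' side condition), I associate the language $L_\varphi\subseteq(A\times\{0,1\}^m)^\countable$ of all well-formed encodings of valuations satisfying $\varphi$, which for a sentence specialises to a language over $A$. Since ``recognizable'' by a finite $\countable$-monoid coincides with ``recognizable'' by a finite $\countable$-algebra (Corollary~\ref{cor:algebra-to-monoid}), and the latter are genuinely finite objects, all of the following is to be carried out effectively on the algebra presentations. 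The induction then reduces to three ingredients: \textbf{(i)} the well-formedness constraints and the languages of atomic formulas are recognizable; \textbf{(ii)} recognizable languages are effectively closed under union, intersection, complement, and inverse images under letter-to-letter maps; \textbf{(iii)} recognizable languages are effectively closed under the projection $\pi_B\colon(B\times C)^\countable\to B^\countable$ erasing one alphabet component. Given (i)--(iii) the induction is routine: conjunction uses intersection, negation uses complement relative to the (recognizable) set of well-formed encodings, merging subformulas with distinct free variables uses cylindrification (an inverse image under a letter-to-letter map), and existential quantification uses projection.

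Ingredient (ii) is the easy part. An inverse image under a letter-to-letter map is obtained by precomposing a recognizing morphism with that map, which again yields a morphism into the \emph{same} finite $\countable$-monoid. For the Boolean operations I would take the direct product $M_1\times M_2$ with componentwise product: generalized associativity, being an equality, holds componentwise, so the product is again a finite $\countable$-monoid, and the pairing of two recognizing morphisms recognizes every Boolean combination; this is effective since the derived operators $\cdot,\tau,\tauop,\kappa$ on the product are componentwise. For (i), each atomic predicate ($x<y$, $x\in X_i$, $X_i\subseteq X_j$, $P_a(x)$) and each well-formedness constraint (``the $i$-th marking is a singleton'') is recognized by an explicit finite $\countable$-algebra with a bounded number of elements, obtained by writing it down and checking Axioms~\refaxiom{axiom:concatenation}--\refaxiom{axiom:identity}; this is routine bookkeeping.

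The heart of the argument is (iii). Given a morphism $h\colon(A^\countable,\prod)\to(M,\pi_M)$ into a finite $\countable$-monoid with $L=h^{-1}(F)$ and $A=B\times C$, I would equip $\mathcal{P}(M)$ with the product $\Pi\big((X_i)_{i\in\alpha}\big)=\{\pi_M((x_i)_{i\in\alpha})\mid x_i\in X_i\text{ for all }i\}$. Using the generalized associativity of $\pi_M$ one checks that $\Pi$ is again a generalized product — both sides of the associativity law unfold to the set of $\pi_M$-values obtained by all position-wise selections — and that $\{1\}$ is its neutral element, so $(\mathcal{P}(M),\Pi)$ is a finite $\countable$-monoid. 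Defining $g\colon(B^\countable,\prod)\to(\mathcal{P}(M),\Pi)$ on letters by $g(b)=\{h(b,c)\mid c\in C\}$ and factoring an arbitrary word over $B\times C$ along the factorisation of its $B$-projection, one verifies that the induced morphism satisfies $g(v)=\{h(w)\mid w\in(B\times C)^\countable,\ w|_B=v\}$. Hence $\pi_B(L)=g^{-1}\big(\{X\in\mathcal{P}(M)\mid X\cap F\neq\emptyset\}\big)$, so $\pi_B(L)$ is recognizable.

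The step I expect to be the real obstacle is the \emph{effectiveness} of (iii): although $(\mathcal{P}(M),\Pi)$ induces a finite $\countable$-algebra by Lemma~\ref{lemma:semigroup-to-algebra}, its product is a priori infinitary, so one must actually \emph{compute} the operators $\cdot,\tau,\tauop,\kappa$ on $\mathcal{P}(M)$ from those on $M$ in order to pass it to the next induction step. Here $\cdot$ is immediate ($X\cdot Y=\{x\cdot y\mid x\in X,\,y\in Y\}$); for $\tau$ one notes that $X^\tau$ collects the $\pi_M$-values of all $\omega$-words over $X$, and applying Ramsey's theorem (Lemma~\ref{lemma:ramsey}) to the finite semigroup $(M,\cdot)$ yields the finite, computable description $X^\tau=\{\,s\cdot e^\tau\mid s,e\text{ are values of non-empty finite words over }X,\ e\cdot e=e\,\}$ (and symmetrically for $\tauop$); for $\kappa$ one must analyse the $\pi_M$-values of $\eta$-typed words whose letters are selected from the sets of a shuffle pattern, using Lemma~\ref{lemma:densetoperfectshuffle} together with Axiom~\refaxiom{axiom:shuffle}, and this last computation is the delicate one. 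Once the finite $\countable$-algebra on $\mathcal{P}(M)$ is in hand, the morphism $g$ (fixed by its values on letters) and the accepting set are obtained directly, and the induction closes effectively.
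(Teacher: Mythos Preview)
Your proposal is correct and follows essentially the same approach as the paper: a B\"uchi-style induction using the product $\countable$-monoid for Boolean closure and the powerset $\countable$-monoid $(\mathcal{P}(M),\tilde\pi)$ for projection, with effectiveness of the induced powerset operators correctly identified as the crux. The paper's treatment differs only in how it computes $\tilde\kappa$: rather than a direct Ramsey/Shelah-style analysis, it reduces the computation of $P^{\tilde\kappa}$ to the saturation procedure of Theorem~\ref{thm:emptiness} applied to the product of $(M,1,\cdot,\tau,\tauop,\kappa)$ with a small $\countable$-algebra recognizing the singleton language $\{P^\eta\}$.
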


\noindent
Before turning to the proof of the above result, let us remark that we could 
have equally well used the composition method of Shelah for establishing 
Theorem~\ref{theorem:mso-to-algebra}. Indeed, given any "MSO" sentence $\psi$, 
one can construct effectively a "$\countable$-algebra" "recognizing" the 
"language" defined by $\psi$ \cite{composition_method_shelah}.

\AP
Our proof of Theorem~\ref{theorem:mso-to-algebra} follows B\"uchi's approach,
namely, we establish a number of closure properties for "recognizable"
$\countable$-languages. Then, each construction of the logic will be translated into 
an operation on languages. To disjunction corresponds union, to conjunction corresponds 
intersection, to negation corresponds complementation, etc. We assume the reader to be 
familiar with this approach (in particular the coding of the valuations of free variables).

\AP
The $\countable$-languages corresponding to the atomic predicates are easily shown to be 
"recognizable". Similarly, the language operations of intersection, union, and 
complementation can be implemented easily by means of classical algebraic operations:

\begin{lemma}\label{lemma:closure-under-boolean-operations}
The "recognizable" $\countable$-languages are effectively closed under intersection,
union, and complementation.
\end{lemma}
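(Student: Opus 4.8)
The statement is a routine "closure under Boolean operations" lemma for an algebraic notion of recognizability, so the proof should mirror the classical argument for finite monoids, being careful only about the fact that our algebras are presented by finite operator tables rather than finite product tables. Throughout, recall that a $\countable$-language $L\subseteq A^\countable$ is recognizable iff there is a morphism $h$ from the free $\countable$-monoid $(A^\countable,\prod)$ to a finite $\countable$-monoid $(M,\pi)$ with $L=h^{-1}(F)$ for some $F\subseteq M$; and by Corollary~\ref{cor:algebra-to-monoid}, giving a finite $\countable$-algebra is the same as giving the finite $\countable$-monoid it induces, so we may freely pass between the two presentations, and everything stays effective because the algebra side is a finite object.

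\emph{Complementation} is immediate: if $L=h^{-1}(F)$ then $A^\countable\setminus L=h^{-1}(M\setminus F)$ for the same morphism $h$, so the complement is recognized by the same finite $\countable$-algebra with the accepting set replaced by its complement. This step needs no computation at all.

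\emph{Intersection and union} go through the standard product construction. Suppose $L_1\subseteq A^\countable$ is recognized by a morphism $h_1$ into a finite $\countable$-monoid $(M_1,\pi_1)$ with accepting set $F_1$, and similarly $L_2$ via $h_2$ into $(M_2,\pi_2)$ with $F_2$. I would equip the finite set $M_1\times M_2$ with the product $\pi$ defined componentwise, namely $\pi((w_i)_{i\in\alpha}) = \big(\pi_1((\text{first components})_{i\in\alpha}),\,\pi_2((\text{second components})_{i\in\alpha})\big)$, and check that $(M_1\times M_2,\pi)$ is again a $\countable$-monoid: the axiom $\pi(a)=a$ and generalized associativity both hold because they hold in each coordinate, and the neutral element is $(1_{M_1},1_{M_2})$. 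The map $h\eqdef (h_1,h_2):A^\countable\to M_1\times M_2$ is then a morphism of $\countable$-monoids (again because this is checked coordinatewise), and $L_1\cap L_2 = h^{-1}(F_1\times F_2)$ while $L_1\cup L_2 = h^{-1}\big((F_1\times M_2)\cup(M_1\times F_2)\big)$. Effectiveness: from the two given finite $\countable$-algebras one computes the induced finite $\countable$-monoids (Corollary~\ref{cor:algebra-to-monoid}), forms the product, and then reads off the induced operators $\cdot,\tau,\tauop,\kappa$ of the product algebra --- these are just the componentwise operators, so no search is involved.

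\emph{Where the only subtlety lies.} There is essentially no hard part here; the one place that deserves a word is making the construction algebraic rather than semigroup-theoretic, i.e.\ verifying that the componentwise operations on $M_1\times M_2$ are precisely the operators induced by the product $\pi$, so that the output is genuinely a finite $\countable$-algebra and the whole procedure stays effective. Since $\cdot,\tau,\tauop,\kappa$ are each defined by applying $\pi$ to a specific shape of word ($ab$, $a^\omega$, $a^{\omegaop}$, $P^\eta$) and $\pi$ acts coordinatewise, the induced operators on the product are exactly $(\cdot_1,\cdot_2)$, $(\tau_1,\tau_2)$, etc., which is a one-line check per operator. I would state this and leave the per-axiom verification to the reader, exactly in the style of the surrounding lemmas.
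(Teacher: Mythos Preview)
Your proposal is correct and follows essentially the same approach as the paper: complementation by switching the accepting set, intersection and union via the direct product $\countable$-monoid $(M_1\times M_2,\pi_1\times\pi_2)$ with the paired morphism $(h_1,h_2)$, and effectiveness by observing that the induced operators on the product are the componentwise ones. The only minor gloss worth making explicit is that for the shuffle operator one has $P^\kappa = \big((\mathrm{proj}_1 P)^{\kappa_1},\,(\mathrm{proj}_2 P)^{\kappa_2}\big)$ for $P\subseteq M_1\times M_2$, since projecting an $\eta$-shuffle of pairs to either coordinate yields an $\eta$-shuffle of the projected set; the paper leaves this implicit as well.
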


\begin{proof}
Given two "$\countable$-monoids" $(M_1,\pi_1)$ and $(M_2,\pi_2)$
"recognizing" the languages $L_1=h_1^{-1}(F_1)$ and $L_2=h_2^{-1}(F_2)$, 
respectively, with $F_1\subseteq M_1$, $F_2\subseteq M_2$, and $h_1$ and $h_2$
"morphisms" to $(M_1,\pi_1)$ and $(M_2,\pi_2)$, respectively, we have that 
$A^\countable\setminus L_1 = h_1^{-1}(M_1\setminus F_1)$,
$L_1\cap L_2 = (h_1\times h_2)^{-1}(F_1\times F_2)$, and
$L_1\cup L_2 = (h_1\times h_2)^{-1}\big((M_1\times M_2) \setminus (F_1\times F_2)\big)$.
In particular, the complement of $L_1$ is recognized by $(M_1,\pi_1)$, while the
union and the intersection of $L_1$ and $L_2$ are recognized by the product
"$\countable$-monoid" $(M_1\times M_2,\pi_1\times\pi_2)$.
Moreover, the latter product can be easily implemented at the level 
of "$\countable$-algebras": the operators of a "$\countable$-algebra" that 
corresponds to $(M_1\times M_2,\pi_1\times\pi_2)$ can be obtained by applying 
component-wise the operators of some "$\countable$-algebras" that correspond to 
$(M_1,\pi_1)$ and $(M_2,\pi_2)$.
\end{proof}

\smallskip
\AP
What remains to be proved is the closure under projection.
Formally, given a language~$L$ over some alphabet~$A$, and a mapping~$f$ from~$A$ 
to another alphabet~$B$, the ""projection"" of~$L$ via~$f$ is the language~$f(L)$, 
where $f$ is extended in a pointwise manner to "words" and "languages".
The logical operation of existential quantification corresponds, at the level of 
the defined languages, to a "projection". Hence, it remains to prove the following:

\begin{lemma}\label{lemma:closure-under-projection}
The "recognizable" $\countable$-languages are effectively closed under "projections".
\end{lemma}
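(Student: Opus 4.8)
The plan is to mimic the classical argument for regular languages of finite and $\omega$-words, but carried out at the level of finite "$\countable$-algebras" via the correspondence with "$\countable$-monoids" established in Corollary~\ref{cor:algebra-to-monoid}. Suppose $L\subseteq A^\countable$ is "recognized" by a "morphism" $h\colon (A^\countable,\prod)\then (M,\pi)$, with $L=h^{-1}(F)$, and let $f\colon A\then B$. The target will be the "$\countable$-monoid" $(N,\rho)$ whose support is $N=\sP(M)$, the powerset of $M$, and whose "product" is defined on a "word" $(X_i)_{i\in\alpha}\in N^\countable$ by $\rho\big((X_i)_{i}\big)=\{\,\pi\big((m_i)_{i}\big)\;\mid\;(m_i)_{i}\in M^\countable,\ m_i\in X_i\text{ for all }i\,\}$, i.e.\ the set of all values obtainable by choosing one element out of each $X_i$. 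One then defines $g\colon B\then N$ by $g(b)=\{\,h(a)\;\mid\;a\in A,\ f(a)=b\,\}$, extends $g$ to a "morphism" $\bar g\colon (B^\countable,\prod)\then (N,\rho)$, and sets $G=\{X\in N\;\mid\;X\cap F\neq\emptyset\}$. The claim is that $f(L)=\bar g^{-1}(G)$: a "word" $v\in B^\countable$ is in the projection iff some $u\in A^\countable$ with $f(u)=v$ has $h(u)\in F$, and by construction $\bar g(v)$ is precisely the set of all such values $h(u)$ ranging over preimages $u$ of $v$.

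The steps, in order, are: (1) verify that $\rho$ as defined above is genuinely a "generalized product" over $\sP(M)$, i.e.\ that $\rho(X)=X$ for singleton-index words and that "generalized associativity" holds — the latter amounts to observing that a choice function over $\prod_i\rho(u_i)$ can be "unrolled" into a choice function over $\prod_i u_i$ and vice versa, using that $\pi$ itself satisfies "generalized associativity"; (2) check that $\bar g$ is well defined as a "morphism", which is immediate from the definition of $\rho$ applied to words in the image of a pointwise extension; (3) prove the set equality $f(L)=\bar g^{-1}(G)$, which is the routine "preimage-tracking" argument sketched above; (4) finally — and this is where the \emph{finiteness} and \emph{effectiveness} enter — observe that $N=\sP(M)$ is finite, so $(N,\rho)$ corresponds to a finite "$\countable$-algebra" by Lemma~\ref{lemma:semigroup-to-algebra}, and moreover the four algebraic operators $\cdot,\tau,\tauop,\kappa$ on $N$ can be computed effectively from those on $M$: the binary product and $\tau,\tauop$ on a set $X$ are obtained by taking, over all sequences of choices from $X$, the corresponding iterated operations in $M$ (only finitely many distinct elements of $M$ arise, so these are finite computations using saturation as in Theorem~\ref{thm:emptiness}), and $\kappa$ on a family $P\subseteq\sP(M)$ is computed by taking $\eta$-shuffles in $M$ over all "$\eta$-shuffles of letters" drawn from $\bigcup P$ together with the appropriate closure.

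The main obstacle I anticipate is step~(4), specifically showing that the operators of the powerset "$\countable$-algebra" are \emph{effectively computable} from those of $(M,\cdot,\tau,\tauop,\kappa)$ rather than merely from the infinitary product $\rho$. The subtlety is that $\rho$ applied to, say, $X^\omega$ (the $\omega$-power of a single set $X\in\sP(M)$) ranges over $\pi\big((m_i)_{i\in\omega}\big)$ for \emph{all} sequences with $m_i\in X$, and computing this set of $\omega$-values from the finite data $(M,\cdot,\tau)$ requires a Ramsey-type saturation: one must argue that every such $\omega$-value is captured by the values reachable through finitely many applications of $\cdot$ and $\tau$ to elements of $X$, which is exactly the kind of argument underpinning Theorem~\ref{thm:emptiness}. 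One clean way to organize this is to reuse Proposition~\ref{prop:existence-evaluation} and Corollary~\ref{cor:equivalence-evaluation}: the $\rho$-value of any "word" over $\sP(M)$ is witnessed by an "evaluation tree", and the set of values reachable by such trees is generated (in the sense of the "generated" set of Theorem~\ref{thm:emptiness}) by the image of the letters under the powerset operators, which is finite and computable. Once this is in place, the rest of the lemma is bookkeeping.
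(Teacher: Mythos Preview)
Your proposal is correct and follows essentially the same approach as the paper: the powerset "$\countable$-monoid" $(\sP(M),\tilde\pi)$ with $\tilde\pi(U)=\{\pi(u)\mid u\in U\}$, the morphism $b\mapsto h(f^{-1}(b))$, and for effectiveness the saturation argument from Theorem~\ref{thm:emptiness}. The paper's treatment of the effectiveness of $\tilde\kappa$ is slightly more concrete---it takes the product of the given "$\countable$-algebra" with one recognizing the singleton language $\{P^\eta\}$ before saturating---but this is exactly the kind of relativized saturation you gesture at, and your identification of step~(4) as the only nontrivial point is spot on.
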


\begin{proof}
We first describe the construction for a given "$\countable$-monoid" $(M,\pi)$,
and then show how to adapt the construction at the level of $\countable$-algebras.
The "projection" is implemented, as usual, by a powerset construction, namely, 
by providing the definition of a "generalized product" over~$\sP(M)$. 
Given two words~$u$ and~$U$ over~$M$ and~$\sP(M)$, respectively, we 
write~$u\in U$ if $\dom(u)=\dom(U)$ and $u(x)\in U(x)$ for all~$x\in\dom(U)$. 
We then define the mapping~$\tilde\pi$ from~$\sP(M)^\countable$ to~$\sP(M)$
by letting
\begin{align*}
  \tilde\pi(U) ~\eqdef~ \big\{\pi(u) \:\big\mid\: u\in  U\big\} 
  \qquad 
  \text{for all~$U\in\sP(M)^\countable$} \ .
\end{align*}
Let us show that~$\tilde\pi$ is "associative". 
Consider a word~$U$ over~$\sP(M)$ and a "condensation" $\sim$ of its domain. 
Then, 
$$
\begin{array}{rl}
  \tilde\pi(U) &=~ \big\{\pi(u) \:\big\mid\: u\in U\big\} \\[1ex]
               &=~ \big\{\pi\big(\prod\nolimits_{I\in\alpha\quotient{\sim}} \pi(u\suborder{I})\big)
                         ~\big\mid~ u\in U\big\} \\[1ex]
               &=~ \big\{\pi\big(\prod\nolimits_{I\in\alpha\quotient{\sim}}
                                 a_I\big)
                         ~\big\mid~ a_I\in\tilde\pi(U\suborder{I}) ~
                                    \text{for all~$I\in\alpha\quotient{\sim}$}\big\} \\[1ex]
               &=~ \tilde\pi\big(\prod\nolimits_{I\in\alpha\quotient{\sim}}
                                 \tilde\pi(U\suborder{I})\big)\ ,
\end{array}
$$
where the second equality is derived from the "associativity" of~$\pi$.
Hence~$(\sP(M),\tilde\pi)$ is a "$\countable$-monoid".

\AP
Next, we show that $(\sP(M),\tilde\pi)$ "recognizes" any "projection" 
of a "language" "recognized" by~$(M,\pi)$. Let 
let $L\subseteq A^\countable$ be a "language" "recognized" by~$(M,\pi)$
via some morphism $h:(A,\prod)\rightarrow(M,\pi)$,, namely, $L=h^{-1}(h(L))$, 
and let $f:A\rightarrow B$ be a "projection". 
We claim that the "projected@projection" "language" $L' = f(L)$ 
is "recognized" by $(\sP(M),\tilde\pi)$ via the morphism
$g = h\circ f^{-1} :(B,\prod)\rightarrow(\sP(M),\tilde\pi)$.
Clearly, we have $g^{-1}(g(L')) \supseteq L'$. For the 
opposite containment, consider a word $v\in g^{-1}(g(L'))$. 
By construction, there is a word $v'\in L'$ such that
$g(v')=g(v)$. Since $v'\in L'=f(L)$, there is $w'\in L$
such that $v'=f(w')$. Moreover, since $g(v')=g(v)$, there
is $w$ such that $f(w)=v$ and $h(w')=h(w)$. Finally, 
since $L=h^{-1}(h(L))$, we conclude that $w\in L$, and
hence $v=f(w)\in L'$.

\AP
Thanks to Lemma~\ref{lemma:semigroup-to-algebra} and Corollary~\ref{cor:algebra-to-monoid},
the construction of $(\sP(M),\tilde\pi)$ can be performed 
at the level of "$\countable$-algebras".
More precisely, any "$\countable$-algebra" $(M,1,\cdot,\tau,\tauop,\kappa)$
uniquely determines a "$\countable$-monoid" $(M,\pi)$, and from this, using the
powerset construction, one defines the "$\countable$-monoid" $(\sP(M),\tilde\pi)$,
and finally the "induced" "$\countable$-algebra" 
$(\sP(M),\{1\},\mathbin{\tilde{\cdot}},\tilde{\tau},\tilde{\tau}^\rev,\tilde{\kappa})$.
The crux in this line of arguments
is that the correspondence between the original "$\countable$-algebra" 
$(M,1,\cdot,\tau,\tauop,\kappa)$ and the final "$\countable$-algebra" 
$(\sP(M),\{1\},\mathbin{\tilde{\cdot}},\tilde{\tau},\tilde{\tau}^\rev,\tilde{\kappa})$
may be, a priori, not effective.
Below we explain why, in fact, this correspondence is effective, 
namely, we explain how each operator of the "$\countable$-algebra" 
$(\sP(M),\{1\},\mathbin{\tilde{\cdot}},\tilde{\tau},\tilde{\tau}^\rev,\tilde{\kappa})$
can be computed using the initial "$\countable$-algebra" 
$(M,1,\cdot,\tau,\tauop,\kappa)$ and some saturation process.

\AP
We give the intuition for constructing the most difficult and interesting 
operator $\tilde\kappa$, that is, for computing 
$P^{\tilde{\kappa}}=\tilde{\pi}(P\etapow)$ for any given non-empty subset 
$P=\{A_1,\ldots,A_k\}$ of $\sP(M)$, using the operators of the 
"$\countable$-algebra" $(M,1,\cdot,\tau,\tauop,\kappa)$.
We recall that
$P^{\tilde{\kappa}}=\{1\}$ if $A_1=\ldots=A_k=\{1\}$, otherwise 
$P^{\tilde{\kappa}}=\big(P\setminus\{1\}\big)^{\tilde{\kappa}}$.
We also recall that $P^{\tilde{\kappa}}$ must represent the set 
$\{\pi(u) \:\mid\: u\in U,~U\in P\etapow\}$
and hence the computation of $P^{\tilde{\kappa}}$ is very similar to 
that of~$\{\pi(u) \:\mid\: u\in A^\countable\}$, which was done in the proof 
of Theorem~\ref{thm:emptiness}. The difference here is that one needs to 
relativise $u$ to the words that belong to~$U$, for some $U\in P\etapow$.
This can be achieved by performing a product of the "$\countable$-algebra"
$(M,1,\cdot,\tau,\tauop,\kappa)$ with a "$\countable$-algebra" that "recognizes"
the single-word language $\{P\etapow\}$, and then applying the saturation process 
of Theorem~\ref{thm:emptiness} on the resulting "$\countable$-algebra".
\end{proof}
 
\smallskip
\section{From \texorpdfstring{$\ostar$-algebras}{algebras} 
         to monadic second-order logic}\label{sec:algebra-to-logic}

We have seen in the previous section that every "MSO" formula defines 
a "recognizable" $\countable$-language. In this section, we prove the converse.
Hereafter, we refer to the ""$\forall$-fragment"" (resp., ""$\exists$-fragment"")
of "MSO" logic as the set of formulas that start with a block of universal (resp., existential)
set quantifiers, followed by a first-order formula. Similarly, the 
""$\exists\forall$-fragment"" consists of formulas starting with a block
of existential set quantifiers followed by a formula of the "$\forall$-fragment".

\begin{theorem}\label{theorem:rec-to-mso}
The "recognizable" $\countable$-languages are effectively "MSO" definable.
Furthermore, such languages are definable in the "$\exists\forall$-fragment"
of "MSO" logic.
\end{theorem}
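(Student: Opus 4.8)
The plan is to start from a "language" $L=h^{-1}(F)$ that is "recognizable" by a finite "$\countable$-monoid" $(M,\pi)$ --- equivalently, by the finite "$\countable$-algebra" $(M,1,\cdot,\tau,\tauop,\kappa)$ that "induces" $\pi$ (Corollary~\ref{cor:algebra-to-monoid}) --- and to build an "MSO" sentence $\varphi_L$ in the "$\exists\forall$-fragment" such that $w\models\varphi_L$ iff $h(w)\in F$. The first step is the reduction already prepared in Section~\ref{sec:semigroups}: writing $\bar h(w)$ for the word over $M$ obtained by relabelling each position $x$ of $w$ by $h(w(x))$, Theorem~\ref{th:algebra-to-semigroup} together with Propositions~\ref{prop:existence-evaluation} and~\ref{prop:equivalence-evaluation} gives $h(w)=\pi(\bar h(w))=\pit(\dom w)$ for the "value" $\pit(\dom w)$ of \emph{any} "evaluation tree" over $\bar h(w)$. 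Hence it suffices to certify in "MSO" the "value" of one such "evaluation tree", and effectiveness of the whole translation will be clear from the construction.

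The obstacle is that a "condensation tree" is a set of "convex" subsets of unbounded (countable ordinal) "rank", so it cannot be captured by a bounded tuple of monadic set variables; one cannot simply guess the tree. The plan is therefore to guess existentially only a \emph{bounded summary} of an "evaluation tree" and to certify its consistency with a "$\forall$-fragment" formula. Concretely, I would work in "MSO" with the "Dedekind cuts" of $\dom w$ in the background --- a feature removable afterwards at no cost, since for a "countable" order the cuts are MSO-interpretable --- and let the leading existential block guess a finite family of colourings recording, at each position and each cut $c$, the element $\pi$ of the part of $\bar h(w)$ lying to the left of $c$ and of the part lying to its right (these are well defined via Lemma~\ref{lemma:restriction-evaluation}), together with a bounded stratification of the implicit "evaluation tree". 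The subsequent "$\forall$-fragment" formula then checks two kinds of conditions: \emph{local} ones, first-order, expressing that consecutive cuts are coherent and that each single step of the stratification is a valid application of $\cdot$, $\tau$, $\tauop$ or $\kappa$ (each such check being an instance of Axioms~\refaxiom{axiom:concatenation}--\refaxiom{axiom:shuffle} and Lemma~\ref{lemma:equivalence-evaluation-finite}); and \emph{global} ones, which are where the universal set quantifier is genuinely needed --- e.g.\ that the guessed value at a left-limit cut is the correct $\tau$-limit of the values below it (a $\Pi^1_1$-flavoured statement of well-foundedness of the relevant $\omega$-block), and that the "dense" layers are honest "$\eta$-shuffles" of the claimed value sets. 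The sentence finally accepts iff the value recorded for $\dom w$ lies in $F$.

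Soundness of $\varphi_L$ is immediate from Corollary~\ref{cor:equivalence-evaluation}: any data surviving the consistency check is realized by a genuine "evaluation tree" over $\bar h(w)$, so the recorded value equals $h(w)$. Completeness is where work is needed: Proposition~\ref{prop:existence-evaluation} supplies an "evaluation tree" over $\bar h(w)$, and one must show that a \emph{bounded} amount of monadic data extracted from it --- bounded independently of $w$, by a function of $|M|$ --- still suffices, through the universal certificate, to recover the "value". This rests on a normal-form lemma, which I would establish by iterating Ramsey's Theorem (Lemma~\ref{lemma:ramsey}) and the "$\eta$-shuffle" lemma (Lemma~\ref{lemma:densetoperfectshuffle}) while tracking a semigroup-theoretic complexity parameter of the values produced (for instance their $\mathcal{J}$-depth in the finite semigroup $(M,\cdot)$) that strictly decreases as one descends the tree: the deep, complicated parts then contribute only idempotent values, which collapse under Axioms~\refaxiom{axiom:omega} and~\refaxiom{axiom:shuffle}, so that only a bounded number of genuinely informative strata survive in the summary.

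The main obstacle I anticipate is exactly this normal-form step together with the bookkeeping needed to keep the certificate inside the "$\forall$-fragment". One must simultaneously bound the stratification, express the order-type and density conditions of the "scattered" and "dense" layers using a single block of universal set quantifiers and no further alternation, and arrange that the local first-order conditions and the global universal ones jointly pin the guessed data down uniquely. Avoiding any extra alternation here is the delicate point: a naive certificate would need to re-guess the missing condensations and would therefore land $\varphi_L$ only in the $\forall\exists\forall$ level rather than in the "$\exists\forall$-fragment".
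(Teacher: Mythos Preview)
Your high-level architecture is exactly that of the paper: guess a bounded amount of monadic data existentially, then use a single block of universal set quantifiers to certify that this data is consistent with the "$\countable$-algebra" operations, so that the recorded value at the root must equal $\pi(\bar h(w))$. The soundness/completeness split via Propositions~\ref{prop:existence-evaluation} and~\ref{prop:equivalence-evaluation} is also how the paper argues. Two points, however, separate your proposal from a working proof.

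First, the ``cuts at the background'' shortcut is not available here. A cut of a countable order is a downward-closed \emph{subset}, not an element; a colouring of cuts by $M$ is therefore a third-order object, and the completion $\completion{\alpha}$ can be uncountable (think $\alpha=\bbQ$). Your claim that this feature is ``removable at no cost, since for a countable order the cuts are MSO-interpretable'' is false in the relevant sense: the statement that MSO over $\completion{\alpha}$ reduces to MSO over $\alpha$ is precisely Theorem~\ref{theorem:cuts-at-the-background}/Corollary~\ref{cor:MSO-cuts}, which is proved \emph{as a consequence} of the present theorem. Invoking it here would be circular. The paper avoids cuts entirely: all the guessed data lives on positions of $\alpha$.

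Second, the ``normal-form lemma'' you identify as the main obstacle is a known theorem, and the paper simply quotes it rather than reproving it. This is Colcombet's generalisation of Simon's factorisation forest theorem to arbitrary linear orders (Theorem~\ref{th:split}): every additive labelling into a finite semigroup admits a \emph{Ramseian split} of height at most $2|M|$, i.e.\ a colouring $g:\alpha\to\{1,\dots,2|M|\}$ such that within each neighbourhood class the labelling is constant equal to an idempotent. Your $\mathcal J$-depth induction is indeed the standard proof strategy for that theorem, so you were heading in the right direction; but once one has it, the construction becomes much cleaner than what you sketch. The paper guesses the pair $(g,f)$ with $f:\alpha\to M$ recording the idempotent of each class, and shows (Lemma~\ref{lemma:eval-from-split}) that from $(g,f)$ a \emph{first-order} formula $\eval_a(g,f,I)$ computes $\pi(u\suborder{I})$ for every convex $I$, by induction on the level $k\le 2|M|$. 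The universal block then quantifies over all convex $I$ and all condensations $\sim$ (the latter encoded by a single set via Lemma~\ref{lemma:condensation-encoding}) and checks that $\eval$ respects each of the evaluation-tree rules \refcond{consistency1}--\refcond{consistency5}. Correctness when $(g,f)$ is \emph{not} Ramseian is where Proposition~\ref{prop:existence-evaluation} is used (Lemma~\ref{lemma:algebra-to-mso-correctness}): one fixes an evaluation tree and shows by induction on it that the local checks force $\eval(g,f,I)=\pi(u\suborder{I})$.
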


\AP
We fix for the remaining of the section a "morphism" $h$ from $"(A^\countable,\prod)"$
to a "$\countable$-monoid" $(M,\pi)$, with $M$ finite, and a subset~$F$ of~$M$.
Let also $1,\cdot,\tau,\tauop,\kappa$ be defined from~$\pi$. 
Our goal is to show that~$L = h^{-1}(F)$ is "MSO" definable. 
It is sufficient for this to show that for every $a\in M$, the language
$$
  \pi^{-1}(a) ~=~ \big\{w\in M^\countable \::\: \pi(w)=a\big\}\ ,
$$
can be defined by a suitable "MSO" sentence $\varphival{a}$. From this it will 
follow that that~$L=\bigcup_{a\in F}h^{-1}(a)$ is defined by the disjunction 
$\bigvee_{a\in F}\hatvarphival{a}$, where $\hatvarphival{a}$ is obtained from 
$\varphival{a}$ by replacing every occurrence of an atom $b(x)$, with $b\in M$, 
by $\bigvee_{c\,\in\, h^{-1}(b)\cap A}c(x)$.

\AP
A reasonable approach for defining $\pi^{-1}(a)$ is to use a formula which, 
given~$u\in M^\countable$, guesses some object that `witnesses' $\pi(u)=a$. 
The only objects that we have seen so far and that are able to ``witness''~$\pi(u)=a$ 
are "evaluation trees". Unfortunately, there is no way an "MSO" formula can guess 
an "evaluation tree", since their "height" cannot be bounded uniformly. That is why we 
use another kind of object for witnessing~$\pi(u)=a$: the so-called "Ramseian" "split", 
which is introduced just below.

\medskip
\subsection{Ramseian splits}\label{subsec:ramsey}

"Ramseian" "splits" are not directly applied to words, but to "additive labellings". 
Recall that an "additive labelling" $\sigma$ from a linear ordering $\alpha$ to a 
semigroup $(M,\cdot)$ (which, in our case, will be induced by the $\countable$-monoid $(M,\pi)$) 
is a function that maps any pair of elements $x<y$ from $\alpha$ to an element 
$\sigma(x,y)\in M$ in such a way that $\sigma(x,y)\cdot\sigma(y,z)=\sigma(x,z)$ 
for all $x<y<z$ in $\alpha$.

\AP
Given two positions~$x<y$ in a word~$u$, 
denote by~$[x,y)$ the interval $\{z \:\mid\: x\leq z< y\}$. 
Given a word~$u$ and two positions~$x<y$ in it, we define 
$""\additive{u}@\additive""(x,y)$ 
to be the element~$\pi\big(u\suborder{[x,y)}\big)$ 
of the "$\countable$-monoid" $(M,\pi)$.
Quite naturally, $\additive{u}$ is an "additive labelling", 
since for all~$x<y<z$, we have 
$\additive{u}(x,y) \cdot \additive{u}(y,z) 
 = \pi\big(u\suborder{[x,y)}\big)\cdot\pi\big(u\suborder{[y,z)}\big)
 = \pi\big(u\suborder{[x,y)}\:w\suborder{[y,z)}\big)
 = \pi\big(u\suborder{[x,z)}\big)
 = \additive{u}(x,z)$.

\begin{definition}\label{def:split}
A ""split"" of height $n$ of a "linear ordering" $\alpha$ is a function 
$g:\alpha\then\{1,\ldots,n\}$. Two elements $x,y\in\alpha$ are called 
""($k$-)neighbours"" iff $g(x)=g(y)=k$ and $g(z)\leq k$ for all 
$z\in \alpha\suborder{[x,y]\cup[y,x]}$ (note that the "neighbourhood" relation is an equivalence). 
The split $g$ is said to be ""Ramseian"" for an "additive labelling" $\sigma:\alpha\then M$ 
iff for all equivalence classes~$X\subseteq\alpha$ of the "neighbourhood" relation, 
there is an "idempotent" $e\in M$ such that~$\sigma(x,y)=e$ for all~$x<y$ in~$X$.
\end{definition}

\begin{theorem}[Colcombet \cite{factorization_forests_for_words_journal}]\label{th:split}
For every finite semigroup $(M,\cdot)$, every linear ordering $\alpha$, and every 
"additive labelling" $\sigma$ from $\alpha$ to $(M,\cdot)$, there is a "split" 
of $\alpha$ which is "Ramseian" for $\sigma$ and which has height at most $2|M|$.
\end{theorem}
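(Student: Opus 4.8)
The plan is to prove the theorem by induction on $|M|$, following the strategy behind Simon's factorization forest theorem but in the height-function (``split'') formulation and over arbitrary linear orderings. \textbf{Base case.} If $|M|=1$ its unique element is idempotent, so the constant split $g\equiv 1$, of height $1\le 2$, is Ramseian: $\alpha$ is a single neighbourhood class on which $\sigma$ is constant.

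\textbf{Reductions and the group case.} For the inductive step let $|M|=m\ge 2$. If the image of $\sigma$ generates a proper subsemigroup $M'\subsetneq M$, the induction hypothesis gives a Ramseian split of height $\le 2|M'|<2m$, which can be padded with unused levels up to $2m$; so I would assume the image generates $M$. When $M$ is a group $G$, I would use that every $G$-valued additive labelling is a coboundary: fixing a basepoint $x_0\in\alpha$ and setting $\phi(x)=\sigma(x_0,x)$ for $x>x_0$, $\phi(x_0)=1_G$, $\phi(x)=\sigma(x,x_0)^{-1}$ for $x<x_0$, one checks $\sigma(x,y)=\phi(x)^{-1}\phi(y)$ for all $x<y$. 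Then the split sending $x$ to the index of $\phi(x)$ in a fixed enumeration of $G$ has height $|G|\le 2|G|$ and is Ramseian, since within a neighbourhood class $\phi$ is constant, whence $\sigma\equiv 1_G$, which is idempotent.

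\textbf{The non-group step.} Assume $M$ is not a group; then it has a unique maximal proper two-sided ideal $I$ (possibly empty), with $J:=M\setminus I$ the top $\mathcal{J}$-class. The core of the argument is to decouple the ``$J$-part'' of $\sigma$ from its ``$I$-part''. First, using Ramsey's theorem (Lemma~\ref{lemma:ramsey}) --- not in one shot, but iterated along a well-founded decomposition of $\alpha$ in the spirit of the proof of Proposition~\ref{prop:existence-evaluation}, since $\alpha$ may be dense --- one builds a split of bounded height whose neighbourhood classes $X$ are each either \emph{$I$-homogeneous} ($\sigma(x,y)\in I$ for all $x<y$ in $X$) or \emph{stable} (all such $\sigma(x,y)$ lie in $J$, sharing a fixed $\mathcal{R}$-class and a fixed $\mathcal{L}$-class; when $J$ is regular the Rees matrix structure of its principal factor then applies with a maximal subgroup of size $<m$, and a non-regular $J$ is a simpler degenerate case). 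On an $I$-homogeneous class one recurses with the smaller semigroup $I$; on a stable class the two fixed Green's coordinates collapse $\sigma$ to a labelling valued in a group of size $<m$, handled as above; and the recursion inside each class is spliced below the levels used to separate the classes. The delicate point is to interleave the $\mathcal{R}$-stabilisation pass, the $\mathcal{L}$-stabilisation pass and the group pass with the recursion on $I$ so that the running level count never exceeds $2m$, rather than the $3m$-or-more one gets by naively concatenating the passes.

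\textbf{Main obstacle.} Two difficulties stand out. The first is combinatorial over orderings: obtaining the smoothing split when $\alpha$ is dense forces an iteration of Ramsey's theorem along a condensation-tree-like decomposition of $\alpha$, the same subtlety already present in Proposition~\ref{prop:existence-evaluation}. The second, and the genuinely hard one, is the quantitative bookkeeping: reaching the sharp bound $2|M|$ requires reusing level budgets across the Green's-relation passes and the ideal recursion in a carefully balanced way, and this is where the bulk of a full proof would lie.
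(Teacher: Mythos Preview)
The paper does not prove this theorem: it is quoted as an external result of Colcombet and used as a black box, so there is no ``paper's own proof'' to compare against. Your sketch is broadly aligned with the actual proof in the cited reference, which also proceeds via Green's relations in the style of Simon's factorization forest theorem: one stratifies by $\mathcal{R}$- and $\mathcal{L}$-classes, handles the group case via the coboundary trick you describe, and recurses into proper ideals.

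That said, your write-up is explicitly a plan rather than a proof, and you correctly flag the two places where real work is needed. The first---making the Ramsey-based smoothing work over arbitrary, possibly dense, linear orderings---is handled in Colcombet's paper not by iterating Ramsey along a condensation tree as you suggest, but by a direct forward/backward argument on the ordering that avoids any well-foundedness assumption; your proposed route via something like Proposition~\ref{prop:existence-evaluation} would reintroduce countability, which the theorem does not assume. The second---the bookkeeping that gets the bound down to $2|M|$ rather than a looser multiple of $|M|$---is genuinely where most of the effort lies, and your sketch does not supply it. So as a proof this is incomplete, but as an outline of the strategy it is on the right track, modulo the dense-ordering issue.
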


\medskip
\subsection{Inductive construction of formulas}\label{subsec:inductiveconstruction}

Below we construct a formula that, given a word~$u$ of domain $\alpha$, 
guesses a "split" of $\alpha$ of height at most~$2|M|$, and uses it for 
representing the function that associates with each "convex" subset $I$ 
of~$\alpha$ the value~$\pi(u\suborder{I})$
in $M$. For the sake of simplicity, we fix a word~$u$ of domain~$\alpha$ and the corresponding
"additive labelling"~$\additive{u}$ over~$\alpha$ that is induced by~$u$. We remark, 
however, that all constructions that follow are uniform and do not depend on the chosen 
word~$u$.

\AP
In the following, we make extensive use of properties, functions, sets that 
are first-order definable from other parameters. For instance, when we say that 
a set~$X$ is ""first-order definable"" from some variables $\bar{Y}$, we mean
that there exists a first-order formula~$\xi(x,\bar{Y})$ that describes the
membership of $x$ in $X$ on the basis of $\bar{Y}$, that is, $x\in X$ iff 
$\xi(x,\bar{Y})$ holds on the given interpretation of $x$ and $\bar{Y}$.
In practice, this means that it is never necessary to quantify over~$X$ 
for defining properties concerning~$X$: it is sufficient to replace each 
predicate~$x\in X$ by the corresponding formula~$\xi(x,\bar{Y})$. This 
remark is crucial for understanding why the construction we provide 
yields a formula in the "$\exists\forall$-fragment" of "MSO" logic.

\medskip
\AP
Recall that we aim at constructing, for each $a\in M$, a sentence $\varphival{a}$
that holds over the word~$u$ iff $\pi(u)=a$. The starting point is to guess: 
\begin{enumerate}
  \item a "split" $g$ of~$\alpha$ of height at most~$2|M|$, and;
  \item a function~$f$ mapping each position~$x\in\alpha$ to an "idempotent" $f(x)\in M$.
\end{enumerate}
The intention is that a choice of $g$ and $f$ is good when $g$ is a "Ramseian" "split"
for~$\additive{u}$ and the function~$f$ maps each position~$x$ to the "idempotent" 
$f(x)$ that arises when the "neighbourhood" class of~$x$ is considered 
(cf.~Definition~\ref{def:split}). 
In this a case, by a slight abuse of terminology, we say that 
$(g,f)$ is a ""Ramseian pair"". 

\AP
Observe that neither~$g$ nor~$f$ can be represented by a single monadic 
variable. However, since both~$g$ and~$f$ are functions from~$\alpha$ to
sets of bounded size ($2|M|$ for~$g$, and $|M|$ for~$f$), one can guess 
them using a fixed number of monadic variables. This kind of encoding is 
quite standard, and from now on we shall use explicitly the mappings~$g$ 
and~$f$ in our formulas, rather than their encodings.

\AP
Knowing a "Ramseian pair" $(g,f)$ is an advance towards computing the value of a word. 
Indeed, "Ramseian" "splits" can be used as ``accelerating structures'' in the sense 
that every computation of~$\pi(u\suborder{I})$ for some convex~$I$ becomes significantly 
easier when a "Ramseian" "split" is known, namely, it becomes "first-order definable" 
in terms of the "Ramseian" "split". This is formalized by the following lemma.

\begin{lemma}\label{lemma:eval-from-split}
Given~$a\in M$, one can construct a first-order formula~$""\eval_a""(g,f,X)$ 
such that for every "convex" subset $I$ of $\alpha$:
\begin{itemize}
  \item if $(g,f)$ is "Ramseian", then $\eval_a(g,f,I)$ holds iff~$\pi(u\suborder{I})=a$,
  \item if both $\eval_a(g,f,I)$ and~$\eval_b(g,f,I)$ hold, then~$a=b$.
\end{itemize}
\end{lemma}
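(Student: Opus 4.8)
The plan is to build $\eval_a(g,f,X)$ by an auxiliary induction on a ``height'' parameter that is capped by $n:=2|M|$, which is where first-orderness comes from: by Theorem~\ref{th:split} every position of $\alpha$ carries a $g$-value in $\{1,\dots,n\}$, so an induction of depth $n$ unfolds into a single first-order formula. Concretely, for each $k\in\{1,\dots,n\}$ I would construct a first-order formula $\eval^{k}_a(g,f,X)$ that, \emph{provided} $(g,f)$ is Ramseian, holds iff $\pi(u\suborder{X})=a$ for every convex $X$ all of whose positions have $g$-value at most $k$; the desired formula is then $\eval_a:=\eval^{n}_a$. Throughout, all auxiliary objects I use — the largest $g$-value $k'$ attained on $X$, the set $X_{k'}$ of positions of $X$ of $g$-value exactly $k'$, the convex pieces $X^{-},X^{+}$ of $X$ lying strictly below, resp.\ strictly above, $X_{k'}$, and the convex hull $J:=X\setminus(X^{-}\cup X^{+})$ of $X_{k'}$ inside $X$ — are first-order definable from $g$ and $X$, so I never quantify over them but substitute their defining formulas; this is exactly what keeps the eventual sentence in the $\exists\forall$-fragment.

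The recursion rests on the factorisation $\pi(u\suborder{X})=\pi(u\suborder{X^{-}})\cdot\pi(u\suborder{J})\cdot\pi(u\suborder{X^{+}})$, valid by generalized associativity; since $X^{-}$ and $X^{+}$ contain only positions of $g$-value $\le k'-1\le k-1$, their values are delivered by $\eval^{k-1}$. The core ingredient is a first-order sub-formula $\Phi_b(g,f,J)$ computing $\pi(u\suborder{J})$. The key point is that, by convexity of $X$, any two positions of $X_{k'}$ are $k'$-neighbours, so $X_{k'}$ lies inside a single neighbourhood class; hence, when $|X_{k'}|\ge 2$, the Ramseian assumption yields an idempotent $e$ with $\additive{u}(x,y)=e$ for all $x<y$ in $X_{k'}$, and $f$ is constant equal to $e$ on $X_{k'}$. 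Because $\additive{u}(x,y)=\pi(u\suborder{[x,y)})$ is literally the value of the factor between two class points, a short computation with generalized associativity — peeling off a cofinal $\omega$-sequence or a coinitial $\omegaop$-sequence of $X_{k'}$ whenever $X_{k'}$ has no greatest, resp.\ no least, element (such sequences exist since $X_{k'}$ is countable), and using elementary identities such as $\omegaop+1=\omegaop$ — shows that $\pi(u\suborder{J})$ equals one of finitely many expressions depending only on $e$, on the label of an extremal position of $X_{k'}$, and on which of finitely many shapes $X_{k'}$ has: namely $u(x)$ in the singleton case $X_{k'}=\{x\}$, $e\cdot u(\max X_{k'})$ if $X_{k'}$ is larger but has a least and a greatest element, $e^{\tau}$ if it has a least but no greatest element, $e^{\tauop}\cdot u(\max X_{k'})$ in the mirror case, and $e^{\tauop}\cdot e^{\tau}$ if $X_{k'}$ has neither. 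Notably all the ``gap'' material strictly between successive class points (of $g$-value $\le k'-1$) is absorbed into powers of $e$ and is never evaluated. As these alternatives are built with the operations $\cdot,\tau,\tauop$ of the algebra and with letters read off the word, and since the shape of $X_{k'}$ and the value of $f$ on it are first-order definable, $\Phi_b$ is first-order.

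Assembling, $\eval^{k}_a(g,f,X)$ is the disjunction, over all $a^{-},b,a^{+}\in M$ with $a^{-}\cdot b\cdot a^{+}=a$, of $\eval^{k-1}_{a^{-}}(g,f,X^{-})\wedge\Phi_b(g,f,J)\wedge\eval^{k-1}_{a^{+}}(g,f,X^{+})$, the degenerate cases ($X=\emptyset$, giving $a=1$; and $k=1$, where $X^{-}=X^{+}=\emptyset$ and $\eval^{1}_a=\Phi_a$) being handled directly. The first clause of the lemma follows by induction on $k'$, matching each step against the corresponding computation of $\pi$. For the second clause — functionality in $a$, required \emph{without} assuming $(g,f)$ Ramseian — observe that the construction is deterministic: at every stage the objects $k',X_{k'},X^{-},X^{+},J$ are fixed by $g,X$, and after a consistency check (``$f$ is constant on $X_{k'}$ with an idempotent value'', whose failure simply makes $\Phi_b$ false for all $b$) the element $e$ is fixed; so by the inductive hypothesis at most one $a$ satisfies $\eval^{k}_a(g,f,X)$, whatever $(g,f)$ is.

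The step I expect to be the real obstacle is the case where $X_{k'}$ is \emph{not} scattered, say of order type $\eta$: one is tempted to reach for the shuffle operator $\kappa$, but the correct statement is that the ``shuffle content'' is already packaged inside the idempotent $e$ supplied by $f$, so that the uniform formula $e^{\tauop}\cdot e^{\tau}$ does the job; the verification that $e^{\tauop}\cdot e^{\tau}=\pi(u\suborder{J})$ is precisely the delicate bookkeeping of checking that every gap collapses, carried out through generalized associativity. A secondary but pervasive difficulty is to remain disciplined about keeping every auxiliary predicate first-order definable from $(g,f,X)$, so that no fresh monadic quantifier is spent and the final sentence stays in the $\exists\forall$-fragment.
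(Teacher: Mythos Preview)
Your proposal is correct and follows essentially the same approach as the paper: an induction on the height parameter $k\le 2|M|$, the decomposition of the convex $I$ into a prefix, the convex hull $J$ of the top-level positions, and a suffix, and the same five-case analysis on whether $J$ has a least/greatest element, yielding the values $u(x)$, $e\cdot u(\max J)$, $e^{\tauop}\cdot u(\max J)$, $e^{\tau}$, $e^{\tauop}\cdot e^{\tau}$.

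The one noteworthy difference is how you secure the second clause (functionality in $a$). You build a consistency check into $\Phi_b$ (``$f$ is constant on $X_{k'}$ with idempotent value'') so that the whole construction is deterministic by induction. The paper instead leaves the inductive formulas $\eval^{k}_a$ potentially multi-valued when $(g,f)$ is not Ramseian, and enforces functionality only at the very end by the brute-force patch $\eval_a:=\eval^{2|M|}_a\wedge\bigwedge_{b\neq a}\neg\eval^{2|M|}_b$. Your variant is arguably cleaner; the paper's is shorter to write down. Both are valid.
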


\begin{proof}
As already mentioned, we encode both functions~$g$ and~$f$ by tuples of monadic predicates.
This allows us to use shorthands such as $g(x)=k$, where $x$ is a first-order variable
and $1\le k\le 2|M|$, for claiming that the point~$x$ of the underlying word~$u$ is 
mapped via $g$ to the number $k$. Similarly, we encode the convex subset~$I$ of~$\alpha$ 
by a monadic predicate and we write $x\in I$ as a shorthand for a formula that states 
that the point~$x$ belongs to~$I$. 

\AP
We assume from now that $(g,f)$ is "Ramseian". Under this assumption, it will be clear 
that the constructed formulas will satisfy the desired properties. We remark, however, 
that the following definitions make sense also in the case when~$(g,f)$ is not "Ramseian",
in which case only the second condition of the lemma will be guaranteed. 

\AP
Given a "convex" $I$, we denote by~$""\level""(g,I)$ the maximal value of~$g(x)$
for~$x$ ranging over~$I$. Of course, the properties~$\level(g,I)=k$ and~$\level(g,I)\leq k$
are "first-order definable" in terms of~$g$ and~$I$.

\AP
We will construct by induction on $k\in\{0,1,\ldots,2|M|\}$ a {\sl partial}
function~$\eval^k$ that maps some triples $(g,f,I)$ to elements $\eval^k(g,f,I)\in M$
in such a way that the following properties hold:
\begin{itemize}
  \item $\eval^k(g,f,I)=a$ is "definable by a first-order formula", 
        say~$\eval^k_a(g,f,I)$, for each~$a\in M$,
  \item $\eval^k(g,f,I)$ is defined iff $\level(g,I)\leq k$, and in 
        this case it coincides with $\pi(u\suborder{I})$ 
        (provided $(g,f)$ is "Ramseian").
\end{itemize} 
The base case is when~$k=0$. In this case, we define $\eval^k(g,f,I)$ to 
be the "neutral element" $1$ when $I=\emptyset$, and we let $\eval^k(g,f,I)$ 
be undefined when $I\neq\emptyset$. Of course, this is "first-order definable"
and satisfies the expected induction hypothesis.

\AP
Let us now construct the partial function~$\eval^k(g,f,I)$ for any~$k\geq 1$.
First, if~$\level(g,I)<k$, then one simply outputs~$\eval^{k-1}(g,f,I)$.
Otherwise, the "convex" subset $I$ can be uniquely partitioned
into~$X<J<Y$ in such a way that~$X\cup J\cup Y=I$ and~$J$ 
is the minimal "convex" subset containing~$I\cap g^{-1}(k)$. Note that 
the sets~$X$, $J$, and $Y$ are "first-order definable" in the parameters~$I$ and $g$,
that is, membership of any point~$x$ in $X$ (resp., $J$, $Y$) is characterized 
by a first-order formula in the variables $x$, $I$, and $g$.
Furthermore, fix~$e$ to be~$f(x)$ for some~$x\in I\cap g^{-1}(k)$.
From the assumption that~$I$ has "level"~$k$ for~$g$, we know that 
all elements in $I\cap g^{-1}(k)$ are "neighbours". In particular,
the fact that $g$ is a "Ramseian" "split" for $\additive{u}$ means 
that~$\additive{u}(x,y)=e$ for all~$x<y$ chosen in $I\cap g^{-1}(k)$.
The mapping~$\eval^k(g,f,I)$ is defined below by a case distinction 
(we remark that the following definitions are not symmetric with respect 
to the underlying order, and this reflects the asymmetry occurring in the 
definition of~$\additive{u}$, that is, $\additive{u}(x,y)=\pi(u\suborder{[x,y)})$ 
for all $x<y\in\alpha$):
\begin{enumerate}
  \item if $J$ is a singleton $\{x\}$, then 
        $$
          \eval^k(g,f,I) ~=~ \eval^{k-1}(g,f,X)~\cdot u(x)~\cdot~\eval^{k-1}(g,f,Y)\ ,
        $$
  \item if $J$ has distinct minimal and maximal elements 
        and~$y=\max(J)$, then
        $$
          \eval^k(g,f,I) ~=~ \eval^{k-1}(g,f,X)~\cdot~e~\cdot~u(y)~\cdot~\eval^{k-1}(g,f,Y)\ ,
        $$
  \item if $J$ has no minimal element but has a maximal element~$y$, then
        $$
          \eval^k(g,f,I) ~=~ \eval^{k-1}(g,f,X)~\cdot~e^\tauop~\cdot~u(y)~\cdot~\eval^{k-1}(g,f,Y)\ ,
          \mspace{-16mu}
        $$
  \item if $J$ has a minimal element but no maximal element, then
        $$
          \eval^k(g,f,I) ~=~ \eval^{k-1}(g,f,X)~\cdot~e^\tau~\cdot~\eval^{k-1}(g,f,Y)\ ,
        $$
  \item if $J$ has no minimal element and no maximal element, then
        $$
          \eval^k(g,f,I) ~=~ \eval^{k-1}(g,f,X)~\cdot~e^\tauop\cdot e^\tau~\cdot~\eval^{k-1}(g,f,Y)\ .
        $$
\end{enumerate}
One easily checks that the function $\eval^k$ can be defined by first-order formulas
of the form $\eval^k_a(g,f,I)$, with $a\in M$. It is also easy to see that 
if~$(g,f)$ is "Ramseian@Ramseian pair" and~$\level(g,I)\leq k$, then $\eval^k(g,f,I)$ 
coincides with~$\pi(u\suborder{I})$.

\AP
At this step, the first conclusion of the lemma is already satisfied by the 
first-order formulas~$\eval_a^{2|M|}(g,f,I)$.
The second point, however, is false in general. Indeed, we did not pay attention 
so far on what the formulas compute in the case where~$(g,f)$ is not 
"Ramseian@Ramseian pair". 
In particular, it can happen that both formulas $\eval_a^{2|M|}(g,f,I)$ and 
$\eval_b^{2|M|}(g,f,I)$ hold for distinct elements $a,b\in M$. However, this 
can be easily fixed using the following formula:
$$
  \eval_a(g,f,I) ~\eqdef~ \eval_a^{2|M|}(g,f,I) ~\et~ 
                          \bigwedge\limits_{b\neq a} \neg \eval_b^{2|M|}(g,f,I)\ .
$$
This formula ensures the second property of the lemma by construction, and 
behaves like~$\eval_s$ whenever~$(g,f)$ is "Ramseian@Ramseian pair".
\end{proof}

\smallskip
\AP
The formulas constructed in Lemma \ref{lemma:eval-from-split} can be seen as defining a 
partial function~$\eval$ that maps~$g,f,I$ to some element~$a\in M$ (the second item 
in the lemma enforces that there is no ambiguity about the value, namely, that this is a 
function and not a relation). Hereafter, we simply use the notation~$\eval(g,f,I)$ as 
if it were a function.

\medskip
\AP
One needs now to enforce that~$\eval(g,f,I)$ coincides with~$\pi(u\suborder{I})$, 
even without assuming that $(g,f)$ is "Ramseian@Ramseian pair". 
For this, one uses "condensations". 
A priori, a "condensation" is not representable by monadic variables, since it 
is a binary relation. However, any set~$X\subseteq\alpha$ naturally defines 
the relation~$\approx_X$ such that~$x\approx_X y$ iff either~$[x,y]\subseteq X$, 
or~$[x,y]\cap X=\emptyset$. It is easy to check that this relation is a "condensation". 
A form of converse result also holds:

\begin{lemma}\label{lemma:condensation-encoding}
For every "condensation" $\sim$, there is~$X$ such that~$\sim$ and $\approx_X$ coincide.
\end{lemma}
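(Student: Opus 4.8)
The plan is to show that $X$ is essentially forced to be a union of $\sim$-classes, reduce the statement to a $2$-colouring problem on the quotient order $Q=\alpha/\!\sim$, and then solve that problem.

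First I would observe that any $X$ with $\approx_X\,=\,\sim$ must be a union of $\sim$-classes: if $x<y$ lie in one class $C$ but $x\in X$ and $y\notin X$, then $[x,y]$ contains the point $x\in X$ and the point $y\notin X$, so $x\not\approx_X y$ although $x\sim y$, a contradiction. Conversely, for any $X$ that is a union of $\sim$-classes, say $X=\bigcup\bar X$ for some $\bar X\subseteq Q$, the inclusion $\sim\,\subseteq\,\approx_X$ holds automatically, since whenever $x\sim y$ the whole interval $[x,y]$ sits inside the convex class $C_x$, which is either contained in or disjoint from $X$. So the only thing to arrange is $\approx_X\,\subseteq\,\sim$, i.e.\ that for $x<y$ lying in distinct classes $C_x<_Q C_y$ one has $x\not\approx_X y$. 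Now the $\sim$-classes meeting $[x,y]$ are exactly the classes $C$ with $C_x\le_Q C\le_Q C_y$ (a class strictly between $C_x$ and $C_y$ is convex and falls entirely inside $(x,y)$), so $x\not\approx_X y$ holds iff the closed interval $[C_x,C_y]$ of $Q$ contains both a class belonging to $\bar X$ and a class not belonging to $\bar X$, and this depends only on the two classes, not on the chosen points. Thus the lemma reduces to producing $\bar X\subseteq Q$ such that every closed interval of $Q$ with at least two elements meets both $\bar X$ and $Q\setminus\bar X$.

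Since $\alpha$ is countable, $Q$ is a countable "linear ordering", and I claim such a $\bar X$ always exists. I would build it by decomposing $Q$ along the "condensation" $\equiv$ defined by ``$C\equiv C'$ iff only finitely many elements of $Q$ lie between $C$ and $C'$'': its classes are discrete chains (each embeds into $\zeta$). On each $\equiv$-class with at least two elements I colour by an alternating pattern, so that every closed subinterval of it meets both colours; this already disposes of every closed interval of $Q$ that contains two consecutive elements. The closed intervals $[C,C']$ not yet obviously bichromatic are those whose endpoints are endpoints of their $\equiv$-classes and whose strictly internal $\equiv$-classes are all singletons; using that no two distinct singleton $\equiv$-classes are consecutive in $Q$ (otherwise they would be $\equiv$-related), one checks that in such an interval the internal part is a dense endpoint-free order of singleton classes. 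Applying the same construction recursively to the suborder of singleton classes — a transfinite recursion which, by the usual Hausdorff analysis of countable orders, stabilises after countably many steps — and, at the bottom level, splitting the remaining countable "dense" order into two "dense in"-itself subsets, yields a colouring $\bar X$ with the required property.

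The only real work is the last verification: one must check that across the interfaces between the successive discrete and dense layers no monochromatic closed interval survives, which forces a coherent choice of the alternation phases at each layer together with a case analysis of how a closed interval of $Q$ can lie inside the decomposition. A variant I would consider is to colour $Q$ directly along an enumeration of its classes, assigning to each new class a colour differing from whichever of its (globally determined) $Q$-neighbours is already coloured, colouring whole ``consecutiveness components'' at once to avoid conflicts and deferring the singleton classes to a final dense-partition step; but this comes down to the same structural analysis. Everything else — the reduction above and the translation back to the statement about $\approx_X$ — is routine.
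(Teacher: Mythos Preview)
Your reduction to a two-colouring problem on the quotient $Q=\alpha/{\sim}$ is correct and is exactly what the paper does (you give more detail than the paper on why it suffices to take $X$ as a union of classes). The difference lies in how the colouring of $Q$ is produced. The paper invokes the standard structural fact that every countable linear order is a \emph{dense sum of non-empty scattered orders}: one proves the colouring claim separately for scattered orders (by induction on Hausdorff rank) and for dense orders (split into two mutually dense halves), and then combines the two in a single step. Your route---iterating the finite-distance condensation, colouring the resulting discrete blocks alternately, and recursing transfinitely on the suborder of singleton classes until a dense residue remains---can be pushed through, and you have correctly located the only genuine difficulty (the interface verification across levels). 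Concretely, the verification you defer amounts to showing that if $[a,b]$ is not already bichromatic from the level-$0$ alternation then $(a,b)$ contains at least two points of $Q_1$, hence a nontrivial closed interval of $Q_1$, to which the recursive hypothesis applies; this works, but it forces you into a transfinite induction with its own limit-stage bookkeeping. The paper's decomposition buys you a one-shot combination instead of that recursion; conversely, your approach is self-contained in that it does not cite the dense-sum-of-scattered theorem, though it ends up reproving a cognate statement along the way.
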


\begin{proof}
It is easy to see that, given a "linear ordering"~$\beta$, there exists a subset~$Y$ 
of~$\beta$ such that for all~$x<y$ in~$\beta$, $[x,y]$ intersects both~$Y$ and 
its complement~$\beta\setminus Y$: indeed, one can first prove this for 
"scattered" "linear orderings" and for "dense" "linear orderings", and then combine 
the results for these subcases using the fact that every "linear ordering" is 
a "dense" "sum" of non-empty "scattered" "linear orderings" \cite{linear_orderings}.

\AP
The lemma follows easily from the above argument: consider~$Y$ obtained from 
the claim above applied to the "condensed ordering" $\beta=\alpha\quotient{\sim}$. 
We construct the desired set~$X$ in such a way that it contains the elements of 
the equivalence classes of~$\sim$ that belong to~$Y$, i.e., 
$X=\{x \:\mid\: [x]_\sim\in Y\}$.
It is easy to see that~$x\sim y$ iff~$x\approx_X y$.
\end{proof}

\smallskip
\AP
Lemma~\ref{lemma:condensation-encoding} tells us that it is possible to work with "condensations"
as if they were monadic variables. In particular, in the sequel we use variables for "condensations"
and we tacitly assume that they are encoded by the sets obtained from Lemma~\ref{lemma:condensation-encoding}.

\AP
Given a "convex" subset $I$ of $\alpha$ and some "condensation" $\sim$ of $\alpha\suborder{I}$, 
we denote by $""u[I,\sim]""$ the word with domain~$\beta=(\alpha\suborder{I})\quotient{\sim}$ 
in which every $\sim$-equivalence class~$J$ is labelled by~$\eval(g,f,J)$.
One can easily define a formula $""\consistency""(g,f)$ that checks that, 
for all "convex" subsets $I$ and all "condensations" $\sim$ of $\alpha\suborder{I}$, 
the following conditions hold:
\begin{condlist}
  \item if~$I$ is a singleton~$\{x\}$, 
        then $\eval(g,f,I)=u(x)$,\label{consistency1}
  \item if~$"u[I,\sim]"=a\:b$ for some~$a,b\in M$, 
        then $\eval(g,f,I)=a\cdot b$,\label{consistency2}
  \item if~$"u[I,\sim]"=e\omegapow$ for some idempotent~$e\in M$, 
        then $\eval(g,f,I)=e^\tau$,\label{consistency3}
  \item if~$"u[I,\sim]"=e\omegaoppow$ for some idempotent~$e\in M$, 
        then $\eval(g,f,I)=e^\tauop$,\label{consistency4}
  \item if~$"u[I,\sim]"=P\etapow$ for some non-empty set~$P\subseteq M$, 
        then $\eval(g,f,I)=P^\kappa$.\label{consistency5}
\end{condlist}

\AP
For some fixed~$I$ and~$\sim$, the above tests require access to the elements 
$"u[I,\sim]"(J)$, where~$J$ is a~$\sim$-equivalence class of~$\alpha\suborder{I}$. 
Since the property of $\sim$-equivalence for two positions $x,y\in\alpha\suborder{I}$ 
is "first-order definable", we know that for every position $x\in\alpha\suborder{I}$, 
the element $\eval(g,f,[x]_\sim)$ 
is "first-order definable" from~$x$. This shows that the above properties 
can be expressed by first-order formulas and hence $\consistency(g,f)$ 
is a formula in the "$\forall$-fragment" of "MSO" logic.

\medskip
\AP
The last key argument is to show how the `local' consistency constraints
\refcond{consistency1}--\refcond{consistency5} imply a `global' consistency 
property. This is done by the following lemma.

\begin{lemma}\label{lemma:algebra-to-mso-correctness}
If $\consistency(g,f)$ holds, then $\eval(g,f,I)=\pi(u\suborder{I})$ 
for all "convex" subsets $I$ of~$\alpha$.
\end{lemma}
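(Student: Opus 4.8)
The plan is to derive the ``global'' consistency of $\eval(g,f,\cdot)$ from the existence and uniqueness of evaluation trees. The statement I would prove, by induction on the rank (Lemma~\ref{lem:rank}), is the following: for every evaluation tree $(S,\delta)$ over a factor $u\suborder{K}$ of $u$, if $\consistency(g,f)$ holds then $\delta(K)=\eval(g,f,K)$. This makes sense because $K$, being a node of a condensation tree over $\alpha\suborder{K}$, is a convex subset of $\alpha$, so $\eval(g,f,K)$ is defined. Granting this statement, the lemma follows immediately: given a convex subset $I$ of $\alpha$, Proposition~\ref{prop:existence-evaluation} supplies an evaluation tree $(S,\delta)$ over $u\suborder{I}$, so $\eval(g,f,I)=\delta(I)$; and $\delta(I)$ is the value of that evaluation tree, which by Theorem~\ref{th:algebra-to-semigroup}, Corollary~\ref{cor:algebra-to-monoid}, and Proposition~\ref{prop:equivalence-evaluation} is precisely $\pi(u\suborder{I})$.

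For the base case of the induction, when $S$ is a single leaf, $K$ is a singleton $\{x\}$, so $\delta(K)=u(x)$ by Definition~\ref{def:evaluation-tree} and $\eval(g,f,K)=u(x)$ by condition~\refcond{consistency1} of $\consistency(g,f)$. For the inductive step, let $C=\children_S(K)$ and let $\sim$ be the induced condensation of $\alpha\suborder{K}$, so that $C=(\alpha\suborder{K})\quotient{\sim}$. For each child $L\in C$, the subtree of $S$ rooted at $L$ is an evaluation tree over $u\suborder{L}$ of strictly smaller rank (Lemma~\ref{lem:rank}), so the induction hypothesis gives $\delta(L)=\eval(g,f,L)$. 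Hence the word $\delta(C)$ with domain $C$ labelling each class $L$ by $\delta(L)$ coincides with $u[K,\sim]$. By Definition~\ref{def:evaluation-tree} the partial map $\pis$ is defined on $\delta(C)=u[K,\sim]$ and $\delta(K)=\pis(\delta(C))$. Since $u[K,\sim]$ lies in the domain of $\pis$, it is isomorphic to one of the four shapes $a\,b$, $e\omegapow$ or $e\omegaoppow$ with $e$ idempotent, or $P\etapow$ with $P$ a non-empty subset of $M$; in each case the matching condition among \refcond{consistency2}--\refcond{consistency5} — which were tailored to mirror the clauses of Definition~\ref{def:base-mapping} — yields $\eval(g,f,K)=\pis(u[K,\sim])=\delta(K)$, completing the induction.

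I do not expect a genuine obstacle: the argument just combines the existence and uniqueness of evaluation trees with the observation that $\consistency(g,f)$ forces $\eval(g,f,\cdot)$ to behave exactly like $\pis$ at every condensation step. The one point requiring care is the bookkeeping in the inductive step, namely checking that $\consistency(g,f)$, which quantifies over all convex subsets and all condensations of their induced suborderings, indeed supplies the instance needed for the specific node $K$ and the specific condensation $\children_S(K)$ that arise in the evaluation tree, and that the four shapes above exhaust the domain of $\pis$. It is worth stressing that this lemma uses $\consistency(g,f)$ alone and, unlike Lemma~\ref{lemma:eval-from-split}, does not require $(g,f)$ to be Ramseian: the local consistency constraints by themselves pin $\eval(g,f,\cdot)$ down to the correct $\pi$-values.
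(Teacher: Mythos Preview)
Your proposal is correct and follows essentially the same approach as the paper: fix a convex $I$, invoke Proposition~\ref{prop:existence-evaluation} to obtain an evaluation tree over $u\suborder{I}$, prove by induction on the tree (equivalently, on its rank) that the root value coincides with $\eval(g,f,I)$ by matching the four shapes in the domain of $\pis$ to conditions \refcond{consistency1}--\refcond{consistency5}, and conclude via Proposition~\ref{prop:equivalence-evaluation}. Your phrasing of the induction hypothesis---over all evaluation trees over all factors---is slightly more explicit than the paper's, but the argument is the same.
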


\begin{proof}
Recall that, given a "convex" subset $I$ of~$\alpha$ and a "condensation" $\sim$
of $\alpha\suborder{I}$, $"u[I,\sim]"$ is the word with domain~$\beta=(\alpha\suborder{I})\quotient{\sim}$ 
in which every $\sim$-equivalence class~$J$ is labelled by~$\eval(g,f,J)$.
Suppose that $\consistency(g,f)$ holds, namely, that for all "convex" subsets $I$ 
of~$\alpha$ and all "condensations" $\sim$ of~$\alpha\suborder{I}$, the conditions 
\refcond{consistency1}--\refcond{consistency5} are satisfied.

\AP
To show that $\eval(g,f,I)=\pi(u\suborder{I})$ for all "convex" subsets $I$, we use 
again "evaluation trees". Precisely, we fix a "convex" subset $I$ of~$\alpha$ 
and an "evaluation tree" $\cT=(T,\pit)$ over the word $u\suborder{I}$ (the "evaluation tree" 
exists thanks to Proposition \ref{prop:existence-evaluation}), and we prove, by an 
induction on $\cT$, that 
$$
  \eval(g,f,I) ~=~ \pit(I)\ .
$$ 
Since~$\pit(I)=\pi(u\suborder{I})$ (by Proposition~\ref{prop:equivalence-evaluation}), 
it follows that $\eval(g,f,I)=\pi(u\suborder{I})$.

\AP
If~$\cT$ consists of a single "leaf", then~$I$ is a singleton of the form $\{x\}$.
Condition~\refcond{consistency1} then immediately implies $\eval(g,f,I)=u(x)=\pit(I)$.

\AP
If the "root" of~$\cT$ is not a "leaf", then we let~$\sim$ be the "condensation"
of~$\alpha\suborder{I}$ induced by the "children" of the "root" of $\cT$ and we let
$\beta=(\alpha\suborder{I})\quotient{\sim}$ be the corresponding "condensed ordering" 
(formally, $\beta=\children(I)$). Note that for every class~$J\in\beta$, 
$\cT\subtree{J}$ is a "subtree" of $\cT$. 
From the induction hypothesis on the "evaluation tree" $\cT\subtree{J}$, we have 
$\eval(g,f,J)=\pit(J)$ for all~$J\in\beta$. Moreover, we know from the 
definition of $"u[I,\sim]"$ that $"u[I,\sim]"(J)=\eval(g,f,J)$, for all $J\in\beta$, 
and hence $"u[I,\sim]"$ is isomorphic to the word $\prod_{J\in\beta}\pit(J)$. 
We also know from the definition of $\cT$ that the image under $\pis$ of the 
word $\prod_{J\in\beta}\pit(J)$ is defined. From this we derive that 
$\prod_{J\in\beta}\pit(J)$ is isomorphic to one of the following words:
\begin{enumerate}
  \item a word $a\,b$, for some~$a,b\in M$,
  \item an $\omega$-word $e\omegapow$, for some "idempotent" $e\in M$,
  \item an $\omegaop$-word $e\omegaoppow$, for some "idempotent" $e\in M$,
  \item a shuffle $P\etapow$, for some non-empty subset $P$ of $M$.
\end{enumerate}
We only analyse the first two cases (the remaining cases are all similar).

\AP
If the word $\prod_{J\in\beta}\pit(J)$ is of the form $a\,b$, with $a,b\in M$,
then we let $J_1$ and $J_2$, with $J_1<J_2$, be the two positions in it 
(recall that these are $\sim$-equivalence classes for $\alpha\suborder{I}$).
Thanks to the inductive hypothesis, we have
$\eval(g,f,J_1) = "u[I,\sim]"(J_1) = \pit(J_1) = a$ and
$\eval(g,f,J_2) = "u[I,\sim]"(J_2) = \pit(J_2) = b$. 
From Condition \refcond{consistency2}, using the "condensation" $\sim$, 
we derive $\eval(g,f,I) = \eval(g,f,J_1\cup J_2) = a\cdot b$, and from 
this we easily conclude that
$$
  \eval(g,f,I) ~=~ a\cdot b ~=~ \pis(a\,b) ~=~ \pis\big(\pit(J_1)\:\pit(J_2)\big) ~=~ \pit(I).
$$

\AP
Let us now consider the case where $\prod_{J\in\beta}\pit(J)$ is an $\omega$-word of 
the form $e\omegapow$, for some "idempotent" $e\in M$. 
We denote by $J_1<J_2<\ldots$ the positions in
$\prod_{J\in\beta}\pit(J)$ (recall that these are $\sim$-equivalence classes for $\alpha\suborder{I}$). 
As in the previous case, we know from the inductive hypothesis that
$\eval(g,f,J_i) = u{[I,\sim]}(J_i) = \pit(J_i) = e$ for all $i=1,2,\ldots$. 
We know from Condition \refcond{consistency3} that
$\eval(g,f,I) = \eval(g,f,J_1\cup J_2\cup\ldots) = e^\tau$.
Finally, we derive
$$
  \eval(g,f,I) ~=~ e^\tau 
               ~=~ \pis(e\omegapow) 
               ~=~ \pis\Big(\prod_{J\in\beta}\pit(J)\Big) 
               ~=~ \gamma(I) \ .
$$
\end{proof}

\medskip
\AP
We conclude the section by showing how Lemma \ref{lemma:algebra-to-mso-correctness} 
implies Theorem~\ref{theorem:rec-to-mso}. We claim that, given~$a\in M$, the language 
$\pi^{-1}(a)$ is defined by the following sentence in the "$\exists\forall$-fragment"
of "MSO" logic:
$$
  ""\varphival{a}@\varphival"" ~\eqdef~ \exists g.~ \exists f.~
                             \consistency(g,f) ~\et~ \eval(g,f,\alpha)=a \ .
$$
Let~$\pi(u)=a$. One can find a "Ramseian pair" $(g,f)$ using Theorem~\ref{th:split}. 
Lemma~\ref{lemma:eval-from-split} then implies~$\pi(u\suborder{I})=\eval(g,f,I)$ 
for all "convex" subsets $I$. Since $\pi$ is a "product", the constraints
\refcond{consistency1}--\refcond{consistency5} are satisfied 
and $\consistency(g,f)$ holds. This proves that $\varphival{a}$ holds. 
Conversely, if~$\varphival{a}$ holds, then $\consistency(g,f)$ holds 
for some~$(g,f)$. Lemma~\ref{lemma:algebra-to-mso-correctness} then implies
$$
  \pi(u) ~=~ \pi(u\suborder{\alpha}) ~=~ \eval(g,f,\alpha) ~=~ a\ .
$$
 
\smallskip
\section{Applications}\label{sec:applications}

In this section we present consequences of our results.

\medskip
\subsection{Collapse of the quantifier hierarchy}\label{subsec:collapse}

A first consequence of Theorems~\ref{theorem:mso-to-algebra} and~\ref{theorem:rec-to-mso} 
is that the hierarchy of monadic quantifier alternation for "MSO" logic interpreted 
over countable words collapses to its "$\exists\forall$-fragment".
Clearly, since "MSO" logic is closed under complementation, it also collapses 
to its "$\forall\exists$-fragment":

\begin{corollary}\label{cor:collapse}
Every $\countable$-language definable in "MSO" logic can be equally defined 
in the "$\exists\forall$-fragment" and in the "$\forall\exists$-fragment".
\end{corollary}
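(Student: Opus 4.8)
The plan is to read the corollary off directly from Theorems~\ref{theorem:mso-to-algebra} and~\ref{theorem:rec-to-mso}, which together already say that, over countable words, \emph{"MSO"-definable} $=$ \emph{"recognizable"} $=$ \emph{definable in the "$\exists\forall$-fragment"}. For the "$\exists\forall$-fragment" half there is nothing to do: given an "MSO" sentence $\psi$ over an alphabet $A$, Theorem~\ref{theorem:mso-to-algebra} produces effectively a finite "$\countable$-algebra" "recognizing" the "language" $L$ defined by $\psi$, and Theorem~\ref{theorem:rec-to-mso} then produces effectively an equivalent sentence in the "$\exists\forall$-fragment". The composition of the two effective procedures is again effective, so $L$ is effectively "$\exists\forall$-definable".

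For the "$\forall\exists$-fragment" I would use closure of "MSO" under complementation. If $L$ is "MSO"-definable, so is $A^\countable\setminus L$; applying the "$\exists\forall$" half just obtained, there is a sentence $\chi=\exists\bar X.\,\forall\bar Y.\,\xi$ with $\xi$ first-order that defines $A^\countable\setminus L$. Then $L$ is defined by $\neg\chi$, and pushing the negation inward gives
\[
  \neg\chi \ \equiv\ \forall\bar X.\,\exists\bar Y.\,\neg\xi\,,
\]
where $\neg\xi$ is again a first-order formula; hence this sentence lies in the "$\forall\exists$-fragment", and $L$ is "$\forall\exists$-definable" (again effectively).

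I do not expect a genuine obstacle here: all of the work has already been spent in the proof of Theorem~\ref{theorem:rec-to-mso}, where the witnessing sentence $\varphival{a}\eqdef\exists g.\,\exists f.\,\consistency(g,f)\wedge\eval(g,f,\alpha)=a$ was built so that the inner part stays in the "$\forall$-fragment" (this is exactly why the constructions were phrased via \emph{"first-order definable"} parameters rather than fresh monadic quantifiers). The corollary is just the remark that this fact, combined with complementation, forces the monadic quantifier-alternation hierarchy to collapse at its second level. The one point worth stating explicitly in the write-up is that $\neg$ sends the "$\exists\forall$-fragment" to the "$\forall\exists$-fragment" after the standard normalization of negation, which is the elementary computation displayed above.
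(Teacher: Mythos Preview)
Your proposal is correct and matches the paper's own argument exactly: the paper states the corollary as an immediate consequence of Theorems~\ref{theorem:mso-to-algebra} and~\ref{theorem:rec-to-mso} for the "$\exists\forall$-fragment", and then invokes closure under complementation for the "$\forall\exists$-fragment". Your additional remark making explicit how negation swaps the two fragments is a welcome elaboration of what the paper leaves as ``clearly''.
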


\noindent
Moreover, the collapse result is optimal, in the sense that there exist 
"MSO" definable languages that are not definable in the "$\exists$-fragment":

\begin{proposition}\label{prop:separation}
The language $L_\forall$ of countable "scattered" words over the singleton 
alphabet $\{a\}$ cannot be defined in the "$\exists$-fragment" of "MSO" logic.
\end{proposition}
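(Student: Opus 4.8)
The plan is to argue by contradiction. Suppose $\psi\eqdef\exists X_1\ldots\exists X_n\ \varphi$ defines $L_\forall$, where $\varphi$ is a first-order formula of "quantifier rank" $q$ with free monadic variables $X_1,\ldots,X_n$. Over the singleton alphabet a "word" is the same thing as a "countable" "linear ordering", and an interpretation of $X_1,\ldots,X_n$ on such a "word" is the same thing as a "word" over the finite alphabet $B\eqdef\{0,1\}^n$. By composition for first-order logic over "linear orderings" (Feferman--Vaught; see also \cite{composition_method_shelah,feferman_vaught_theorem}), first-order $q$-equivalence $\equiv_q$ has finite index on $B^\scountable$ and is a congruence with respect to "concatenation", "$\omega$-power", "$\omegaop$-power" and "$\eta$-shuffle"; consequently the set $M$ of first-order $q$-types of "words" over $B$ is finite, and, as a quotient of the "$\scountable$-algebra" $(B^\scountable,\cdot,\omega,\omegaop,\eta)$ of Lemma~\ref{lemma:word-algebra}, it is itself a finite "$\scountable$-algebra" $(M,\cdot,\tau,\tauop,\kappa)$. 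The goal is to exhibit a non-"scattered" "countable" "linear ordering" satisfying $\psi$: this contradicts the assumption, since $\psi$ defines $L_\forall$.

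The heart of the argument is the following transfer lemma: \emph{there is a "scattered" "countable" "linear ordering" $S$, depending only on $q$ and $n$, such that every "word" over $B$ with "domain" $S$ has the same first-order $q$-type as some "word" over $B$ whose "domain" is not "scattered".} This finishes the proof at once. Since $S$ is "scattered" it lies in $L_\forall$, hence $S\models\psi$, so some "word" $w$ over $B$ with "domain" $S$ satisfies $\varphi$. By the transfer lemma there is a "word" $w'$ over $B$ with non-"scattered" "domain" and with the same first-order $q$-type as $w$; since $\varphi$ has "quantifier rank" $q$, it also holds on $w'$, so the (non-"scattered") "linear ordering" underlying $w'$ satisfies $\psi$ — a contradiction.

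To prove the transfer lemma I would take $S=\zeta^m$ for a large enough $m$ (depending on $q$ and $n$; in fact $m$ of order $q\,|M|$ suffices); note $\zeta^m$ is "scattered" and "countable". Fix a "word" $w$ over $B$ with "domain" $\zeta^m$. The map sending a pair of positions $x<y$ of $w$ to the $q$-type of the "factor" $w\suborder{[x,y)}$ is an "additive labelling" into the finite semigroup $(M,\cdot)$. Applying Ramsey's theorem (Lemma~\ref{lemma:ramsey}), or more conveniently the factorization forest theorem (Theorem~\ref{th:split}), iteratively — which is possible because $\zeta^m$ is an $m$-fold iterate of $\zeta$ — one extracts a "convex" "factor" $v$ of $w$ that is homogeneous to depth $q$: up to first-order $q$-type, $v$ is a $\zeta$-indexed sequence of blocks of a single "idempotent" type, each of these blocks being a $\zeta$-indexed sequence of blocks of a single "idempotent" type, and so on down to depth $q$. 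Using the "axioms of $\scountable$-algebras", and in particular compatibility with shuffles (Axiom~\refaxiom{axiom:shuffle}, which yields $(P^\kappa)^\tau=(P^\kappa)^\tauop=P^\kappa\cdot P^\kappa=\{P^\kappa\}^\kappa=P^\kappa$), together with the Ehrenfeucht--Fra\"iss\'e equivalence of $\zeta^{j}$ with the non-"scattered" "linear ordering" $\sum_{i\in\eta}\zeta^{j}$ for $j\ge q$ (and its refinement to homogeneously coloured blocks), one verifies that the $q$-type of $v$ is a $\kappa$-value in $M$; equivalently, $v$ has the same first-order $q$-type as an "$\eta$-shuffle" $\{v_1,\ldots,v_k\}\etapow$ of "words" over $B$. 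Substituting this "$\eta$-shuffle" for $v$ inside $w$ leaves the $q$-type of $w$ unchanged (by composition), but the resulting "word" contains an "$\eta$-shuffle" as a "factor"; since the "domain" of an "$\eta$-shuffle" contains a "suborder" of "order type" $\eta$, which is infinite and "dense", this "word" is not "scattered". This proves the transfer lemma.

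The main obstacle is precisely this last construction, and it splits into two parts: (i) producing a "factor" of $w$ that is homogeneous all the way down to depth $q$, and determining how large $m$ must be for this; and (ii) — the genuinely delicate point — showing that the first-order $q$-type of a sufficiently deep homogeneous "scattered" "factor" is a $\kappa$-value in $M$, i.e.\ that it is also realised by a non-"scattered" "$\eta$-shuffle". Part (ii) is where the "$\scountable$-algebra" axioms have to be combined with a careful Ehrenfeucht--Fra\"iss\'e comparison of $\zeta^{j}$ with $\sum_{i\in\eta}\zeta^{j}$; the remaining ingredients — composition for first-order logic, the observation that an "$\eta$-shuffle" "factor" forces non-"scatteredness", and the final contradiction — are routine.
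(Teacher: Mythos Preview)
Your overall strategy—assume $\psi=\exists\bar X\,\varphi$ defines $L_\forall$, pick a scattered order on which some colouring satisfies $\varphi$, then transfer to a non-scattered order—is the same as the paper's. The execution, however, diverges sharply, and the divergence is where the gap lies.

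The paper's route is much shorter. It uses the scattered word $a^\omega$; since $a^\omega\in L_\forall$, some colouring $w$ of $\omega$ satisfies $\varphi$. By B\"uchi's theorem one may take $w$ ultimately periodic, $w=uv^\omega$ with $u,v$ \emph{finite}. A direct Ehrenfeucht--Fra\"iss\'e argument (Duplicator matches positions whose distances are below $2^{n-i}$) then shows $uv^\omega$ and $uv^\omega(v^{\omegaop}v^\omega)^\eta$ satisfy the same first-order sentences; the latter has non-scattered domain, contradiction. The reduction to ultimately periodic $w$ is the key simplification: it makes the EF game concrete, because all relevant distances are finite multiples of $|v|$.

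Your route via $\zeta^m$ and the $\scountable$-algebra of $q$-types has a genuine gap precisely at the point you flag as delicate. Part~(ii) asserts that a sufficiently homogeneous scattered factor has a type that is a $\kappa$-value in $M$, justified by Axiom~\refaxiom{axiom:shuffle} together with an unproven EF equivalence between $\zeta^{j}$ and $\sum_{i\in\eta}\zeta^{j}$. But Axiom~\refaxiom{axiom:shuffle} only describes how $\kappa$-values behave under the other operations; it provides no mechanism for showing that a given element \emph{is} a $\kappa$-value. A $\zeta$-sequence of blocks of idempotent type $e$ has type $e^{\tauop}\cdot e^\tau$, and nothing in the axioms forces this to equal any $P^\kappa$; iterating the homogeneity just produces deeper nestings of $\cdot,\tau,\tauop$, which the axioms do not collapse into shuffles. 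The EF equivalence you invoke would indeed bypass this, but it is essentially a special case of the transfer lemma you are proving (for homogeneously coloured orderings), so citing it without proof leaves the argument incomplete. If you pursue this line you must supply a genuine EF argument for coloured $\zeta^{j}$ versus coloured $\eta$-sums thereof—and that is at least as much work as the paper's direct game, without the benefit of the ultimately-periodic reduction that makes the paper's game tractable.
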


\begin{proof}
We first recall a folklore result that shows that the language $L_\forall$ cannot 
be defined in first-order logic. 
The argument is based on Ehrenfeucht-Fra\"iss\'e games (we refer the reader 
to \cite{ehrenfeucht_fraisse_games,hodges1993model,linear_orderings} 
for basic knowledge on these games). 
One begins by fixing a number $n\in\bbN$ and suitable words $w\in L_\forall$ 
and $w'\nin L_\forall$, which may depend on $n$.
One then considers $n$ rounds of the Ehrenfeucht-Fra\"iss\'e game over $w$
and $w'$, where two players, called Spoiler and Duplicator, alternatively 
mark positions in $w$ and $w'$ inducing partial isomorphisms.
More precisely, at each round $k=1,\ldots,n$, Spoiler marks a position
in one of the two words, say either $x_k\in \dom(w)$ or $y_k\in\dom(w')$ 
-- intuitively this corresponds to quantifying existentially or universally 
over $w$. Duplicator responds by choosing a corresponding position in the other 
structure, say either $y_k\in\dom(w')$ or $x_k\in\dom(w)$. The responses of 
Duplicator must enforce an isomorphism between the induced substructures 
$w|_{\{x_1,\ldots,x_k\}}$ and $w'|_{\{y_1,\ldots,y_k\}}$. If Duplicator
cannot move while preserving the invariant, he loses the game. 
If he survives $n$ rounds, he wins. We know from Fra\"iss\'e's Theorem 
that Duplicator can win the $n$-round game if, and only if, $w$ and $w'$ 
cannot be distinguished by any formula of first-order logic with $n$ nested 
quantifiers -- in particular, if this happens for arbitrarily large
$n\in\bbN$, then $L_\forall$ cannot be defined in first-order logic. 

\AP
Below, we show that, for all $n\in\bbN$, Duplicator has a strategy 
to survive $n$ rounds of the Ehrenfeucht-Fra\"iss\'e game induced 
by the words 
$$
  w \:\eqdef\: a\omegapow \,\in L_\forall 
  \qquad\text{and}\qquad
  w' \:\eqdef\: a\omegapow\,(a\omegaoppow\,a\omegapow)\etapow \:\nin L_\forall \ .
$$
Without loss of generality, we can assume that during the first round 
of the game the left endpoints of $w$ and $w'$ are marked. For the 
subsequent rounds, the strategy of Duplicator will enforce the following invariant: 
if the distance between two positions $x_i,x_j$ that are marked in $w$ at rounds 
$j<i$ is less than $2^{n-i}$, then so is the distance between the corresponding 
positions $y_i,y_j$ that are marked in $w'$, and vice versa. 
On the other hand, if at round $i$ Spoiler picks a position $x_i$ in $w$ that 
is at distance at least $2^{n-i}$ from all previously marked positions, then, 
Duplicator can responds by picking a position $y_i$ inside a factor 
$a\omegaoppow\,a\omegapow$ of $w'$ that has no marked positions, thus 
guaranteeing that $y_i$ is at distance at least $2^{n-i}$ from all 
other marked positions. This strategy guarantees that Duplicator  
survives at least $n$ rounds of the game. The fact that winning
strategies for Duplicator exist for all $n\in\bbN$, proves that 
$L_\forall$ is not definable in first-order logic.

\AP
Now, it is straightforward to generalize the above argument to show that, 
for every first-order formula $\varphi$ and every pair of finite words 
$u,v$ over a finite alphabet, the following implication holds:
\begin{align*}
  u\,v\omegapow \sat \varphi
  \qquad\text{implies}\qquad
  u\,v\omegapow\,(v\omegaoppow\,v\omegapow)\etapow \sat \varphi \ .
  \tag{$\star$}
\end{align*}
We can use this result to show that the language $L_\forall$ cannot be defined in 
the "$\exists$-fragment" of "MSO" logic. Suppose, by way of contradiction, that there is a sentence 
$\psi = \exists \bar X \: \varphi(\bar X)$ that defines $L_\forall$, where $\varphi$ 
is a first-order formula with free variables among $\bar X=X_1,\ldots,X_m$. Since 
$a\omegapow\in L_\forall$, we know that $\varphi$ is satisfied by an interpretation of 
the free variables $\bar X$, and that this interpretation can be encoded by an 
$\omega$-word $w$ over the alphabet $\{a\}\times\{0,1\}^m$. 
By B\"uchi's result (or, equally, by Theorem \ref{thm:emptiness}), we can assume, 
again without loss of generality, that $w$ is ultimately periodic, namely, of the 
form $u\,v\omegapow$, for some finite words $u,v$. 
By the indistinguishability result in ($\star$), we know that $\varphi$ is 
also satisfied by $u\,v\omegapow\,(v\omegaoppow\,v\omegapow)\etapow$.
It follows that $a\omegapow\,(a\omegaoppow\,a\omegapow)\etapow$ is a model of $\psi$.
However, the latter word does not belong to $L_\forall$, and this 
contradicts the fact that $\psi$ defines $L_\forall$.
\end{proof}

\medskip
\subsection{Definability with the cuts at the background}\label{subsec:cuts}

In \cite{definability_with_reals_in_the_background} Gurevich and Rabinovich raised
and left open the following question: given any "MSO" formula $\varphi(X_1,\ldots,X_m)$, 
does there exist another "MSO" formula $\tilde\varphi(X_1,\ldots,X_m)$ such that, for all 
sets of rational numbers $A_1,\ldots,A_m$,
$$
  \phantom{?} \qquad 
  (\bbR,<) \sat \varphi(A_1,\ldots,A_m)
  \qquad\text{iff}\qquad
  (\bbQ,<) \sat \tilde\varphi(A_1,\ldots,A_m) 
  \qquad  ?
$$
In other words, they considered question of whether the ability to use all points of 
the real line does give more expressive power for stating properties of predicates 
over the rational line -- Gurevich and Rabinovich use the suggestive terminology that 
the formula $\varphi$ has access to the reals `\emph{at the background}'.
Note that here we implicitly use the fact that there is a fixed natural 
embedding of $(\bbQ,<)$ into $(\bbR,<)$.

\AP
Gurevich and Rabinovich answered positively the analogous question where the
rational line is replaced by the order of the natural numbers:

\begin{theorem}[\cite{definability_with_reals_in_the_background}]\label{theorem:definability-naturals-background}
For every "MSO" formula $\varphi(X_1,\ldots,X_m)$, there is an 
"MSO" formula $\tilde\varphi(X_1,\ldots,X_m)$ such that, for all 
sets $A_1,\ldots,A_m\subseteq\bbN$,
$$
  (\bbR,<) \sat \varphi(A_1,\ldots,A_m)
  \qquad\text{iff}\qquad
  (\bbN,<) \sat \tilde\varphi(A_1,\ldots,A_m) \ .
$$
\end{theorem}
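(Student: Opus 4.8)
The plan is to invoke the composition method of Shelah \cite{composition_method_shelah} (in the spirit of Feferman and Vaught \cite{feferman_vaught_theorem}) together with the decomposition of the real line into the unit intervals delimited by the integers. Write $\bar A=(A_1,\dots,A_m)$ and fix $k$ to be the quantifier rank of $\varphi$. Over the signature with $<$ and $m$ monadic predicates there are only finitely many $k$-types, i.e.\ classes of structures indistinguishable by "MSO" sentences of quantifier rank $\le k$; let $\Sigma_k$ be the set of these types and, for a structure $B$, let $\mathrm{type}_k(B)\in\Sigma_k$ denote its type. The composition theorem states that $\mathrm{type}_k\big(\sum_{i\in\alpha}B_i\big)$ is determined by the "linear ordering" $\alpha$ together with the colouring of each position $i$ by $\mathrm{type}_k(B_i)$, and that this dependence is "MSO"-definable over $\alpha$; in the special case $\alpha=\omega$ this is Büchi's analysis \cite{s1s} of $\omega$-sums, giving an $\omega$-regular dependence.

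The first step is the order isomorphisms $(-\infty,0)\cong(\bbR,<)$ and $(n,n+1)\cong(\bbR,<)$ for all $n\in\bbN$, which hold by Cantor's characterisation of $(\bbR,<)$ as the unique separable, Dedekind-complete, "dense" "linear ordering" without endpoints. Since every $A_i$ consists of natural numbers, this yields the decomposition
\[
  (\bbR,<,\bar A) ~\cong~ \delta_{-} ~+~ \sum_{n\in\bbN}\delta_n ,
\]
where $\delta_{-}=(-\infty,0)$ carries no marked points and, for $n\in\bbN$, $\delta_n=\{n\}+(n,n+1)$ is a single point carrying the label $\ell_n:=\{\,i\mid n\in A_i\,\}\in\{0,1\}^m$ followed by an unlabelled copy of $(\bbR,<)$. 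Crucially, $\mathrm{type}_k(\delta_{-})$ is a fixed element $r\in\Sigma_k$ (the $k$-type of the bare real line), and $\mathrm{type}_k(\delta_n)$ depends only on the label $\ell_n$: it is a fixed type $s_{\ell_n}\in\Sigma_k$, obtained by composing the $k$-type of the labelled singleton with $r$. Hence $\mathrm{type}_k(\bbR,<,\bar A)$ is a function of the single $\omega$-word $w_{\bar A}:=r\,s_{\ell_0}\,s_{\ell_1}\,s_{\ell_2}\cdots$ over the finite alphabet $\Gamma:=\{r\}\cup\{\,s_\ell\mid\ell\in\{0,1\}^m\,\}\subseteq\Sigma_k$.

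By the composition theorem applied to this $\omega$-indexed "sum" there is an "MSO"-definable (equivalently $\omega$-regular) map $F\colon\Sigma_k^{\,\omega}\to\Sigma_k$ with $\mathrm{type}_k(\bbR,<,\bar A)=F(w_{\bar A})$, so that $(\bbR,<,\bar A)\sat\varphi$ iff $F(w_{\bar A})\in T_\varphi$, where $T_\varphi\subseteq\Sigma_k$ is the fixed set of $k$-types satisfying $\varphi$. Consequently $W:=\{\,w\in\Gamma^{\omega}\mid F(w)\in T_\varphi\,\}$ is an $\omega$-regular language, hence defined by an "MSO" sentence over $\Gamma$-labelled $\omega$-words. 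To finish, observe that the letter of $w_{\bar A}$ at position $0$ is the constant $r$, and at position $n\ge 1$ it is the constant $s_\ell$ with $\ell=\{\,i\mid n-1\in A_i\,\}$; thus, for each $\ell$, the set of positions carrying label $s_\ell$ is first-order definable inside $(\bbN,<,\bar A)$. Substituting this uniform encoding into an "MSO" sentence defining $W$ produces the desired formula $\tilde\varphi(X_1,\dots,X_m)$ over $(\bbN,<)$.

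The only genuine work is the composition theorem specialised to $\omega$-sums, namely that $\mathrm{type}_k\big(\delta_{-}+\sum_{n\in\bbN}\delta_n\big)$ is an "MSO"-definable function over $(\bbN,<)$ of the sequence of summand types; this is the standard Büchi/Shelah unfolding (the $k$-type of an $\omega$-sum is determined by a finite prefix of the type sequence together with the set of types occurring infinitely often, and this dependence is recognised by a finite automaton). Everything else is bookkeeping: the two order isomorphisms, the fact that the piece-types are constants, and the trivial encoding of $w_{\bar A}$ inside $(\bbN,<,\bar A)$. Note that the construction uses only the \emph{existence} of $r$, of the $s_\ell$'s, and of $T_\varphi$, not any knowledge of them, so the passage from $\varphi$ to $\tilde\varphi$ is non-effective --- consistent with the "MSO" theory of $(\bbR,<)$ being undecidable while that of $(\bbN,<)$ is decidable.
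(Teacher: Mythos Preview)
The paper does not actually prove this theorem: it is quoted from Gurevich and Rabinovich and immediately followed by the remark ``We will not enter the details of this result, which is superseded by what follows.'' So there is no in-paper proof to compare against. Your sketch is a reasonable reconstruction of the standard argument via the Feferman--Vaught/Shelah composition method, and the overall strategy---decompose $(\bbR,<,\bar A)$ as $(-\infty,0)+\sum_{n\in\bbN}[n,n+1)$, observe that each summand's $k$-type depends only on the label at the integer endpoint, and then invoke composition over the $\omega$-indexed sum---is correct and complete.

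One inaccuracy worth flagging: your parenthetical explanation that ``the $k$-type of an $\omega$-sum is determined by a finite prefix of the type sequence together with the set of types occurring infinitely often'' is not right. Already when each summand is a single labelled point, the $k$-type of the resulting $\omega$-word can distinguish, say, ``$a$ at all even positions'' from ``$a$ at all odd positions'', which agree on any finite prefix and on the set of infinitely-occurring letters. The correct statement is simply that the dependence is $\omega$-regular (equivalently, recognized by a finite $\omega$-semigroup), which is exactly what you need and what you assert just before the parenthetical. This slip does not affect the argument, since you only use the MSO-definability/$\omega$-regularity of $F$, not the faulty description of it.

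For context, the paper's own contribution in this section is the related but distinct Theorem~\ref{theorem:cuts-at-the-background}, which replaces $\bbN$ by an arbitrary countable ordering $\alpha$ and $\bbR$ by the Dedekind completion $\completion{\alpha}$; that result is obtained by the algebraic route (showing recognizability by a $\countable$-monoid built from $k$-types of completions), not by a direct decomposition as you do here.
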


\AP
We will not enter the details of this result, which is superseded by what follows. 
However, already in this case an interesting phenomenon occurs: the existence of 
the formula $\tilde\varphi$ is inherently non-effective, and this holds even if 
$\tilde\varphi$ is allowed to use extra predicates with a decidable "MSO" theory:

\begin{theorem}[\cite{definability_with_reals_in_the_background}]\label{theorem:undecidability-reals}
Let $\bar B = B_1,\ldots,B_n\subseteq\bbN$ be a tuple of monadic predicates such that 
$(\bbN,<,\bar B)$ has a decidable "MSO" theory. There is no algorithm that transforms 
an "MSO" formula $\varphi(X_1,\ldots,X_m)$ to an "MSO" formula $\tilde\varphi(X_1,\ldots,X_m)$ 
such that
$$
  (\bbR,<) \sat \varphi(A_1,\ldots,A_m)
  \qquad\text{iff}\qquad
  (\bbN,<,\bar B) \sat \tilde\varphi(A_1,\ldots,A_m) \ .
$$
\end{theorem}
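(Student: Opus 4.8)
The plan is to reduce the (undecidable) \emph{MSO} theory of the real line to the supposed existence of such an algorithm. Recall from \cite{composition_method_shelah} that there is no algorithm deciding, for an arbitrary \emph{MSO} sentence $\varphi$, whether $(\bbR,<)\sat\varphi$. On the other hand, the hypothesis that $(\bbN,<,\bar B)$ has a decidable \emph{MSO} theory means precisely that there \emph{is} an algorithm deciding, for an arbitrary \emph{MSO} sentence $\psi$ over the signature $\{<,\bar B\}$, whether $(\bbN,<,\bar B)\sat\psi$. The contradiction will come from chaining these two facts together through the hypothetical transformation.

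Concretely, suppose towards a contradiction that some algorithm $\mathcal{A}$ transforms every \emph{MSO} formula $\varphi(X_1,\ldots,X_m)$ into an \emph{MSO} formula $\tilde\varphi=\mathcal{A}(\varphi)$ satisfying the stated equivalence. Apply it in the degenerate case $m=0$: given an \emph{MSO} sentence $\varphi$, the output $\mathcal{A}(\varphi)$ is again a sentence (it has the same free variables, namely none), and by assumption $(\bbR,<)\sat\varphi$ iff $(\bbN,<,\bar B)\sat\mathcal{A}(\varphi)$. Since the right-hand side is decidable by the hypothesis on $\bar B$, composing $\mathcal{A}$ with a decision procedure for the \emph{MSO} theory of $(\bbN,<,\bar B)$ yields a decision procedure for the \emph{MSO} theory of $(\bbR,<)$, contradicting \cite{composition_method_shelah}.

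I do not expect any genuine obstacle: the argument is essentially a one-line reduction once the undecidability of the real line and the decidability of $(\bbN,<,\bar B)$ are in place. The only points worth a remark are that, in the case $m=0$, the \emph{existence} (as opposed to the effective computability) of a suitable $\tilde\varphi$ is vacuous — one may always take $\tilde\varphi$ to be $\top$ or $\bot$ depending on whether $(\bbR,<)\sat\varphi$ — so the content of the theorem lies entirely in the \emph{non-effectiveness}; and that the extra generality of free variables $X_1,\ldots,X_m$ in the statement is not needed for the lower bound, it only makes the impossibility claim formally stronger. Note finally that no property of $\bbR$ or of $\bbN$ beyond these two facts is used, so the same proof shows the impossibility with $(\bbR,<)$ replaced by any structure with an undecidable \emph{MSO} theory and $(\bbN,<,\bar B)$ by any structure with a decidable one.
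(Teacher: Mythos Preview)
Your proof is correct and follows exactly the same approach as the paper: assume the algorithm exists, specialize to sentences ($m=0$), and derive a decision procedure for the \emph{MSO} theory of $(\bbR,<)$, contradicting Shelah's undecidability result. Your additional remarks on the vacuity of mere existence in the $m=0$ case and the irrelevance of the extra free variables are accurate observations that the paper does not make explicit.
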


\begin{proof}
Assume that such an algorithm exists, and consider a generic "MSO" sentence $\varphi$.
We can apply the algorithm to $\varphi$ to obtain a sentence $\tilde\varphi$ such
that $(\bbR,<)\sat\varphi$ iff $(\bbN,<,\bar B)\sat\tilde\varphi$.
Since the "MSO" theory of $(\bbN,<,\bar B)$ is decidable, we could decide
the "MSO" theory of $(\bbR,<)$. 
However, in \cite{composition_method_shelah,monadic_theory_of_order_and_topology} 
it has been shown that "MSO" theory of the real line is undecidable.
\end{proof}

\smallskip
\AP
Despite the inherent difficulty due to the non-effectiveness of the transformation,
we are able to answer positively the question raised by Gurevich and Rabinovich.

\AP
We begin by describing more precisely the relationship between the rational line 
and the real line. In fact, for technical reasons, it is convenient to work, rather
than on the real line, on a larger structure that is obtained by completing
the rational line with all "Dedekind cuts".

\begin{definition}\label{def:cut}
A ""(Dedekind) cut"" of a linear ordering $\alpha$ is a 
subset $E$ of $\alpha$ such that $\alpha|_E$ is a prefix of $\alpha$. 
\end{definition}

\AP
The "cuts" of $\alpha$ are naturally order by the containment relation,
that is, for all "cuts" $E,F$, we have $E<F$ iff $E\subsetneq F$. 
A "cut" is ""extremal"" if it is empty or contains all elements 
of the linear order $\alpha$.
"Cuts" can also be compared with the elements of $\alpha$ as follows:
for all $x\in\alpha$ and all "cuts" $E$ of $\alpha$, we have
$x<E$ (resp., $E<x$) iff $x\in E$ (resp., $x\nin E$).
Note that every element $x$ of $\alpha$ has two adjacent "cuts":
$x^- = \{y\in\alpha \mid y<x\}$ and $x^+ = \{y\in\alpha \mid y\le x\}$.
"Cuts" that are not of the form $x^-$ or $x^+$ are called ""natural"".

\begin{definition}\label{def:completion}
The ""completion"" of a linear order $\alpha$, denoted
$\completion{\alpha}$, is obtained from the disjoint union of the 
elements of $\alpha$ and the non-"extremal" "cuts" of $\alpha$,
and it is equipped with the extended ordering defined above.
\end{definition}

\AP
Note that the real line is obtained from the rational line using a similar 
notion of "completion" that only adds the non-"extremal" {\sl "natural"} "cuts". 
However, the difference between the real line and the "completion", as defined above, 
of the rational line is negligible, especially as far as "MSO" definability 
of rational sets is concerned. In particular, since the "natural cuts" in 
$\completion{\bbQ}$ are definable by first-order formulas, one can easily transform 
any "MSO" formula $\varphi(X_1,\ldots,X_m)$ to an "MSO" formula $\varphi'(X_1,\ldots,X_m)$ 
such that, for all sets $A_1,\ldots,A_m\subseteq\bbQ$, 
$(\bbR,<) \sat \varphi(A_1,\ldots,A_m)$ iff 
$(\completion{\bbQ},<)\sat \varphi'(A_1,\ldots,A_m)$.
As a consequence, to answer the question raised by Gurevich and Rabinovich, 
it is sufficient to prove the following result:

\begin{theorem}\label{theorem:cuts-at-the-background}
For every "MSO" formula $\varphi(X_1,\ldots,X_m)$, there is an "MSO" formula
$\tilde\varphi(X_1,\ldots,X_m)$ such that, for all countable linear orderings
$\alpha$ and all sets $A_1,\ldots,A_m\subseteq\alpha$,
$$
  \completion{\alpha} \sat \varphi(A_1,\ldots,A_m)
  \qquad\text{iff}\qquad
  \alpha \sat \tilde\varphi(A_1,\ldots,A_m) \ .
$$
\end{theorem}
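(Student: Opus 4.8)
The plan is to realise the passage from $\alpha$ to its "completion" $\completion{\alpha}$ as a single \emph{finite} algebraic operation on words, and then extract $\tilde\varphi$ from Theorem~\ref{theorem:rec-to-mso}. Fix the "quantifier rank" $k$ of $\varphi$ and work with "MSO" $k$-types of linear orderings carrying $m$ monadic predicates; there are only finitely many such $k$-types, and as usual we identify an interpretation $(\alpha,A_1,\dots,A_m)$ with the "word" $w$ of "domain" $\alpha$ over the alphabet $B=\{0,1\}^m$ whose $i$-th coordinate is the characteristic function of $A_i$. The combinatorial heart of the argument is a description of the "cuts" of a sum, which I would isolate as a lemma: for non-empty linear orderings $(\delta_i)_{i\in\gamma}$, the non-extremal "cuts" of $\beta:=\sum_{i\in\gamma}\delta_i$ are precisely the non-extremal "cuts" of the index $\gamma$ (one per gap of $\gamma$) together with, for each $i$, the non-extremal "cuts" of $\delta_i$; hence $\completion{\beta}$ is isomorphic to $\sum_{j\in\completion{\gamma}}\mu_j$, where $\mu_j=\completion{\delta_j}$ when $j$ is an element of $\gamma$ and $\mu_j$ is a single point labelled $\cutlabel$ when $j$ is a "cut" of $\gamma$. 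Informally: completing a "concatenation" amounts to completing each factor and inserting one fresh cut-letter in every gap of the index, which is exactly what the "word completion" and the "variant of the product" record.

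Using this, I would build a finite recognizer over $B$. Let $\cA_k$ be the finite set of $k$-types, equipped with the "generalized product" $\widehat\pi$ defined by letting $\widehat\pi\bigl(\prod_{i\in\gamma}t_i\bigr)$ be the $k$-type of $\sum_{j\in\completion{\gamma}}\mu_j$, where $\mu_j$ is any structure of $k$-type $t_j$ when $j$ is an element of $\gamma$, and $\mu_j$ is a single $\cutlabel$-labelled point when $j$ is a "cut" of $\gamma$; this is well defined because, by the Feferman--Vaught composition, the $k$-type of a sum is determined by the $k$-types of its summands and of the index, and the index $\completion{\gamma}$ is recovered from the given data. The cut-decomposition lemma, together with the associativity of order-sums ($\sum_J\sum_i=\sum_{(J,i)}$), yields "generalized associativity" of $\widehat\pi$, so $(\cA_k,\widehat\pi)$ is a finite "$\countable$-monoid" (with the $k$-type of the empty structure as "neutral element"); by Lemma~\ref{lemma:semigroup-to-algebra} it is presented by a finite "$\countable$-algebra", so the construction sits inside the framework of Section~\ref{sec:semigroups}. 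Let $h_k$ be the "morphism" from $(B^\countable,\prod)$ to $(\cA_k,\widehat\pi)$ that sends a letter $v\in B$ to the $k$-type of the one-point $v$-structure (which coincides with its own "completion"); unfolding $\widehat\pi$ and applying the cut-decomposition lemma shows that $h_k(w)$ equals $\hattype{k}(\alpha,A_1,\dots,A_m)$, i.e.\ the $k$-type of $\completion{\alpha}$ with $A_1,\dots,A_m$ interpreted on $\alpha$. Taking $F=\{\,t\in\cA_k : \varphi\in t\,\}$, the language of all $w$ with $\completion{\alpha}\sat\varphi(A_1,\dots,A_m)$ is exactly $h_k^{-1}(F)$, hence "recognizable"; Theorem~\ref{theorem:rec-to-mso} then produces an "MSO" formula $\tilde\varphi(X_1,\dots,X_m)$ over linear orderings with $\alpha\sat\tilde\varphi(A_1,\dots,A_m)$ iff $\completion{\alpha}\sat\varphi(A_1,\dots,A_m)$, which is the claim.

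The step I expect to be the main obstacle is verifying that the operations of $\cA_k$ are well defined and satisfy the "axioms of $\countable$-algebras", and it is here that the announced non-effectiveness surfaces. For the "binary product" and the iterations $\tau$, $\tauop$ the relevant indices are finite, $\omega$, $\omegaop$, whose "completions" are again finite, $\omega$, $\omegaop$; the induced operations are the ordinary type operations with a single $\cutlabel$-point spliced in at each gap, and the "$\countable$-algebra" axioms for them reduce to the corresponding order-theoretic identities, all preserved by completion. For $\kappa$, however, the index is $\eta$ and the completion of $\eta$ is \emph{uncountable} (it acquires a "cut" for every real number), so the shuffle case lies outside the reach of the "countable"-word machinery of Section~\ref{sec:semigroups}; the value $P^\kappa$ in $\cA_k$ is nonetheless a well-defined function of the finitely many $k$-types involved, since Feferman--Vaught composition applies to sums over arbitrary, possibly uncountable, linear orderings, but evaluating it amounts to computing "MSO" $k$-types of such structures, which subsumes the undecidable "MSO" theory of the real line. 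This is precisely why $\cA_k$, and hence $\tilde\varphi$, cannot be obtained effectively, in accordance with Theorem~\ref{theorem:undecidability-reals}; what the argument needs — and all it needs — is that $\cA_k$ \emph{exists}, for which one uses only the finiteness of the set of $k$-types and the compositionality of types over all linear orderings. The remaining points are routine: formulas with free first-order variables are coded using singletons, and empty factors in the cut-decomposition are dealt with separately.
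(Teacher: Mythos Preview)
Your proof is correct and takes essentially the same approach as the paper: both construct a finite "$\countable$-monoid" on $k$-types of completed words using the completion-of-products identity (your cut-decomposition lemma is the paper's Lemma~\ref{lemma:completion}) together with Shelah's composition theorem, and then appeal to Theorem~\ref{theorem:rec-to-mso}. The only packaging difference is that the paper, rather than verifying "generalized associativity" of $\widehat\pi$ directly, defines the product via a choice function $\word$ and shows that $w\mapsto\hattype{k}(w)$ is a surjective morphism from the free "$\countable$-monoid" $(A^\countable,\prod)$, so the monoid structure is inherited for free and your third paragraph's separate check of the "$\countable$-algebra" axioms becomes unnecessary.
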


\AP
Next, we generalize the notion of "completion" to words. 
We fix a dummy letter $""\cutlabel""$ that is intended to label the "cuts". 
The ""completion@word completion"" of a word $w:\alpha\then A$
is the word $\wcompletion{w}:\completion{\alpha}\then A\uplus\{\cutlabel\}$ 
defined by $\wcompletion{w}(x)=w(x)$, for all elements $x\in\alpha$, 
and $\wcompletion{w}(E)=\cutlabel$ 
for all "cuts" $E\in\completion{\alpha}\setminus\alpha$.

\AP
A simple, yet important, property is the relationship between the operation of
"completion" of a word and that of "product" of words, which is formalized 
in the following lemma (proof omitted). Intuitively, the "completion" of the 
"product" of a series of words is equivalent to a "variant of the product" 
on the "completions" of the words, where the "variant of the product" 
`fills the missing "cuts"'.

\begin{lemma}\label{lemma:completion}
For all linear orderings $\alpha$ and all words $(u_i)_{i\in\alpha}$, we have
$$
  \reallywidehat{\!\Big( \prod\nolimits_{i\in\alpha} u_i \Big)\!} 
  \:\:\:\:=~ \eprod\nolimits_{i\in\alpha} \wcompletion{u}_i
$$
where the product variant $""\eprod""$ is defined by 
$\eprod_{i\in\alpha} \wcompletion{v}_i = \prod\nolimits_{i\in\completion{\alpha}} v'_i$,
with $v'_i = v_i$ if $i\in\alpha$ and $v'_i = \cutlabel$ 
if $i\in\completion{\alpha}\setminus\alpha$.
\end{lemma}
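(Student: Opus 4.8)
The plan is to establish the equality by unwinding both sides into labelled linear orderings and producing an explicit isomorphism between them. Writing $w=\prod_{i\in\alpha}u_i$, so that $\dom(w)=\sum_{i\in\alpha}\dom(u_i)$, the left-hand side $\wcompletion{w}$ has domain $\completion{\dom(w)}$, whereas the right-hand side, by the definition of $\eprod$, is the product along $\completion{\alpha}$ whose factor at each $i\in\alpha$ is $\wcompletion{u_i}$ (hence contributes the block $\completion{\dom(u_i)}$) and whose factor at each non-extremal cut of $\alpha$ is the single letter $\cutlabel$. In both cases the original positions of $w$ keep their $u_i$-labels and every added position carries the dummy letter $\cutlabel$, so once an order isomorphism between the two domains is produced, label preservation will be automatic. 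I will also assume that every $u_i$ is non-empty — the case relevant to the applications — since this makes the bookkeeping of cuts clean; with empty factors one would first have to prune the trivial blocks on the right.

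The technical heart is the standard classification of the cuts of a transfinite sum. Given a cut $E$ of $\dom(w)=\sum_{i\in\alpha}\dom(u_i)$, the set $S_E=\{\,i\in\alpha : \dom(u_i)\subseteq E\,\}$ is a prefix of $\alpha$ and $T_E=\{\,i\in\alpha : \dom(u_i)\cap E=\emptyset\,\}$ is a suffix; since $E$ is itself a prefix of $\dom(w)$, there is at most one index $i_E\notin S_E\cup T_E$, the \emph{split block}, and for it $E\cap\dom(u_{i_E})$ is a cut of $\dom(u_{i_E})$ that is neither empty nor full. Using this I would define the candidate isomorphism $\Phi$ from $\completion{\dom(w)}$ to the right-hand domain by: sending an element $x\in\dom(u_i)$ to the same element inside the $i$-th block $\completion{\dom(u_i)}$; sending a non-extremal cut $E$ that has a split block $i_E$ to the (then necessarily non-extremal) cut $E\cap\dom(u_{i_E})$, viewed as a point of $\completion{\dom(u_{i_E})}$; and sending a non-extremal cut $E$ with no split block — for which $S_E$ is then a non-extremal cut of $\alpha$, using non-emptiness of the $u_i$ — to the $\cutlabel$-point sitting at position $S_E$ of $\completion{\alpha}$. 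Its inverse is read off the same recipe: a non-extremal cut $F$ of some $\dom(u_i)$ corresponds to $\big(\sum_{j<i}\dom(u_j)\big)+F$, and a non-extremal cut $c$ of $\alpha$ corresponds to $\bigcup_{i\in c}\dom(u_i)$; checking that these are mutually inverse is a direct computation.

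It then remains to check that $\Phi$ preserves the order, which I would do by a case analysis on the kinds of the two arguments (element/element, element/cut, cut/cut): in each case the comparison in $\completion{\dom(w)}$ reduces either to a comparison inside a single block $\completion{\dom(u_i)}$, where $\Phi$ is the identity, or to a comparison of indices in $\alpha$ possibly separated by an inserted cut-of-$\alpha$ point, which is precisely how $\completion{\alpha}$ is ordered. The step I expect to be the most delicate — and where the non-emptiness hypothesis really earns its keep — is pinning down exactly which cuts are non-extremal: that an internal cut $E\cap\dom(u_i)$ is non-extremal in $\dom(u_i)$ precisely when $E$ is a genuine point of $\completion{\dom(u_i)}$, and that a between-blocks cut of $\dom(w)$ is non-extremal precisely when the associated cut of $\alpha$ is non-extremal, so that $\Phi$ sets up a bijection between the two and not merely an injection in one direction. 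Once these boundary cases are settled the lemma follows, and the stated identity is just the formal reading of $\Phi$.
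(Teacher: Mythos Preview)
The paper does not prove this lemma; it is explicitly marked ``proof omitted''. Your approach---classifying the cuts of a sum according to whether they split a block or lie strictly between blocks, and reading off the order isomorphism from that dichotomy---is the natural one and is correct.

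One point is worth sharpening. The non-emptiness hypothesis you introduce is not merely a bookkeeping convenience: the identity as stated actually fails without it. Take $\alpha=\{1,2,3\}$ with $u_1=a$, $u_2$ the empty word, and $u_3=b$. Then $\prod_i u_i = ab$, whose completion is the three-letter word $a\,c\,b$ (a single internal cut). On the other side, $\hat\alpha$ has two non-extremal cuts, and since $\hat u_2$ is still empty the right-hand side is the four-letter word $a\,c\,c\,b$. So ``pruning the trivial blocks on the right'' is not an available fix---the two sides genuinely disagree when some $u_i$ is empty---and your restriction to non-empty factors is the correct hypothesis under which the lemma should be read.
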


\AP
A language $L$ of countable words is said to be 
""MSO definable with the cuts at the background""
if there exists an "MSO" sentence $\varphi$ such that 
$u\in L$ iff $\wcompletion{u} \sat \varphi$.
The following proposition is similar to the claim of Theorem \ref{theorem:mso-to-algebra}
(note that here we omit the part about effectiveness).

\begin{proposition}\label{prop:from-MSO-cuts-to-rec}
Languages of countable words that are "MSO definable with the cuts at the background" 
are "recognizable by $\countable$-monoids".
\end{proposition}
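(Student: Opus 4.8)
The plan is to reduce Proposition~\ref{prop:from-MSO-cuts-to-rec} to Theorem~\ref{theorem:mso-to-algebra} (every "MSO" definable $\countable$-language is "recognizable") via the "word completion" operation and Lemma~\ref{lemma:completion}. Concretely, let $\varphi$ be an "MSO" sentence over the alphabet $A\uplus\{\cutlabel\}$ witnessing that $L$ is "MSO definable with the cuts at the background", i.e. $u\in L$ iff $\wcompletion{u}\sat\varphi$. The language $K=\{\wcompletion{u}\mid u\in A^\countable\}\cap\{w\mid w\sat\varphi\}$ is an "MSO" definable $\countable$-language over $A\uplus\{\cutlabel\}$: membership in $K$ is the conjunction of $\varphi$ with an "MSO" sentence expressing ``this word is the "completion" of a word over $A$'', which is definable because the set of $\cutlabel$-positions must be exactly the set of "cuts" of the $A$-labelled suborder — a property that one checks is "MSO" expressible (it says the $A$-positions form an "induced subordering" whose every "Dedekind cut" is realized by exactly one $\cutlabel$-position, and conversely). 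By Theorem~\ref{theorem:mso-to-algebra}, $K$ is "recognized@recognizability" by some finite "$\countable$-algebra", hence by some finite "$\countable$-monoid" $(N,\rho)$ via a "morphism" $k:(A\uplus\{\cutlabel\})^\countable\then N$ with $K=k^{-1}(G)$ for some $G\subseteq N$.

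Next I would transfer recognition of $K$ back along the "completion" map to obtain recognition of $L$ over the alphabet $A$. Define $h:A^\countable\then N$ by $h(u)=k(\wcompletion{u})$. The key point is that $h$ is again a "morphism" of "$\countable$-monoids" for a suitable "generalized product" on a suitable support, and this is exactly what Lemma~\ref{lemma:completion} buys us: the "completion" of $\prod_{i\in\alpha}u_i$ equals $\eprod_{i\in\alpha}\wcompletion{u}_i$, the "variant of the product" that inserts $\cutlabel$ at the freshly created "cuts". Applying $k$ and using that $k$ is a "morphism" (so it commutes with the ordinary "product" $\prod$ used inside $\eprod$), we get that $h\big(\prod_{i\in\alpha}u_i\big)$ is computed from the values $h(u_i)=k(\wcompletion{u}_i)$ together with the value $k(\cutlabel)$ of the single-letter word, by applying $\rho$ over the "linear ordering" $\completion{\alpha}$ with the $u_i$-positions carrying $k(\wcompletion{u}_i)$ and the extra "cut" positions carrying $k(\cutlabel)$. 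Since $\completion{\alpha}$ and this placement of values are entirely determined by $\alpha$, this gives a well-defined function $\hat\rho:N^\countable\then N$ by $\hat\rho\big((a_i)_{i\in\alpha}\big)=\rho\big((a'_j)_{j\in\completion{\alpha}}\big)$ with $a'_i=a_i$ for $i\in\alpha$ and $a'_j=k(\cutlabel)$ for $j\in\completion{\alpha}\setminus\alpha$; one then checks $(N,\hat\rho)$ is a "$\countable$-monoid" (the "generalized associativity" of $\hat\rho$ follows from that of $\rho$, because a "condensation" of $\alpha$ induces a "condensation" of $\completion{\alpha}$ that refines the partition into original points and inserted "cuts", and the $\cutlabel$-values are preserved), and that $h$ is a "morphism" from $(A^\countable,\prod)$ to $(N,\hat\rho)$. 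Finally $L=h^{-1}(G)$, since $u\in L$ iff $\wcompletion{u}\in K$ iff $k(\wcompletion{u})\in G$ iff $h(u)\in G$, so $L$ is "recognizable by $\countable$-monoids".

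I expect the main obstacle to be the verification that $\hat\rho$ genuinely satisfies "generalized associativity" — this is where one must be careful about how a "condensation" $\sim$ of $\alpha$ lifts to $\completion{\alpha}$. A "cut" of $\alpha$ either lies strictly inside some $\sim$-class, in which case it is a "cut" of that class's suborder and gets absorbed into $\wcompletion{(u\suborder{I})}$, or it falls between two $\sim$-classes, in which case it becomes a "cut" of the "condensed ordering" $\alpha\quotient{\sim}$ and reappears as an inserted $\cutlabel$ at the outer level; the bookkeeping that every inserted $\cutlabel$ appears exactly once, at exactly one level, is precisely the content of Lemma~\ref{lemma:completion} iterated once, so it should go through cleanly. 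A secondary, more routine point is the "MSO" definability of ``$w$ is a "completion"'': one must express that the $\cutlabel$-positions are in order-preserving bijection with the non-"extremal" "Dedekind cuts" of the complementary suborder, including that no two $\cutlabel$-positions are adjacent in the cut-order and that the suborder is "dense" in the appropriate sense relative to them — all first-order over the order plus the $\cutlabel$-predicate, hence certainly "MSO".
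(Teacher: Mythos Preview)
Your approach has a genuine gap: the completions $\wcompletion{u}$ need not be countable words. Indeed, the whole motivation for this proposition is the case $\alpha=(\bbQ,<)$, whose completion is essentially the real line, which has the cardinality of the continuum. Hence your language $K=\{\wcompletion{u}\mid u\in A^\countable\}\cap\{w\mid w\sat\varphi\}$ is \emph{not} a $\countable$-language, and Theorem~\ref{theorem:mso-to-algebra} does not apply to it. The same issue reappears in your definition of $\hat\rho$: you write $\hat\rho\big((a_i)_{i\in\alpha}\big)=\rho\big((a'_j)_{j\in\completion{\alpha}}\big)$, but $\rho$ is a product on $N^\countable$, and $\completion{\alpha}$ is in general uncountable, so the right-hand side is not defined.

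This is precisely why the paper cannot reduce to Theorem~\ref{theorem:mso-to-algebra} and instead works directly with Shelah's composition theorem, which holds over \emph{arbitrary} (possibly uncountable) linear orderings. The paper defines $M$ as the finite set of $k$-types of completions of countable words, and the product $\pi$ by $\pi\big(\prod_{i\in\alpha}m_i\big)=\hattype{k}\big(\prod_{i\in\alpha}\word(m_i)\big)$; the composition theorem then guarantees that this is well-defined and that $\hattype{k}$ is a morphism. In effect, the paper's construction is the ``correct'' version of your $\hat\rho$: it computes the type of an uncountable product, but the composition theorem ensures this depends only on the (countably indexed) sequence of types of the factors, so the resulting product genuinely lives over $M^\countable$. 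Your Lemma~\ref{lemma:completion} intuition is right---that is exactly what makes the morphism property go through---but the bridge from the uncountable completion back to a countable-indexed product must be the composition theorem, not Theorem~\ref{theorem:mso-to-algebra}.
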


\begin{proof}
Recall that the proof of Theorem \ref{theorem:mso-to-algebra} was based on 
closure properties of recognizable $\countable$-languages under boolean operations 
and projections, which could be easily implemented at the level of the "$\countable$-algebras". 
Because in this proof we do not have to deal with effectiveness, it is convenient to
work directly at the level of "$\countable$-monoids". In particular, the monoids 
recognizing the considered languages will be defined using "logical types" and 
Shelah's composition method \cite{composition_method_shelah}. 
We shall consider "MSO" formulas up to syntactic equivalence, that is,
up to associativity, commutativity, idempotency, and distributivity of conjunctions
and disjunctions, commutativity of conjunctions with universal quantifications
and disjunctions with existential quantifications, and renamings of quantified 
variables.
Recall that, over a fixed finite signature with only relational symbols,
there exist only finitely many sentences up to syntactic equivalence.

Let $\varphi$ be an "MSO" sentence defining, "with the cuts at the background", 
a language $L\subseteq A^\countable$. Let $k$ be the ""quantifier rank of $\varphi$"", 
that is, the maximum number of nested quantifiers in $\varphi$.
Given a word $u$ of possibly uncountable domain, we define its ""$k$-type""
$\type{k}(u)$ as the (finite) set of all sentences of "quantifier rank" at most $k$.
We recall a simplified version of the composition theorem of Shelah, which shows
that the "type@$k$-type" of a "product" of words is uniquely determined by the 
"types@$k$-types" of the words:

\begin{claim}[Shelah's composition theorem \cite{composition_method_shelah}]\label{claim:composition-theorem}
Let $\alpha$ be a (possibly uncountable) linear ordering and, for every 
$i\in\alpha$, let $u_i,v_i$ be words (of possibly uncountable domains).
We have
$$
  \forall i\in\alpha \quad \type{k}(u_i) = \type{k}(v_i)
  \qquad\text{implies}\qquad
  \type{k}\Big( \prod\nolimits_{i\in\alpha} u_i \Big) = 
  \type{k}\Big( \prod\nolimits_{i\in\alpha} v_i \Big) \ .
$$
\end{claim}

\AP
To show that $L$ is "recognizable", we need to construct 
a "$\countable$-monoid" $(M,\pi)$ and a "morphism" $h$ from $A$ to $M$ 
such that $L = h^{-1}\big(h(L)\big)$. 
For this, we define the function $""\hattype{k}@\hattype""$ 
that maps any countable word to the "$k$-type" of its "completion", 
that is, $\hattype{k}(w) = \type{k}(\wcompletion{w})$.
The domain $M$ of the "$\countable$-monoid" is precisely the range of the function
$\hattype{k}$, that is,
$$
  M ~\eqdef~ \big\{\hattype{k}(w) ~\big\mid~ w\in A^\countable \big\} \ .
$$
We further let $""\word""$ be a function that maps any element $m\in M$ to
a word $\word(m)\in A^\countable$ such that $\hattype{k}\big( \word(m) \big) = m$.
The product $\pi$ of the "$\countable$-monoid" is defined as follows:
$$
  \pi\Big( \prod\nolimits_{i\in\alpha} m_i \Big)
  ~\eqdef~ \hattype{k} \Big( \prod\nolimits_{i\in\alpha}\word(m_i) \Big) \ .
$$
Even if we do not know yet that $\pi$ is a product (e.g., that it satisfies "generalized associativity"),
we can easily verify that the function $\hattype{k}$ behaves like a "morphism".
Formally, for all countable linear orderings $\alpha$ and all words $u_i\in A^\countable$, we have:
\ifjsl
\begin{align*}
  \mspace{200mu}
  \hattype{k} \Big( \,\prod\nolimits_{i\in\alpha} u_i\, \Big)
  &~=~ \type{k} \Big( \mspace{-5mu} 
                          \reallywidehat{ ~\prod\nolimits_{i\in\alpha} u_i~ } 
                          \mspace{-5mu} \Big) 
       \mspace{-150mu}
       \tag{\text{by definition of $\hattype{k}$}} \\
  &~=~ \type{k} \Big( \,\eprod\nolimits_{i\in\alpha} \wcompletion{u}_i \Big) 
       \mspace{-150mu}
       \tag{\text{by Lemma \ref{lemma:completion}}} \\
  &~=~ \type{k} \Big( \,\eprod\nolimits_{i\in\alpha} 
                          \mspace{-18mu} \reallywidehat{ \quad\word\big( \hattype{k}(u_i) \big)~~\quad }
                          \mspace{-22mu} \Big) 
       \mspace{-150mu}
       \tag{\text{by Claim \ref{claim:composition-theorem}}} \\
  &~=~ \hattype{k} \Big( \,\prod\nolimits_{i\in\alpha} 
                                 \;\word\big( \hattype{k}(u_i) \big)\, \Big)
       \mspace{-150mu}
       \tag{\text{by Lemma \ref{lemma:completion}}} \\
  &~=~ \pi \Big( \,\prod\nolimits_{i\in\alpha} u_i\, \Big)
       \mspace{-150mu}
       \tag{\text{by definition of $\pi$}}
\end{align*}
\fi
\ifarxiv
\begin{align*}
  \hattype{k} \Big( \,\prod\nolimits_{i\in\alpha} u_i\, \Big)
  &~=~ \type{k} \Big( \mspace{-5mu} 
                          \reallywidehat{ ~\prod\nolimits_{i\in\alpha} u_i~ } 
                          \mspace{-5mu} \Big) 
       \tag{\text{by definition of $\hattype{k}$}} \\
  &~=~ \type{k} \Big( \,\eprod\nolimits_{i\in\alpha} \wcompletion{u}_i \Big) 
       \tag{\text{by Lemma \ref{lemma:completion}}} \\
  &~=~ \type{k} \Big( \,\eprod\nolimits_{i\in\alpha} 
                          \mspace{-18mu} \reallywidehat{ \quad\word\big( \hattype{k}(u_i) \big)~~\quad }
                          \mspace{-22mu} \Big) 
       \tag{\text{by Claim \ref{claim:composition-theorem}}} \\
  &~=~ \hattype{k} \Big( \,\prod\nolimits_{i\in\alpha} 
                                 \;\word\big( \hattype{k}(u_i) \big)\, \Big)
       \tag{\text{by Lemma \ref{lemma:completion}}} \\
  &~=~ \pi \Big( \,\prod\nolimits_{i\in\alpha} u_i\, \Big)
       \tag{\text{by definition of $\pi$}}
\end{align*}
\fi
Moreover, since $\hattype{k}$ is surjective from $A^\countable$ to $M$, 
the property of being a "$\countable$-monoid" is transferred from
$"(A^\countable,\prod)"$ to $(M,\pi)$. 
Hence, $(M,\pi)$ is a "$\countable$-monoid". 
Finally, if we let $h=\hattype{k}$ and we consider two words 
$u,v\in A^\countable$ such that $h(u)=u(v)$, we get
$u\in L$ iff $\wcompletion{u}\sat \varphi$ iff $\varphi \in \hattype{k}(u) = h(u)$ 
iff $\varphi \in \hattype{k}(v)$ iff $v\in L$.
This shows that $L$ is "recognized by the $\countable$-monoid" $(M,\pi)$ 
via the "morphism" $h=\hattype{k}$.
\end{proof}

\smallskip
\AP
Proposition \ref{prop:from-MSO-cuts-to-rec} combined with 
Theorem \ref{theorem:rec-to-mso} shows that the languages
definable in "MSO logic with the cuts at the background" 
are also definable in classical "MSO" logic:

\begin{corollary}\label{cor:MSO-cuts}
Languages of countable words that are "MSO definable with the cuts at the background" 
are "MSO" definable.
\end{corollary}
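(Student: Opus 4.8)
The plan is simply to chain the two results that immediately precede the statement. First I would recall that, by Proposition~\ref{prop:from-MSO-cuts-to-rec}, any language $L$ of countable words that is "MSO definable with the cuts at the background" is "recognizable by $\countable$-monoids". Then I would invoke Theorem~\ref{theorem:rec-to-mso}, which states that every "recognizable" $\countable$-language is "MSO" definable (in fact, in the "$\exists\forall$-fragment"). Composing the two, $L$ is "MSO" definable, which is exactly the assertion. No new construction is needed here: the whole content lies in the two cited statements, so there is no genuine obstacle left to overcome at this point — the work has already been done, on the one hand in establishing the correspondence between "recognizability" and "MSO"-definability (Theorems~\ref{theorem:mso-to-algebra} and~\ref{theorem:rec-to-mso}), and on the other hand in the composition-method argument behind Proposition~\ref{prop:from-MSO-cuts-to-rec}.

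The one point I would flag in the write-up is that this argument is necessarily non-effective, in contrast with Theorems~\ref{theorem:mso-to-algebra} and~\ref{theorem:rec-to-mso}: the "$\countable$-monoid" produced by Proposition~\ref{prop:from-MSO-cuts-to-rec} is built from Shelah's composition theorem as the range of the type function $\hattype{k}$, and this range is not computed by any algorithm. Indeed, Theorem~\ref{theorem:undecidability-reals} shows that no effective transformation of an "MSO" formula $\varphi$ into an equivalent ordinary "MSO" formula can exist, since it would yield a decision procedure for the "MSO" theory of the real line. Accordingly, in the statement and proof I deliberately drop the word ``effectively'' that appears elsewhere: the translation of a sentence interpreted "with the cuts at the background" into an ordinary "MSO" sentence exists but cannot be produced uniformly. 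Beyond this caveat, the proof is a one-line composition of Proposition~\ref{prop:from-MSO-cuts-to-rec} and Theorem~\ref{theorem:rec-to-mso}, and the subsequent specialization to the rational and real lines (to obtain the answer to Gurevich and Rabinovich's question via Theorem~\ref{theorem:cuts-at-the-background}) is likewise routine, using the already-observed first-order definability of the "natural cuts" in $\completion{\bbQ}$.
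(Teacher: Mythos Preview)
Your proposal is correct and matches the paper's own argument exactly: the corollary is obtained by composing Proposition~\ref{prop:from-MSO-cuts-to-rec} with Theorem~\ref{theorem:rec-to-mso}, and the paper states this in a single sentence just before the corollary. Your additional remark on non-effectiveness is also accurate and consistent with the surrounding discussion in the paper.
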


\AP
Finally, if we restate the above corollary in terms of relational
structures, we get precisely the claim of Theorem \ref{theorem:cuts-at-the-background}.

\medskip
\subsection{Yields of tree languages}\label{subsec:yields}

We conclude the section by considering another open problem related to countable words.
More precisely, we will consider "yields" of trees, that is, words spelled out by 
frontiers of trees following the natural left-to-right order 
\cite{frontiers_of_infinite_trees,generalized_yields}.\footnote{We remark that a different notion of yield was introduced
          in \cite{continuous_monoids_and_yields}, based
          on a specific continuous function that maps trees to 
          finite or $\omega$-words.}
We restrict ourselves to ""labelled binary trees"", namely, 
trees in which every node has an associated label from a finite 
alphabet and every internal node has exactly two (ordered) successors. 
These "trees" may contain leaves as well as infinite paths. 

\begin{definition}\label{def:yield}
The ""yield"" of a tree $t$ is the word $\yield(t)$ whose domain 
is the set of leaves of $t$, ordered by the infix relation, such that 
$\yield(t)(x) = t(x)$ for all leaves $x$.
\end{definition}

\AP
Given two "trees" $t,t'$ and a set $X$ of leaves of $t$, we denote by $t[X/t']$ 
the "tree" resulting from the simultaneous substitution in $t$ of all leaves 
$x\in X$ by $t'$. This substitution operation is compatible with the 
analogous operation of substitution on "yields", that is, for all 
$X \subseteq \dom\big(\yield(t)\big)$, we have
$$ 
  \yield\big(t[X/t']\big) ~=~ \yield(t)\big[X/\yield(t')\big] \ .
$$
By a slight abuse of notation, given a letter $a$ occurring at some leaves of $t$, 
we denote by $t[a/t']$ the result of the simultaneous substitution in $t$ of all 
$a$-labelled leaves by $t'$, and similarly for $\yield(t)\big[a/\yield(t')\big]$.

\AP
Every word of countable domain can be seen as the "yield" of some "tree". 
Indeed, this holds trivially for every word indexed over the rationals.
Moreover, every word $w$ of countable domain can be obtained from a word 
$w'$ over the rationals by removing some positions. This latter operation 
of removing positions can be implemented at the level of "trees" by a 
substitution: if $w = \yield(t)$ and $X \subseteq \dom(w)$, 
then $w[X/\emptystr] = \yield\big(t[X/t_\emptystr]\big)$, where 
$t_\emptystr$ is the infinite complete binary tree, whose "yield" is
the empty word.

\AP
We can also extend the "yield" function to any language $T$ of "trees"
by letting $\yield(T) = \big\{ \yield(t) \:\big\mid\: t\in T \big\}$.
Similarly, given a language $L$ of words, we define the corresponding 
"tree" language as 
$""\invyield""(L) = \big\{ t \:\big\mid\: \yield(t)\in L \big\}$.
We say that a "tree" language $T$ is ""yield-invariant""
if, for all "trees" $t,t'$ such that $\yield(t)=\yield(t')$, 
we have $t\in T$ iff $t'\in T$ (or, equally, if $T = \invyield(\yield(T))$).

\AP
It is known (see, for instance, \cite{yields_of_regular_tree_languages})
that the "yield" of a regular language $T$ of finite "trees" is a 
context-free language, and in general it is not regular. However, when the regular 
"tree" language $T$ is also "yield-invariant", the "yield" language 
$\yield(T)$ is shown to be regular \cite{yields_formal_languages_handbook}.
A converse result also holds: if $L$ is a regular language of 
finite words, then $T = \invyield(L)$ is "yield-invariant" and regular.
The work \cite{generalized_yields} raises the natural question of whether 
analogous properties hold between languages of possibly infinite "trees" 
and languages of words of countable domains.
Below, we answer positively to this question by exploiting again the correspondence
between "MSO" logic and "$\countable$-algebras".

\begin{theorem}\label{theorem:yields}
Let $L$ be a language of countable words and let $T = \invyield(L)$ 
be the corresponding "yield-invariant" language of "trees".
Then, $L$ is "MSO" definable iff $T$ is "MSO" definable.
\end{theorem}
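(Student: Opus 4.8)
The plan is to prove both directions of the equivalence using the algebraic machinery and the closure-under-yields ideas sketched for the finite case, but lifted to countable words.

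\medskip\noindent\textbf{Direction 1: $L$ MSO definable implies $T = \invyield(L)$ MSO definable.}
This is the easier direction. A tree $t$ is in $T$ exactly when $\yield(t) \in L$. Since $L$ is MSO definable over countable words, by Theorems~\ref{theorem:mso-to-algebra} and~\ref{theorem:rec-to-mso} it is recognized by a finite $\countable$-algebra via a morphism $h : A^\countable \to M$. The idea is to run the corresponding $\countable$-monoid ``bottom-up'' on the tree: a (possibly infinite, binary) tree can be evaluated by an MSO formula to the value $h(\yield(t))$, because the yield of a tree is obtained by a product indexed along its frontier, and the frontier, together with the branching structure, is readable by an MSO formula over the tree. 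Concretely, one guesses a labelling of the nodes of $t$ by elements of $M$ together with an auxiliary structure (analogous to the Ramseian split and consistency constraints of Section~\ref{sec:algebra-to-logic}) certifying that each node carries the $h$-value of the yield of its subtree; the value at the root must lie in $h(L)$. The only subtlety is that subtrees of infinite binary trees can have yields with arbitrary countable order type, so we should not try to evaluate subtree-by-subtree in a naive way. Instead, I would directly view the whole yield as a countable word indexed by the leaves of $t$ under the infix order, and reuse the translation from recognizability to MSO (Theorem~\ref{theorem:rec-to-mso}) applied to this interpreted word: the infix order on leaves is MSO definable inside the tree, so the MSO formula defining $L$ pulls back to an MSO formula on trees defining $T$.

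\medskip\noindent\textbf{Direction 2: $T = \invyield(L)$ MSO definable implies $L$ MSO definable.}
This is the interesting direction and where I expect the main difficulty. We know $T$ is MSO definable over trees, hence by Rabin's theorem recognized by a parity tree automaton $\cA$. We want to show $\yield(T) = L$ is recognizable by a $\countable$-monoid, and then invoke Theorem~\ref{theorem:rec-to-mso}. The natural approach is to build a $\countable$-monoid on (an abstraction of) the behaviours of $\cA$ on \emph{contexts} --- trees with a distinguished frontier of ``holes'' in place of leaves, where a word $w$ is plugged in leaf-by-leaf. The key algebraic fact to establish is that the set of pairs (state $q$, behaviour summary) such that $\cA$ has a successful run on some tree $t$ with $\yield(t)$ equal to a given word $w$, read from state $q$, is invariant under a suitable generalized product: this is where yield-invariance of $T$ is crucial, since it guarantees that only the yield $w$ --- and not the particular tree that produced it --- determines membership. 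I would define, for a word $w$ over $A$, its $\countable$-monoid value to be the function sending each state $q$ of $\cA$ to the set of ``acceptance colours'' reachable by a run on some tree with yield $w$ starting at $q$; the generalized product is induced by concatenation of words, realized at the tree level by grafting contexts. Checking that this is a well-defined generalized product satisfying generalized associativity --- especially for the $\eta$-shuffle and the $\omega$/$\omegaop$ powers, which correspond to infinite grafting patterns inside the tree where the parity (Muller/Rabin) acceptance condition must be handled carefully along infinite branches --- is the hard technical core. Once this is done, $L = \yield(T)$ is recognized by selecting the appropriate set of accepting values, and Theorem~\ref{theorem:rec-to-mso} gives an MSO formula.

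\medskip\noindent\textbf{Main obstacle.}
The principal difficulty is the second direction, and within it the verification that the tree-automaton behaviour functions form a genuine finite $\countable$-monoid --- in particular, proving generalized associativity and the correct interaction with the parity acceptance condition over infinite shuffles and $\omega$-powers, while exploiting yield-invariance to collapse tree-level information down to word-level information. An alternative, and perhaps cleaner, route for this direction is to use Shelah's composition method together with the ``cuts at the background'' technology of Section~\ref{subsec:cuts}: since a countable word $w$ embeds as the frontier of a tree and the tree structure can be recovered (up to yield-invariance) from $w$ equipped with its Dedekind cuts and some MSO-definable scaffolding, one can express ``$\wcompletion{w}$ codes a tree in $T$'' by an MSO formula over $\wcompletion{w}$, obtaining that $L$ is MSO definable with the cuts at the background, and then applying Corollary~\ref{cor:MSO-cuts} to conclude that $L$ is MSO definable. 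I expect to present this composition-method argument as the main proof, as it sidesteps a lengthy autom-theoretic construction and reuses results already established in the paper.
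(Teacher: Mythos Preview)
Your Direction~1 is correct and matches the paper: the infix order on leaves is MSO-definable in the tree, so the word formula for $L$ pulls back to a tree formula for $T$.

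For Direction~2, your preferred ``cuts at the background'' route has a genuine gap. You assert that ``the tree structure can be recovered (up to yield-invariance) from $w$ equipped with its Dedekind cuts and some MSO-definable scaffolding,'' but you do not say how, and this is the whole difficulty. To apply Corollary~\ref{cor:MSO-cuts} you would need an MSO sentence over $\wcompletion{w}$ expressing ``some tree with yield $w$ lies in $T$.'' That requires either MSO-interpreting a concrete tree $t_w$ with $\yield(t_w)=w$ inside the linear order $\wcompletion{w}$, or quantifying over such trees. Internal nodes of a candidate tree correspond naturally to \emph{pairs} of cuts (the endpoints of the interval of leaves below them), and distinct nodes along a spine whose side-subtrees have empty yield collapse to the same cut; a one-dimensional monadic encoding is not available in any obvious way, and MSO is not closed under two-dimensional interpretations. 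In the degenerate case $w=\emptystr$ the structure $\wcompletion{w}$ is empty, yet there are many trees with empty yield --- you cannot encode any of them, and must instead argue separately that yield-invariance makes the answer a constant. None of this is addressed, and I do not see how to fill it without essentially redoing the algebraic construction.

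Your first approach is much closer to what the paper does, and the paper's key simplification resolves exactly the obstacle you flag as ``the hard technical core.'' Rather than defining a generalized product on automaton behaviours and verifying generalized associativity directly (which is indeed painful over $\eta$-shuffles and the parity condition), the paper works with the syntactic congruence $\congT$ (trees $t_1,t_2$ are equivalent if $t[\placeholder/t_1]\in T \Leftrightarrow t[\placeholder/t_2]\in T$ for all contexts $t$), shows via Rabin's theorem that it has finite index, and then builds only a \emph{$\countable$-algebra} on the classes: one fixes concrete regular trees $t_{a_1a_2}$, $t_\omega$, $t_{\omegaop}$, $t_\eta$ realizing the word operations at the yield level, and defines $\cdot,\tau,\tauop,\kappa$ by substitution into those trees. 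Each algebra axiom then reduces to the observation that two specific trees have the same yield, hence are $\congT$-equivalent by yield-invariance --- a finite list of easy checks. Corollary~\ref{cor:algebra-to-monoid} then supplies the full $\countable$-monoid (and in particular generalized associativity) for free, and Theorem~\ref{theorem:rec-to-mso} finishes. I recommend you abandon the composition-method route and carry out this algebra construction instead.
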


\AP
The proof of the left-to-right direction is straightforward:
if $L$ is defined by an "MSO" sentence $\varphi$, then we can 
construct another "MSO" sentence $\varphi'$ that, when interpreted
on a "tree", checks that the frontier satisfies $\varphi$;
the sentence $\varphi'$ defines precisely the language
$T = \invyield(L)$.

\AP
The proof of the converse direction is not immediate, since,
a priori, checking whether a given word $w$ belongs to $L$
requires guessing some "tree" $t\in T$ such that $\yield(t)=w$.
To show that $L$ is "recognizable" by $\countable$-monoids, and hence 
definable in "MSO" logic, we will construct a "$\countable$-algebra" 
on the basis of a suitable "congruence" defined from $T$.

\begin{definition}\label{def:congruence}
Let $T$ be a "tree" language over the alphabet $A$ and let $""\placeholder""\nin A$ 
be a fresh letter that will be used as a placeholder for substitution. 
We denote by $""\congT""$ the equivalence on "trees" defined 
by $t_1 \congT t_2$ iff, for all "trees" $t$ labelled over the 
alphabet $A\uplus\{\placeholder\}$, we have $t[\placeholder/t_1] \in L ~\iff~ t[\placeholder/t_2] \in L$.
We say that a tree $t_1$ ""inhabits"" a $\congT$-equivalence class 
$[t_2]_{\congT}$ if $t_1 \congT t_2$.
\end{definition}

\AP
We now show some simple but fundamental properties of the relation $\congT$.
The first property is that $\congT$ correctly abstracts "trees" with the same 
"yield", provided that the language $T$ is "yield-invariant". Formally, if $T$ 
is "yield-invariant" and $t_1$ and $t_2$ are two "trees" such that 
$\yield(t_1) = \yield(t_2)$, then we have $t_1 \congT t_2$. 
It is also easy to verify that $\congT$ is a ""congruence"" with 
respect to the substitution operation, that is, $t_1 \congT t_2$ implies 
$t[\placeholder/t_1] \congT t[\placeholder/t_2]$.

\AP
Another crucial property that is used to prove Theorem \ref{theorem:yields} is
based on Rabin's tree theorem \cite{s2s}, which shows that "MSO" definable "tree"
languages can be equivalently described by means of "automata". Below, we recall 
some basic knowledge about "tree automata", their problems, and the translation 
from "MSO" logic. We begin by introducing a variant of parity tree automaton that 
can parse "trees" containing leaves and/or infinite paths:

\begin{definition}\label{def:parity-tree-automaton}
A ""parity tree automaton"" is a tuple $\cA = (A,Q,I,\Delta,\Omega)$, 
where $A$ is a finite set of node labels, $Q$ is a finite set of states,
$I\subseteq Q$ is a set of initial states, 
$\Delta\subseteq (Q \times A) \uplus (Q \times A \times Q \times Q)$ is 
a set of transition rules, and $\Omega:Q\then\bbN$ is a priority function.
A ""successful run of $\cA$ on a tree $t$"" is a "tree" $\rho$ that 
has the same domain as $t$ and satisfies:
\begin{itemize}
  \item $\rho(x_0)\in I$, where $x_0$ is the root of $\rho$;
  \item for all leaves $x$ of $\rho$, $\big(\rho(x),t(x)\big)\in \Delta$;
  \item for all internal nodes $x$ of $\rho$, $\big(\rho(x),t(x),\rho(x_1),\rho(x_2)\big)\in \Delta$,
        where $x_1$ and $x_2$ are the left and right successors of $x$, respectively;
  \item for all infinite paths $\pi$ in $\rho$, $\limsup\big(\Omega(\rho|_\pi)\big)$ is even,
        where $\Omega(\rho|_\pi)$ denotes the sequence of priorities associated with the
        states along the path $\pi$ and $\limsup\big(\Omega(\rho|_\pi)\big)$ returns the 
        maximal priority that occurs infinitely often in the sequence $\Omega(\rho|_\pi)$.
\end{itemize}
The language ""recognized@recognized by an automaton"" by $\cA$ 
is the set $\sL(\cA)$ of all "trees" $t$ that admit a "successful run" of $\cA$. 
\end{definition}

\AP
We recall that the ""emptiness problem"" for "parity tree automata",
that is, the problem of testing whether $\lang(\cA)=\emptyset$ for any given
"parity tree automaton" $\cA$, is decidable.
The ""containment"" and ""equivalence problems"" can be reduced to 
the "emptiness problem" by exploiting effective closures of automata under 
intersection and complementation: indeed, we have 
$\lang(\cA) \subseteq \lang(\cA')$ iff 
$\lang(\cA)\cap\lang(\overline{\cA'}) = \emptyset$,
where $\overline{\cA'}$ denotes the automaton "recognizing@recognized by an automaton"
the complement of the language $\lang(\cA')$.
There is another fundamental problem that is known to be decidable, called
""membership problem"". This amounts at testing whether a given "tree" $t$
belongs to the language "recognized@recognized by an automaton" by a given 
"parity tree automaton" $\cA$.
For this problem to make sense, however, we need to specify how the "tree" $t$
is provided in input. A simple solution is to restrict to ""regular trees"", 
that is, "trees" that contain only finitely many non-isomorphic subtrees. 
It is easy to see that any "regular tree" can be finitely represented by a 
"parity tree automaton" $\cB$ that "recognizes@recognized by an automaton" 
the singleton language $\{t\}$.
The closure of "parity tree automata" under intersection implies that the 
"membership problem" is decidable: for every "regular tree" $t$ represented 
by the singleton language $\lang(\cB)=\{t\}$, we have $t\in\lang(\cA)$ iff 
$\lang(\cB)\cap\lang(\cA) \neq \emptyset$. 

\AP
We recall below the correspondence between "MSO" sentences interpreted on 
"trees" and "parity tree automata". A proof of this correspondence can 
be found in \cite{languages_automata_logic} and is based on closure properties 
of "parity tree automata" under boolean operations and projections 
(originally, this was established by Rabin in \cite{s2s} using a 
different model of automaton).

\begin{theorem}[Translation of MSO to tree automata \cite{languages_automata_logic}]\label{theorem:MSO-to-automata}
One can effectively translate any "MSO" sentence $\varphi$ 
that defines a "tree" language $T$
into a "parity tree automaton" $\cA$ that "recognizes@recognized by an automaton" $T$. 
\end{theorem}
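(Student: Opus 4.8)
The statement is the classical B\"uchi--Rabin translation of "MSO" over "trees" into automata, in the variant that also allows "leaves" and infinite branches, so the plan is to reprove it by the standard induction on the structure of $\varphi$, turning each logical construct into an operation on "parity tree automata". First I would bring $\varphi$ into a normal form with no first-order variables: each $\exists x$ is replaced by $\exists X$ together with an assertion that $X$ is a singleton, so that the only atoms are $X\subseteq Y$, ``$X$ is a singleton'', ``$X$ is a singleton whose node carries label $a$'', and ``$X$ is a singleton whose node is the left (resp.\ right) child of the node of $Y$''. A subformula with free set variables $X_1,\dots,X_m$ is then read over the alphabet $A\times\{0,1\}^m$, the extra bits recording membership of each node in the $X_i$; its translation is a "parity tree automaton" over this alphabet whose language consists exactly of the labelled "trees" that encode a model of the subformula. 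The induction then needs: union of automata for $\vee$, a product (or complement) construction for $\wedge$, complementation for $\neg$, and a projection operation for $\exists$; closing the parse tree of $\varphi$ under these, starting from automata for the atoms, effectively produces $\cA$.

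The atoms are handled by hand: each is "recognized@recognized by an automaton" by a one- or two-state "parity tree automaton" with constant priority $0$ (so every infinite path is accepting) whose "leaves" are accepted unconditionally and which simply inspects the relevant label bits --- e.g.\ for $X\subseteq Y$ it rejects any node whose $X$-bit is $1$ and $Y$-bit is $0$. Union is the disjoint union of the state sets with the union of initial states, which is immediate for nondeterministic automata. Projection --- the translation of $\exists X_i$ --- simply erases the $i$-th bit-track from the input alphabet: the resulting nondeterministic "parity tree automaton" accepts a "tree" over $A\times\{0,1\}^{m-1}$ iff some choice of the erased bits makes the original automaton accept, which is exactly existential quantification. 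Intersection is obtained from a product construction, except that the conjunction of two max-parity conditions is a Muller (Rabin/Streett) condition rather than a parity one; one therefore re-encodes it as a parity condition by a standard finite-memory device (a later-appearance-record / Zielonka-tree construction). Alternatively one can dispense with intersection altogether and realise $\wedge$ through $\neg$ and $\vee$ by De~Morgan's law.

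The real difficulty, and the only step that is not routine, is closure of nondeterministic "parity tree automata" under complementation --- Rabin's complementation theorem, here for "trees" with "leaves" and infinite branches. The cleanest route is via parity games. For a fixed input "tree" $t$, acceptance by $\cA$ is decided by a two-player parity game $\mathcal{G}(\cA,t)$ in which one player (\emph{Automaton}) proposes, node by node, the states of a "run" of $\cA$ on $t$, while the other (\emph{Pathfinder}) descends along a branch to dispute it; $t\in\lang(\cA)$ iff \emph{Automaton} has a winning strategy in $\mathcal{G}(\cA,t)$. Parity games are positionally determined, and the game, its arena and its priority assignment depend on $t$ only ``locally''; hence from \emph{Pathfinder}'s positional winning strategies on the "trees" rejected by $\cA$ one can read off a nondeterministic "parity tree automaton" for the complement: it guesses, at each node, \emph{Pathfinder}'s positional choice together with the finite progress information witnessing that the parity condition fails along every branch, and it checks the local consistency of these guesses. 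The presence of "leaves" is only cosmetic --- a "leaf" is a position at which \emph{Pathfinder} has no move, hence a win for \emph{Automaton}, and the transition rules $(q,a)\in\Delta$ are the base case of the game. (Rabin's original argument, a K\"onig-lemma analysis of the ``Rabin pairs'' realised inside an accepting run, is an alternative.) Since the normal form, the base automata, union, projection, the Muller-to-parity re-encoding and the game-based complementation are all effective, composing them along the parse tree of $\varphi$ effectively yields the required "parity tree automaton"; the full classical development can be found in \cite{languages_automata_logic} (and, with a different automaton model, in \cite{s2s}).
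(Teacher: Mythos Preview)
Your proposal is correct and matches the paper's treatment: the paper does not prove this theorem but simply cites it from \cite{languages_automata_logic}, noting that the proof ``is based on closure properties of parity tree automata under boolean operations and projections'' --- exactly the induction on formula structure (atoms, union, projection, complementation via positional determinacy of parity games) that you outline. Your sketch is in fact more detailed than what the paper provides.
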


\AP
We are now ready to prove the following key lemma:

\begin{lemma}\label{lemma:finite-index}
For every "MSO" definable "tree" language $T$, $\congT$ has finite index, 
namely, there exist only finitely many $\congT$-equivalence classes.
Moreover, given an "MSO" sentence defining $T$, one can decide whether 
$t_1 \congT t_2$, for any pair of "regular trees" $t_1$ and $t_2$, 
and one can compute a finite set of "regular trees" that "inhabit" all 
$\congT$-equivalence classes.
\end{lemma}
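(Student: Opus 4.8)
The plan is to reduce the lemma to the theory of parity tree automata through Theorem~\ref{theorem:MSO-to-automata}, abstracting each tree by its behaviour as a substitute for the placeholder, in the spirit of a Myhill--Nerode characterization. First I would fix, effectively, a parity tree automaton $\cA=(A,Q,I,\Delta,\Omega)$ recognizing $T$. For a state $q\in Q$, write $\cA_q$ for the automaton obtained from $\cA$ by declaring $\{q\}$ to be the set of initial states, and call a tree $t_1$ over $A$ \emph{$q$-good} if $t_1\in\lang(\cA_q)$. The \emph{profile} of $t_1$ is then $P(t_1)=\{q\in Q \mid t_1 \text{ is } q\text{-good}\}\subseteq Q$; there are at most $2^{|Q|}$ profiles, so the finite-index statement will follow once I show that equal profiles imply $\congT$-equivalence.

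The core step is a decomposition claim: for every context $t$ over $A\uplus\{\placeholder\}$ (with $\placeholder$ occurring only at leaves, which is the only case that matters, since any $\placeholder$ at an internal node survives the substitution and keeps $t[\placeholder/t_1]$ outside $T$) and every tree $t_1$ over $A$, one has $t[\placeholder/t_1]\in T$ if and only if $\cA$ admits a run $\rho$ on $t$ that is initial at the root, respects $\Delta$ at all internal nodes and all $A$-labelled leaves, satisfies the parity condition along every infinite path of $t$, and assigns to every $\placeholder$-leaf $x$ a state $\rho(x)\in P(t_1)$. For the direction "$\Leftarrow$" I would graft onto each $\placeholder$-leaf $x$ a witnessing accepting run of $\cA_{\rho(x)}$ on the pasted copy of $t_1$; the point is that $t_1$ is $\placeholder$-free, so every infinite path of $t[\placeholder/t_1]$ either stays in $t$ or eventually remains inside a single copy of $t_1$, and in both cases its sequence of priorities has the required $\limsup$. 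Direction "$\Rightarrow$" is the reverse restriction of an accepting run. Since the right-hand side of this equivalence mentions $t_1$ only through the set $P(t_1)$, trees with the same profile are $\congT$-equivalent, and hence $\congT$ has at most $2^{|Q|}$ classes.

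For the effectiveness part, note that for a regular tree $t_1$ (presented by an automaton $\cB$ with $\lang(\cB)=\{t_1\}$), $q$-goodness is exactly the membership question $t_1\in\lang(\cA_q)$, which is decidable, so $P(t_1)$ is computable. To decide $t_1\congT t_2$ I would use the decomposition claim once more: for a set $S\subseteq Q$, the class of contexts $t$ such that $t[\placeholder/t_1]\in T$ for any $t_1$ with $P(t_1)=S$ is exactly $\lang(\cA^{S})$, where $\cA^{S}$ is obtained from $\cA$ by allowing a $\placeholder$-leaf precisely under the states of $S$; thus $t_1\congT t_2$ iff $\lang(\cA^{P(t_1)})=\lang(\cA^{P(t_2)})$, and equivalence of parity tree automata is decidable. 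Finally, to produce inhabitants of all classes, I would enumerate all $P\subseteq Q$ and form, using closure of parity tree automata under intersection and complementation, an automaton recognizing $\bigcap_{q\in P}\lang(\cA_q)\cap\bigcap_{q\notin P}\overline{\lang(\cA_q)}$, the set of trees of profile exactly $P$; whenever this language is non-empty --- a decidable condition --- the emptiness procedure returns a regular witness $t_P$. Every tree has a profile, which is the profile of the corresponding $t_P$, so by the core step it is $\congT$-equivalent to $t_P$; hence the finite, effectively computable set $\{t_P \mid P\subseteq Q\}$ inhabits all $\congT$-classes.

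I expect the decomposition claim to be the main obstacle: one must track the parity acceptance condition correctly across the substitution points, which is exactly why the profile has to record full acceptance-from-$q$ in the parity sense rather than some weaker reachability information, and one must check that simultaneously filling all placeholders with the same tree does not break the gluing of runs. The only other point needing care is the appeal, in the last step, to the fact that a non-empty parity-tree-automaton language always contains a regular tree that can be effectively extracted from the emptiness test.
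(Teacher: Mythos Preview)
Your proposal is correct and follows essentially the same approach as the paper: the paper's ``$\cA$-type'' $\atype{\cA}(t)=\{q\mid t\in\lang(\cA^q)\}$ is exactly your profile $P(t)$, the finite-index argument via compatibility of types with substitution is your decomposition claim, and the construction of regular inhabitants by intersecting $\bigcap_{q\in P}\lang(\cA^q)\cap\bigcap_{q\notin P}\overline{\lang(\cA^q)}$ is identical. The only cosmetic difference is in the decidability of $\congT$: the paper spells it out as an explicit boolean combination of emptiness tests on automata $\cA^q$ and $\cA_q$, whereas you package the context side into a single automaton $\cA^{S}$ and reduce to language equivalence---the same content, arguably more cleanly stated.
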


\begin{proof}
Let $\varphi$ be an "MSO" sentence defining the "tree" language $T$ and let 
$\cA=(A,Q,I,\Delta,\Omega)$ be the corresponding "parity tree automaton"
"recognizing@recognized by an automaton" $T$, 
obtained from Theorem \ref{theorem:MSO-to-automata}. 
Given a generic "tree" $t$, we abstract the behaviour of $\cA$ on $t$ 
by introducing the ""$\cA$-type"" of $t$, defined as 
$$
  \atype{\cA}(t) ~\eqdef~ \big\{ q\in Q ~\big\mid~ t\in\lang(\cA^q) \big\} 
$$
where $\cA^q$ is the "automaton" obtained from $\cA$ by replacing the 
set $I$ of initial states with the singleton $\{q\}$. Note that there 
are at most $2^{|Q|}$ different "$\cA$-types" of "trees".
Based on this, we can establish the first claim of the lemma by simply 
showing that the type-equivalence induced by $\cA$ refines the $\congT$-equivalence, 
namely, that for all "trees" $t_1$ and $t_2$, $\atype{\cA}(t_1) = \atype{\cA}(t_2)$ 
implies $t_1 \congT t_2$. Consider two "trees" $t_1,t_2$ such that 
$\atype{\cA}(t_1) = \atype{\cA}(t_2)$ and another "tree" $t$ labelled 
over the extended alphabet $A\uplus\{\placeholder\}$. 

\AP
We first prove that "$\cA$-types" are compatible with tree substitutions, that is, 
knowing that $\atype{\cA}(t_1) = \atype{\cA}(t_2)$, we get
$$
  \atype{\cA}\big(t[\placeholder/t_1]\big) ~=~ \atype{\cA}\big(t[\placeholder/t_2]\big) \ .
$$
Consider a state $q \in \atype{\cA}\big(t[\placeholder/t_1]\big)$, namely, such that
$t[\placeholder/t_1]\in\lang(\cA^q)$. Let $\rho$ be a "successful run" of $\cA^q$ 
on $t[\placeholder/t_1]$ and let $X$ be the set of $c$-labelled leaves of $t$. 
The set $X$ can be equally seen as a set of nodes of $\rho$. 
We partition $X$ into some subsets $X_{q'}$, where
$q'\in Q$ and $X_{q'} = \{ x\in X \:\mid\: \rho(x) = q' \}$, and for every $x\in X_{q'}$, 
we let $\rho_x$ be the subtree of $\rho$ starting at node $x$. Note that each
subrun $\rho_x$, with $x\in X_{q'}$, is a "successful run" of the "automaton" $\cA^{q'}$
on the "tree" $t_1$. This means that $q'\in\atype{\cA}(t_1)$ for all non-empty sets $X_{q'}$.
Since $\atype{\cA}(t_1) = \atype{\cA}(t_2)$, we derive that $q' \in \atype{\cA}(t_2)$ 
for all non-empty sets $X_{q'}$. Thus, there exist "successful runs" $\rho'_x$ of $\cA^{q'}$ 
on $t_2$, for all $x\in X_{q'}$. Next, we define the "tree" $\rho'$ by substituting in $\rho$
every subtree $\rho_x$ starting at node $x\in X$ with the "tree" $\rho'_x$ (note that 
the substitution is performed simultaneously on nodes that may not be leaves, but 
these nodes are still pairwise incomparable with respect to the descendant relation). 
Since $\rho(x) = \rho'(x)$ for all $x\in X$, we deduce that $\rho'$ is a "successful run" 
of $\cA^q$ on $t[\placeholder/t_2]$. This proves that $q \in \atype{\cA}\big(t[\placeholder/t_2]\big)$.
Symmetric arguments show that $q \in \atype{\cA}\big(t[\placeholder/t_2]\big)$ implies 
$q \in \atype{\cA}\big(t[\placeholder/t_1]\big)$.

\AP
Now that we know that 
$\atype{\cA}\big(t[\placeholder/t_1]\big) = \atype{\cA}\big(t[\placeholder/t_2]\big)$,
we can conclude the proof of the first claim by observing that 
$$
  t[\placeholder/t_1] \in L
  \quad\text{iff}\quad
  \atype{\cA}\big(t[\placeholder/t_1]\big) \cap I \neq \emptyset
  \quad\text{iff}\quad
  \atype{\cA}\big(t[\placeholder/t_2]\big) \cap I \neq \emptyset
  \quad\text{iff}\quad
  t[\placeholder/t_2] \in L
$$
and hence $t_1 \congT t_2$. This shows that $\congT$ has finite index.

\smallskip
\AP
We turn to the proof of the second claim.
Consider two "regular trees" $t_1$ and $t_2$ represented by singleton languages $\lang(\cB_1)=\{t_1\}$
and $\lang(\cB_2)=\{t_2\}$, respectively. Recall that $t_1 \congT t_2$ iff for all "trees" $t$ 
labelled over $A\uplus\{\placeholder\}$, either both "trees" 
$t[\placeholder/t_1]$ and $t[\placeholder/t_2]$ are inside $\lang(\cA)$, or neither of them are. 
Further note that $t[\placeholder/t_i]\in \lang(\cA)$ iff there is a state $q\in Q$
such that $t_i\in\lang(\cA^q)$ and $t\in\lang(\cA_q)$, where $\cA_q$ is the 
"automaton" obtained from $\cA$ by replacing the transition relation $\Delta$ with
$$
  \Delta_q ~\eqdef~ \big( \Delta \cap (Q\times A\times Q\times Q) \big) ~\uplus~
                    \big( \Delta \cap (Q\times (A\setminus\{c\})) \big) ~\uplus~
                    \big( \{q,c\} \big)
$$
(intuitively, $\cA_q$ behaves exactly as $\cA$ on all nodes of the "tree" $t$, with the 
only exception of the $c$-labelled leaves, which must be associated with state $q$).
Using the above properties, we can restate the equivalence $t_1 \congT t_2$ as 
a (decidable) boolean combination of "emptiness problems":
$$
\begin{array}{rrrl}
  \mspace{-10mu}
  t_1 \congT t_2 
  &\text{iff}&
  \displaystyle\bigwedge_{q\in Q}\bigvee_{q'\in Q}&
  \bigg(~ 
    \underbrace{\cB_1 \cap \cA^q \neq \emptyset}_{t_1 \,\in\, \cA^q} ~\et~
    \underbrace{\cA_q \cap \cA_{q'} \neq \emptyset}_{\exists \, t \,\in\, \cA_q \,\cap\, \cA_{q'}} ~\et~
    \underbrace{\cB_2 \cap \cA^{q'} \neq \emptyset}_{t_2 \,\in\, \cA^{q'}}
  ~\bigg) \\[1ex]
  &\et&
  \displaystyle\bigwedge_{q'\in Q}\bigvee_{q\in Q}&
  \bigg(~ 
    \underbrace{\cB_1 \cap \cA^q \neq \emptyset}_{t_1 \,\in\, \cA^q} ~\et~
    \underbrace{\cA_q \cap \cA_{q'} \neq \emptyset}_{\exists \, t \,\in\, \cA_q \,\cap\, \cA_{q'}} ~\et~
    \underbrace{\cB_2 \cap \cA^{q'} \neq \emptyset}_{t_2 \,\in\, \cA^{q'}}
  ~\bigg) \\[1ex]
  &\et&
  \displaystyle\bigvee_{q\in Q}\bigwedge_{q'\in Q}&
  \bigg(~ 
    \underbrace{\cB_1 \cap \cA^q = \emptyset}_{t_1 \,\nin\, \cA^q} ~\et~
    \underbrace{\cB_2 \cap \cA^{q'} = \emptyset}_{t_2 \,\nin\, \cA^{q'}}
  ~\bigg)
  ~\vee~
  \bigg(~ 
    \underbrace{\cA_q \cup \cA_{q'} = \emptyset}_{\not\exists \, t \,\in\, \cA_q \,\cap\, \cA_{q'}} ~\et~
  ~\bigg) \\[1ex]
  &\et&
  \displaystyle\bigvee_{q'\in Q}\bigwedge_{q\in Q}&
  \bigg(~ 
    \underbrace{\cB_1 \cap \cA^q = \emptyset}_{t_1 \,\nin\, \cA^q} ~\et~
    \underbrace{\cB_2 \cap \cA^{q'} = \emptyset}_{t_2 \,\nin\, \cA^{q'}}
  ~\bigg)
  ~\vee~
  \bigg(~ 
    \underbrace{\cA_q \cup \cA_{q'} = \emptyset}_{\not\exists \, t \,\in\, \cA_q \,\cap\, \cA_{q'}} ~\et~
  ~\bigg).
\end{array} 
$$
(for simplicity, we identified "automata" and the "recognized@recognized by an automaton"
languages).

\AP
Finally, to compute a set of "regular trees" that inhabit all $\congT$-equivalence classes, 
we consider again "$\cA$-types". We first show how to associate with each "$\cA$-type" $\sigma$ 
a corresponding "regular tree" $t_\sigma$ such that $\atype{\cA}(t_\sigma)=\sigma$.
We do so by solving a series of "emptiness problems". 
Indeed, we recall that an "$\cA$-type" is any set $\sigma$ 
of states of $\cA$ such that the language 
$\displaystyle\bigcap\nolimits_{q\in \sigma}\lang(\cA^q) \:\cap\:
 \displaystyle\bigcap\nolimits_{q\nin \sigma}\lang(\overline{\cA^q})$ is non-empty.
Moreover, if the latter language is non-empty, then it contains a "regular tree"
$t_\sigma$ that can be effectively constructed from $\sigma$. 
Clearly, we have $\atype{\cA}(t_\sigma)=\sigma$ and hence $t_\sigma$ 
can be used as a representant of the "$\cA$-type" $\sigma$.
Towards a conclusion, we can construct a list of "regular trees" $t_1,\ldots,t_n$, 
one for each "$\cA$-type". Since the equivalence $\congT$ is refined by the 
type-equivalence induced by $\cA$, we know that every $\congT$-equivalence class
is "inhabited" by at least one tree among $t_1,\ldots,t_n$.
If needed, we can also exploit the decidability of $\congT$ to select a 
minimal subsequence $t_{i_1},\ldots,t_{i_m}$ of "regular" "inhabitants" 
of all $\congT$-equivalence classes.
\end{proof}

\smallskip
\AP
We can now prove the right-to-left direction of Theorem \ref{theorem:yields}.
Let $T$ be a "yield-invariant language" defined by an "MSO" sentence $\varphi$.
We will exploit Lemma \ref{lemma:finite-index} and the fact that $\congT$ 
is a yield-invariant equivalence compatible with tree substitutions to construct 
a "$\countable$-algebra" $(M,1,\cdot,\tau,\tauop,\kappa)$ 
"recognizing@recognized by a $\countable$-algebra" 
the language $L = \yield(T)$.
Formally, we define $M$ to be the set of all $\congT$-equivalence classes.
We recall that this set is finite and that $\congT$-equivalence classes can 
be effectively manipulated through their ""regular inhabitants"", that is,
by means of representants that have the form of "regular trees". 
We define the operators of the algebra as follows:
\begin{itemize}
  \item $1$ is the $\congT$-equivalence class of the infinite complete tree $t_\emptystr$.
        Note that this "tree" $t_\emptystr$ has no leaves, and hence its "yield" is 
        the empty word.
        Moreover, $t_\emptystr$ is "regular", and hence it can be used as 
        a "regular inhabitant" of its $\congT$-equivalence class.
  \item $\cdot$ is the function that maps any pair of $\congT$-equivalence classes
        $[t_1]_{\congT}$ and $[t_2]_{\congT}$ to the $\congT$-equivalence class
        $$
          [t_1]_{\congT}\cdot[t_2]_{\congT} ~\eqdef~ 
          \big[ t_{a_1 a_2} [a_1/t_1] [a_1/t_2] \big]_{\congT}
        $$ 
        where $t_{a_1 a_2}$ is a fixed "tree" such 
        that $\yield(t) = a_1\,a_2$, and $a_1,a_2$ are distinct 
        fresh letters not occurring in the alphabet of $t_1$ and $t_2$.
        For example, $t_{a_1 a_2}$ can be chosen to be the "tree"
        $$
          t_{a_1 a_2} ~=  \begin{tikzpicture}[baseline=4pt]
                            \tikzstyle{level 1} = [level distance=0.5cm, sibling distance=0.75cm]
                            \tikzstyle{treenode} = [inner sep = 0mm, minimum size = 4.5mm, circle]
                            \draw (5,0.25) node [treenode] {$\bullet$}
                            child {node [treenode] {$a_1$}}
                            child {node [treenode] {$a_2$}};
                          \end{tikzpicture}
        $$
        where the label $\bullet$ of the root is immaterial.
        Note that the $\congT$-equivalence class $[t_1]_{\congT}\cdot[t_2]_{\congT}$ 
        is well defined thanks to the fact that $\congT$ is a "congruence".
        Moreover, because the "tree" $t_{a_1 a_2}$ is "regular", a "regular inhabitant"
        of the class $[t_1]_{\congT}\cdot[t_2]_{\congT}$ can be effectively 
        constructed from some "regular inhabitants" of $[t_1]_{\congT}$ and 
        $[t_2]_{\congT}$.
  \item $\tau$ is the function that maps any $\congT$-equivalence class 
        $[t_1]_{\congT}$ to the $\congT$-equivalence class
        $$
          \qquad\qquad\qquad
          [t_1]_{\congT}^\tau ~\eqdef~ \big[ t_\omega[a/t_1] \big]_{\congT}
          \qquad\qquad\qquad\text{where}\quad
          t_\omega ~=  \begin{tikzpicture}[baseline=-2pt]
                         \tikzstyle{level 1} = [level distance=0.5cm, sibling distance=0.75cm]
                         \tikzstyle{treenode} = [inner sep = 0mm, minimum size = 4.5mm, circle]
                         \draw (0,0) node [treenode] {$\bullet$}
                         child {node [treenode] {$a$}}
                         child {node [treenode] {$\bullet$}
                           child {node [treenode] {$a$}}
                           child {node [treenode] {\raisebox{2.3pt}{.}\hspace{-1pt}.\hspace{-1pt}\raisebox{-2.3pt}{.}}
                           }
                         };
                       \end{tikzpicture}
        $$ 
        Again, since $t_\omega$ is a "regular tree", a "regular inhabitant" 
        of the class $[t_1]_{\congT}^\tau$ can be computed from a 
        "regular inhabitant" of the class $[t_1]_{\congT}$.
  \item $\tauop$ is defined similarly to $\tau$, where $t_\omega$ is replaced
        by the "tree"
        $$
          t_{\omegaop} ~=  \begin{tikzpicture}[baseline=-2pt]
                             \tikzstyle{level 1} = [level distance=0.5cm, sibling distance=0.75cm]
                             \tikzstyle{treenode} = [inner sep = 0mm, minimum size = 4.5mm, circle]
                             \draw (0,0) node [treenode] {$\bullet$}
                             child {node [treenode] {$\bullet$}
                               child {node [treenode]                                            {\raisebox{-2.3pt}{.}\hspace{-1pt}.\hspace{-1pt}\raisebox{2.3pt}{.}}
                               }
                               child {node [treenode] {$a$}}
                             }
                             child {node [treenode] {$a$}};
                           \end{tikzpicture}
        $$ 
  \item $\kappa$ is the function that maps any set $\big\{ [t_1]_{\congT},\ldots,[t_k]_{\congT} \big\}$
        of $\congT$-equivalence classes to the $\congT$-equivalence class
        $$
          \big\{ [t_1]_{\congT},\ldots,[t_k]_{\congT} \big\}^\kappa 
          ~\eqdef~ 
          \big[ t_\eta [a_1/t_1] \ldots [a_k/t_k] \big]_{\congT}
        $$ 
        where $t_\eta$ is a fixed "regular tree" with "yield" $\{a_1,\ldots,a_k\}\etapow$
        and $a_1,\ldots,a_k$ are fresh letters.
        For example, $t_\eta$ can be defined by a "parity tree automaton"
        so as to satisfy the following equation:
        $$
          t_\eta ~=  \begin{tikzpicture}[baseline=-2pt]
                       \tikzstyle{level 1} = [level distance=0.5cm, sibling distance=0.75cm]
                       \tikzstyle{treenode} = [inner sep = 0mm, minimum size = 4.5mm, circle]
                       \draw (0,0) node [treenode] {$\bullet$}
                       child {node [treenode] {$t_\eta$}}
                       child {node [treenode] {$\bullet$}
                         child {node [treenode] {$a_1$}}
                         child {node [treenode] {$\bullet$}
                           child {node [treenode] {$t_\eta$}}
                           child {node [treenode]                                        {\raisebox{2.3pt}{.}\hspace{-1pt}.\hspace{-1pt}\raisebox{-2.3pt}{.}}
                             child [draw=white] {node {}}
                             child {node [treenode] {$\bullet$}
                               child {node [treenode] {$t_\eta$}}
                               child {node [treenode] {$\bullet$}
                                 child {node [treenode] {$a_k$}}
                                 child {node [treenode] {$t_\eta$}}
                               }
                             }
                           }
                         }
                       };
                \end{tikzpicture}
        $$ 
\end{itemize}

\noindent
Below, we verify that the structure $(M,1,\cdot,\tau,\tauop,\kappa)$ obtained from the 
"automaton" $\cA$ is indeed a "$\countable$-algebra", that is, it satisfies 
Axioms~\refaxiom{axiom:concatenation}-\refaxiom{axiom:identity} of Definition \ref{def:o-algebra}.

\begin{lemma}\label{lemma:yield-invariant}
The structure $(M,1,\cdot,\tau,\tauop,\kappa)$ obtained from $\congT$ is a "$\countable$-algebra".
\end{lemma}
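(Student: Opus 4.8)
The plan is to establish Axioms~\refaxiom{axiom:concatenation}--\refaxiom{axiom:identity} for $(M,1,\cdot,\tau,\tauop,\kappa)$ all at once, by exhibiting a surjective homomorphism onto it from the word $\countable$-algebra $(A^\countable,\varepsilon,\cdot,\omega,\omegaop,\eta)$ — which satisfies the axioms by Lemma~\ref{lemma:word-algebra} — and then using that equational axioms pass to homomorphic images.

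Concretely, I would define $\phi\colon A^\countable\to M$ by letting $\phi(w)$ be the $\congT$-equivalence class of some (equivalently, every) "tree" $t$ with $\yield(t)=w$; such a "tree" exists for every countable word, and the choice of $t$ is irrelevant because $\congT$ is "yield-invariant". The map $\phi$ is surjective: any $\congT$-class is $[t]_{\congT}$ for some $A$-labelled binary "tree" $t$, whose "yield" is a countable word, and $\phi(\yield(t))=[t]_{\congT}$. The key verification is that $\phi$ commutes with all operators, i.e.\ $\phi(\emptystr)=1$, $\phi(uv)=\phi(u)\cdot\phi(v)$, $\phi(w\omegapow)=\phi(w)^\tau$, $\phi(w\omegaoppow)=\phi(w)^\tauop$, and $\phi(\{w_1,\dots,w_k\}\etapow)=\{\phi(w_1),\dots,\phi(w_k)\}^\kappa$. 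Each identity reduces to three observations: the template "trees" $t_\emptystr$, $t_{a_1a_2}$, $t_\omega$, $t_{\omegaop}$, $t_\eta$ used to define the operators have "yields" $\emptystr$, $a_1a_2$, $a\omegapow$, $a\omegaoppow$, $\{a_1,\dots,a_k\}\etapow$ respectively; "yields" are compatible with substitution, namely $\yield(t[X/t'])=\yield(t)[X/\yield(t')]$; and "yield-invariance" again. For instance $\phi(u)\cdot\phi(v)$ is, by definition of $\cdot$ in $M$, the class of $t_{a_1a_2}[a_1/t_u][a_2/t_v]$, whose "yield" is $uv$, hence it equals $\phi(uv)$; the cases of $\tau$, $\tauop$, $1$ are identical.

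I expect the shuffle axiom \refaxiom{axiom:shuffle} to be the main obstacle, for two reasons. First, $\kappa$ takes a \emph{set} as argument, so transferring the axiom from $A^\countable$ to $M$ requires lifting the given non-empty $P\subseteq M$ (and the auxiliary data $c\in P$, $P'\subseteq P$, $P''$) to preimage sets in $A^\countable$ via surjectivity of $\phi$. Second, and more subtly, the set $\{\phi(w_1),\dots,\phi(w_k)\}$ may have strictly fewer than $k$ elements, so to see that $\phi$ genuinely commutes with $\kappa$ one must match a shuffle of a tuple of words against a shuffle of the underlying set of their $\phi$-images. For this I would invoke that $\congT$ is a "congruence" for substitution — substituting $\congT$-equivalent subtrees into $t_\eta$ preserves the $\congT$-class — together with the fact that an $\eta$-shuffle of words depends only on the set of distinct words involved (uniqueness of $\eta$-shuffles). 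Once $\phi$ is known to be a surjective homomorphism, every instance of Axioms~\refaxiom{axiom:concatenation}--\refaxiom{axiom:identity} in $(A^\countable,\varepsilon,\cdot,\omega,\omegaop,\eta)$ maps, under $\phi$ (with subsets lifted along $\phi$ in the case of \refaxiom{axiom:shuffle}), to the corresponding instance in $M$, which completes the proof. Alternatively, the same argument can be carried out without naming $\phi$, by checking each axiom directly on representative "trees" and reducing it to an equality of their "yields" via "yield-invariance".
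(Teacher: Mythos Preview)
Your proposal is correct and is essentially the paper's approach: the paper verifies each axiom directly by observing that the two sides are $\congT$-classes of trees with the same yield, hence equal by yield-invariance of $\congT$ (it spells this out only for Axiom~\refaxiom{axiom:concatenation} and omits the rest). Your surjective-homomorphism framing is just a tidy repackaging of the same argument---which you yourself note at the end---and you are right that the $\kappa$ case needs both the congruence property of $\congT$ and the set-dependence of $\eta$-shuffles; the paper glosses over this.
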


\begin{proof}
The fact that the structure $(M,1,\cdot,\tau,\tauop,\kappa)$ satisfies
Axioms~\refaxiom{axiom:concatenation}-\refaxiom{axiom:identity} 
follows almost directly from its definition and from the fact that the 
equivalence $\congT$ is yield-invariant, that is, $t_1 \congT t_2$ 
whenever $\yield(t_1) = \yield(t_2)$.
For example, recall the definition of the binary operator $\cdot$~: 
for all pairs of "trees" $t_1,t_2$, we have 
$$
  [t_1]_{\congT} \cdot [t_2]_{\congT}
  ~=~ 
  \left[
    \begin{tikzpicture}[baseline=-3pt]
      \tikzstyle{level 1} = [level distance=0.5cm, sibling distance=0.75cm]
      \tikzstyle{treenode} = [inner sep = 0mm, minimum size = 4.5mm, circle]
      \draw (5,0.25) node [treenode] {$\bullet$}
      child {node [treenode] {$t_1$}}
      child {node [treenode] {$t_2$}};
    \end{tikzpicture}  
  \right]_{\congT} .
$$
From this, we easily deduce that $\cdot$ satisfies Axiom \refaxiom{axiom:concatenation}:
\vspace{-1mm}
\ifjsl
\begin{align*}
  \mspace{200mu}
  \big( [t_1]_{\congT} \cdot [t_2]_{\congT} \big) \cdot [t_3]_{\congT} 
  &~=~
  \left[
    \begin{tikzpicture}[baseline=-18pt]
      \tikzstyle{level 1} = [level distance=0.5cm, sibling distance=0.75cm]
      \tikzstyle{treenode} = [inner sep = 0mm, minimum size = 4.5mm, circle]
      \draw (0,0) node [treenode] {$\bullet$}
      child {node [treenode] {$\bullet$}
        child {node [treenode] {$t_1$}}
        child {node [treenode] {$t_2$}}
      }
      child {node [treenode] {$t_3$}};
    \end{tikzpicture}             
  \right]_{\congT} 
  \mspace{-100mu}
  \tag{by definition} \\
  &~=~
  \left[
    \begin{tikzpicture}[baseline=-18pt]
      \tikzstyle{level 1} = [level distance=0.5cm, sibling distance=0.75cm]
      \tikzstyle{treenode} = [inner sep = 0mm, minimum size = 4.5mm, circle]
      \draw (0,0) node [treenode] {$\bullet$}
      child {node [treenode] {$t_1$}}
        child {node [treenode] {$\bullet$}
          child {node [treenode] {$t_2$}}
          child {node [treenode] {$t_3$}
        }
      };
    \end{tikzpicture}             
  \right]_{\congT} 
  \mspace{-100mu}
  \tag{by yield-invariance} \\
  &~=~
  [t_1]_{\congT} \cdot \big( [t_2]_{\congT} \cdot [t_3]_{\congT} \big) \ .
  \mspace{-100mu}
  \tag{by definition} \\[-2ex]
\end{align*}
\fi
\ifarxiv
\begin{align*}
  \big( [t_1]_{\congT} \cdot [t_2]_{\congT} \big) \cdot [t_3]_{\congT} 
  &~=~
  \left[
    \begin{tikzpicture}[baseline=-18pt]
      \tikzstyle{level 1} = [level distance=0.5cm, sibling distance=0.75cm]
      \tikzstyle{treenode} = [inner sep = 0mm, minimum size = 4.5mm, circle]
      \draw (0,0) node [treenode] {$\bullet$}
      child {node [treenode] {$\bullet$}
        child {node [treenode] {$t_1$}}
        child {node [treenode] {$t_2$}}
      }
      child {node [treenode] {$t_3$}};
    \end{tikzpicture}             
  \right]_{\congT} 
  \tag{by definition} \\
  &~=~
  \left[
    \begin{tikzpicture}[baseline=-18pt]
      \tikzstyle{level 1} = [level distance=0.5cm, sibling distance=0.75cm]
      \tikzstyle{treenode} = [inner sep = 0mm, minimum size = 4.5mm, circle]
      \draw (0,0) node [treenode] {$\bullet$}
      child {node [treenode] {$t_1$}}
        child {node [treenode] {$\bullet$}
          child {node [treenode] {$t_2$}}
          child {node [treenode] {$t_3$}
        }
      };
    \end{tikzpicture}             
  \right]_{\congT} 
  \tag{by yield-invariance} \\
  &~=~
  [t_1]_{\congT} \cdot \big( [t_2]_{\congT} \cdot [t_3]_{\congT} \big) \ .
  \tag{by definition} \\[-2ex]
\end{align*}
\fi
We omit the analogous arguments showing that $1$, $\tau$, $\tauop$, and $\kappa$ 
satisfy the remaining Axioms~\refaxiom{axiom:omega}-\refaxiom{axiom:identity}.
\end{proof}

\smallskip
\AP
Combining the above lemma, Corollary \ref{cor:algebra-to-monoid}, and Theorem \ref{theorem:rec-to-mso}
gives the right-to-left direction of Theorem \ref{theorem:yields}.

\smallskip
\AP
We also remark that, if the "MSO" definable "tree" language $T$ is not known to be 
"yield-invariant", we can still construct the structure $(M,1,\cdot,\tau,\tauop,\kappa)$ 
from $\congT$.
Below, we explain how to use this structure to decide whether $T$ is "yield-invariant". 
We follow the same approach described in Section \ref{sec:algebra-to-logic} and we construct,
using the operators of $(M,1,\cdot,\tau,\tauop,\kappa)$, a family of "MSO" sentences 
of the form $\varphival{\sigma}$, where $\sigma$ ranges over the set of possible 
$\congT$-equivalence classes. Given a word $w$, these sentences can be used to 
derive the $\congT$-equivalence class of some "tree" $t_w$ such that 
$\yield(t_w)=w$. In particular, we can define in "MSO" logic a word language 
of the form $L = \big\{ w \:\big\mid\: t_w \in \lang(\cA) \big\}$.
We can then use the left-to-right implication of Theorem \ref{theorem:yields}
to derive an "MSO" sentence defining the "tree" language $T' = \invyield(L)$.
Now, if $T$ is "yield-invariant", then 
$T' = \invyield(L) = \invyield(\yield(T)) = T$, 
as shown by Theorem \ref{theorem:yields}.
Conversely, if $T' = T$, then $T$ is clearly "yield-invariant".
We thus reduced the problem of deciding whether an "MSO" definable "tree" language $T$ is 
"yield-invariant" to the equivalence problem for "MSO" sentences interpreted on "trees", 
which is known to be decidable.

\begin{theorem}\label{th:yield-invariant-check}
The problem of deciding whether a "tree" language $T$ defined by an "MSO" sentence
is "yield-invariant" is decidable.
\end{theorem}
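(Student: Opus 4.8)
The plan is to reduce yield-invariance of $T$ to the equivalence problem for MSO sentences over trees, which is decidable (see the remarks following Definition~\ref{def:parity-tree-automaton}). The pivotal observation is that $T$ is yield-invariant if and only if $T = \invyield(\yield(T))$, together with the fact that $\invyield(L)$ is automatically yield-invariant for \emph{any} word language $L$, since membership in $\invyield(L)$ depends only on the yield. Hence it suffices to produce effectively an MSO sentence defining $\invyield(\yield(T))$ and then test whether it defines the same tree language as the given sentence for $T$.

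First I would translate the given MSO sentence for $T$ into a parity tree automaton $\cA$ via Theorem~\ref{theorem:MSO-to-automata}, and then construct the finite structure $(M,1,\cdot,\tau,\tauop,\kappa)$ attached to the relation $\congT$, exactly as in the construction preceding Lemma~\ref{lemma:yield-invariant}. By Lemma~\ref{lemma:finite-index} this structure is finite and effective: $\congT$ has finitely many classes, each with a computable regular inhabitant, and the operators $\cdot$, $\tau$, $\tauop$, $\kappa$ are evaluated by substituting regular inhabitants into the fixed regular skeleton trees $t_{a_1 a_2}$, $t_\omega$, $t_{\omegaop}$, $t_\eta$. Next I would run the construction of Section~\ref{sec:algebra-to-logic} verbatim on this structure, obtaining the first-order formulas $\eval_a(g,f,I)$, the sentence $\consistency(g,f)$, and the sentences $\varphival{\sigma}$ with $\sigma$ ranging over $M$; this construction is purely syntactic in the operators and does not presuppose the axioms. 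Taking the disjunction of the $\varphival{\sigma}$ over those $\sigma$ meeting the initial states of $\cA$ yields an MSO sentence defining a word language $L$ which, on the hypothesis that $T$ is yield-invariant, equals $\yield(T)$. Finally, the easy left-to-right implication of Theorem~\ref{theorem:yields} converts the sentence for $L$ into one for $\invyield(L)$.

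It remains to observe that $T$ is yield-invariant if and only if $\invyield(L) = T$. If $T$ is yield-invariant, then by Lemma~\ref{lemma:yield-invariant} the structure $(M,1,\cdot,\tau,\tauop,\kappa)$ is a genuine $\countable$-algebra, so the correctness of the Section~\ref{sec:algebra-to-logic} machinery applies and $L = \yield(T)$, whence $\invyield(L) = \invyield(\yield(T)) = T$ by Theorem~\ref{theorem:yields}. Conversely, if $\invyield(L) = T$, then $T$ has the form $\invyield(\cdot)$ and is therefore yield-invariant, no matter which language $L$ the construction actually produced. The decision procedure is thus: build the sentence $\psi'$ for $\invyield(L)$ as above and the sentence $\psi$ for $T$, and answer ``yes'' iff $\psi$ and $\psi'$ define the same tree language — decidable by closing parity tree automata under complementation and intersection and testing emptiness.

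The point requiring care is precisely the asymmetry just exploited. When $T$ is \emph{not} yield-invariant, the structure $(M,1,\cdot,\tau,\tauop,\kappa)$ need not satisfy the axioms of Definition~\ref{def:o-algebra}, so the correctness argument of Lemma~\ref{lemma:algebra-to-mso-correctness} — which effectively encodes those axioms as the consistency constraints — may break, and the sentence for $L$ may define a language unrelated to $\yield(T)$. This does no harm: the construction is always effective, its correctness is only invoked under the yield-invariance hypothesis, and the implication ``$\invyield(L) = T$ implies $T$ yield-invariant'' holds unconditionally. I would therefore state the reduction so that the sole decided property is the tree-language equality $\invyield(L) = T$, which makes the argument insensitive to $(M,1,\cdot,\tau,\tauop,\kappa)$ failing to be a $\countable$-algebra off the yield-invariant case.
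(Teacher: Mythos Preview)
Your proposal is correct and follows essentially the same approach as the paper: construct the structure $(M,1,\cdot,\tau,\tauop,\kappa)$ from $\congT$, run the Section~\ref{sec:algebra-to-logic} machinery to obtain an MSO-definable word language $L$, lift it to the tree language $\invyield(L)$, and test equality with $T$. Your explicit treatment of the asymmetry---that correctness of the $\varphival{\sigma}$ sentences is only needed under the yield-invariance hypothesis, while the converse implication $\invyield(L)=T \Rightarrow T$ yield-invariant holds unconditionally---is a welcome clarification that the paper leaves more implicit.
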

 
\smallskip
\section{Conclusion}\label{sec:conclusion}

We have introduced an algebraic notion of "recognizability" for "languages" of 
countable "words" and we have shown the correspondence with the family of "languages"
definable in "MSO" logic. 
As a side-product of this result, we obtained that the hierarchy of monadic quantifier 
alternation for "MSO" logic interpreted over countable "words" collapses to its 
"$\exists\forall$-fragment" (or, equally, to its "$\forall\exists$-fragment"). 
The collapse result is optimal in the sense that there are "recognizable"
"languages" that are not definable in the "$\exists$-fragment".
Our techniques are then used to solve an open problem posed by Gurevich and Rabinovich, 
concerning the definability of properties of sets of rationals using "MSO" formulas
interpreted over the real line (definability "with the cuts at the background").
Finally, we exploited the correspondence between logic and "algebras" to solve another 
open problem posed by Bruyère, Carton, and Sénizergues, concerning the characterization 
of properties of "trees" that can be defined in "MSO" logic and that are "yield-invariant".

We conclude by mentioning the possibility of defining models of automata that extend 
those from \cite{automata_on_linear_orderings} and that capture precisely the expressiveness
of "MSO" logic over words of countable domains. However, such automata need to have complicated 
acceptance conditions in order to distinguish between scattered and non-scattered words and, 
more generally, to enjoy closure properties under boolean operations and projections. 
The definition of an automaton model for languages of countable words is thus not as 
natural as that of "$\countable$-monoid".

\ifjsl
\bibliographystyle{asl}
\bibliography{biblio}
\fi

\ifarxiv
\bibliographystyle{alpha}
\bibliography{biblio}

\begin{thebibliography}{BBCR10}

\bibitem[BBCR10]{BedonBesCartonRispal10}
N.~Bedon, A.~B{\`e}s, O.~Carton, and C.~Rispal.
\newblock {Logic and Rational Languages of Words Indexed by Linear Orderings}.
\newblock {\em Theoretical Computer Science}, 46(4):737--760, 2010.

\bibitem[BC01]{automata_on_linear_orderings}
V.~Bruyère and O.~Carton.
\newblock {Automata on Linear Orderings}.
\newblock In {\em Proceedings of the {26th International Symposium on
  Mathematical Foundations of Computer Science (MFCS)}}, volume 2136 of {\em
  LNCS}, pages 236--247. Springer, 2001.

\bibitem[BC07]{CartonBruyere07}
V.~Bruy{\`e}re and O.~Carton.
\newblock {Automata on linear orderings}.
\newblock {\em Journal of Computer and System Sciences}, 73(1):1--24, 2007.

\bibitem[BCS09]{generalized_yields}
V.~Bruyère, O.~Carton, and G.~Sénizergues.
\newblock {Tree Automata and Automata on Linear Orderings}.
\newblock {\em RAIRO Theoretical Informatics and Applications}, 43:321--338, 4
  2009.

\bibitem[B{\"u}c64]{Buchi64}
J.R. B{\"u}chi.
\newblock {Transfinite automata recursions and weak second order theory of
  ordinals}.
\newblock In {\em Proceedings of the {Int. Congress Logic, Methodology, and
  Philosophy of Science, Jerusalem 1964}}, pages 2--23. North Holland, 1964.

\bibitem[Bü62]{s1s}
J.R. Büchi.
\newblock {On a Decision Method in Restricted Second Order Arithmetic}.
\newblock In {\em Proceedings of the {International Congress for Logic,
  Methodology and Philosophy of Science}}, pages 1--11. Stanford University
  Press, 1962.

\bibitem[CCP11]{mso_over_countable_orderings}
O.~Carton, T.~Colcombet, and G.~Puppis.
\newblock {Regular Languages of Words Over Countable Linear Orderings}.
\newblock In {\em Proceedings of the {38th International Colloquium on
  Automata, Languages and Programming (ICALP)}}, volume 6756 of {\em LNCS},
  pages 125--136. Springer, 2011.

\bibitem[Col10]{factorization_forests_for_words_journal}
T.~Colcombet.
\newblock {Factorisation Forests for Infinite Words and Applications to
  Countable Scattered Linear Orderings}.
\newblock {\em Theoretical Computer Science}, 411:751--764, 2010.

\bibitem[Col13]{MSO_with_cuts_in_the_background}
T.~Colcombet.
\newblock {Composition with Algebra at the Background -- On a Question by
  Gurevich and Rabinovich on the Monadic Theory of Linear Orderings}.
\newblock In {\em Proceedings of the {8th International Computer Science
  Symposium in Russia (CSR)}}, volume 7913 of {\em LNCS}, pages 391--404.
  Springer, 2013.

\bibitem[Cou78]{frontiers_of_infinite_trees}
B.~Courcelle.
\newblock {Frontiers of Infinite Trees}.
\newblock {\em RAIRO Theoretical Informatics and Applications}, 12(4):319--337,
  1978.

\bibitem[DT86]{continuous_monoids_and_yields}
M.~Dauchet and E.~Timmerman.
\newblock {Continuous Monoids and Yields of Infinite Trees}.
\newblock {\em RAIRO Theoretical Informatics and Applications}, 20(3):251--274,
  1986.

\bibitem[FV59]{feferman_vaught_theorem}
S.~Feferman and R.~Vaught.
\newblock {The First-order Properties of Products of Algebraic Systems}.
\newblock {\em Fundamenta Mathematicae}, 47:57--103, 1959.

\bibitem[GR00]{definability_with_reals_in_the_background}
Y.~Gurevich and A.M. Rabinovich.
\newblock {Definability and Undefinability with Real Order at The Background}.
\newblock {\em Journal of Symbolic Logic}, 65(2):946--958, 2000.

\bibitem[GS82]{monadic_theory_of_order_and_topology}
Y.~Gurevich and S.~Shelah.
\newblock {Monadic Theory of Order and Topology in ZFC}.
\newblock {\em Annals of Mathematical Logic}, 23(2-3):179--198, 1982.

\bibitem[GS97]{yields_formal_languages_handbook}
F.~G{\'e}cseg and M.~Steinby.
\newblock {Tree languages}.
\newblock In G.~Rozenberg and A.~Salomaa, editors, {\em {Handbook of Formal
  Languages, Vol. 3: Beyond Words}}, pages 1--68. Springer, 1997.

\bibitem[Hod93]{hodges1993model}
Wilfrid Hodges.
\newblock {\em {Model Theory}}, volume~42.
\newblock Cambridge University Press, 1993.

\bibitem[LJ66]{lauchli1966}
H.~L\"auchli and J.Leonard.
\newblock On the elementary theory of linear order.
\newblock {\em Fundamenta Mathematicae}, 59(1):109--116, 1966.

\bibitem[LJ68]{lauchli1968}
H.~L\"auchli and J.Leonard.
\newblock A decision procedure for the weak second order theory of linear
  order.
\newblock In {\em {Contributions to Mathematical Logic, Proceedings of Logic
  Colloquium}}, volume~50 of {\em Studies in Logic and the Foundations of
  Mathematics}, pages 189--197. Elsevier, 1968.

\bibitem[PP04]{infinite_words}
J.E. Pin and D.~Perrin.
\newblock {\em {Infinite Words: automata, semigroups, logic and games}}.
\newblock Elsevier Science Publishers, 2004.

\bibitem[Rab69]{s2s}
M.O. Rabin.
\newblock {Decidability of Second-order Theories and Automata on Infinite
  Trees}.
\newblock {\em Transactions of the American Mathematical Society}, 141:1--35,
  1969.

\bibitem[Ram29]{ramsey}
F.P. Ramsey.
\newblock {On a Problem of Formal Logic}.
\newblock In {\em Proceedings of the {London Mathematical Society}}, volume~30,
  pages 264--286, 1929.

\bibitem[RC05]{CartonRispal05}
C.~Rispal and O.~Carton.
\newblock {Complementation of Rational Sets on Countable Scattered Linear
  Orderings}.
\newblock {\em International Journal of Foundations of Computer Science},
  16(4):767--786, 2005.

\bibitem[Ros82]{linear_orderings}
J.G. Rosenstein.
\newblock {\em {Linear Orderings}}.
\newblock Academic Press, 1982.

\bibitem[She75]{composition_method_shelah}
S.~Shelah.
\newblock {The Monadic Theory of Order}.
\newblock {\em Annals of Mathematics}, 102:379--419, 1975.

\bibitem[Tha67]{yields_of_regular_tree_languages}
J.W. Thatcher.
\newblock {Characterizing Derivation Trees of Context-free Grammars Through a
  Generalization of Finite Automata Theory}.
\newblock {\em Journal of Computer and System Sciences}, 1(4):317--322, 1967.

\bibitem[Tho93]{ehrenfeucht_fraisse_games}
W.~Thomas.
\newblock {On the Ehrenfeucht-Fraïssé Game in Theoretical Computer Science}.
\newblock In {\em {Theory and Practice of Software Development (TAPSOFT)}},
  volume 668 of {\em LNCS}, pages 559--568. Springer, 1993.

\bibitem[Tho97]{languages_automata_logic}
W.~Thomas.
\newblock {Languages, Automata, and Logic}.
\newblock In {\em {Handbook of Formal Languages}}, volume~3, pages 389--455.
  Springer, 1997.

\bibitem[Wil93]{algebraic_theory_for_regular_languages}
T.~Wilke.
\newblock {An Algebraic Theory For Regular Languages of Finite and Infinite
  Words}.
\newblock {\em International Journal of Algebra and Computation},
  3(4):447--489, 1993.

\end{thebibliography}
\fi

\end{document}